\documentclass[twoside,leqno]{article}

\usepackage[letterpaper]{geometry}

\usepackage{siamproceedings}

\usepackage[T1]{fontenc}
\usepackage{amsfonts}
\usepackage{graphicx}
\usepackage[caption=false]{subfig}
\usepackage{epstopdf}
\usepackage{enumitem}
\usepackage{algorithmic}
\ifpdf
  \DeclareGraphicsExtensions{.eps,.pdf,.png,.jpg}
\else
  \DeclareGraphicsExtensions{.eps}
\fi

\usepackage{mathtools} 
\usepackage[ruled,noend, algo2e]{algorithm2e}


\newsiamremark{remark}{Remark}
\newsiamremark{hypothesis}{Hypothesis}
\crefname{hypothesis}{Hypothesis}{Hypotheses}
\newsiamthm{claim}{Claim}



\usepackage{amsopn}

\newtheorem{invariant}[theorem]{Invariant}

\newtheorem{problem}[theorem]{Problem}
\newcommand{\eps}{\varepsilon}
\newcommand{\paren}[1]{\mathopen{}\left(#1\right)\mathclose{}}
\newcommand{\poly}{\operatorname{\text{{\rm poly}}}}
\newcommand{\ellval}{10^{4}(10\Delta_{\max}+1)^{50 \sqrt{\log n}}}
\newcommand{\aval}{\tfrac{\ell}{(2^{3\sqrt{\log n}}(\Delta_{\max}+1)^{6 \sqrt{\log n}})}}
\newcommand{\avallb}{\tfrac{\ell}{(2^{3\sqrt{\log n}}(\Delta_{\max}+1)^{8 \sqrt{\log n}})}}
\newcommand{\gammaval}{\tfrac{1}{(\Delta_{\max}+1)^{3}}}

\DeclarePairedDelimiter{\ceil}{\lceil}{\rceil}
\DeclarePairedDelimiter{\floor}{\lfloor}{\rfloor}

\DeclarePairedDelimiter{\set}{\lbrace}{\rbrace}


\begin{document}

\newcommand\relatedversion{}
\renewcommand\relatedversion{\thanks{This is the full version of the similarly named SODA26 paper.}} 

\title{\Large Deterministic Dynamic Edge Colouring\relatedversion}
    \author{Aleksander B. G. Christiansen\thanks{Department of Compute, Technical University of Denmark. This work was supported by VILLUM FONDEN grant 37507 ``Efficient Recomputations for Changeful Problems''.}}

\date{}

\maketitle


\fancyfoot[R]{\scriptsize{Copyright \textcopyright\ 2026\\
Copyright for this paper is retained by authors.}}





\begin{abstract} 
 Given a dynamic graph $G$ with $n$ vertices and $m$ edges subject to insertions and deletions of edges, we show how to maintain a $(1+\eps)\Delta$-edge-colouring of $G$ without the use of randomisation. 
 More specifically, we show a deterministic dynamic algorithm with an amortised update time of $2^{\tilde{O}_{\log \eps^{-1}}(\sqrt{\log n})}$ using $(1+\eps)\Delta$ colours. If $\eps^{-1} \in 2^{O(\log^{0.49} n)}$, then our update time is sub-polynomial in $n$. 

 While there exists randomised algorithms maintaining colourings with the same number of colours [Bhattacharya, Costa, Panski, Solomon SODA'24, Christiansen STOC'23, Duan, He, Zhang SODA'19] in polylogarithmic and even constant update time, this is the first efficient deterministic algorithm to go below the greedy threshold of $2\Delta-1$ colours for all input graphs. 
 On the way to our main result, we show how to dynamically maintain a shallow hierarchy of degree-splitters with both recourse and update time in $n^{o(1)}$. We believe that this algorithm might be of independent interest. 
\end{abstract}

\section{Introduction \& Related Work}
Edge-colouring is a classic and fundamental problem studied extensively in both graph theory as well as many different models of computation~\cite{Gabow,2BERNSHTEYN,Christiansen, duan, DBLP:conf/soda/BhattacharyaGW21}. 
Given a simple graph $G = (V,E)$, a $k$-edge-colouring of $G$ is a function $c: E(G) \rightarrow [k]$ that maps every edge to a colour in some colour set $[k] = \{1,2, \dots, k\}$, 
An edge-colouring is \emph{proper} if $e \cap e' \neq \emptyset$ implies that $c(e) \neq c(e')$ for any pair of distinct edges $e, e'$. 
The \emph{chromatic index} $\chi'(G)$ of $G$ is then the smallest $k$ for which a proper $k$-edge-colouring of $G$ exists. 
It is not difficult to check that $\chi'(G) \geq \Delta$ if $G$ has maximum degree $\Delta$. 
A classic result of Vizing~\cite{vizing1964estimate} shows that in fact $\chi'(G) \leq \Delta + 1$. 
Deciding whether $\chi'(G) = \Delta$ or $\chi'(G) = \Delta +1$ is, however, not easy, and it was shown by Holyer that determining $\chi'(G)$ exactly is NP-complete~\cite{Ian}. 

Recently, the edge-colouring problem has received a lot of attention in the dynamic algorithms community. Here one studies \emph{dynamic graphs}, where edges are continually inserted and deleted, and the challenge is to maintain some proper edge-colouring of the graph as it is subjected to these updates. 
There has been some exciting progress on developing efficient randomised dynamic algorithms using only $(1+\eps)\Delta$ colours. 
This number of colours go significantly below the greedy-threshold of $2\Delta - 1$ colours.
Duan, He, and Zhang~\cite{duan} showed how to maintain such a $(1+\eps)\Delta$-edge-colouring with an update time of $O(\tfrac{\log^{7} n}{\eps^2})$ provided that $\Delta = \Omega(\tfrac{\log^2 n}{\eps^2})$. 
Christiansen got rid of all requirements on $\Delta$ while maintaining a worst-case update time in $O(\poly(\eps^{-1},\log n))$. 
Very recently Bhattacharya, Costa, Panski, and Solomon~\cite{CostaEtAl} pushed the update time all the way down to $\tilde{O}(\poly(\eps^{-1}))$ under some additional assumptions on $\Delta$ and $\eps$. 

However, all of the above algorithms rely on randomness in several and crucial ways, and the landscape of deterministic dynamic algorithms is a lot less colourful. 
Barenboim and Maimon~\cite{barenboim2017fully} gave a dynamic algorithm with $O(\sqrt{\Delta})$ worst-case update time using $O(\Delta)$ colours. This was subsequently improved by Bhattacharya, Chakrabarty, Henzinger and Nanongkai~\cite{bhattacharya2018dynamic} who gave a fully dynamic algorithm for computing a $(2\Delta-1)$-edge-colouring with worst-case $O(\log \Delta)$ update time. 
Bhattacharya, Costa, Panski, and Solomon~\cite{DBLP:journals/corr/abs-2311-08367} and independently Christiansen, Rotenberg, and Vlieghe~\cite{Vlieghe} showed how to maintain a  $(\Delta+O(\alpha))$-edge-colouring in amortised $O(\poly(\log n))$ update-time, where $\alpha$ is the \emph{arboricity} of $G$. 
However, if $\Delta \sim \alpha$, this algorithm will not always use a number of colours below the greedy threshold as $\alpha$ can be as big as $\tfrac{\Delta}{2}$. 
In particular, until now no deterministic and dynamic algorithm always using less than $2\Delta - 1$ colours with sub-linear update time was known.

In general it is an interesting and challenging task to design deterministic algorithms with guarantees similar to the randomised ones. 
In this paper, we will address this gap between randomised and deterministic dynamic algorithm for $(1+\eps)\Delta$-edge-colouring by providing a deterministic and fully-dynamic algorithm  for maintaining a $(1+\eps)\Delta$-edge-colouring in $n^{o(1)}$ amortised update-time:
\begin{theorem}[Informal]
    Let $G$ be a dynamic graph and $\eps^{-1} \leq 2^{O(\log^{0.49} n)}$ a given parameter. 
    Then, we can maintain a proper $((1+\eps)\Delta)$-edge-colouring of $G$ in $n^{o(1)}$ amortised update time per operation.
\end{theorem}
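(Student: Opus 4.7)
The plan is to reduce $(1+\eps)\Delta$-edge-colouring to edge-colouring many graphs of much smaller maximum degree on disjoint palettes, via a hierarchy of degree-splitters. A degree-splitter takes a graph $H$ with maximum degree $D$ and partitions $E(H)$ into $k$ parts $H_1, \dots, H_k$, each with maximum degree at most $D/k + L$ for a small additive loss $L$. Stacking $d$ such splitters in a tree with branching factor $k$ leaves sub-graphs of maximum degree $\tilde{O}(\Delta / k^d)$; colouring each leaf with a disjoint palette then combines into a proper edge-colouring of $G$ whose total palette size is $\Delta$ inflated by the product of per-level multiplicative slacks.

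\textbf{Parameter choices.} I pick both $k$ and $d$ of order $2^{O(\sqrt{\log n})}$, so that $k^d$ easily dominates $\Delta \leq n$, pushing the leaves down to small max degree. Requiring each splitter's multiplicative slack to be at most $(1 + \eps/(2d))$, a telescoping product gives $(1+\eps/(2d))^d \leq 1+\eps$, and the additive losses $L$ per level can be absorbed into the same slack provided $\eps^{-1} \leq 2^{\poly(\log \log n)}$. The leaves, having poly-logarithmic max degree, are coloured by invoking the classical deterministic $(2\Delta_{\text{leaf}}-1)$ dynamic routine of Bhattacarya, Chakrabarty, Henzinger and Nanongkai with $n^{o(1)}$ update time; for tiny $\Delta$, where the hierarchy would be vacuous, I fall back on this routine directly, since it automatically uses at most $(1+\eps)\Delta$ colours once $\Delta$ is bounded.

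\textbf{The main obstacle.} The bulk of the work is to maintain each degree-splitter deterministically so that both its update time and its \emph{recourse} --- the number of edges whose assigned part changes --- are $n^{o(1)}$ per update at every level. Recourse matters because a reassignment at level $i$ looks like an insertion/deletion pair to the splitter at level $i+1$, so costs compound multiplicatively along the hierarchy; parameters must therefore be tuned so that per-level recourse $R$ and depth $d$ satisfy $R^d = n^{o(1)}$, i.e.\ $\log R = o(\log n / d)$. Classical static splitters are built by Eulerian-partition arguments that are global and far too expensive to re-run per update. My plan is to replace them by a local, threshold-based scheme in which each vertex tracks the number of its edges assigned to each child subgraph, and whenever this count deviates from the target $D/k$ by more than roughly $L/\poly(d)$ a bounded local repair swaps a few edges between parts. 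An amortised potential charging each swap against external insertions and deletions should then yield the required per-level bounds and, together with standard tools for maintaining the max degree estimate $D$ under doubling thresholds, give the promised dynamic splitter.

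\textbf{Putting it together.} Once the hierarchy is maintained, the final colouring follows essentially for free: each edge's colour is the palette-prefix assigned along its root-to-leaf path, and an update propagates through $d$ levels with per-level update cost and recourse in $n^{o(1)}$. Compounding these bounds yields total amortised update time $2^{\tilde{O}(\sqrt{\log n})} = n^{o(1)}$, matching the abstract's bound, while the slack budget from the parameter step guarantees that at most $(1+\eps)\Delta$ colours are ever used.
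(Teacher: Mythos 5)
Your hierarchy-of-splitters reduction and the emphasis on per-level recourse compounding as $R^d$ match the paper's strategy, and your sketch of a local, potential-based dynamic splitter is in the right spirit. But there is a fatal gap in the final step: colouring the leaves with the greedy $(2\Delta_{\mathrm{leaf}}-1)$ routine of Bhattacharya et al.\ on disjoint palettes does \emph{not} give $(1+\eps)\Delta$ colours. If the leaves have maximum degree $D$ and there are about $\Delta/D$ of them, the total palette size is about $(\Delta/D)\cdot(2D-1)\approx 2\Delta$, and your claim that $2D-1\le(1+\eps)D$ ``once $D$ is bounded'' is false for every $D>1/(1-\eps)$, i.e.\ for essentially all $D\ge 2$. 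The additive-versus-multiplicative distinction is the whole game here: one needs each leaf coloured with $D+1$ (or $D+O(\eps D)$) colours, and no deterministic dynamic algorithm achieving that in sublinear time was previously known even for small $D$. This is precisely the second technical barrier the paper identifies, and overcoming it occupies the bulk of the work: a deterministic dynamic $(\Delta_{\max}+1)$-edge-colouring algorithm for low-degree graphs with $(\Delta_{\max}+1)^{\tilde O(\sqrt{\log n})}$ update time, built from consistent multi-step Vizing chains, a ``bichromatic skeleton'' graph, and data structures that store, for each potential starting point, the shortest augmenting non-overlapping chains --- exploiting the fact that any edge can be reached by only $\poly(\Delta_{\max})^{O(\sqrt{\log n})}$ such chains, so a recolouring invalidates few data structures.

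Two smaller points. First, your parameters are off: the depth $d$ must be $O(\sqrt{\log n})$, not $2^{O(\sqrt{\log n})}$; with $d=2^{O(\sqrt{\log n})}$ the compounded recourse $R^d$ exceeds $2^d=2^{2^{O(\sqrt{\log n})}}$, which is super-polynomial for any $R\ge 2$ (with $k=2^{\Theta(\sqrt{\log n})}$ and $d=\Theta(\sqrt{\log n})$ one already gets $k^d\ge n$). Second, your splitter sketch asserts but does not establish the key structural fact that a locally balanced assignment is globally balanced; in the paper this requires a genuine argument (an iterated bipartite-expansion / density argument showing that a large surplus forces an exponentially growing sequence of vertex sets, hence cannot exist), and for branching factor $t>2$ the asymmetry makes both this argument and the recourse analysis substantially more delicate than the $t=2$ case.
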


\subsection{Main technical challenges} \label{sct:techchal}
In this section, we will pinpoint two issues towards designing deterministic dynamic edge-colouring algorithms with few colours. 
These are barriers we know how to overcome in either the static setting or by using randomisation, but so far, to the best of our knowledge, no current deterministic and dynamic techniques can overcome them.  

Many static algorithms, distributed algorithms, and dynamic randomised algorithms produce edge-colourings with few colours via a two-step approach. The first step is a form for \emph{dimensionality reduction}; one attempts to reduce the problem to a case where $\Delta$ is small. The second step is then to apply an algorithm that is fast and efficient for graphs with small maximum degree. 

Many static algorithms~\cite{Gabow,Costa2,Sinnamon} and many distributed algorithms~\cite{ghaffari2018deterministic,split1,split2} address the first step by computing a hierarchy of \emph{degree-splitters}. 
A degree-splitter splits a graph $G$ into two covering subgraphs $G_1, G_2$ such that the maximum degree of $G_i$ is roughly half of the maximum degree of $G$. By applying this procedure recursively one can reduce the problem to $\tfrac{\Delta}{\poly(\eps^{-1},\log n)}$ sub-problems each of maximum-degree $\poly(\eps^{-1},\log n)$. 
However, these static approaches for splitting the degree are very precise, and therefore seem prone to very high recourse in the dynamic setting. 
Hence, we shall take an approach much more resilient to local changes in the graph. 

If the maximum degree is large enough, it is easy to realise something similar in the randomised setting, even dynamically against an oblivious adversary. 
Indeed, one can place each edge into one of $\tfrac{\Delta}{\poly(\eps^{-1},\log n)}$ subgraphs uniformly at random. 
It then follows from a standard application of Chernoff-bounds that the maximum degree of each such subgraph is split nicely with high probability. 
As soon as the adversary becomes adaptive, however, this approach immediately breaks down. 
In the randomised adaptive setting, one can instead rely on a different approach for reducing $\Delta$: one can sample uniformly at random a \emph{palette} $S \subset [t]$ of colours such that the graph induced by all edges coloured with a colour from $S$, say $G[S]$, exhibits useful properties. Indeed, one can show that with high-probability the maximum degree of $G[S]$ is small and every vertex in $G$ has a free colour in $S$. 
One can then colour a new edge, by viewing it as being inserted into $G[S]$. 
This is the approach taken in~\cite{Christiansen,duan}. 
It is, however, not clear how to perform such a reduction in the deterministic dynamic setting; naively building a hierarchy can easily result in an insurmountable recourse (see Section~\ref{sec:overview}), and sampling colours or edges relies crucially on the power of randomisation. 

The second challenge is to efficiently colour a low-degree graph. 
In the static setting one could rely on parallelism~\cite{Gabow}, and in the dynamic \& distributed settings recent progress has shown that the recourse for extending colourings on low-degree graphs is small~\cite{2BERNSHTEYN, BernDhawan, Christiansen, duan} and, more importantly, can be efficiently realised using randomisation~\cite{2BERNSHTEYN, BernDhawan, Christiansen, duan}. However, naively searching for such an extending sequence of recolourings can be expensive, and so it is unclear how to locate it efficiently without using randomisation. 

\subsection{Further Related Work}
Gabow, Nishizeki, Kariv, Leven, and Tereda~\cite{Gabow} showed how to compute a $(\Delta + 1)$-edge-colouring in $\tilde{O}(m \sqrt{n})$ or $\tilde{O}(m \Delta)$ time. This improved over the $O(mn)$ time algorithm implied by the work of Vizing~\cite{vizing1964estimate}. 
Recently this was improved by Sinnamon~\cite{Sinnamon}, and in special settings by Bernshteyn and Dhawan~\cite{BernDhawan} and by Bhattacharya, Costa, Panski, and Solomon~\cite{Costa2}.
Alon~\cite{alon2003simple} and Cole, Ost and Schirra~\cite{cole2001edge} gave a near linear-time algorithm for $\Delta$-edge-colouring bipartite graphs. By applying a reduction due to Karloff and Shmoys~\cite{karloff1987efficient}, these algorithms yield $3 \lceil \frac{\Delta}{2} \rceil$-edge-colouring algorithms for general graphs. 
When $\Delta = \Omega{(\varepsilon^{-1} \log n)}$, Duan, He and Zhang~\cite{duan} gave a $O(\varepsilon^{-2}m \log^6 n)$ time algorithm for computing a $(1+\varepsilon)\Delta$-edge-colouring.
This has recently been improved by Bhattacharya, Costa, Panski, and Solomon~\cite{CostaEtAl} to $\tilde{O}(m\poly(\eps^{-1}))$ time for many values of $\Delta$ and $\eps$.
A very recent line of work~\cite{New1,New3,New2} has managed to bring the time needed for computing a $(\Delta+1)$-edge-colouring in a static graph all the way down to $O(m \log \Delta)$.

The edge-colouring problem has also been extensively studied in the LOCAL model~\cite{2BERNSHTEYN,BernDhawan,pettie, Christiansen,ghaffari2018deterministic,split1,split2,su2019towards} (see for instance~\cite{pettie,ghaffari2018deterministic} for extensive surveys). 
Some of these algorithms either rely on or directly consider the degree-splitting problem~\cite{ghaffari2018deterministic,split1,split2}. 
There has also been a lot of recent progress in online models, see for example~\cite{ONew1,ONew2,ONew3}. 

\subsection{Results}
Our main result it the following theorem: 
\begin{theorem}\label{thm:FullColour}
    Let $G$ be a dynamic graph, and $0<\eps<1$ a given parameter.
    We can maintain a proper $(1+\eps)\Delta$-edge-colouring of $G$ in $2^{\tilde{O}_{\log \eps^{-1}}(\sqrt{\log n})}$ amortised\footnote{Here $\tilde{O}_{\log \eps^{-1}}$ hides $\log \log n$ and $\log \eps^{-1}$ factors.} update time per operation.
\end{theorem}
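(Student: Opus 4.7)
The plan is to realise the ``dimensionality reduction + low-degree colouring'' strategy outlined in Section~\ref{sct:techchal}, but with both steps implemented deterministically. Concretely, I would maintain a shallow hierarchy of degree-splitters on $G$, colour each leaf subgraph with its own disjoint palette using a low-degree subroutine, and concatenate the palettes into a global colouring of $G$.

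First, I would construct the hierarchy. Fix a branching parameter $k$ and a depth $h$ with $k^{h}\geq\Delta$. At each hierarchy node, storing a subgraph $H$ of maximum degree $\Delta_H$, we use the dynamic degree-splitter advertised in the abstract (with recourse and update time $n^{o(1)}$) to partition the edges of $H$ into $k$ covering subgraphs $H_{1},\dots,H_{k}$, each of maximum degree at most $\Delta_H/k+\poly(\eps^{-1},\log n)$. Recursing $h$ levels, every leaf subgraph has maximum degree $\Delta^{\star} := \poly(\eps^{-1},\log n)$, at which point we stop the recursion (the $h$-fold additive slack is absorbed into this bound). Each leaf then receives a fresh palette of size $(1+\eps')\Delta^{\star}$, on which we maintain a proper edge-colouring via a dedicated low-degree routine based on short deterministic Vizing/augmenting chains; since the degree is only $\poly(\eps^{-1},\log n)$, the \emph{existence} of an extending chain of length $\poly(\Delta^{\star})$ is guaranteed structurally, and I would locate it by a bounded BFS-style exploration through the few available colours rather than by sampling. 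Concatenating the leaf palettes yields a global colouring of $G$ using
\[
    k^{h}\cdot(1+\eps')\Delta^{\star}\;\leq\;(1+\eps)\Delta
\]
colours, where $\eps'$ is a small fraction of $\eps$ and the splitter slack is absorbed by reserving a buffer of $O(\eps\Delta_H/h)$ extra colours at every level.

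To bound the cost, observe that a single edge update at $G$ propagates through the hierarchy: at each level the splitter triggers at most $n^{o(1)}$ edge reassignments, each of which is a new update at the next level, and finally each leaf edge-operation costs $\poly(\Delta^{\star})=\poly(\log n,\eps^{-1})$. The total amortised work per update is therefore at most $(n^{o(1)})^{h}\cdot\poly(\log n,\eps^{-1})$. Setting $h=\Theta(\sqrt{\log n})$ and $k=2^{\Theta(\sqrt{\log n})}$ balances the constraint $k^{h}\geq\Delta$ against the $h$-fold multiplicative blow-up, giving an amortised update time of $2^{\tilde{O}_{\log\eps^{-1}}(\sqrt{\log n})}$, as claimed.

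The main obstacle, as already flagged in Section~\ref{sct:techchal}, is the dynamic degree-splitter hierarchy: a naive rebuild-on-change scheme has uncontrollable recourse because an edge flip at the root can cascade into $k^{i}$ reassignments $i$ levels below, and the static deterministic splitter constructions (Eulerian partitions, matroid-based rounding, etc.) are not obviously robust to local updates. I expect the technical heart of the paper to be a lazy / delayed-rebuilding scheme that amortises splitter re-balancings across many operations, combined with a careful choice of the additive slack so that local degree violations can be fixed by moving few edges. A secondary obstacle is the deterministic low-degree colouring subroutine, where the randomised palette-sampling arguments of~\cite{Christiansen,duan,CostaEtAl} must be replaced by a structural guarantee that a short extending chain exists \emph{and} can be found by a bounded search; I would expect this to cost another $\poly(\log n,\eps^{-1})$ factor that is absorbed into the $\tilde{O}_{\log\eps^{-1}}$ in the exponent.
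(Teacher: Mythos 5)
Your architecture is the same as the paper's (a shallow splitter hierarchy with aggressive branching, disjoint palettes on the leaves, a chain-based low-degree subroutine), but two of your quantitative steps do not go through as written. First, the recourse accounting: you charge each level of the hierarchy a recourse of $n^{o(1)}$ and conclude that the total is $(n^{o(1)})^{h}$ with $h=\Theta(\sqrt{\log n})$. This compounding is exactly where the argument can die: $(n^{o(1)})^{\sqrt{\log n}}$ is in general super-polynomial (e.g.\ per-level recourse $2^{\log^{0.9}n}$ is $n^{o(1)}$ but compounds to $2^{\log^{1.4}n}$). For the final bound $2^{\tilde{O}_{\log\eps^{-1}}(\sqrt{\log n})}$ you need the per-level \emph{recourse} to be $\poly(\log n,\eps^{-1})$ — polylogarithmic, not merely subpolynomial — while only the per-level \emph{update time} is allowed to be as large as $\poly(t)=2^{O(\sqrt{\log n})}$, since update time does not compound. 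Obtaining a deterministic $t$-splitter whose recourse is $O(\log n/\eps^{2})$ even though it splits into $t\sim 2^{\sqrt{\log n}}$ parts is the content of Theorem~\ref{thm:tsplit} and is proved via a local surplus invariant and a potential argument; your proposal names the cascading problem but the ``$n^{o(1)}$ per level'' bookkeeping does not resolve it.

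Second, the leaf subroutine. You assert that because the leaf degree is $\Delta^{\star}=\poly(\eps^{-1},\log n)$, an extending chain of length $\poly(\Delta^{\star})$ exists and ``can be located by a bounded BFS-style exploration'' in $\poly(\log n,\eps^{-1})$ time. The existence of short augmenting multi-step Vizing chains is indeed known, but all known ways of \emph{finding} one quickly are randomised (sample a truncation point); a deterministic exhaustive search over candidate chains is the second main obstacle flagged in Section~\ref{sct:techchal} and can cost linear time, since one uncoloured edge is the starting point of exponentially many candidate multi-step chains. The paper's resolution (Theorem~\ref{thm:colAlg}) is to maintain, for every possible entry point into a bichromatic path, the shortest augmenting $i$-step processes for all $i\le\sqrt{\log n}$, exploiting the converse counting bound that each edge is \emph{reachable} by only $\poly(\Delta_{\max})^{O(\sqrt{\log n})}$ non-overlapping processes — and even then the cost per operation is $(\Delta_{\max}+1)^{\tilde{O}(\sqrt{\log n})}$, not $\poly(\Delta_{\max})$. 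Your proposal treats this step as routine, which it is not. (A smaller omission: Theorem~\ref{thm:FullColour} is stated with the current $\Delta$, not a known upper bound $\Delta_{\max}$; the paper removes this assumption with the degree scheduler of Duan, He and Zhang, which your proof does not address.)
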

On the way to this result, we present several algorithms which we believe might be of independent interest. 
First of all, we design what we call a dynamic $t$-splitter; a dynamic algorithm maintaining a $t$-edge-colouring such that the graph induced by each colour has low-degree.
\begin{theorem}\label{thm:tsplit}
    Let $G$ be a dynamic graph, and suppose $\eps > 0$, $t \in \mathbb{N}$, and an upper bound $\Delta_{\max}$ on the maximum degree of $G$ throughout the entire update sequence is given. 
    Assume that $\tfrac{\Delta_{\max}}{t} \geq \tfrac{10^4 \log^2 n}{\eps^{2}}$. Then there is a dynamic algorithm that computes a $t$-edge-colouring of $G$ such that the maximum degree of $G[i]$ is at most $(1+\eps)\tfrac{\Delta_{\max}}{t}$. The algorithm has an amortised update-time in $O(t^3\eps^{-4} \log^3 m)$ and an amortised recourse in $O(\tfrac{\log n}{\eps^2})$.
\end{theorem}
Second of all, we show how to build a shallow hierarchy of $t$-splitters that splits the degree all the way down to $\poly(\log n, \eps^{-1})$. We believe that this might be a useful algorithmic tool for developing efficient dynamic algorithms for other problems as well.
\begin{theorem} \label{thm:splitHier}
    Let $G$ be a dynamic graph, and let $1 > \eps > 0$ and an upper bound $\Delta_{\max} \geq \tfrac{10^7 \log^5 n}{\eps^{3}}$ on the maximum degree of $G$ through the entire update sequence be given. 
    Set $t_1=\floor{2^{10\sqrt{\log n}}}$ and $t_2 = \floor{\tfrac{\log n}{\eps}}$.
    Then there exists a dynamic algorithm maintaining parameters $h,i,j$ with $h = i+j \leq 4 \sqrt{\log n}$ and a set of graphs $\mathcal{G}_{h}$ satisfying that:
    \begin{enumerate}
        \item $\{E(G_i)\}_{G_i \in \mathcal{G}_{h}}$ partitions $E(G)$
        \item $|\mathcal{G}_{i+j}| \leq t_1^{i}t_2^{j} $
        \item $\hat{\Delta}_{i+j}  \leq (1+\tfrac{\eps}{16})\tfrac{\Delta_{\max}}{t_1^{i}t_2^{j}} \leq \poly{\paren{\tfrac{\log n}{\eps}}}$
    \end{enumerate} 
    The algorithm has amortised recourse and update time both in $2^{\tilde{O}_{\log \eps^{-1}}(\sqrt{\log n})}$.
\end{theorem}
Finally, we show how to efficiently $(\Delta_{\max}+1)$-edge-colour graphs with low maximum degree:
\begin{theorem}\label{thm:colAlg}
    Let $G$ be a dynamic graph, and let $\Delta_{\max}$ be a known upper-bound on the maximum degree throughout the entire update sequence. 
    Then, we can maintain a proper $(\Delta_{\max}+1)$-edge-colouring of $G$ in $(\Delta_{\max}+1)^{\tilde{O}(\sqrt{\log n})}$ worst-case update time per operation.
\end{theorem}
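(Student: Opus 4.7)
\medskip

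\textbf{Proof proposal for Theorem~\ref{thm:colAlg}.} The plan is to handle each edge update by performing a bounded-depth deterministic search for a local Vizing-style \emph{augmenting structure}, following (and derandomising) the approach of Bernshteyn and of Christiansen underlying the randomised $(\Delta+1)$-algorithms cited in the introduction. Deletions are trivial: simply erase the colour of the removed edge. For an insertion of an edge $uv$, we first spend $O(\Delta_{\max})$ time to compute the palettes of free colours at $u$ and at $v$; if they share a colour we are done, so assume they do not. In this case Vizing's theorem tells us that some sequence of fan rotations and Kempe (alternating-colour) chain swaps extends the colouring, and our task is to find such a sequence quickly and deterministically.

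The key structural input we would rely on is a bound of $D = \tilde O(\sqrt{\log n})$ on the depth of an augmenting structure rooted at $uv$ when the maximum degree is at most $\Delta_{\max}$. Concretely, we would use the ``multi-step Vizing chain'' construction — a tree whose nodes are either Vizing fans or Kempe-chain segments, where going one level down corresponds to resolving a colour conflict created at the previous level — and invoke the existence of such a tree of depth at most $D$ that terminates at a free colour. This is the same combinatorial object used in prior work, and the point is that the depth bound depends only on $\Delta_{\max}$ and $\log n$, not on $n$ itself.

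To locate such a structure deterministically, we run a breadth-first enumeration of all partial augmenting structures of depth at most $D$ starting from $uv$. At each node of this enumeration tree we have at most $O(\Delta_{\max})$ extensions: a fan node can be grown by appending any of the at most $\Delta_{\max}$ incident edges, and a Kempe-chain node is determined by the choice of at most two colours (one of which is fixed by the parent), leaving $O(\Delta_{\max})$ choices per level overall. The enumeration therefore inspects at most $(\Delta_{\max}+1)^{O(D)} = (\Delta_{\max}+1)^{\tilde O(\sqrt{\log n})}$ candidate structures; checking the validity of one candidate and, once found, applying the corresponding recolouring takes $\poly(\Delta_{\max}, \log n)$ time because each structure touches at most $\poly(\Delta_{\max}, D)$ edges. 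Multiplying gives the claimed worst-case bound of $(\Delta_{\max}+1)^{\tilde O(\sqrt{\log n})}$ per update.

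The main obstacle will be the existence statement in the second paragraph: showing that a terminating augmenting structure of depth only $\tilde O(\sqrt{\log n})$ always exists in a graph of maximum degree $\Delta_{\max}$, uniformly over the current (adversarial) colouring. The natural route is a potential argument over colour classes in the style of Bernshteyn's multi-step chains: each extra level either doubles the number of ``escape'' colours available or terminates the chain, so after $\tilde O(\sqrt{\log n})$ levels a free colour must be reachable. A secondary technicality is ensuring that our breadth-first enumeration avoids combinatorial blow-up from overlapping Kempe chains; we would handle this by maintaining, per candidate, a succinct description of the conflict set so that incompatible extensions can be pruned in time polynomial in $\Delta_{\max}$ and $D$.
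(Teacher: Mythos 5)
There is a genuine gap, and it is exactly the one the paper identifies as the central obstacle. Your plan is to find the augmenting structure at insertion time by a breadth-first enumeration of depth $D=\tilde O(\sqrt{\log n})$ with branching factor $O(\Delta_{\max})$ per level. But the branching at each level is not $O(\Delta_{\max})$: a step of a multi-step Vizing chain consists of a fan plus a bichromatic path that may be \emph{truncated at any of its first $\ell$ vertices}, and the existence guarantee (Corollary~\ref{cor:ExistscVc}) only holds for step length $\ell = \Theta\bigl((\Delta_{\max}+1)^{7}\, 2^{\sqrt{\log n}}\bigr)$. The existence argument (Lemma~\ref{lma:CHR}) works precisely because the reachable sets grow by a factor of roughly $\ell/\Delta_{\max}^{3}$ per step, so $\sqrt{\log n}$ steps suffice; but it only promises that \emph{some} truncation point at each level extends to an augmenting chain, not which one. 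A deterministic search must therefore branch over all $\ell$ truncation points per level, giving $\ell^{\sqrt{\log n}} \approx n\cdot(\Delta_{\max}+1)^{O(\sqrt{\log n})}$ candidates — linear in $n$, not $n^{o(1)}$. (Your claim that each candidate touches only $\poly(\Delta_{\max},D)$ edges fails for the same reason: each bichromatic path segment has length up to $\ell$.) This is the "naive search can result in linear update time" problem the paper attributes to Bernshteyn and Dhawan, and your secondary "doubling" sketch of the existence bound would give depth $\log n$ rather than $\sqrt{\log n}$, making the blow-up worse.

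The paper's proof avoids the search entirely. It maintains, persistently across updates, data structures $\mathcal{D}_i(e,w)$ that store, for every possible entry configuration (a vertex, a neighbour, and a bichromatic path to enter), the shortest augmenting semi-consistent $i$-step processes, computed bottom-up from the $(i-1)$-step structures. An insertion then just reads off the answer and shifts the chain. The whole technical burden shifts to showing that a single colour change invalidates only $(\Delta_{\max}+1)^{O(\sqrt{\log n})}$ of these structures — a reverse-reachability counting argument (Corollary~\ref{cor:processReach}, building on Lemma~\ref{lma:CHRReach}) showing that few starting points can reach a given edge via a non-overlapping $i$-step process — together with a careful marking scheme and a guarantee that each structure stores enough candidates to survive pruning of overlapping extensions. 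None of this machinery appears in your proposal, and without it (or some substitute) the claimed update time does not follow.
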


\section{Informal Overview} \label{sec:overview}
In this section, we will give a high-level overview of the main ideas behind the algorithm. 
We first introduce the overall approach:

\subsection{The overall approach}
From now on, unless otherwise specified, we will assume that we have a known upper bound $\Delta_{\max}$ on the maximum degree $\Delta$. This assumption will later be removed by using a degree-scheduler identical to the one used by Duan, He, and Zhang~\cite{duan}. 

We will take a two-step approach that, at a high-level, is inspired by the distributed algorithm of Ghaffari, Kuhn, Maus, and Uitto~\cite{ghaffari2018deterministic}. 
The idea is to partition the graph into a set of at most $\tfrac{\Delta}{t}$ subgraphs $\mathcal{G}_{h}$ such that $\{E(G_i)\}_{G_i \in \mathcal{G}_{h}}$ partitions $E(G)$ and such that $\Delta(G_{i}) \leq (1+\tfrac{\eps}{2})\cdot{} t$ for all $G_{i} \in \mathcal{G}_{h}$. 
Then one can colour each $G_{i} \in \mathcal{G}_{h}$ with a separate palette using some algorithm that is efficient in low-degree graphs, and then combine all of these colourings into a $(1+\eps)\Delta$-edge-colouring of the original graph $G$. 

As a warm-up, we briefly sketch how to boot-strap a simple static algorithm for computing a $(1+\eps)\Delta$-edge-colouring in near-linear time using the algorithms of Gabow et Al.~\cite{Gabow}. 
This algorithm was recently, and independently, obtained by Elkin and Khuzman~\cite{DBLP:journals/corr/abs-2401-10538}, while this manuscript was under preparation, and we therefore refer to their manuscript for a formal description of the algorithm.

The idea is the following: a \emph{degree-splitter} with discrepancy $\delta$ is a partition of the edge set into two sets $E_1$ and $E_2$, such that for all vertices $v$, we have that $d_{G[E_1]}(v),d_{G[E_2]}(v) \leq \tfrac{\Delta}{2} + \delta $, where $G[E_i]$ is the graph induced in $G$ by the set $E_i$ i.e.\ the subgraph of $G$ containing all edges of $E_i$.
This is a relaxed definition of degree-splitting similar to the one used in~\cite{split2}, but other variants have also been studied (see for example~\cite{split1}).
Given an efficient degree-splitter with small enough discrepancy, one can easily compute a partition $\mathcal{G}_{h}$ of the desired form by recursively applying the degree splitter to $G_1$ and $G_2$ to build a hierarchy whose lowest level form $\mathcal{G}_{h}$. 

Gabow et Al.~\cite{Gabow} showed how to compute a degree-splitter with a discrepancy of only $2$ in the static setting. 
By recursively applying this degree splitter roughly $h = \log \Delta - \log \tfrac{\log n}{\eps}$ times (or 0 times if $\Delta = O(\tfrac{\log n}{\eps})$) one can achieve a split into roughly $\tfrac{\Delta}{\tfrac{\log n}{\eps}}$ subgraphs, each with maximum degree at most $\tfrac{\log n}{\eps} + O(\log n + \sum_{i=0}^{\infty} \tfrac{2}{2^{i}})$.
By rescaling $\eps$ a bit, this yields the desired partition. 
Finally, one can colour each graph in $\mathcal{G}_{h}$ with a distinct palette efficiently using another algorithm of Gabow et Al.~\cite{Gabow}. 

As we have already briefly touched upon in Section~\ref{sct:techchal}, the two main challenges with making such an approach dynamic are the following: $1)$ how to maintain the degree splitters, and $2)$ how to efficiently colour the low-degree graphs. 

As for $1)$, we note that first of all, as far as we are aware, no deterministic dynamic degree-splitting algorithm is known. 
While it is easy to devise a randomised such algorithm against an oblivious adversary for graphs with not too small maximum degree (simply partition the edges u.a.r\ into $2$-edge sets), it is unclear how to do this deterministically. 
Second of all, if a deterministic degree-splitter has recourse at least $10$, then naively maintaining a hierarchy of degree-splitters becomes infeasible as the recourse at depth $h$ of the hierarchy can be as large as $10^h$. 
In order to split the degree from $n^{0.75}$ to $n^{0.25}$, we would require $h \geq \tfrac{1}{2}\log n$, thus resulting in super-linear recourse, and therefore a super-linear update time. 

We overcome this as follows: instead of designing a dynamic degree-splitter, we consider the dynamic $t$-splitting problem, where we are interested in splitting the graph into $t$-subgraphs, each with some restrictions on the maximum degree: 
\begin{problem}
    Given a dynamic graph $G$ subject to insertions and deletions, maintain a colouring $c: E(G) \rightarrow [t]$ such that for all vertices $v \in V(G)$ and all colours $i \in [t]$, we have that the number of edges coloured $i$ incident to $v$, $d_i(v)$, is upper bounded by $d_i(v) \leq (1+\eps) \frac{\Delta}{t}$.
\end{problem}
We show in Section~\ref{sct:tsplits} that we can, in fact, maintain a dynamic $t$-splitter in $\tilde{O}(\poly(t, \log(n), \eps^{-1}))$ update time, but crucially with only $\poly(\log(n), \eps^{-1})$ recourse. 
This data structure is far from trivial, and requires many ideas of its own. We will discuss it in Section~\ref{sct:discussTsplits}. 

Since the $t$-splitter has low recourse, we can tune $t$ very aggressively. In turn, we can get a dynamic $\frac{\Delta}{\poly (\log n, \eps^{-1})}$-splitter by maintaining a very shallow hierarchy of correctly tuned $t$-splitters. 
Indeed, we can set $t \sim 2^{\sqrt{\log n}}$ and split the degree down to $\sim 2^{\sqrt{\log n}}$ using a hierarchy of depth only $h \sim \sqrt{\log n}$. This allows us to bound the recourse $R(h)$ at depth $h$ as: 
\[
R(h) \leq \poly(\log(n), \eps^{-1})^{h} = 2^{\tilde{O}_{\log \eps^{-1}}(\sqrt{\log n})}
\]
Then, we can set $t' \sim \poly(\log n, \eps^{-1})$ to further split the degree down to $\poly(\log n, \eps^{-1})$, while only increasing the recursive depth by an extra additive factor of $\sqrt{\log n}$. 
Since each update takes only $\tilde{O}(\poly(t, \log(n), \eps))$ time, this allows us to split the degree down to $\poly(\log n, \eps^{-1})$ with both recourse and update time in $2^{\tilde{O}_{\log \eps^{-1}}(\sqrt{\log n})}$. 

In order to address $2)$, we will make use of \emph{multi-Step Vizing chains}, a term coined by Bernshteyn~\cite{2BERNSHTEYN} and a topic that has seen a lot of recent development~\cite{2BERNSHTEYN, BernDhawan,Christiansen,grebik2020measurable}. 
Here, one attempts to find a sequence of recolourings that creates an uncoloured edge with a free colour.
The idea is to construct such a sequence of recolourings by first constructing a $1$-step Vizing chain, which is a construction that goes back all the way to Vizing's original paper~\cite{vizing1964estimate}: a Vizing chain  consists of two parts: a) a \emph{fan} $F$ of edges and b) a bichromatic path $P$ of edges coloured with two colours $\kappa_1$ and $\kappa_2$. 
The fan $F$ consists of a center vertex $u$ together with edges $uw_1, \dots, uw_k$ such that the colour of $uw_{i+1}$ is available at $w_i$, meaning that no edges incident to $w_i$ has the colour $c(uw_{i+1})$.
The bichromatic path $P$ then has to begin at $w_k$ and consist only of edges coloured $\kappa_1$ and $\kappa_2$, where $\kappa_1$ is available at $u$ and $\kappa_2$ is available at $w_k$. 
A Vizing chain can be \emph{shifted} by recolouring edges in $F$ and subsequently in $P$ as shown in Figure~\ref{fig:shiftVC} to the left. 
\begin{figure}%
    \centering
    \subfloat{{\includegraphics[width=4cm]{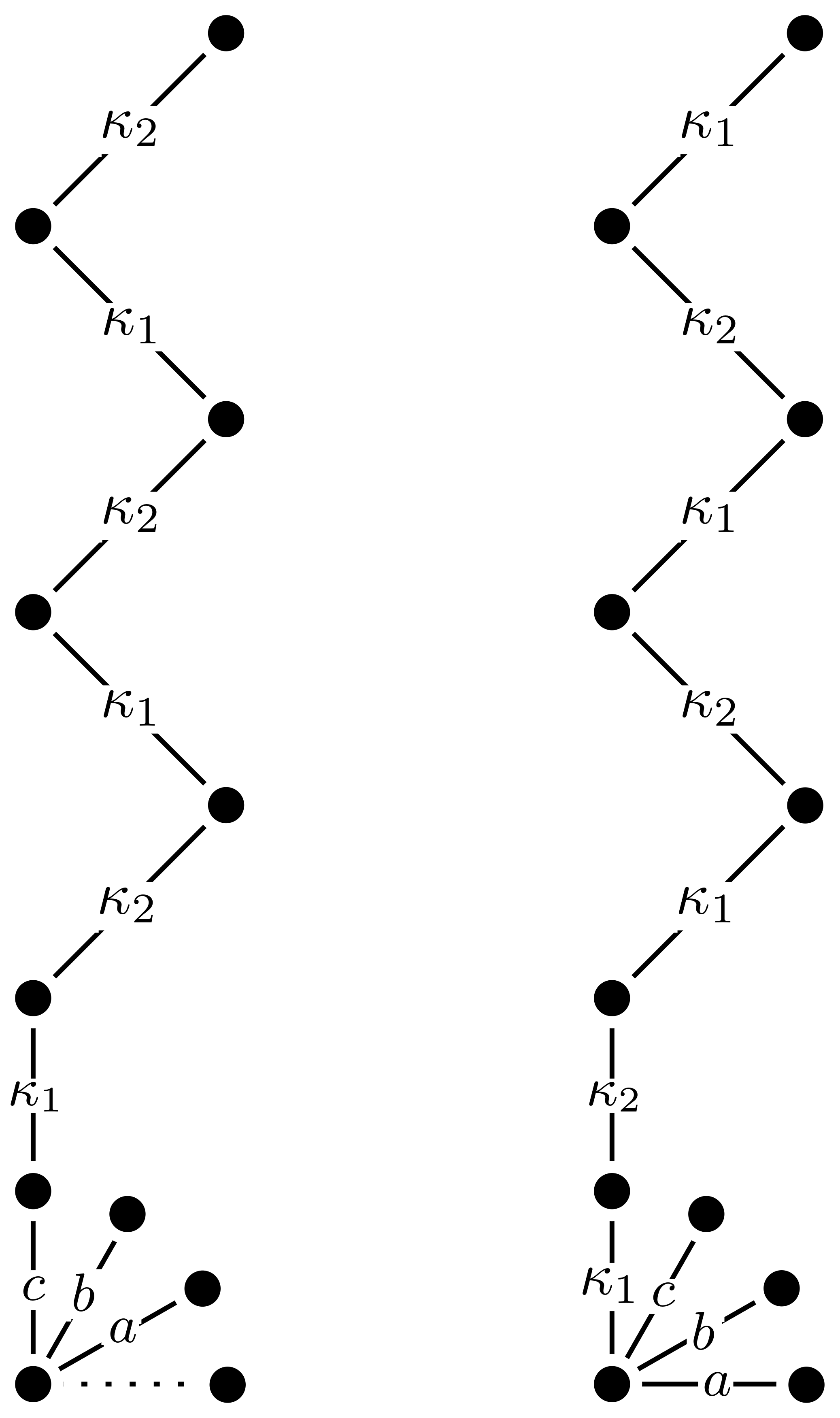} }}%
    \hspace{30mm}%
    \subfloat{{\includegraphics[width=4cm]{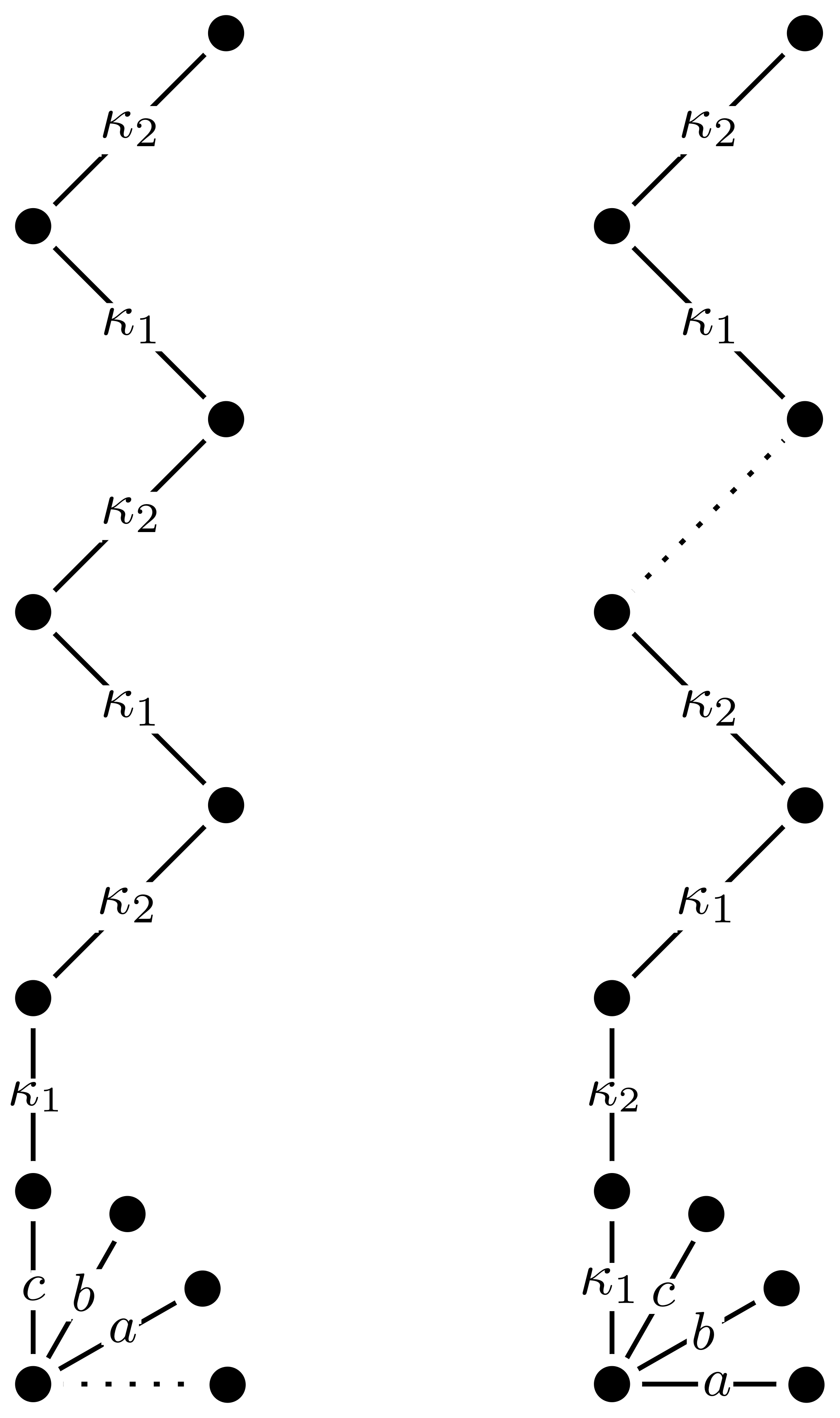} }}%
    \caption{An augmenting Vizing chain is shifted on the left, and a truncated Vizing chain is shifted on the right. The above illustration is originally due to Christiansen~\cite{Christiansen}.}
    \label{fig:shiftVC}%
\end{figure}
Observe that if $P$ is not too long, we are already happy with the construction, as we need only recolour few edges. However, $P$ could potentially have length $\Omega(n)$.
In this case, we can \emph{truncate} it by uncolouring an edge along $P$ and only shifting the first part of $P$ as shown in Figure~\ref{fig:shiftVC} to the right. 

Then we can build a new Vizing chain on top of the new uncoloured edge in the hope that this chain has a shorter length. We can iterate this to construct \emph{multi-step Vizing chains} (see Figure~\ref{fig:multistep}), as has been done in similar fashion in~\cite{2BERNSHTEYN,BernDhawan,Christiansen,grebik2020measurable}. 
The goal is to get a short Vizing chain which is \emph{augmenting} in the sense that after shifting it, the remaining uncoloured edge has a free colour.
In order to more readily reason about these chains, we will require them to be \emph{non-overlapping} in the sense that we impose restrictions on how the $i^{\text{th}}$ extension can overlap with earlier chains. 
\begin{figure}%
    \centering
    \subfloat{{\includegraphics[width=14cm]{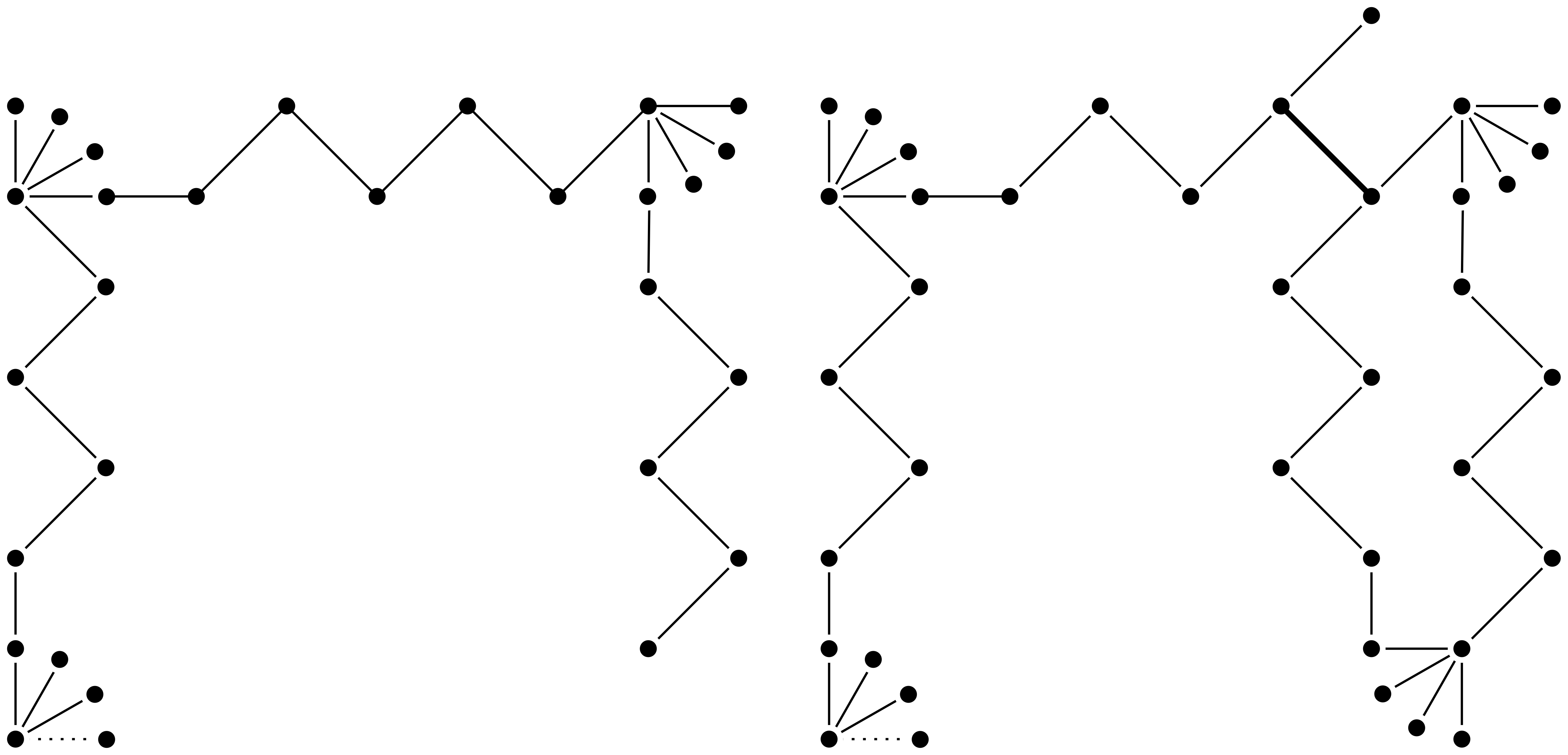} }}%
    \caption{On the left, we see a non-overlapping multi-step Vizing chain. On the right, we see a multi-step Vizing chain overlapping in a single edge. The above illustration is originally due to Christiansen~\cite{Christiansen}.}
    \label{fig:multistep}%
\end{figure}

The main issue is that all known results about Vizing chains are either existential or obtained via randomised constructions. 
As noted by Bernshteyn \& Dhawan~\cite{BernDhawan} naively applying the argument of Christiansen~\cite{Christiansen} to search for a Vizing chain can result in linear update time, and so it seems hopeless to perform some deterministic search for a short and augmenting Vizing chain. 
However, we will show that one can efficiently maintain data structures containing enough information to be able to locate certain classes of short and augmenting Vizing chains efficiently. 
The main observation is that, under the correct tuning of parameters, one has that while one uncoloured edge can be part of many different multi-step Vizing chains, not too many vertices can actually reach the edge via a multi-step Vizing chain. 
This means that assigning an edge a new colour, does not result in the need of updating too many different data structures. 
Maintaining these data structures is quite technical. Therefore, we will give a separate overview of how to maintain and use these data structures in Section~\ref{sct:overviewlowcol}. 

\subsection{Dynamic $t$-splitters} \label{sct:discussTsplits}
In this section, we will give an overview of how to maintain a dynamic $t$-splitter with low recourse. 
We will assume that $\Delta_{\max}, t, \eps$ satisfy that $\tfrac{\Delta_{\max}}{t} = \Omega(\poly(\log n, \eps^{-1}))$, since we only need to apply $t$-splitters when this is the case.
We begin by sketching an algorithm for the case $t = 2$. We note already now that this case is considerable easier than the case $t > 2$. For the case $t = 2$ , the algorithm has worst-case guarantees, but for $t > 2$ all guarantees are amortised.
After sketching the approach, we discuss the technical difficulties with generalising this approach, and sketch how we overcome them. 

We will colour the graph with $2$ colours, colour $0$ and colour $1$. For a vertex $v$, we define its surplus of colour $i$ to be $s_{i}(v) = \max \{d_{i}(v)-\frac{\Delta_{\max}}{t},0\}$.
We will maintain a colouring satisfying the following local invariant for some parameter $\eta$:
\begin{invariant}
    For all $uv \in E(G)$, we have that $c(uv) = i$ implies
    \[
    \max\{s_{i}(u), s_{i}(v) \} \leq \eta + \max\{s_{i+1}(u), s_{i+1}(v)\}
    \]
    where addition is performed modulo $2$.
\end{invariant}
This invariant is useful for two reasons: first of all, one can show that if it holds everywhere, then the maximum surplus is never too big, and second of all, whenever an edge $uv$ is inserted one can restore the invariant by only recolouring edges along a path with alternating colours. These paths bear some resemblance to the augmenting paths used in a distributed algorithm due to Ghaffari and Su~\cite{split2}. 

Indeed, to see that the invariant implies a low maximal surplus, observe that if $s_{i}(v)$ is large, then any vertex $w$ neighbouring $v$ via an edge coloured $i$ must have large surplus $s_{i+1}(w) \geq s_{i}(v) - \eta$. 
Applying this approach recursively yields sets $V_1, V_2, \dots, V_k$ where $u \in V_{j}$ implies that $s_{i+j-1 \mod 2}(u) \geq s_{i}(v) - (j-1)\eta$. 
In particular, one has:
\[
|E(G[\cup_{h=1}^{k} V_{h}])| \geq |\cup_{h=1}^{k-1} V_{h}| \cdot{} (s_{i}(v) - (k-1)\eta)
\]
Indeed, observe that one double counts no edges as vertices in $V_j$ cannot be connected by an edge coloured $i+j-1 \mod 2$, without violating the local invariant.

For a reasonable choice of $k,\eta$ and $s_{i}(v)$, we have that $s_{i}(v) - (k-1)\eta > (1+\tfrac{\eps}{16})\tfrac{\Delta}{2}$. 
In particular, since the maximum subgraph density $\rho(G)  = \max_{J \subset G} \{\frac{|E(J)|}{|V(J)|} \} \leq \frac{\Delta}{2}$ by the hand-shaking lemma, it follows by induction that $|\cup_{h=1}^{k} V_{h}| \geq (1+\tfrac{\eps}{16})|\cup_{h=1}^{k-1} V_{h}|$.
Hence, we conclude that $k = O(\tfrac{\log n}{\eps})$, which naturally bounds how big $s_{i}(v)$ can be. 

In order to see that one need only recolour edges along an alternating path, assume the edge $uv$ is inserted and receives colour $i$ minimizing $\max\{s_{i}(u), s_{i}(v) \}$. The choice of $i$ readily implies that at most one of $s_{i}(u)$ and $s_{i}(v)$ can be larger than $\frac{\Delta_{\max}}{2}$. If $s_{i}(w) \leq \frac{\Delta_{\max}}{2}$ after $uv$ is coloured by $i$, then no edge incident to $w$ will suddenly violate the invariant. So we need only focus on edges incident to the endpoint with the larger surplus in $i$. WLOG say $s_{i}(u) > \frac{\Delta_{\max}}{2}$. 
If so, then only edges coloured $i$ incident to $u$ can become violating. If an edge $ux$ becomes violating, then it must be that
\[
\max\{s_{i}(u), s_{i}(x) \} > \eta + \max\{s_{i+1}(u), s_{i+1}(x)\}
\]
Hence, recolouring $ux$ to colour $i+1 \mod 2$ restores the invariant at all edges incident to $u$, since $s_{i+1}(u) < \frac{\Delta_{\max}}{2}$. 
Now some edges coloured $i+1$ incident to $x$ might be violating. These can be fixed as above. Note that since $s_{i+1}(x) \leq s_{i}(u)-\eta$, this process cannot be iterated too many times before one visits a vertex with less than $\eta$ surplus, and one restores the invariant. 
In order to implement the above efficiently, one can locate violating in-edges by updating information lazily via a round-robin scheme. 
The use of the round-robin scheme is inspired by a sequence of recent work on dynamic algorithms for maintaining an estimate of the maximum subgraph density $\rho$, see~\cite{sawlani2020near,Chekurietal}, where a lazy round robin scheme is used to locate a chain of directed edges to re-orient. 
We will not speak more about this scheme, and instead focus on a more general scheme needed for the $t > 2$ case later on. 

In order to generalise the above to $t>2$, we will work with the following local invariant instead: 
\begin{invariant}
    For all $uv \in E(G)$ and all $i \in [t]$, we have that 
    \[
    L(uv) = s_{c(uv)}(u) + s_{c(uv)}(v) \leq \eta + s_{i}(u) + s_{i}(v)
    \]
\end{invariant}
Many of the difficulties that arise when generalising the above approach to $t > 2$ stem from the fact that the problem becomes a lot less symmetric. The generalisation of the local invariant is chosen carefully to account for this asymmetry.
Again we wish to show that $1)$ the local invariant implies that the colouring is $t$-splitter, and $2)$ that we can restore the invariant in some efficient manner. 

As for $1)$ the asymmetry is reflected in the fact that one can no longer extract dense subgraphs as in the argument above. 
Indeed, for $t > 2$ it is unclear how to extend a density based argument. Before there was a natural way of assigning a high-surplus edge to the corresponding high-surplus endpoint. This allowed us to extract a dense subgraph, which we could use to show that the maximum surplus could not be too large.
However, for $t>2$ it is unclear which endpoint one should assign a high-surplus edge to; indeed assigning it to the high-surplus vertex as before risks that some vertices do not get assigned more than $\tfrac{\Delta_{\max}}{2}$ edges. Furthermore, there is the additional caveat that it might not even be well-defined which endpoint has high-surplus, as both endpoints could contribute equally to $s_{c(uv)}(u) + s_{c(uv)}(v)$. 

Instead, we show in Section~\ref{sct:tsplits} that if $s_{c(uv)}(u) + s_{c(uv)}(v)$ is large, then one can extract a sequence of bipartite subgraphs $\{\mathbb{B}_j\}_{j = 1}^{10\log(n)\eps^{-1}}$, possibly with low density, containing an exponentially increasing number of vertices. 
This again allows us to bound $s_{c(uv)}(u) + s_{c(uv)}(v)$, and show that we indeed maintain a $t$-splitter.

The asymmetry is also reflected in $2)$. Before one could restore the invariant by recolouring a path, since fixing violating edges coloured $i$ did not cause edges coloured $i+1$ at the same vertex to become violating. 
However, if one has multiple colours to choose from, one might have to recolour multiple edges incident to each vertex. 
Hence, one has to analyse the recourse completely differently. We show that each recolouring of a violating edge cause the following potential to drop:
\[
\Phi(G,c) = \sum \limits_{i=1}^{t} \sum \limits_{v \in V(G)} \sum \limits_{j=1}^{s_{i}(v)} j
\]
thus allowing us to bound the recourse of the algorithm in an amortised fashion. 

Finally, we arrive at the dynamic $t$-splitter by carefully adjusting the lazy round-robin schemes and the data structures used to account for the asymmetry and the fact that $t>2$. 

\subsection{$(\Delta_{\max}+1)$-edge-colouring of low degree graphs} \label{sct:overviewlowcol}
In the following section, we sketch our approach for dynamically colouring low degree graphs. We mostly try to convey some intuition, so some parts of the descriptions are purposefully left a bit vague.
The starting point for our algorithm is the following: suppose that given any proper partial $(\Delta_{\max}+1)$-edge-colouring and an uncoloured edge $e$, there exists an augmenting, non-overlapping $\sqrt{\log n}$-step Vizing chain with length $\Delta_{\max}^{O(\sqrt{\log n})}$. Indeed, this is the case and follows from the work of Christiansen~\cite{Christiansen}. 

While a naive search for such a chain might take $O(\poly(\Delta_{\max})n)$ time, one can readily observe that any single edge can only be part of a $\sqrt{\log n}$-step Vizing chains beginning at one of $\Delta_{\max}^{O(\sqrt{\log n})}$ different points. 
Indeed, any coloured edge is part of at most $\Delta_{\max}+1$ different bichromatic paths and at most 2 different fans. 
Since any bichromatic path can be a part of at most $2(\Delta_{\max}+1)$ different 1-step Vizing chains, the statement follows for $1$-step Vizing chains. 
To see that the statement holds for more steps, observe that in order to extend the chain through $e$, one has to be able to reach an edge incident to an edge sharing an endpoint with $e$. Since there are at most $2(\Delta_{\max}+1)$ such edges, one can apply the above argument again to see that there are at most 
$O(\poly(\Delta_{\max}))$ choices for the starting point of $2$-step Vizing chains. Iterating this argument then gives a similar bound for $i$-step Vizing chains.

The above argument relies crucially on the fact that we require the Vizing chain to be non-overlapping (see Definition~\ref{def:nonO} for a formal definition). 
Indeed, an issue could be that shifting the first step of a Vizing chain could create many new bichromatic paths not accounted for by the above argument. 
But for that to happen, the extension must circle back and \emph{overlap} or \emph{intersect} an ealier Vizing chain. 
This in turn means that the Vizing chain becomes \emph{overlapping}, and thus is not considered for the above argument. 
This type of argument is not new, and has been used for instance by Bernshteyn~\cite{2BERNSHTEYN} and Christiansen~\cite{Christiansen} to analyse the recourse of certain multi-step Vizing chain constructions. 

In this paper, we use this fact very explicitly in our algorithm. The idea is to explicitly maintain enough information about bichromatic paths in the graph to be able to construct an augmenting and non-overlapping $\sqrt{\log n}$-step Vizing chain with small step-length fast: if  we know for each vertex if there is an augmenting $1$-step Vizing chains going through it, then one can update information about augmenting and non-overlapping $2$-step Vizing chains with small step lengths fast: simply consider all $\Delta_{\max}^{O(\sqrt{\log n})}$ different extension points, determine which extension have an augmenting $1$-step Vizing chain and extend via these points. 
We can then update information about $i$-step Vizing chains in a similar, iterative fashion.

Since any edge can be reached by at most $\Delta_{\max}^{O(\sqrt{\log n})}$ relevant Vizing chains, (re)-colouring or un-colouring an edge only affects $\Delta_{\max}^{O(\sqrt{\log n})}$ data structures. 
Hence, we can update all such data structures by first determining and updating all data structures containing affected $1$-step Vizing chains, and then subsequently use these to update the $2$-step structures efficiently, and so on. 

The above approach has some glaring issues: first of all, we need the extensions to be non-overlapping. This requires us to check whether each extension is overlapping. 
However, by extending in a controlled manner~\cite{2BERNSHTEYN,grebik2020measurable}, one can ensure that not too many extension are overlapping, and thus one can actually afford to check for each chain whether it is overlapping, in $\tilde{O}(\text{step-length})$ time per chain. 

Secondly, changing the colour of edges might cause some of the fan-constructions to become out-dated, and so explicitly storing the Vizing chains might suddenly require information about a different type of extension. 
To remedy this issue, we consider \emph{stepping processes} instead, which emulate the structure of Vizing chains: in each step the process is allowed to jump to a neighbouring vertex and then follow a new bichromatic path. 
This allows us to maintain information about processes for many different kind of extensions, which in turn allows us to quickly update the data structures whenever a fan-construction becomes out-dated. 

Thirdly, there could be too many augmenting extension for the algorithm to efficiently store. 
In this case, it is not clear which chains should be stored and used for extensions, as one might store the 'wrong' chains which overlap with some stepping process much later on trying to extend via the process. Especially, storing only one chain is not necessarily sufficient.
To avoid this issue, we store sets of stepping processes that are \emph{prefix-free}, meaning that the initial parts of the processes look different, even if the later parts could be identical. This allows us to ensure that not all of the stored processes will overlap with bichromatic path that we are currently trying to extend through. 
It is of course not clear that such a set of stepping processes even exists or can be efficiently computed, but we show that not only does such a set exist for \emph{any} starting point, it is in fact sufficient to only store 'small' sets of processes for each point in order to ensure that the algorithm has sufficiently many choices to iteratively extend. 
Since small sets are sufficient, the algorithm can actually afford to go through the stored processes and check them individually in order to construct a new small set of processes to store.

By carefully designing data structures, we show that the above approach yields a dynamic algorithm with $(\Delta_{\max}+1)^{\tilde{O}(\sqrt{\log n})}$ worst-case update time.

\subsection{Outline of the paper}
In Section~\ref{sct:prelim}, we define notation and discuss some preliminaries. Then in Section~\ref{sct:tsplits}, we present our dynamic $t$-splitter, before we show how to build a hierarchy of such splitters in Section~\ref{sct:hier}. In Section~\ref{sec:lowCol}, we present our algorithm for colouring low-degree graphs. 
Finally, in Section~\ref{sct:overall} we combine all of the above to get our final algorithm. 

\section{Preliminaries \& Notation} \label{sct:prelim}
The first parts of this section is very similar to Section 2 of the full-version of~\cite{Christiansen}, as we need similar terminology. 
For a positive integer $t$, we let $[t] = \{1, 2, \dots, t \}$ be the set of numbers in $\mathbb{N}$ that are greater than $0$, but at most $t$.
We let $G = (V,E)$ be a graph on $n$ vertices and $m$ edges. A subgraph $H \subset G$ of $G$ is a graph such that $V(H) \subset V(G)$ and $E(H) \subset E(G)$. The \emph{$t$-hop neighbourhood} of $H$ in $G$, $N^{t}(H)$ consists of all of the vertices in $G$ of distance at most $t$ (in $G$) to a vertex in $H$. In particular, $H \subset N(H)$.
A \emph{proper (partial) $k$-edge-colouring} $c$ of $G$ is a function $c:E(G) \mapsto [k] \cup \{\neg\}$ such that if $c(e) \neq \neg$ and $c(e') \neq \neg$, then $e \cap e'$ implies that $c(e) \neq c(e')$. An edge $e$ such that $c(e) = \neg$ is said to be \emph{uncoloured}. Two uncoloured edges may share an endpoint in a proper partial colouring.
If no edges are uncoloured, we say that the colouring is a proper $k$-edge-colouring. 
For a vertex set $U \subset V(G)$ (resp.\ edge set $U' \subset E(G)$), we let $G[U]$ (resp.\ $G[U']$) be the induced graph of $U$ in $G$ i.e.\ the subgraph of $G$ on vertex set $V(G[U]) = U$ containing the maximum number of edges (resp.\ the induced graph of $U'$ in $G$ i.e.\ the subgraph with edge set $E(G[U']) = U'$ and vertex set $V(G[U']) = V(G)$). 
Given a \emph{proper (partial) $k$-edge-colouring} $c$ and a set of colours $S \subset [k]$, we let $G[S]$ be the graph containing all edges coloured with a colour from $S$, and all of the vertices incident to an edge coloured with a colour from $S$. Given a proper (partial) $t$-edge-colouring of a graph $G$, we define the set $A(v)$ of \emph{available} colours at a vertex $v$ to consist of precisely the colours of $[t]$ that no edge incident to $v$ has received.

For a directed graph $G$, any edge denoted by $(uv) \in E(G)$ is oriented from $u$ towards $v$. We denote by $d_{\text{in}}(v)$ (resp.\ $d_{\text{out}}(v)$) the \emph{in-degree} of $v$ (resp.\ \emph{out-degree} of $v$). 
Similarly, if $E' \subset E(G)$ is a subset of edges and $H$ any subgraph, we let $N^{j}_{E',\text{in}}(H)$ denote the $j$-hop in-neighbourhood of $H$ in the graph induced by $E'$. 
Similarly, we let $E_{E', \text{in}}(H)$ be the edge set containing all in-edges of $H$ also belonging to $E'$.

\subsection{Chains and shifts}
In order to extend a partial edge-colouring, we will have to assign some coloured edges new colours. 
In order to describe such a process formally, we briefly recall the chain and shift terminology as it was used by for example Bernshteyn~\cite{2BERNSHTEYN} (see~\cite{2BERNSHTEYN} for a more in-depth treatment). 

Let $e_1, e_2 \in E(G)$ be two adjacent edges in a graph $G$, and let $c$ be a proper partial colouring of $G$. Then we define a colouring $\operatorname{Shift}(c,e_1,e_2)$ by setting: 
\[
\operatorname{Shift}(c,e_1,e_2)(e) = \begin{cases}
     c(e_2) &\quad\text{if } e = e_1 \\
     \neg &\quad\text{if } e = e_2 \\
    c(e) &\quad\text{if } e \notin \{e_1, e_2\} \\
     \end{cases}
\]
We say that such a pair of adjacent edges $e_1, e_2$ are $c$-\emph{shiftable} if $c(e_1) = \neg$ and $c(e_2) \neq \neg$ and the colouring $\operatorname{Shift}(c,e_1,e_2)$ defined above is a proper partial colouring. If the colouring $c$ is clear from the context, we will sometimes leave out this argument. We will only consider shiftable chains. 

A \emph{chain} $C$ of size (or length) $k$ is then a set of edges $C = (e_1, \dots, e_k)$ such that $e_{i}$ and $e_{i+1}$ are adjacent for all $i$ and $c(e_i) = \neg$ if and only if $i = 1$. 
For $0\leq j \leq k-1$, we can $j$-shift such a chain by performing $\operatorname{Shift}_{j}(c,C)$ defined as:
\begin{align*}
    \operatorname{Shift}_{0}(c,C) &= c \\
    \operatorname{Shift}_{i}(c,C) &= \operatorname{Shift}(\operatorname{Shift}_{i-1}(c,C),e_{i},e_{i+1})
\end{align*}
We say that $C$ is $c$-shiftable if every pair of edges $e_{i}, e_{i+1}$ is $\operatorname{Shift}_{i-1}(c,C)$-shiftable. 
It is straightforward to check that $j$-shifting such a chain $C$ yields a proper colouring, where the unique uncoloured edge in $C$ is the edge $e_{j+1}$. 
We say that the chain ends at the edge $e_k$ and at the vertex $v$ that is shared by $e_k$ and $e_{k-1}$. 
Note that in the context of simple graphs, this vertex is well defined. 
We will refer to the act of $k$-shifting a chain of size $k$ as simply \emph{shifting} the chain. We let $c$ be the \emph{pre-shift} colouring of $C$ and $\operatorname{Shift}(c,C)$ the \emph{post-shift} colouring of $C$. 
We briefly note that from an algorithmic point of view, we can view $\operatorname{Shift}(c,e_1,e_2)(e)$ as first uncolouring $e_2$, and then assigning $e_1$ the colour $c(e_2)$. 
This ensures that during all steps of the recolouring, the current colouring of the graph is proper. 
 
Often we wish to find a chain $C$ such that shifting $C$ yields an uncoloured edge $uv$ such that $A(u) \cap A(v) \neq \emptyset$ so that we may colour $uv$ to extend our partial colouring.
If this is the case, we say $C$ is an \emph{augmenting chain}, otherwise we refer to it is a \emph{truncated chain}.  We say that $\kappa$ is \emph{available} at $uv$ if $\kappa \in A(u) \cap A(v)$.

The \emph{initial segment} of length $s$ of a chain $C$ is then the chain $C|s = (e_1, \dots, e_s)$. 
One can concatenate two chains $C_1 = (e_1, \dots, e_{s_1})$ and $C_2 = (e_{s_1},f_2, \dots, f_{s_2})$ to get the chain $C_1 + C_2 = (e_1, \dots, e_{s_1}, f_2, \dots, f_{s_2})$. 
The chain $C_1 + C_2$ is $c$-shiftable if $C_1$ is $c$-shiftable and $C_2$ is  $\operatorname{Shift}(c, C_1)$-shiftable.
In order to shift such a chain, we use $\operatorname{Shift}(c, C_1+C_2) = \operatorname{Shift}(\operatorname{Shift}(c, C_1), C_2)$.

A \emph{fan} chain with center $u$ is a chain $F$ of the form $F = (uw_1, \dots, uw_k)$. Here $u$ is the \emph{center} of the fan, denoted by $\text{center}(F)$. 
Bernshteyn~\cite{2BERNSHTEYN} showed that for such fans to be shiftable chains, we require $c(uw_{i+1}) \in A(w_i)$. 
We will let the colour of $c(uw_{i+1})$ be the \emph{representative available colour} at $w_i$. 
We will also pick a representative available colour at $w_k$ in $A(w_k)$. If this representative available colour is also either available at $u$ or if it is the representative available colour for some $w_j$ with $j < k$, then we say that the fan is \emph{maximal}. 

A \emph{path chain} is a shiftable chain of the form $P = (e_1, \dots, e_k)$, such that the edges in the chain form a path in the graph. The \emph{length} of a path chain $P = (e_1, \dots, e_k)$ is $k$.
A \emph{bichromatic path chain} is then a $c$-shiftable chain $P = (e_1, \dots, e_k)$ that forms a path in $G$ such that $c(e_1) = \neg$ and the colour of $c(e_{2i}) = \kappa_1$ for all $1\leq i \leq \floor{\tfrac{k}{2}}$ and $c(e_{2i+1}) = \kappa_2$ for all $1\leq i \leq \floor{\tfrac{k-1}{2}}$, and furthermore such that $e_2$ is either the first or the last edge in some maximal $(\kappa_1, \kappa_2)$-bichromatic path. 
In order to specify the colours present in the chain, we will sometimes refer to such a chain as a $(\kappa_1, \kappa_2)$-bichromatic path chain.

Vizing originally showed how to choose both a fan chain $F$ and a bichromatic path chain $P$ that together form a bigger chain $F+P$ such that $\operatorname{Shift}(c,F+P)$ leaves some edge $uv$ with an available colour.
This type of chain has been referred to as a \emph{Vizing chain} in the literature~\cite{2BERNSHTEYN,BernDhawan,Christiansen,grebik2020measurable}. 
\begin{definition} \label{def:truncVC}
Given a proper partial colouring $c$ of $G$ and an edge $e=uv\in E(G)$ with $c(e) = \neg$, a \emph{Vizing chain}
on $e$ with center $u$ is a chain $F+P$, where $F = (uw_1, \dots, uw_k)$ is a fan chain with $\text{center}(F) = u$ and $P = (p_0p_1, \dots, p_{t-1}p_t) $ is a $(\kappa_1,\kappa_2)$-bichromatic path chain with $p_0 = u$, $p_1 = w_k$, $\kappa_1 \in A(u)$, and $\kappa_2 \in A(w_k)$.
\end{definition}
We say that $F+P$ \emph{ends} at the edge $p_{t-1}p_{t}$ and at the vertex $p_{t-1}$. 
We let $\operatorname{Shift}(c, F+P)$ be the \emph{post-shift} colouring. 
We let $|E(P)|$ be the length of $F+P$. 
\begin{definition}
Given a proper partial colouring $c$ of $G$, an $i$-step Vizing chain is a $c$-shiftable chain of the form $F_1+P_1 + F_2 + P_2 + \dots F_i + P_i$ where $F_j+P_j$ is a Vizing chain for all $j$.   
\end{definition}
The length of a multi-step Vizing chain is the sum of the lengths of the chains $F_j+P_j$. 
We will only be considering Vizing chains that are \emph{non-overlapping}:
\begin{definition} \label{def:nonO}
An $i$-step Vizing chain is \emph{non-overlapping} if every pair of Vizing chains $F_j+P_j$ on edge $e_j$ and $F_k+P_k$ on edge $e_k$ share an edge exactly when $j = k-1$ and the shared edge is $e_k$, and if, furthermore, for all $k \leq i$ no edge in the chain $F_k+P_k = (e_1, \dots, e_s)$ is repeated i.e.\ $e_{j} = e_{h}$ iff $j = h$. 
\end{definition} 

\subsection{Algorithms for constructing and extending Vizing chains}
In this subsection, we briefly recall some well-known algorithms for constructing Vizing chains. 
In some cases, one is allowed to make arbitrary choices in the constructions.
In order to perform the constructions in a consistent manner, we will order the colours $[s]$ according to the strict ordering $<$ on $\mathbb{N}$. 
Now, whenever an arbitrary choice between colours (possibly satisfying some conditions) can be made, we will always choose the smallest colour according to $<$. 
We say that a fan is \emph{consistent} if the representative available colour $\alpha_i$ of each vertex $w_{i}$ is the smallest colour in $A(w_{i})$ subject to the requirements that $1)$ if $A(u) \cap A(w_i) \neq \emptyset$, then $\alpha_{i} \in A(u) \cap A(w_i)$ and $2)$ else if $j$ is the smallest integer $j < i$ with $\alpha_j \in A(w_i)$ then $\alpha_{i} = \alpha_{j}$. 
Given a consistent fan-chain $F  = (uw_{1}, \dots, uw_{k})$, a $(\kappa_1, \kappa_2)$-bichromatic path chain $P$ is \emph{consistent with $F$} if: $\kappa_1 \in A(u)$ is chosen as small as possible according to $<$ and $\kappa_2 = \alpha_k$.

The following algorithm will be used to construct the first fan of every Vizing chain. 
It is a standard result that we have slightly adjusted to have a consistent output. 
\begin{lemma}[\cite{2BERNSHTEYN,BernDhawan, vizing1964estimate}] \label{lma:fan1}
    Let $G$ be a graph with maximum degree $\Delta$ satisfying $\Delta \leq \Delta_{\max}$.
    Let $c$ be a proper partial $(\Delta_{\max}+1)$-edge-colouring, and let $e = uv \in E(G)$ be an uncoloured edge. 
    Then there exists a unique consistent maximal fan chain $F = (uw_{1}, \dots, uw_{k})$ with center $u$. 

    Furthermore, $F$ may be computed in $\tilde{O}(\Delta_{\max}^{3})$ time.  
\end{lemma}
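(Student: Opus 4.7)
The plan is to prove existence, uniqueness and maximality simultaneously by a deterministic greedy construction, and then extract the running time by bounding the work at each step of that construction.

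First, I would build $F$ inductively. Set $w_1 := v$, so that $uw_1 = e$ is the (unique) uncoloured edge of the chain. At step $i \geq 1$, suppose $F_i = (uw_1,\dots,uw_i)$ has been built together with representative colours $\alpha_1,\dots,\alpha_{i-1}$. I would first check the two maximality clauses for $w_i$: (1) is $A(u)\cap A(w_i)\neq\emptyset$? and (2) is there a $j<i$ with $\alpha_j\in A(w_i)$? If either holds I terminate, picking the representative $\alpha_i$ at $w_i$ as the smallest colour witnessing the first clause that applies, so $F=F_i$ is by construction maximal and consistent. Otherwise let $\alpha_i$ be the smallest element of $A(w_i)$; by the failure of clause (1) we have $\alpha_i\notin A(u)$, so $u$ has a unique incident edge of colour $\alpha_i$, call it $uw_{i+1}$, and continue. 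The construction is well-defined because $|A(w_i)|\geq (\Delta_{\max}+1)-d_G(w_i)\geq 1$.

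Next I would argue termination, uniqueness, and shiftability. Because in every non-terminating step we picked $\alpha_i$ distinct from $\alpha_1,\dots,\alpha_{i-1}$ (otherwise clause (2) would have fired), the colours $\alpha_1,\dots,\alpha_{k-1}$ are pairwise distinct; each is the colour of a distinct edge $uw_{i+1}$ at $u$, so $k-1\leq d_G(u)\leq\Delta_{\max}$ and in particular the $w_i$ are pairwise distinct neighbours of $u$. Uniqueness follows by induction on $i$: at each step the consistency rules together with the tie-breaking $<$ force $\alpha_i$ to be a uniquely determined colour, hence force $w_{i+1}$ (when it exists) to be the unique $u$-neighbour with $c(uw_{i+1})=\alpha_i$. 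Shiftability is immediate, since $c(uw_1)=\neg$ and $c(uw_{i+1})=\alpha_i\in A(w_i)$ by construction, which is exactly the known condition for a fan to be a shiftable chain.

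Finally, for the runtime I would bound the work per iteration. At step $i$ we need $A(w_i)$, its smallest element, and whether it meets $A(u)$ or $\{\alpha_1,\dots,\alpha_{i-1}\}$; each of these requires scanning at most $O(\Delta_{\max})$ edges incident to $w_i$, reading colours, and performing $O(\Delta_{\max})$ set lookups (using a single array of size $\Delta_{\max}+1$ marked with the colours in $A(u)$ and $\{\alpha_j\}_{j<i}$, which is initialised in $O(\Delta_{\max})$ time at the start and updated in $O(1)$ time per new $\alpha_i$). Locating $w_{i+1}$ as the unique $u$-neighbour with $c(uw_{i+1})=\alpha_i$ similarly costs $O(\Delta_{\max})$. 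Thus each step is $O(\Delta_{\max})$ and there are at most $\Delta_{\max}+1$ steps, giving a total of $O(\Delta_{\max}^2)$, comfortably inside the claimed $\tilde{O}(\Delta_{\max}^3)$ budget.

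The only subtle point is making sure the stopping rule exactly matches the definition of \emph{consistent} and \emph{maximal}: when we stop at step $k$ we must explicitly verify that the representative $\alpha_k$ we output at $w_k$ is the smallest colour in $A(w_k)$ that is either in $A(u)$ or equal to some earlier $\alpha_j$, which is immediate from the order in which clauses (1) and (2) are tested and from the fact that the construction did not terminate at any earlier step. I do not anticipate any real obstacle beyond this bookkeeping.
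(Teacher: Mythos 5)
Your proposal is correct and follows essentially the same greedy construction as the paper's proof: build the fan step by step, terminating when the representative colour is either shared with $A(u)$ or repeats an earlier representative, with uniqueness forced by the smallest-colour tie-breaking and termination from the fact that non-terminating steps consume distinct coloured edges at $u$. Your runtime bookkeeping is slightly sharper ($O(\Delta_{\max}^2)$ versus the paper's $\tilde{O}(\Delta_{\max}^3)$), but both fit the claimed bound.
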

\begin{proof}
    We first describe how to construct the fan: we set $w_1 = v$. We pick an available representative colour $\alpha_1$ in $A(v)$ for $v$ as follows: if $A(v) \cap A(u) \neq \emptyset$, we let $\alpha_1$ be the smallest element of $A(v) \cap A(u)$. 
    In this case, the construction terminates, and we conclude that $F$ is unique.
    Otherwise, we let $\alpha_1$ be the smallest element of $A(v)$. We let $w_2$ be the unique neighbour of $u$ with $c(uw_2) = \alpha_1$. In this case, we note that any consistent maximal fan chain must have $(uw_1,uw_2)$ as an initial segment.
    
    Next assume that we have inductively constructed a fan chain $(uw_1, uw_2, \dots, uw_{j})$ with available representative colours $\alpha_1, \dots, \alpha_{j-1}$, and that any consistent maximal fan chain on $e$ must have $(uw_1, uw_2, \dots, uw_{j})$ as an initial segment. 
    We extend the chain as follows: we pick an available representative colour $\alpha_j$ in $A(w_j)$ for $w_j$ as follows: 
    \begin{enumerate}
        \item if $A(w_j) \cap A(u) \neq \emptyset$, we let $\alpha_j$ be the smallest element of $A(w_j) \cap A(u)$. In this case, the construction terminates, and we conclude that $F$ is unique by the induction hypothesis.
        \item else if there exists a $h < i$ such that $\alpha_h \in A(w_{j})$, we let $h'$ be the smallest such $h$ and set $\alpha_j = \alpha_{h'}$. In this case, the construction terminates, and we conclude that $F$ is unique by the induction hypothesis.
        \item Otherwise, we let $\alpha_j$ be the smallest element of $A(w_j)$. We let $w_{j+1}$ be the unique neighbour of $u$ with $c(uw_{j+1}) = \alpha_j$. In this case, we note that any consistent maximal fan chain must have $(uw_1, uw_2, \dots, uw_{j+1})$ as an initial segment.
    \end{enumerate}
    The above construction must terminate. Indeed, when $j = d(u)-1$ the third case cannot occur, as $u$ has at most $d(u)-1$ coloured edges incident to it. 

    As for the furthermore part, we can check $1)$ in $\tilde{O}(\Delta_{\max})$ by just checking for each colour in $A(w_{j})$ if it belongs to $A(u)$, we can check $2)$ in $O(\Delta_{\max}^2)$ time, since we can check if $\alpha_k \in A(w_j)$ in $O(\Delta_{\max})$ time. Finally, we can do $3)$ in $\tilde{O}(\Delta_{\max})$ time by simply computing $A(w_j)$ and extracting the minimum. 
\end{proof}
Given a consistent maximal fan chain $F = (uw_{1}, \dots, uw_{k})$ and a parameter $t \in \mathbb{Z}_{\geq 0}$, we can extend $F$ to a Vizing chain $(F'+P)_t$, or simply $F'+P$ if $t$ is clear from the context, in a consistent manner (note $F'$ might not be equal to $F$), as follows:
if $\alpha_{k} \in A(u)$, we set $P = (uw_{k})$ and $F' = F$.
Otherwise, we let $\kappa_1$ be the smallest colour in $A(u)$ and $\kappa_2 = \alpha_{k}$. 
Let $P' = p_1,p_2, \dots, p_{t'}$ be the unique maximal $(\kappa_1, \kappa_2)$-bichromatic path with $p_1 = w_k$. 
If $t < t'-1$, we let $F' = F$, and we let $P = (uw_k,w_kp_2, \dots p_{t}p_{t+1})$ (except for the case $t = t'-2$ in Case 2 below, where we do something else, as explained later). 
Otherwise, if $t \geq t'-1$, there are $3$ cases:  
\paragraph{Case 1:} if $p_{t'} = w_{i}$ with $i < k$, we let $F'  = (uw_1, \dots, uw_i)$ and we let $P = (uw_i, w_{i}p_{t'-1}, p_{t'-1}p_{t'-2}, \dots, p_{2}p_{1})$. Note that in this case $F'$ is not a maximal fan.
\paragraph{Case 2:} if $p_{t'-1} = w_{i+1}$ and $p_{t'} = u$ with $i +1 < k$, we let $F' = (uw_1, \dots, uw_{i+1})$ and we let $P = (uw_{i+1},w_{i+1}p_{t'-2},p_{t'-2}p_{t'-3}, \dots, p_{2}p_{1})$. Note that in this case $F'$ is not a maximal fan. 
In the case where $t = t'-2$, we pick $F' = (uw_1, \dots, uw_{i+1})$ and we let $P = (uw_{i+1},w_{i+1}p_{t'-2},p_{t'-2}p_{t'-3}, \dots, p_{3}p_{2})$
\paragraph{Case 3:} if $p_{t'} \notin \{w_1, \dots, w_{k}\}$, we let $F' = F$ and we let $P = (uw_k,w_{k}p_{2},p_{2}p_{3}, \dots, p_{t'-1}p_{t'})$. \\\\
\noindent 
It is straight-forward to check that we can construct $F'+P$ in $\tilde{O}(\Delta_{\max}^3 + \Delta_{\max}\cdot{}t)$ time, and that the construction is unique.
If $e = uv$ is any uncoloured edge and $F$ is the unique maximal fan chain guaranteed by Lemma~\ref{lma:fan1} centered on $u$, then for any given $t$, we say that $(F'+P)_t$ is a consistent Vizing chain with center $u$. Note that $(F'+P)_t$ is unique. 

Next we will see a different algorithm needed to extend any $i$-step Vizing chain to an $(i+1)$-step Vizing chain in a consistent way. 
Again we will first construct a fan chain, and then explain how to compute the accompanying bichromatic path chain. 
We will use the so-called 'second fan lemma'~\cite{2BERNSHTEYN, grebik2020measurable}. 
\begin{lemma}[\cite{2BERNSHTEYN,grebik2020measurable}] \label{lma:fan2}
 Let $G$ be a graph with maximum degree $\Delta$ satisfying $\Delta \leq \Delta_{\max}$.
 Let $c$ be a proper partial $(\Delta_{\max}+1)$-edge-colouring, and let $e = uv \in E(G)$ be an uncoloured edge. 
 Let $\kappa_{1} \in [\Delta_{\max}+1]$ be an available colour at $u$ and let $\kappa_{2} \in [\Delta_{\max}+1]$ be an available colour at $v$. Then there exists a fan chain $uw_{1}, \dots, uw_{k}$ with $w_1 = v$ such that either 
\begin{enumerate}
    \item the representative available colour at $w_{k}$ is also available at $u$. 
    \item $\kappa_2 \notin A(u)$ and the fan is maximal subject to the constraint that the colours $\kappa_1$ or $\kappa_2$ are considered unavailable to every vertex.
    \item $w_{k} \neq v$, the representative available colour at $w_{k}$ is $\kappa_2$ and no edge $uw_{i}$ is coloured $\kappa_2$.
\end{enumerate}
If the representative available colours are chosen as small as possible subject to first $1)$, then $2)$, and finally $3)$ above, then $F$ is unique.

Furthermore, we can compute $F$ in $\tilde{O}(\Delta_{\max}^{3})$ time.
\end{lemma}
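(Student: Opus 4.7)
The plan is to imitate the inductive construction used in Lemma~\ref{lma:fan1}, modifying the rule for picking the representative available colour at each step so as to respect the priorities (1), (2), (3) from the statement. I would initialize $w_1 = v$ and grow the fan one vertex at a time; at step $j$, given $(uw_1, \dots, uw_j)$ with representatives $\alpha_1, \dots, \alpha_{j-1}$, I choose $\alpha_j$ by a tiered rule. \emph{Tier~1:} if $A(w_j) \cap A(u) \neq \emptyset$, take $\alpha_j$ to be its smallest element and terminate with case (1). \emph{Tier~2:} else, if $A(w_j) \setminus \{\kappa_1, \kappa_2\}$ contains a colour not equal to any earlier $\alpha_h$, take the smallest such colour as $\alpha_j$ and extend to $w_{j+1}$, the unique neighbour of $u$ with $c(uw_{j+1}) = \alpha_j$. \emph{Tier~3:} if Tier~2 fails but $\kappa_2 \in A(w_j)$ and $\kappa_2$ colours no edge $uw_h$, set $\alpha_j = \kappa_2$ and terminate with case (3). \emph{Tier~4:} otherwise declare the fan maximal under the $\kappa_1, \kappa_2$-restriction and terminate with case (2).

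Termination follows because each application of Tier~2 produces a fresh neighbour $w_{j+1}$ of $u$ — the chosen colour differs from every earlier $\alpha_h$, which are exactly the colours on $uw_2, \dots, uw_j$, and differs from $\neg$, so $w_{j+1} \notin \{w_1, \dots, w_j\}$ — hence Tier~2 fires at most $d(u) \leq \Delta_{\max}$ times. For case~(2), the side condition $\kappa_2 \notin A(u)$ is automatic: if $\kappa_2 \in A(u)$ then $\kappa_2 \in A(u) \cap A(v) = A(u) \cap A(w_1)$, and Tier~1 would have fired at $j=1$. For case~(3), the requirement $w_k \neq v$ is automatic since $|A(v)| \geq \Delta_{\max} + 2 - \Delta \geq 2$, so at $j=1$ we cannot have $A(v)\subseteq\{\kappa_1,\kappa_2\}$ without Tier~1 firing via $\kappa_1 \in A(u)\cap A(v)$. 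Uniqueness of $F$ follows by induction on $j$: the tier ordering together with the smallest-colour tie-break uniquely determines $\alpha_j$ and hence $w_{j+1}$, so any fan satisfying the priority ordering must agree with the construction edge-for-edge.

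For the running time, each step scans $A(w_j)$ (of size $O(\Delta_{\max})$), and for each candidate performs $O(\Delta_{\max})$-time membership checks against $A(u)$, the set $\{\alpha_1, \dots, \alpha_{j-1}\}$, and the edge-colour sequence on $u$; using per-vertex arrays of available colours, each step runs in $\tilde{O}(\Delta_{\max}^2)$ and there are at most $\Delta_{\max}$ steps, yielding the claimed $\tilde{O}(\Delta_{\max}^3)$ bound.

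The main obstacle I expect is the interplay between the four tiers — specifically verifying that Tier~4 genuinely corresponds to case~(2) (and not to some fourth unstated case), and that Tier~3's local check "$\kappa_2$ colours no edge $uw_h$" is implied automatically by the tier ordering. The latter holds because Tier~2 never picks $\kappa_2$ as an $\alpha_h$, so no edge $uw_h$ with $h \geq 2$ is coloured $\kappa_2$, and $uw_1 = uv$ is uncoloured; but this consistency deserves explicit verification before concluding.
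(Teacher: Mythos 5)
Your construction is correct and follows the same inductive framework as the paper's, but the tier ordering genuinely differs from the paper's and produces a different fan. The paper (mirroring its Lemma~\ref{lma:fan1}, and hence its working notion of ``consistent'' fan) inserts a termination tier \emph{before} the extension tier: as soon as some earlier representative $\alpha_h$ reappears in $A(w_j)$, it sets $\alpha_j = \alpha_{h'}$ for the smallest such index $h'$ and closes the fan as maximal (this is the paper's inductive case~2). Your Tier~2 instead greedily extends whenever a fresh colour in $A(w_j)\setminus\{\kappa_1,\kappa_2\}$ exists, and only falls back to a repeated representative when none remains. The two rules produce the same $F$ when $A(w_j)$ never contains a repeat and a fresh colour simultaneously, but when both are present the paper terminates and you extend, and the two constructions can then even end in different terminal cases of the lemma. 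Both rules terminate and yield a fan satisfying one of (1)--(3), and each rule deterministically pins down a unique $F$, so your proof of the lemma's statement is valid. However, this lemma's output is what ``consistent'' means in Definition~\ref{def:cVc} and downstream (Lemma~\ref{lma:mapping}, Lemma~\ref{lma:processToChain}, the update procedures in Section~\ref{sec:algos}), so matching the paper's ``repeat $\Rightarrow$ terminate'' rule is not merely cosmetic. Two minor omissions: in Tier~4 you should explicitly set $\alpha_k$ (e.g.\ to $\alpha_{h'}$ for the smallest $h'$ with $\alpha_{h'}\in A(w_k)$) so the maximal fan has a well-defined representative; and you should observe, as you do implicitly, that whenever Tier~2 fires the colour $\alpha_j$ lies outside $A(u)$ (else Tier~1 would have fired), so the neighbour $w_{j+1}$ with $c(uw_{j+1})=\alpha_j$ actually exists.
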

\begin{proof}
We construct a maximal fan almost exactly as in the proof of Lemma~\ref{lma:fan1}. Again we begin by setting $w_1 = v$.
We pick an available representative colour $\alpha_1$ in $A(v)$ for $v$ as follows: if $A(v) \cap A(u) \neq \emptyset$, we let $\alpha_1$ be the smallest element of $A(v) \cap A(u)$. 
In this case, the construction terminates, and we conclude that $F$ is unique.
Otherwise, we let $\alpha_1$ be the smallest element of $A(v)\setminus \{\kappa_2\}$. Note that since $uv$ is uncoloured, $|A(v)| \geq 2$. We let $w_2$ be the unique neighbour of $u$ with $c(uw_2) = \alpha_1$. In this case, we note that any consistent maximal fan chain must have $(uw_1,uw_2)$ as an initial segment.
    
Next assume that we have inductively constructed a fan chain $(uw_1, uw_2, \dots, uw_{j})$ with available representative colours $\alpha_1, \dots, \alpha_{j-1}$, and that any consistent maximal fan chain on $e$ must have $(uw_1, uw_2, \dots, uw_{j})$ as an initial segment. 
We extend the chain as follows: we pick an available representative colour $\alpha_j$ in $A(w_j)$ for $w_j$ as follows: 
    \begin{enumerate}
        \item if $A(w_j) \cap A(u) \neq \emptyset$, we let $\alpha_j$ be the smallest element of $A(w_j) \cap A(u)$. In this case, the construction terminates, and we conclude that $F$ is unique by the induction hypothesis.
        \item else if there exists a $h < i$ such that $\alpha_h \in A(w_{j})$, we let $h'$ be the smallest such $h$ and set $\alpha_j = \alpha_{h'}$. In this case, the construction terminates, and we conclude that $F$ is unique by the induction hypothesis.
        \item else if $|A(w_j)\setminus \{\kappa_2\}| \geq 1$, we let $\alpha_j$ be the smallest element of $A(w_j)\setminus \{\kappa_2\}$. We let $w_{j+1}$ be the unique neighbour of $u$ with $c(uw_{j+1}) = \alpha_j$. In this case, we note that any consistent maximal fan chain must have $(uw_1, uw_2, \dots, uw_{j+1})$ as an initial segment.
        \item Otherwise, we set $\alpha_j = \kappa_2$ and conclude the construction.
    \end{enumerate}
    The above construction must terminate. Indeed, when $j = d(u)-1$ the third case cannot occur, as $u$ has at most $d(u)-1$ coloured edges incident to it. 

    Similarly to Lemma~\ref{lma:fan1}, we can construct the fan in $\tilde{O}(\Delta_{\max}^{3})$ time.
\end{proof}
Now given some $i$-step Vizing chain $F_1+P_1+F_2+P_2+ \dots F_i + P_i$ and some parameter $t \in \mathbb{Z}_{\geq 0}$, we can extend it in a consistent manner by applying Lemma~\ref{lma:fan2} to the colouring $\operatorname{Shift}(c, F_1+P_1+F_2+P_2+ \dots F_i + P_i)$ and the final edge $p_{t_{i}-1}p_{t_{i}} = u'v'$ to get a unique fan $F = (u'w_1', u'w_2', \dots, u'w_k')$. 
Let $P_{i}$ be $(\kappa_1,\kappa_2)$-bichromatic with $c(u'v') = \kappa_{2}$, then we apply Lemma~\ref{lma:fan2} with $\kappa_1$ and $\kappa_2$.  
If we are in Case $1.$, we set $F_{i+1} = F$ and $P_{i+1} = u'w_k'$. 
If we are in Case $2.$, we let $\tau_1 \in A(u')\setminus \{\kappa_1, \kappa_2\}$ be the smallest available colour and 
$\tau_2 = \alpha_{k}$. 
Note that $|A(u')| \geq 2$ in the shifted colouring, and that $\kappa_2 \notin A(u)$ by assumption.
Then we let
$P' = p'_1p'_2\dots p'_{t'}$ be the maximal $(\tau_1,\tau_2)$-bichromatic path beginning at $p'_1 = w_k'$.

If $p'_t = u$ and $p'_{t-1}  = w_s'$ for some $s$, there are three sub-cases. 
If $t < t'-2$, we set $F_{i+1} = F$ and $P_{i+1} = (u'w_k',w_kp'_2, \dots p_{t-1}p_{t})$. Else if $t = t'-1$, we set $F = (uw_1', \dots, uw_s')$ and $P_{i+1} = (u'w_s',w_s'p'_{t'-2}, \dots p_{3}p_{2})$.  
If none of the above hold, it must be the case that $t > t'-1$, and then we set $F = (uw_1', \dots, uw_s')$ and $P_{i+1} = (u'w_s',w_s'p'_{t'-2}, \dots p_{2}p_{1})$. 

If $p'_t =  w_s'$ for some $s$, there are two sub-cases. 
If $t < t'-1$, we set $F_{i+1} = F$ and $P_{i+1} = (u'w_k',w_kp'_2, \dots p_{t-1}p_{t})$. 
Else, we set $F = (uw_1', \dots, uw_s')$ and $P_{i+1} = (u'w_s',w_s'p'_{t'-1}, \dots p_{2}p_{1})$.

Else $p'_t$ is disjoint from $F-\{w'_k\}$, in which case there are $2$ sub-cases: 
If $t < t'-1$, we set $F_{i+1} = F$ and $P_{i+1} = (u'w_k',w_kp'_2, \dots p_{t-1}p_{t})$. 
Else, we set $F_{i+1} = F$ and $P_{i+1} = (u'w_k',w_kp'_2, \dots p_{t'-1}p_{t'})$.

Otherwise, we are in Case $3)$. Again we set $F_{i+1} = F$, but this time we let $P' = p'_1p'_2\dots p'_{t'}$ be the maximal $(\kappa_1,\kappa_2)$-bichromatic path beginning at $p'_1 = w_k'$.
If $t < t'-1$, we set $P_{i+1} = (u'w_k',w'_kp_2, \dots p_{t}p_{t+1})$, otherwise we set $P_{i+1} = (u'w_k',w_kp'_2, \dots p_{t'-1}p_{t'})$. 

Similarly to before, we can perform the above extension in $\tilde{O}(\Delta_{\max}^3 + \Delta_{\max}\cdot{} t)$ time. 
We will say that we extend the Vizing chain via Lemma~\ref{lma:fan2} in a \emph{consistent} manner. 
Note that if every extension in a Vizing chain $F_1+P_1+F_2+P_2+ \dots F_i + P_i$ is consistent, then $P_{j+1}$ will never overlap $P_j$ in any edges for all $j$ (except possibly in the first edge). 
Indeed, the colours of $P_{j+1}$ are either disjoint from those of $P_{j}$ or identical to those of $P_{j}$. In the latter case, $P_{j+1}$ and $P_{j}$ belong to different components in the shifted colouring.

In this paper, we will only be working with Vizing chains satisfying certain restrictions. The restrictions we will impose are as follows: 
\begin{definition} \label{def:cVc}
    Let $VC = F_1 + P_1 + \dots + F_{j}+P_{j}$ be a multi-step Vizing chain. We say that $VC$ is \emph{consistent} if the following conditions hold:
    \begin{itemize}
        \item $F_1 + P_1$ is constructed in a consistent manner (Lemma~\ref{lma:fan1} and following discussion).
        \item For $i\geq 2$ we have that $F_i+P_i$ is constructed in a consistent manner (Lemma~\ref{lma:fan2} and following discussion).
        \item $\text{center}(F_j)$ is disjoint from the $2$-hop neighbourhood of $\text{center}(F_1) + P_1 + \dots + \text{center}(F_{j-2}) + P_{j-2} + \text{center}(F_{j-1})$ for all $j$.
        \item For all $j$, we have that $P_{j}$ is disjoint from the $2$-hop neighbourhood of $\text{center}(F_{k})$ for all $k \leq j-1$ and from $P_{k}$ for all $k \leq j-2$.
        \item For all $j$, we have that the maximal bichromatic path containing $P_j$ under $c$ is different from the maximal bichromatic path containing $P_{j+1}$ under $c$.
    \end{itemize}
\end{definition}

\subsection{Path-stepping processes}
We will now consider a looser process than the one used to construct Vizing chains. Beginning from some vertex $w$, we let one step be the process of first moving to a neighbour of $w$, say $y$, and then following some bichromatic path $P$ having $y$ as an endpoint. As with the multi-step Vizing chains, we are allowed to truncate the bichromatic path at any point to end up at some new vertex $w'$. 
We may then iteratively repeat this process at $w'$. 
If we do so $i$ times, we say that we run an $i$-step process starting from $w_1$. 
We denote by $w_1+y_1+\hat{P}_1+w_2+y_2+\hat{P}_2 + \dots w_{i}+ y_{i}+\hat{P}_{i}$ the $i$-step process beginning at $w_1$, that first jumps to the neighbour $y_1$ of $w_1$, then follows a truncated maximal bichromatic path $P_1$ to end up at $w_2$, and then jumps to the neighbour $y_2$ of $w_2$ and follows a truncated maximal bichromatic path $P_2$ and so on and so forth. 
We will say that the process is \emph{non-overlapping} if for all $k, j$ we have that $V(P_{k}) \cap V(P_{j}) \neq \emptyset$ implies that $|k-j| \leq 1$, if $E(P_{k}) \cap E(P_{j}) \neq \emptyset$ implies $k = j$, if $V(P_{k}) \cap N^{2}(w_j) \neq \emptyset$ for $k\geq j$ implies that $k=j$, and, finally, if the maximal bichromatic path $Q_j$ containing $P_j$ is equal to the maximal bichromatic path $Q_k$ containing $P_k$, then $j = k$. 
In other words, we will only allow a bichromatic path to intersect the 2-hop-neighbourhood of the immediately preceding path, and we will not allow it to intersect the edge set of any previous paths. 
If $|V(P_{j})| \leq \ell$ for all $j$, we say that the process has step length at most $\ell$.
If $P_{i}$ is a maximal bichromatic path, we say that the $i$-step process is \emph{augmenting}.

Maintaining information about this type of process is useful, as it allows us to accurately represent where consistent Vizing chains might go, while at the same time avoiding the need to worry about exactly which extensions are possible at what times. 
In particular, we observe that any consistent and non-overlapping $i$-step Vizing chain $F_1+P_1 + \dots + F_{i} + P_{i}$ where $F_1$ is centered at $w$ naturally induces an $i$-step process on $w$, by letting $w_j = \text{center}(F_j)$, letting $y_j$ be the second point of $P_{j}$, and letting $\hat{P}_{j}$ be the the bichromatic sub-path of the path $P_j$ excluding only the first and the last edge of the chain $P_j$. 
To make this observation more precise, we will define and consider \emph{semi-consistent} processes. 
Here, we will put some further restrictions on the processes so that they model consistent Vizing chains exactly. 

Given an $i$-step process $w_1+y_1+P_1+w_2+y_2+P_2 + \dots w_{i}+ y_{i}+P_{i}$, we say that the extension $w_j + y_j + P_j$ for $j\geq 2$ is \emph{consistent} with Lemma~\ref{lma:fan2}, if the following is the case: consider the colouring $c'$ obtained from $c$ as follows: suppose $P_{j-1} = p_1p_2\dots p_t$ is $(\kappa_1, \kappa_2)$-bichromatic with $c(p_{t-1}p_{t}) = \kappa_1$. 
Suppose $P = p_1p_2\dots p_t p_{t+1} \dots p_{s}$ is a maximal bichromatic path containing $P_{j-1}$. Note that if $s = t$, then $P_{j-1}$ is augmenting, and it cannot be extended consistently.

We obtain $c'$ by first un-colouring the edge $p_{t}p_{t+1}$ i.e.\ $c'(p_{t}p_{t+1}) = \neg$, and then for all $k < t-1$ setting $c'(p_k p_{k+1}) = \kappa_2$ if $c(p_k p_{k+1}) = \kappa_1$ and vice versa setting $c'(p_k p_{k+1}) = \kappa_1$ if $c(p_k p_{k+1}) = \kappa_2$. 
It is straight-forward to check that if $c$ is a proper $(\Delta_{\max}+1)$-edge-colouring, then so is $c'$. 
Now, apply Lemma~\ref{lma:fan2} to $p_{t}p_{t+1}$ with the colouring $c'$, $u = p_{t}$, $v = p_{t+1}$, and the colours $\kappa_1$ and $\kappa_2$ as before. 
By fixing a length $t' = |P_{i}|$, the lemma together with the discussion that follows it produces a unique fan chain $F$ and a bichromatic-path chain $Q = (q_0q_1, \dots, q_{\hat{t}}q_{\hat{t}+1})$ with $\hat{t} \leq t'$ such that $F+Q$ is a $1$-step Vizing chain. 
We say the extension $w_j + y_j + P_j$ for $j\geq 2$ is \emph{consistent} with Lemma~\ref{lma:fan2} if $w_j = \text{center}(F_{j}) = q_0$, and if $y_j = q_1$, and finally $P_{j} = q_1,q_2, \dots, q_{\hat{t}+1}$ if $\hat{t}+1 \leq t'$ and $P_{j} = q_1,q_2, \dots, q_{\hat{t}}$ otherwise. Note that in the former case $P_j$ is augmenting, and that in all cases the extension is unique. 

\begin{remark}
    It is important to remark that in the case where $P_j$ is augmenting, we must have that $i = j$, and perhaps more importantly, that it could be the case that $F_j$ is \emph{not} a maximal fan.
\end{remark}

Next, we define a semi-consistent process: 
\begin{definition} \label{def:sc}
    Let $C = w_1+y_1+P_1+w_2+y_2+P_2 + \dots w_{i}+ y_{i}+P_{i}$ be a non-overlapping $i$-step process beginning at $w_1$. We say that $C$ is \emph{semi-consistent} if the following holds: 
    \begin{itemize}
        \item For $j\geq 2$ we have that $w_j, y_j$ and $P_{j}$ is an extension consistent with Lemma~\ref{lma:fan2}.
        \item We have that $w_{j}$ is disjoint from the $2$-hop neighbourhood of $w_1 + y_1 + P_1 + \dots w_{j-2} + y_{j-2} + P_{j-2} + w_{j-1}$ for all $j$. 
        \item For all $j$, we have that $P_{j}$ is disjoint from the $2$-hop neighbourhood of $w_k$ for all $k \leq j-1$ and from $P_{k}$ for all $k \leq j-2$.
    \end{itemize}
\end{definition}
We observe that for any $j < i$ the length of $P_j$ is at least $3$. Indeed, otherwise Item 2 in Definition~\ref{def:sc} fails.
Semi-consistent processes are in correspondence with consistent Vizing chains in the following sense: 
\begin{lemma} \label{lma:mapping}
    There is a mapping $\gamma$ from the set of consistent $i$-step Vizing chains to the set of semi-consistent $i$-step processes. 
    In particular, if a consistent $i$-step Vizing chain is augmenting, so is its image, and similarly if a semi-consistent $i$-step process is augmenting, so is every Vizing chain in its pre-image. 
\end{lemma}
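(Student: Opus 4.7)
The plan is to define $\gamma$ explicitly, verify item-by-item that it lands in the semi-consistent set, and check that augmentation is preserved in both directions.

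I would begin by writing down the map. Given a consistent $i$-step Vizing chain $VC = F_1 + P_1 + \dots + F_i + P_i$ with $F_j = (u_j w_1^{(j)}, \dots, u_j w_{k_j}^{(j)})$ and $P_j = (p_0^{(j)} p_1^{(j)}, \dots, p_{t_j-1}^{(j)} p_{t_j}^{(j)})$, I set
\[
\gamma(VC) = w_1 + y_1 + \hat P_1 + \dots + w_i + y_i + \hat P_i,
\]
where $w_j := u_j = \text{center}(F_j)$, $y_j := p_1^{(j)} = w_{k_j}^{(j)}$ (so the jump $w_j \to y_j$ follows the last edge of the fan $F_j$), and $\hat P_j := (p_1^{(j)} p_2^{(j)}, \dots, p_{t_j-1}^{(j)} p_{t_j}^{(j)})$ is the bichromatic-path portion of $P_j$ with its first edge (the shared fan edge) dropped. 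Note that by construction $\hat P_j$ is still a truncated maximal bichromatic path in the colouring that results from shifting everything up to (and including) the initial fan-edge jump, which is exactly the form required by the process definition.

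Next I would verify the three bullets of Definition~\ref{def:sc}. For the first bullet (for $j \geq 2$, the step $w_j, y_j, \hat P_j$ is consistent with Lemma~\ref{lma:fan2}), I would unfold the construction of $F_j$ following that lemma in the consistent extension procedure and observe that, after shifting $F_1 + P_1 + \dots + F_{j-1} + P_{j-1}$, the fan $F_j$ is precisely the unique consistent fan at the last edge of $P_{j-1}$ guaranteed by Lemma~\ref{lma:fan2}, while $\hat P_j$ is exactly the maximal bichromatic path (truncated at length $t_j$) chosen by the subsequent procedure; this matches the definition of a step consistent with Lemma~\ref{lma:fan2} essentially verbatim. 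The second and third bullets (2-hop disjointness of centres and non-reuse of path-edges/vertices) translate directly from the third and fourth bullets of Definition~\ref{def:cVc} after noting that $w_j$ of the process equals $\text{center}(F_j)$ and that $\hat P_j \subseteq P_j$, so any 2-hop violation for the process would imply one for the Vizing chain. The non-overlapping property for $\gamma(VC)$ follows from these disjointness conditions combined with the fact that consecutive bichromatic paths in a consistent Vizing chain lie on different maximal bichromatic components (the last bullet of Definition~\ref{def:cVc}), ruling out repeated edges and enforcing the $|k - j| \leq 1$ condition.

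Finally I would handle the augmenting correspondence. A consistent Vizing chain $VC$ is augmenting precisely when $P_i$ is not truncated, i.e.\ $\hat P_i$ reaches the endpoint of the maximal bichromatic path; this is exactly the condition that the final $\hat P_i$ be a maximal bichromatic path, so $\gamma(VC)$ is augmenting. Conversely, given an augmenting semi-consistent process in the image of $\gamma$, any preimage must have its final bichromatic path chain extend all the way to the endpoint of its maximal bichromatic component (the truncation parameter $t$ was large enough), which is precisely the definition of an augmenting Vizing chain.

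The only mildly delicate point I anticipate is the bookkeeping around the "shared edge" $u_j w_{k_j}^{(j)} = p_0^{(j)} p_1^{(j)}$: one must argue that dropping this edge from $P_j$ when forming $\hat P_j$ really yields a valid process step rather than destroying the bichromatic structure, and that the colouring used when applying Lemma~\ref{lma:fan2} in the consistent-extension construction agrees with the colouring $c'$ implicit in the "consistent with Lemma~\ref{lma:fan2}" clause of the process definition (both are obtained by shifting everything so far and then un-colouring the last edge of the previous path, so they match). Once this identification is in place, the rest of the proof is essentially a matter of unpacking definitions.
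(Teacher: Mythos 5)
Your choice of $\gamma$ and the overall plan --- translate the fan centers, the fan-end vertices, and the path bodies, verify the bullets of Definition~\ref{def:sc} one by one, and note that augmentation is decided by the length of the final maximal bichromatic path --- is exactly the paper's approach, and the disjointness translations (bullets two and three of Definition~\ref{def:sc} from bullets three and four of Definition~\ref{def:cVc}) are fine.

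The gap is in your justification of the first bullet, namely that each step $w_j + y_j + \hat P_j$ is consistent with Lemma~\ref{lma:fan2}. You assert that the colouring $c'$ in the process definition and the colouring used in the Vizing-chain extension ``are both obtained by shifting everything so far and then un-colouring the last edge of the previous path, so they match.'' That is not what the process definition says: the $c'$ there is built by shifting \emph{only} the single bichromatic path $P_{j-1}$ (and un-colouring the edge just beyond the truncation point), not the entire history of the process. By contrast, the Vizing-chain extension applies Lemma~\ref{lma:fan2} to $\operatorname{Shift}(c, F_1 + Q_1 + \dots + F_{j-1} + Q_{j-1})$, which shifts the whole prefix. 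These two colourings are genuinely different on the graph at large; they only agree on the $2$-hop neighbourhood of $\text{center}(F_j)$, and that local agreement is exactly what the disjointness bullets of Definition~\ref{def:cVc} purchase. Without invoking those disjointness conditions it does not follow that the two invocations of Lemma~\ref{lma:fan2} produce the same fan and path --- and if the disjointness failed, it would in fact be false. The paper's proof makes precisely this locality argument explicit; your proof replaces it with an incorrect equality of colourings, so the first-bullet verification is not yet established. Flagging this as ``mildly delicate'' was the right instinct, but the resolution you sketched misidentifies the mechanism.
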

\begin{proof}
    Let $VC = F_1 + Q_1 + \dots + F_{i}+Q_{i}$ be any consistent $i$-step Vizing chain. Then we will let $VC = F_1 + Q_1 + \dots + F_{i}+Q_{i}$ map to the $i$-step process $C = w_1 + y_1 + P_1 + \dots + w_{i} + y_{i}+P_{i}$, where for all $j$ we have $w_j = \text{center}(F_j)$, $y_j = q_1$, and $P_j = q_1, q_2 \dots, q_{t_j}$ with $Q_j = (q_0q_1, \dots, q_{t_j+1})$. 
    If $VC$ is \emph{not} augmenting, we let $w_i = \text{center}(F_i)$, $y_i = q_1$, and $P_i = q_1, q_2 \dots, q_{t_i}$, otherwise we set $w_i = \text{center}(F_i)$, $y_i = q_1$, and $P_i = q_1, q_2 \dots, q_{t_i+1}$. 

    Next we show that $C$ is semi-consistent. First of all, we need to check that $C$ is non-overlapping. This follows from the fact that $VC$ is consistent, which puts precisely the same constraints on the $P_j$'s in Definition~\ref{def:cVc} as is required for $C$ to be non-overlapping. 
    The first item of Definition~\ref{def:sc} is satisfied since the second item of Definition~\ref{def:cVc} holds.
    This is the case since no edge outside of $P_{j-1}$ in the two-hop neighbourhood of the center of $F_j$ has changed colour before Lemma~\ref{lma:fan2} is applied. This means that applying it directly to $P_{j-1}$ yields the same fan and path choice as using it to extend the Vizing chain. 
    Finally, items 3 and 4 of Definition~\ref{def:cVc} are equivalent to the last two items of Definition~\ref{def:sc}.

    Furthermore, observe that solely the length of the maximal path containing $P_{i}$ determines whether $C$ and $VC$ are augmenting and so $C$ is augmenting if and only if $VC$ is. 

    The other direction is symmetrical.
\end{proof}
If a semi-consistent process is the image of a consistent $i$-step Vizing chain, we say that the process is \emph{consistent}. 
Finally, we note that under the right conditions, we can use semi-consistent processes to build consistent Vizing chains: 
\begin{lemma} \label{lma:processToChain}
    Let $G$ be a graph with maximum degree $\Delta\leq \Delta_{\max}$, and let $c$ be a proper $(\Delta_{\max}+1)$-edge-colouring of $G$. 
    Let $C = w_1 + y_1 + P_1 + \dots + w_{i} + y_{i}+P_{i}$ be an augmenting semi-consistent $i$-step process with step-length $\leq \ell$. 
    Let $e = w_1 v$ be an un-coloured edge. Suppose that applying Lemma~\ref{lma:fan1} with $t = \text{length}(P_1)+1$ and the following discussion yields a consistent $1$-step Vizing chain $(F+Q_1)_{t}$ such that
    \begin{itemize}
        \item $\text{center}(F) = w_1$.
        \item $y_1 = x_k$ with $F = (w_1x_1, w_1 x_2, \dots w_1 x_k)$.
        \item $Q = (q_0q_1,q_1q_2, \dots,q_{t-1}q_{t})$ where $P_{1} = q_1q_2\dots q_{t-1}$.
    \end{itemize}
    Then there is an augmenting consistent $i$-step Vizing chain on $e$ with the same step length as $C$.
\end{lemma}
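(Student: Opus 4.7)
The plan is to proceed by induction on $i$, building the consistent $i$-step Vizing chain step by step in lockstep with the given semi-consistent process $C = w_1 + y_1 + P_1 + \dots + w_i + y_i + P_i$. The base case $i=1$ is essentially handed to us by the hypothesis: the consistent $1$-step Vizing chain $(F+Q_1)_\ell$ provided in the statement already satisfies $\gamma(F_1+Q_1) = w_1 + y_1 + P_1$ under the mapping of Lemma~\ref{lma:mapping}. For the inductive step, assume we have constructed a consistent $(j-1)$-step Vizing chain $VC_{j-1} = F_1+Q_1+\dots+F_{j-1}+Q_{j-1}$ whose image under $\gamma$ equals the first $j-1$ steps of $C$. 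I extend it by applying Lemma~\ref{lma:fan2} to the shifted colouring $c^\ast = \operatorname{Shift}(c, VC_{j-1})$ at the final edge of $Q_{j-1}$, with length parameter $t' = |P_j|$.

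The crux of the argument, and the place I expect to spend the most care, is to show that this application of Lemma~\ref{lma:fan2} outputs exactly a pair $(F_j, Q_j)$ whose process-image under $\gamma$ is $w_j + y_j + P_j$. Semi-consistency of $C$ says precisely that $w_j + y_j + P_j$ is what Lemma~\ref{lma:fan2} would produce when applied to a specific \emph{locally shifted} colouring $c'$ — obtained from $c$ by shifting only the initial segment of the maximal bichromatic path containing $P_{j-1}$ and uncolouring one edge. So it suffices to show that $c^\ast$ and $c'$ agree on the entire region inspected by Lemma~\ref{lma:fan2} when constructing $F_j + Q_j$, namely the edges incident to the $2$-hop neighbourhood of $w_j$ together with the maximal $(\kappa_1,\kappa_2)$-bichromatic path of length at most $t'+1$ that starts at $y_j$.

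The agreement is exactly what the non-overlapping clauses of Definition~\ref{def:sc} deliver. The colourings $c^\ast$ and $c'$ can only disagree on edges touched by one of the shifts of $F_k, Q_k$ for $k \leq j-2$, by the shift of $F_{j-1}$, or by the tail of $Q_{j-1}$ beyond the portion already encoded in $c'$. The conditions that $w_j$ is outside the $2$-hop neighbourhood of every earlier centre and every earlier $P_k$, and that $P_j$ is disjoint from each $P_k$ with $k \leq j-2$ and from the $2$-hop neighbourhood of each $w_k$ with $k \leq j-1$ (combined with the fact that fans live in the $1$-hop neighbourhood of their centre), rule out any such edge from the inspected region. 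Therefore Lemma~\ref{lma:fan2} yields the same fan and bichromatic path in both colourings, and the resulting $F_j + Q_j$ is consistent with what $\gamma$ demands.

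Finally, I translate the non-overlapping conditions on $C$ (Definition~\ref{def:sc}) into the non-overlapping conditions on the resulting Vizing chain (Definition~\ref{def:cVc}): items about disjointness of centres and path chains carry over directly, and the distinct-maximal-bichromatic-path clause follows because each $P_{j-1}$ is a strict sub-path of its maximal bichromatic path by semi-consistency (else no consistent extension would exist), while $P_j$ lies in a different maximal path by the non-overlapping condition on edge sets. Augmentedness at step $i$ holds because $P_i$ already has the length of a maximal bichromatic path in the process, so Lemma~\ref{lma:fan2} produces a maximal bichromatic-path chain $Q_i$ in the Vizing chain as well. This completes the induction and yields the desired augmenting consistent $i$-step Vizing chain on $e$ of the same step length as $C$.
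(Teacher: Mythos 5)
Your proposal is correct and follows essentially the same route as the paper: build $VC$ step by step starting from $F_1=F$, $Q_1=Q$, extend via Lemma~\ref{lma:fan2} with matching lengths, and use the non-overlapping clauses of Definition~\ref{def:sc} to show that the globally shifted colouring $\operatorname{Shift}(c,VC_{j-1})$ agrees with the locally shifted colouring $c'$ from the semi-consistency definition on the region inspected by the extension, so that $\gamma(VC)=C$. The only cosmetic difference is that the paper concludes augmentedness by invoking Lemma~\ref{lma:mapping} once $\gamma(VC)=C$ is established, whereas you argue it directly from the maximality of $P_i$; both are fine.
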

\begin{proof}
    We will construct the augmenting consistent Vizing chain on $e$, say $VC = F_1+Q_1+F_2+Q_2+\dots+F_i + Q_i$ as follows:
    we set $F_1 = F$ and $Q_1 = Q$. 
    Then we extend $F_1+Q_1+F_2+Q_2+\dots+F_j + Q_j$ by applying Lemma~\ref{lma:fan2} and the following discussion making sure that the length of $F_{j+1} + Q_{j}$ is equal to $t_j + 1$ with $t_j = |P_{j}| + 1$. 

    We now observe that $\gamma(VC) = C$. Indeed, by construction every extension results in $Q_{j} = (q_{0,j}q_{1,j}, \dots, q_{t_j+1,j}q_{t_j,j})$ such that $P_{j} = q_{1,j}q_{2,j}\dots q_{t_j-1,j}$. 
    This follows from the fact that since $C$ is semi-consistent, the extension performed on $P_j$ uses a colouring $c'$ which only differs from the colouring $\operatorname{Shift}(c,F_1+Q_1+F_2+Q_2+\dots+F_j + Q_j)$ outside of the $2$-hop neighbourhood of $w_j$. 
    Indeed, $c'$ and $\operatorname{Shift}(c,F_1+Q_1+F_2+Q_2+\dots+F_j + Q_j)$ agree on the colours along $P_j$ by construction, and $w_{j+1}$ is disjoint from the two-hop neighbourhood of $C = w_1 + y_1 + P_1 + \dots + w_{j-1} + y_{j-1}+P_{j-1} + w_j$ also by construction. 
    Since $C$ is semi-consistent, the extension ensures that $Q_{j} = (q_{0,j}q_{1,j}, \dots, q_{t_j+1,j}q_{t_j,j})$ where $P_{j} = q_{1,j}q_{2,j}\dots q_{t_j-1,j}$. 

    It now follows that $\gamma(VC) = C$, and therefore Lemma~\ref{lma:mapping} now guarantees that $VC$ is also augmenting. 
\end{proof}

\section{Dynamic t-splitters} \label{sct:tsplits}

Given a graph $G$ equipped with any (not necessarily proper) edge-colouring $c$, we let $d_{i}(u)$ be the degree of $u$ in the graph induced in $G$ by all edges coloured with the colour $i$. We let $\tilde{d} = \max_{\kappa} \max_{v} d_{\kappa}(v)$.
Given two disjoint subsets of vertices $A,B \subset V(G)$, we let $G[A,B]$, be the bipartite subgraph of $G$ on vertex set $A \cup B$ containing exactly the edges going from a vertex in $A$ to a vertex in $B$. 
We let $s_{i}(v) = \max \{d_{i}(v)-\frac{\Delta_{\max}}{t},0\}$ be the \emph{surplus} of $i$-coloured edges around $v$.
Finally, we let $L: E(G) \rightarrow \mathbb{Z}$ be a function that maps $uv \mapsto s_{c(uv)}(u) + s_{c(uv)}(v)$, and we set $\max \limits_{e \in E} L(e) = \tilde{L}$. 

In this section, we will show Theorem~\ref{thm:tsplit}, which we restate below for convenience:
\begin{theorem}[Identical to Theorem~\ref{thm:tsplit}]
   Let $G$ be a dynamic graph, and suppose $\eps > 0$, $t \in \mathbb{N}$, and an upper bound $\Delta_{\max}$ on the maximum degree of $G$ throughout the entire update sequence is given. 
    Assume that $\tfrac{\Delta_{\max}}{t} \geq \tfrac{10^4 \log^2 n}{\eps^{2}}$. Then there is a dynamic algorithm that computes a $t$-splitter of $G$ with an amortised update-time in $O(t^3\eps^{-4} \log^3 m)$ and an amortised recourse in $O(\tfrac{\log n}{\eps^2})$.
\end{theorem}
In this entire section, we will work with two parameters $\Delta_{\max}$, which we assume is a known upper-bound on the maximum degree of some dynamic graph $G$, as well as a parameter $\eta \geq 16$. 
Note that it is important that $\eta \geq 1$, but we will only apply the theory below, when this is the case.  We will analyse a colouring $c$ satisfying the following invariant: 
\begin{invariant} \label{inv:local}
    For all $uv \in E(G)$ and all $i \in [t]$, we have that 
    \[
    L(uv) = s_{c(uv)}(u) + s_{c(uv)}(v) \leq \eta + s_{i}(u) + s_{i}(v)
    \]
\end{invariant}
We first show that if $1 \leq \eta \leq \tfrac{\eps^2}{128\log n}\tfrac{\Delta_{\max}}{t}$, then $\tilde{d} \leq (1+\eps) \frac{\Delta_{\max}}{t}$. 
We note already now that we will only apply the above Theorem, when $\tfrac{\Delta_{\max}}{t} = \Omega(\poly(\eps^{-1},\log n))$, so the assumption that $\eta \geq 16$ can always be realised when necessary. 

\subsection{Analysing the Local Condition}
We first show the following structural theorem:
\begin{theorem} \label{thm:invGaran}
    Let $0 < \eps \leq \frac{1}{2}$ and let $G$ be equipped with a $t$-colouring $c$ satisfying Invariant~\ref{inv:local}. Let $1 \leq \eta \leq \tfrac{\eps^2}{128\log n}\tfrac{\Delta_{\max}}{t}$. Then it holds that $\tilde{d} \leq (1+\eps) \frac{\Delta_{\max}}{t}$.
\end{theorem}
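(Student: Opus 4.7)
The plan is to argue by contradiction: assume some surplus $s_i(v) > \eps\tfrac{\Delta_{\max}}{t}$ and show this forces $\eta$ to be larger than the hypothesis allows. Since $s_i(v) \leq L(vw) \leq \tilde{L}$ for any $i$-coloured edge incident to $v$, it suffices to establish the quantitative bound $\tilde{L} = O(\eps\tfrac{\Delta_{\max}}{t})$ from Invariant~\ref{inv:local}; the conclusion $\tilde{d}\leq (1+\eps)\tfrac{\Delta_{\max}}{t}$ then follows after adjusting the hidden constant in $\eps$.

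As a warm-up I would first extract the weak seed from the invariant at a single edge. Pick an edge $uv$ realising $L(uv) = \tilde{L}$, and set $i = c(uv)$. Rewriting Invariant~\ref{inv:local} gives $s_{i'}(u) + s_{i'}(v) \geq \tilde{L} - \eta$ for \emph{every} colour $i'\in[t]$. Summing over $i'$ and using $\sum_{i'} s_{i'}(w) \leq d(w) \leq \Delta_{\max}$ at each of the two endpoints already yields $\tilde{L} \leq \eta + 2\tfrac{\Delta_{\max}}{t}$. This bound is on the order of $\tfrac{\Delta_{\max}}{t}$, which is too weak by a factor of $\eps$; the missing factor must come from invoking the invariant at many edges simultaneously.

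This is the purpose of the bipartite extraction hinted at in the overview. I would construct inductively a nested sequence of bipartite subgraphs $\mathbb{B}_1 \subset \mathbb{B}_2 \subset \cdots \subset \mathbb{B}_K$ for $K = \lceil 10\log(n)/\eps\rceil$, together with a witness colour $i_w$ attached to each newly added vertex $w$ for which $s_{i_w}(w) \geq \sigma_j$, where $\sigma_j \geq \tilde{L}/2 - j\eta$ decays slowly. The inductive step chases $i_j$-coloured edges out of the freshly added side: each such $w$ has $d_{i_w}(w) \geq \tfrac{\Delta_{\max}}{t} + \sigma_j$, and Invariant~\ref{inv:local} forces each endpoint $w'$ of such an edge to carry surplus at least $\sigma_j - \eta$ in some colour, which becomes the witness $i_{w'}$. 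Using the handshaking cap $\tfrac{\Delta_{\max}}{t}$ on how many of these edges can share an old endpoint on the other side, together with bipartiteness (enforced by always moving strictly to the opposite side), one should obtain $|V(\mathbb{B}_{j+1})| \geq (1 + \tfrac{\eps}{16})\,|V(\mathbb{B}_j)|$. After $K$ iterations the vertex count exceeds $n$, giving the contradiction; tracing back, the obstruction is that $\sigma_j$ has dropped below $\eps\tfrac{\Delta_{\max}}{t}$, which combined with $\sigma_j \geq \tilde{L}/2 - K\eta$ and the assumed bound $\eta \leq \tfrac{\eps^2}{128\log n}\tfrac{\Delta_{\max}}{t}$ pins down $\tilde{L} = O(\eps\tfrac{\Delta_{\max}}{t})$.

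The main obstacle is the inductive step. Unlike in the $t=2$ sketch of Section~\ref{sct:discussTsplits}, the witness colour can vary from vertex to vertex, so a clean density argument via $\rho(G) \leq \tfrac{\Delta_{\max}}{2}$ is unavailable and one must instead control overlap edge-by-edge. The plan is to carry bipartiteness and the witness-colour assignment as an additional invariant: when a new vertex $w'$ is discovered through an $i_w$-edge, its witness colour $i_{w'}$ is chosen from the many colours $i'$ in which $s_{i'}(w) + s_{i'}(w') \geq \sigma_j - \eta$ holds but for which $s_{i'}(w)$ is \emph{small}, forcing the surplus to sit on $w'$ itself. The degree cap $\tfrac{\Delta_{\max}}{t}$ on each colour limits how many $w'$ can accumulate per colour on the old side, which is what yields the multiplicative rather than additive growth; the $\tfrac{\log^2 n}{\eps^2}$ slack assumed in the statement on $\tfrac{\Delta_{\max}}{t}$ is exactly what is spent to absorb the $K$ successive $\eta$-losses and the small error in the $(1+\tfrac{\eps}{16})$-expansion.
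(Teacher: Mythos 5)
Your overall shape — contradiction, bipartite-expansion sequence, conclude $|V|>n$ — matches the paper, and your warm-up bound $\tilde L \le \eta + 2\Delta_{\max}/t$ is correct though, as you note, too weak. But there is a genuine gap precisely at the point you flag as ``the main obstacle'': letting the witness colour $i_w$ vary vertex by vertex, and then trying to repair the broken density/counting argument via overlap bookkeeping. You never actually close that gap; your final paragraph is an unworked plan, and I don't see how it could yield multiplicative growth. If the colour through which you enter each new vertex changes on every step, then (a) the frontier edges at a given layer are not of a single colour, so you cannot do the per-colour bipartite degree count that converts ``everyone on the old side has $d_\kappa$ high, everyone on the new side has $d_\kappa$ low'' into a multiplicative blow-up, and (b) you cannot certify that a layer is independent in the relevant colour class (the paper needs this independence so the degree sum over one side counts exactly the bipartite edges), because the invariant only couples surpluses of the two endpoints of a fixed-colour edge against a \emph{common} comparison colour $i'$.

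What the paper does instead is extract \emph{two fixed colours} $\kappa_x,\kappa_y$ once and for all from a single edge $uv$ attaining $L(uv)=\tilde L$: Lemma~\ref{lma:bigN} (your pigeonhole) gives $\kappa_x$ with $s_{\kappa_x}(u)$ large, and a second pigeonhole at $u$ gives $\kappa_y$ with $s_{\kappa_y}(u)=0$. Lemma~\ref{lma:largeSurplus} then alternates: from $Y_i$ follow $\kappa_y$-coloured edges to get $X_{i+1}$, from $X_{i+1}$ follow $\kappa_x$-coloured edges to get $Y_{i+1}$; crucially, the quantity tracked is the surplus \emph{difference} $s_{\kappa_x}-s_{\kappa_y}$ (or its negative), not a single surplus. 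Applying Invariant~\ref{inv:local} at a $\kappa_y$-edge with $i'=\kappa_x$ (resp.\ at a $\kappa_x$-edge with $i'=\kappa_y$) propagates this difference with only an additive $\eta$ loss per layer, independent of which colour is dominant where. Because all edges crossing between consecutive layers are in the same colour class, the difference bound forces high $d_{\kappa_y}$ on one side and low $d_{\kappa_y}$ on the other, and the invariant also forces each layer to be independent in that colour class; this is exactly what makes the bipartite handshaking $\sum_{X_{i+1}} d_{\kappa_y}\ge\sum_{Y_i} d_{\kappa_y}$ yield the $(1+\eps/4)$ multiplicative growth. That two-colour fixing is the missing idea in your proposal.
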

Before we show this theorem, we first show some helpful lemmas. 
The first lemma shows that if an edge $e$ has a large surplus, then any endpoint of $e$ has a large surplus in some colour. 
\begin{lemma} \label{lma:bigN}
    Suppose Invariant~\ref{inv:local} holds. Suppose furthermore that $e = uv$ has $L(uv) = D$. Then there exists a colour $\kappa$ such that $s_{\kappa}(u) \geq D - \eta$.
\end{lemma}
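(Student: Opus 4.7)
The plan is to find a colour $\kappa$ on which $v$ has zero surplus, and then apply the invariant at $uv$ with the choice $i=\kappa$.

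First, I would argue that there exists some colour $\kappa \in [t]$ with $s_{\kappa}(v) = 0$. This is a pigeonhole/averaging argument: the coloured degrees partition the total degree of $v$, so
\[
\sum_{i=1}^{t} d_{i}(v) \;=\; d(v) \;\leq\; \Delta_{\max}.
\]
Hence the minimum over $i$ of $d_{i}(v)$ is at most $\Delta_{\max}/t$, and for any $\kappa$ achieving this minimum we have $d_{\kappa}(v) \leq \Delta_{\max}/t$, which by definition of surplus gives $s_{\kappa}(v) = \max\{d_{\kappa}(v) - \Delta_{\max}/t,\, 0\} = 0$.

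Second, I would instantiate Invariant~\ref{inv:local} on the edge $uv$ with this choice $i = \kappa$. The invariant gives
\[
D \;=\; L(uv) \;=\; s_{c(uv)}(u) + s_{c(uv)}(v) \;\leq\; \eta + s_{\kappa}(u) + s_{\kappa}(v) \;=\; \eta + s_{\kappa}(u),
\]
where the last equality uses $s_{\kappa}(v) = 0$. Rearranging yields $s_{\kappa}(u) \geq D - \eta$, as desired.

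The proof is essentially immediate once one notices the degree-averaging step, so there is no real obstacle here; the only subtlety is the observation that the invariant is useful precisely when one can kill one of the two surplus terms on the right-hand side by choosing a colour that is under-represented at the opposite endpoint. This trick (picking a comparison colour so one side of the invariant vanishes) is the template I would expect to use repeatedly when bootstrapping from local surplus control to a global degree bound like $\tilde{d} \leq (1+\eps)\Delta_{\max}/t$.
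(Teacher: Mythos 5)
Your proof is correct and follows the paper's argument exactly: pigeonhole on the coloured degrees at $v$ to find a colour with $s_{\kappa}(v)=0$, then instantiate Invariant~\ref{inv:local} at $uv$ with $i=\kappa$. Nothing to add.
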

\begin{proof}
    Consider $v$. Since $d(v) \leq \Delta_{\max}$, it follows by the pigeon-hole principle that there exists some colour $\kappa'$ for which $d_{\kappa'}(v) \leq \frac{\Delta_{\max}}{t}$. 
    Since we assumed Invariant~\ref{inv:local} holds, we have that 
    \[
    D \leq \eta + s_{\kappa'}(u) + s_{\kappa'}(v) = \eta + s_{\kappa'}(u)
    \]
    Hence, we deduce that $D-\eta \leq s_{\kappa'}(u)$, and we are done by setting $\kappa = \kappa'$. 
\end{proof}
We let $\tilde{L}:=\max_{e \in E(G)} L(e)$. The next lemma shows how to inductively determine a large set of bipartite subgraphs containing vertices with large surpluses. 
\begin{lemma} \label{lma:largeSurplus}
    Suppose Invariant~\ref{inv:local} holds, and let $\mu \in \mathbb{Z}$ be a parameter. Then there exists two colours $\kappa_x$ and $\kappa_y$, as well as two family of sets $(X_i)_{i = 1}^{\mu}$ and $(Y_i)_{i = 1}^{\mu}$ such that for all $i$, we have $X_i, Y_i \subset V(G)$ and that the following two conditions hold:
    \begin{align}
        \forall u \in X_i: s_{\kappa_{x}}(u) - s_{\kappa_{y}}(u) &\geq \tilde{L}-(1+2(i-1))\eta \\
        \forall v \in Y_i: s_{\kappa_{y}}(v) - s_{\kappa_{x}}(v) &\geq \tilde{L}-2i \eta 
    \end{align}
\end{lemma}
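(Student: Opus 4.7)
The plan is to construct $X_i$ and $Y_i$ inductively, starting from a maximum-load edge and growing outward along edges alternately coloured $\kappa_x$ and $\kappa_y$. First, I would pick an edge $e^* = u^* v^*$ attaining $L(e^*) = \tilde{L}$ and then select the two colours by pigeonhole: since $d(v^*) \leq \Delta_{\max}$, there exists a colour $\kappa_x$ with $d_{\kappa_x}(v^*) \leq \Delta_{\max}/t$, so that $s_{\kappa_x}(v^*) = 0$, and symmetrically a colour $\kappa_y$ with $s_{\kappa_y}(u^*) = 0$. Plugging these into Invariant~\ref{inv:local} on $e^*$ (once with $i = \kappa_x$, once with $i = \kappa_y$) immediately yields $s_{\kappa_x}(u^*) \geq \tilde{L} - \eta$ and $s_{\kappa_y}(v^*) \geq \tilde{L} - \eta$, and incidentally forces $\kappa_x \neq \kappa_y$ whenever $\tilde{L} > \eta$. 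Setting $X_1 := \{u^*\}$ and $Y_1 := \{v^*\}$ then satisfies both required inequalities at $i = 1$.

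For the inductive step, given $X_i$ and $Y_i$, I would define
\[
X_{i+1} := X_i \cup \{u \in V(G) : \exists v \in Y_i \text{ with } uv \in E(G) \text{ and } c(uv) = \kappa_y\},
\]
\[
Y_{i+1} := Y_i \cup \{v \in V(G) : \exists u \in X_{i+1} \text{ with } uv \in E(G) \text{ and } c(uv) = \kappa_x\}.
\]
For each new $u \in X_{i+1} \setminus X_i$, witnessed by an edge $uv$ coloured $\kappa_y$ with $v \in Y_i$, Invariant~\ref{inv:local} applied to $uv$ with colour $\kappa_x$ gives $s_{\kappa_y}(u) + s_{\kappa_y}(v) = L(uv) \leq \eta + s_{\kappa_x}(u) + s_{\kappa_x}(v)$; combining with the inductive bound $s_{\kappa_y}(v) - s_{\kappa_x}(v) \geq \tilde{L} - 2i\eta$ and rearranging yields $s_{\kappa_x}(u) - s_{\kappa_y}(u) \geq \tilde{L} - (2i+1)\eta$, which is precisely $\tilde{L} - (1 + 2((i+1)-1))\eta$. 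The mirror computation, using Invariant~\ref{inv:local} with colour $\kappa_y$ on an edge coloured $\kappa_x$, yields $s_{\kappa_y}(v) - s_{\kappa_x}(v) \geq \tilde{L} - 2(i+1)\eta$ for every new $v \in Y_{i+1}$. Vertices carried over from previous layers already satisfy strictly stronger inequalities, so the weakened $(i+1)$-bound holds for the whole set.

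The only conceptual point to check is that a single pair $(\kappa_x, \kappa_y)$, chosen just from $e^*$, suffices for every layer of the induction. This is immediate because Invariant~\ref{inv:local} is stated uniformly over every edge and every colour, so the rearrangement above applies verbatim at each step. I therefore do not anticipate any serious obstacle: the whole proof reduces to a two-sided alternating BFS driven by one pigeonhole choice at $e^*$ together with repeated algebraic use of Invariant~\ref{inv:local}.
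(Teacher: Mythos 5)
Your proposal is correct and follows essentially the same route as the paper: pick an edge realising $\tilde{L}$, fix $\kappa_x,\kappa_y$ by pigeonhole so that one surplus vanishes at each endpoint, and grow the sets by an alternating BFS along $\kappa_y$- and $\kappa_x$-coloured edges, propagating the surplus-difference bounds via Invariant~\ref{inv:local} exactly as the paper does. The only (harmless, for the lemma as stated) cosmetic differences are that you take $Y_1=\{v^*\}$ rather than the full $\kappa_x$-neighbourhood of $u^*$ and define the sets as cumulative unions rather than fresh disjoint layers.
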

\begin{proof}
    We first determine $\kappa_x$ and $\kappa_y$. Then we construct the sets inductively. Pick an arbitrary edge $uv$ achieving $L(uv) = \max L(e) = \tilde{L}$. 
    Now we may apply Lemma~\ref{lma:bigN} to see that there exists a colour $\kappa_x$ such that $d_{\kappa_x}(u) \geq \frac{\Delta_{\max}}{t}+\tilde{L}-\eta$. By the pigeon-hole principle there is some colour $\kappa_y$ incident to $u$ such that $d_{\kappa_y}(u) \leq \frac{\Delta_{\max}}{t}$.

    With $\kappa_x$ and $\kappa_y$ defined, we construct the family of sets inductively. 
    Initially, we set $X_1 = \{u\}$ and $Y_1 = N_{G[\kappa_x]}(u) - \{u\}$.
    We confirm that the required conditions hold: $(1)$ is clear from our choice of $\kappa_x$ and $\kappa_y$.
    We get $(2)$ by observing that by Invariant~\ref{inv:local}, we find that for any $v \in Y_1$, we have: 
    \[
        s_{\kappa_y}(v) + s_{\kappa_{y}}(u) + \eta \geq s_{\kappa_x}(v) + s_{\kappa_x}(u) 
    \]
    Hence: 
    \begin{align*}
        s_{\kappa_y}(v) - s_{\kappa_{x}}(v) &\geq s_{\kappa_x}(u) - s_{\kappa_y}(u) - \eta \\
        &\geq \tilde{L}- 2\eta
    \end{align*}
    by above.

    For the inductive step, we now set $X_i = N_{G[\kappa_y]}(Y_{i-1}) - Y_{i-1}$, and $Y_i = N_{G[\kappa_x]}(X_i) - X_i$. Again we confirm the two conditions: 
    \begin{itemize}
        \item $(1)$: Consider $u \in X_{i}$. We have $\forall v \in Y_{i-1}$ that $s_{\kappa_{y}}(v) - s_{\kappa_{x}}(v) \geq \tilde{L}-2(i-1) \eta$ by the induction hypothesis. By applying Invariant~\ref{inv:local}, we find that 
        \[
        s_{\kappa_x}(v) + s_{\kappa_{x}}(u) + \eta \geq s_{\kappa_y}(v) + s_{\kappa_y}(u) 
        \]
        since $c(uv) = \kappa_y$. Hence: 
        \begin{align*}
            s_{\kappa_x}(u) - s_{\kappa_{y}}(u) &\geq s_{\kappa_y}(v) - s_{\kappa_x}(v) - \eta \\
            &\geq \tilde{L}- (1+2(i-1))\eta
        \end{align*}
         as before. 
        \item $(2)$ is  symmetric: consider any $v\in Y_i$. We have $\forall u \in X_{i}$ that $s_{\kappa_{x}}(u) - s_{\kappa_{y}}(u) \geq \tilde{L}-(1+2(i-1)) \eta$ by induction. By applying Invariant~\ref{inv:local}, we find that 
        \[
        s_{\kappa_y}(v) + s_{\kappa_{y}}(u) + \eta \geq s_{\kappa_x}(v) + s_{\kappa_x}(u) 
        \]
        since $c(uv) = \kappa_x$. Hence: 
        \begin{align*}
            s_{\kappa_y}(v) - s_{\kappa_{x}}(v) &\geq s_{\kappa_x}(u) - s_{\kappa_y}(u) - \eta \\
            &\geq \tilde{L}- 2i\eta
        \end{align*}
         as before. Note that the above argument can be repeated ad infinitum, and so in particular also $\mu$ times.
    \end{itemize}

\end{proof}
We are now ready to show Theorem~\ref{thm:invGaran}:
\begin{proof}[Proof of Theorem~\ref{thm:invGaran}]
    Suppose for contradiction that $\tilde{d} > (1+\eps) \frac{\Delta_{\max}}{t}$. We will show that this implies the existence of a family of bipartite subgraphs $\{\mathbb{B}_j\}_{j = 1}^{10\log(n)\eps^{-1}} = \{(P_j, Q_j)\}_{j = 1}^{10\log(n)\eps^{-1}}$ of $G$ satisfying that $|Q_j| \geq (1+\frac{\eps}{4})^{j}$.
    
    This implies that for some $j \leq 10\log(n)\eps^{-1}$, we have that
    \[
    |Q_j| \geq (1+\frac{\eps}{4})^{10\log(n)\eps^{-1}} \geq e^{\frac{10}{8} \log n} > n
    \]
    where we used the standard in-equality $1-x \leq e^{-x}$ which holds for all real $x$. In particular, it follows that for all $0 \leq x \leq 1/2$, we have that $e^{x} \leq \frac{1}{1-x} \leq 1+\frac{x}{1-x} \leq 1+2x$.
    This contradicts the fact that we also must have $\mathbb{B}_j \subset G$ and so $|V(\mathbb{B}_j)| \leq |V(G)| = n$. 

    In order to construct $\{\mathbb{B}_j\}_{j = 1}^{10\log(n)\eps^{-1}}$, we invoke Lemma~\ref{lma:largeSurplus} with $\mu = \ceil{10\log(n)\eps^{-1}}$ to get sets $(X_i)_{i = 1}^{\mu}$ and $(Y_i)_{i = 1}^{\mu}$ and colours $\kappa_x$ and $\kappa_y$ satisfying the condition of the lemma. Then for appropriate values of $i$, we set 
    \[
    \mathbb{B}_{2i-1} = (G[\kappa_{x}])[X_{i},Y_{i}]
    \]
    with $P_{2i-1} = X_i$ and  $Q_{2i-1} = Y_i$  as well as 
    \[
    \mathbb{B}_{2i} = (G[\kappa_{y}])[Y_{i},X_{i+1}]
    \]
    with $P_{2i} = Y_i$ and  $Q_{2i} = X_{i+1}$
    These graphs are clearly bipartite by construction, so all we need to show is that $Q_i$ must expand. We show this by induction on $i$. 
    The base case is is as follows: since we have tacitly assumed that $\eta \geq 1$ it follows that $\tilde{L} - \eta$ is large, say at least $2$. In particular, we have that $|Q_{1}| = |Y_1| \geq \tilde{L} - \eta \geq 2$.

    Instead of showing that $Q_2$ is also sufficiently large, we proceed to the induction step directly. We will first show that if $Q_{2i-1} \geq (1+\frac{\eps}{4})^{2i-1}$, then $Q_{2i} \geq (1+\frac{\eps}{4})^{2i}$. 
    Then we show that if $Q_{2i} \geq (1+\frac{\eps}{4})^{2i}$ , then $Q_{2i+1} \geq (1+\frac{\eps}{4})^{2i+1}$. 
    In this way, we only need to show the base case $|Q_{1}| = |Y_1| \geq \tilde{L} - \eta \geq 2$, as we have already done.

    So for the inductive step, we first consider $\mathbb{B}_{2i}$. We note that, by construction, we have that $\forall u \in X_{i+1}$
    \[
        s_{\kappa_{x}}(u) - s_{\kappa_{y}}(u) \geq \tilde{L}-(1+2i)\eta
    \]
    similarly for all $v \in Y_{i}$ we have 
    \[
    s_{\kappa_{y}}(v) - s_{\kappa_{x}}(v) \geq \tilde{L}-2i \eta
    \]
    This implies that every vertex $v \in Y_{i}$ has $s_{\kappa_{y}}(v) \geq \tilde{L}-2i \eta$ and that every vertex $u \in X_{i+1}$ has $s_{\kappa_{y}}(u) \leq (1+2i) \eta$. 
    Since the graph is bipartite, we have that 
    \[
    \sum \limits_{u \in X_{i+1}} d_{\kappa_y}(u) \geq \sum \limits_{v \in Y_{i}} d_{\kappa_y}(v)
    \]
    Applying the above, we find that 
    \[
     ((1+2i)\eta + \frac{\Delta_{\max}}{t})|X_{i+1}|\geq \sum \limits_{u \in X_{i+1}} d_{\kappa_y}(u) \geq \sum \limits_{v \in Y_{i}} d_{\kappa_y}(v) \geq (\tilde{L}-2i\eta+ \frac{\Delta_{\max}}{t})|Y_{i}|
    \]
    Since $\tilde{d} \leq \frac{\Delta_{\max}}{t} + \tilde{L}$ by definition and since  $\eta\frac{22\log(n)}{\eps} \leq \frac{\eps\Delta_{\max}}{4t}$ by choice of $\eta$, we have that 
    \[
    \tilde{L} + \frac{\Delta_{\max}}{t}-\eta\frac{22\log(n)}{\eps} \geq \tilde{d}-\eta\frac{22\log(n)}{\eps} \geq (1+\frac{3\eps}{4})\frac{\Delta_{\max}}{t}
    \]
    and that 
    \[
    \frac{\Delta_{\max}}{t}+\eta\frac{22\log(n)}{\eps} \leq (1+\frac{\eps}{4})\frac{\Delta_{\max}}{t}
    \]
    Hence we can write
    \[
     (1+\frac{\eps}{4})\frac{\Delta_{\max}}{t}|X_{i+1}|\geq (1+\frac{3\eps}{4})\frac{\Delta_{\max}}{t}|Y_{i}|
    \]
    Therefore, we find that 
    \begin{align*}
        |X_{i+1}| &\geq \frac{(1+\frac{3\eps}{4})}{(1+\frac{\eps}{4})}|Y_{i}| \\
        &\geq (1-\frac{\eps}{4})(1+\frac{3\eps}{4})|Y_{i}| \\
        &\geq (1+\frac{3\eps}{4}-2\frac{\eps}{4})) |Y_{i}| \\
        &\geq (1+\frac{\eps}{4}) |Y_{i}|
    \end{align*}
    since $Q_{2i} = X_{i+1}$ and $Q_{2i-1} = Y_{i}$, applying the induction hypothesis yields the first case. The second case is symmetric, but we include it for completeness.
    
    Now consider $\mathbb{B}_{2i+1}$. We note that, by construction, we have that $\forall u \in X_{i+1}$
    \[
        s_{\kappa_{x}}(u) - s_{\kappa_{y}}(u) \geq \tilde{L}-(1+2i)\eta
    \]
    similarly for all $v \in Y_{i+1}$ we have 
    \[
    s_{\kappa_{y}}(v) - s_{\kappa_{x}}(v) \geq \tilde{L}-2(i+1) \eta
    \]
    This implies that every vertex $v \in Y_{i+1}$ has $s_{\kappa_{x}}(v) \leq 2(i+1) \eta$ and that every vertex $u \in X_{i+1}$ has $s_{\kappa_{x}}(u) \geq \tilde{L}-(1+2i) \eta$. 
    Since the graph is bipartite, we have that 
    \[
    \sum \limits_{v \in Y_{i+1}} d_{\kappa_x}(v) \geq \sum \limits_{u \in X_{i+1}} d_{\kappa_x}(u)
    \]
    Applying the above, we find that 
    \[
     (2(i+1)\eta + \frac{\Delta_{\max}}{t})|Y_{i+1}|\geq \sum \limits_{u \in X_{i+1}} d_{\kappa_y}(u) \geq \sum \limits_{v \in Y_{i}} d_{\kappa_y}(v) \geq (\tilde{L}-(1+2i)\eta+ \frac{\Delta_{\max}}{t})|X_{i+1}|
    \]
    Since $\tilde{d} \leq \frac{\Delta_{\max}}{t} + \tilde{L}$ by definition and since  $\eta\frac{22\log(n)}{\eps} \leq \frac{\eps\Delta_{\max}}{4t}$ by choice of $\eta$, we have similarly to before that 
    \[
    \tilde{L} + \frac{\Delta_{\max}}{t}-\eta\frac{22\log(n)}{\eps} \geq \tilde{d}-\eta\frac{22\log(n)}{\eps} \geq (1+\frac{3\eps}{4})\frac{\Delta_{\max}}{t}
    \]
    and that 
    \[
    \frac{\Delta_{\max}}{t}+\eta\frac{22\log(n)}{\eps} \leq (1+\frac{\eps}{4})\frac{\Delta_{\max}}{t}
    \]
    So we can write
    \[
     (1+\frac{\eps}{4})\frac{\Delta_{\max}}{t}|Y_{i+1}|\geq (1+\frac{3\eps}{4})\frac{\Delta_{\max}}{t}|X_{i+1}|
    \]
    Therefore, we find, as before, that
    \begin{align*}
        |Y_{i+1}| &\geq (1+\frac{\eps}{4}) |X_{i+1}|
    \end{align*}
    since $Q_{2i+1} = Y_{i+1}$ and $Q_{2i} = X_{i+1}$, applying the induction hypothesis thus yields the second case.
\end{proof}
Note that Theorem~\ref{thm:invGaran} shows that any colouring satisfying the local condition is a $t$-degree splitting colouring. 

\subsection{Dynamically Maintaining Invariant~\ref{inv:local}}
Again, we assume we have access to some upper bound $\Delta_{\max}$ on the maximum degree over the entire update sequence, and some input parameter $t$. We further require that $\tfrac{\Delta_{\max}}{t} \geq \tfrac{10^4 \log^2 n}{\eps^{2}}$. We will set $\eta = \floor{\tfrac{\eps^2}{128\log n}}\tfrac{\Delta_{\max}}{t}$, and note that $\eta \geq 16$ always due to our assumptions.

We will maintain the following data structures. At each vertex $u$, we will store:
\begin{itemize}
    \item The surplus for each colour in an index-able array $SP(u)$.
    \item For each colour $\kappa$, we store a BBST, $MSD_{\kappa}(u)$, containing all estimated surplus differences of the form:
    \[
    \tilde{s}_{c(ux)}(u)+\tilde{s}_{c(ux)}(x) - (\tilde{s}_{\kappa}(u)+\tilde{s}_{\kappa}(x))
    \]
    \item For each colour $\kappa$, we store a doubly-linked-list $LL(u, \kappa)$ containing all edges coloured $\kappa$ incident to $u$.
    \item For each colour $\kappa$, we store a counter $\iota(u,\kappa) \in [d_{k}]$ and a pointer to the $\iota(u,\kappa)^{\text{th}}$ element of $LL(u, \kappa)$.
\end{itemize}
For each edge $e$, we store a BBST over the estimated surpluses for each colour: $T^{s}_{e}$. 

The algorithm now works as follows (see below for pseudo-code). Whenever an edge $e = uv$ is inserted, we calculate the surplus $s_{\kappa}(u) + s_{\kappa}(v)$ for all choices of $\kappa$, and assign $e$ an arbitrary colour $\kappa'$ with minimum surplus.
We also initialize all of $e$'s data structures, and update its contributions to $u$ and $v$'s data structures. 
Then, we insert $e$ into the linked-lists $LL(u, \kappa')$ and $LL(v, \kappa')$, ensuring that $e$ is inserted just before position $\iota(u,\kappa')$ in $LL(u, \kappa')$ resp.\  $\iota(v,\kappa')$ in $LL(v, \kappa')$.
Then we update data structures for other edges $f$ around $w \in \{u,v\}$ in a round-robin fashion: starting with the edge stored at position $\iota(w,\kappa)$ in $LL(w, \kappa)$, we recompute the contents of the next $\ceil{500 \tfrac{\Delta_{\max}}{\eta}}$ edges in $LL(w, \kappa)$ and update $\iota(w,\kappa)= \iota(w,\kappa) + \ceil{500 \tfrac{\Delta_{\max}}{\eta}} \mod d_{\kappa}(w)$. Note that this is done for all choices of colours $\kappa$. 

Then, we check for all colours if any violations have arisen. If so, we push the endpoint and the colour with too large a surplus difference, unto a queue $Q$. After this, we process $Q$ one pair at the time by recolouring with a similar procedure as above, possibly pushing new violating pairs unto $Q$. 
Note that to get an edge to recolour, we extract the observed maximum and check if its surplus difference is still large enough. If yes, we process it by recolouring it, if no, we conclude our processing of this element of $Q$. 
Deletions are symmetric in the sense that we simply remove the edge and update the affected data structures, before we update in a round-robin fashion and fix violations exactly as before. Here, we have to remember to remove $e$ from any doubly-linked-lists containing it, and to re-adjust the relevant $\iota(\cdot{}, \cdot{})$ by decrementing them by $1$, if $e$'s position in $LL(\cdot{}, \cdot{})$ was before the element pointed to by $\iota(\cdot{}, \cdot{})$. 

Note that these running times can probably be optimised with respect to $t$, but since any such optimisations are likely to only change constant factors hidden by big-Oh notation anyways in the final edge-colouring algorithm, we have not attempted them here. Such optimisation, however, would yield a better $t$-splitter. 

We update $T_{e}^{s}$ be calculating the surplus for each colour of $e$ and inserting it into a search tree. 
We then update (deleting the old, and inserting the new) the surplus difference for $e$ into $MSD_{\kappa}(w)$ for all colours $\kappa$ and all endpoints $w$ of $e$. 
These data structures always contain correct estimates immediately after being updated. Hereafter they might slowly deteriorate in quality, until they are updated again. 
We will later show that this deterioration is small enough to be handled efficiently. 
We will denote by $\tilde{s}_{\kappa}(u) + \tilde{s}_{\kappa}(v)$ the estimate of a surplus stored in a data structure and by $s_{\kappa}(u) + s_{\kappa}(v)$ the actual value of a surplus.
In particular, we will denote by $\Delta \tilde{s}_{\kappa}(w)$ the maximum estimated surplus difference stored in $MSD_{\kappa}(w)$: 
\[
\Delta \tilde{s}_{\kappa}(w) = \max_{x \in N(w)} \tilde{s}_{c(wx)}(w)+\tilde{s}_{c(wx)}(x) - (\tilde{s}_{\kappa}(w)+\tilde{s}_{\kappa}(x))
\]
Note that this value can be found by querying $MSD_{\kappa}(w)$ and returning the maximum. 

\noindent\begin{minipage}[t]{.5 \textwidth}
\null 
 \begin{algorithm2e}[H]
    \caption{\texttt{Insert}$(uv)$}
    \label{alg:insertion}
    \begin{algorithmic}
        \STATE $\texttt{add}(uv)$ to $G$
        \STATE $\texttt{Initialize-DS}(uv)$
        \STATE $Q \gets \texttt{ReColour}(uv, \emptyset)$
        \WHILE{$Q \neq \emptyset$}
            \STATE $(w, \kappa) \gets \texttt{pop}(Q)$
            \IF{$\Delta \tilde{s}_{\kappa}(w) > \frac{\eta}{2}$}
                \STATE $e \gets$ edge maximising the above expression.
                \STATE $Q \gets \texttt{ReColour}(e, Q)$
            \ENDIF
        \ENDWHILE
    \end{algorithmic}
  \end{algorithm2e}
\end{minipage}~%
\begin{minipage}[t]{.5\textwidth}
\null
 \begin{algorithm2e}[H]
    \caption{\texttt{Delete}$(\overrightarrow{uv})$}
    \label{alg:deletion}
    \begin{algorithmic}
        \STATE $\texttt{Remove}(uv)$ from $G$.
        \STATE Update $uv$'s influence on relevant $SP(\cdot{})$,$LL(\cdot{},\cdot{})$, $\iota(\cdot{}, \cdot{})$, $MSD_{\cdot{}}(\cdot{})$
        \STATE $\texttt{Round-Robin}(uv)$
        \FORALL{$\kappa \in [t]$ and $w \in \set{u,v}$}
            \IF{$ \Delta \tilde{s}_{\kappa}(w) > \frac{\eta}{2}$}
                \STATE $\texttt{Push}((w, \kappa))$ to $Q$
            \ENDIF
        \ENDFOR
        \WHILE{$Q \neq \emptyset$}
            \STATE $(w, \kappa) \gets \texttt{pop}(Q)$
            \IF{$\Delta \tilde{s}(w) > \frac{\eta}{2}$}
                \STATE $e \gets$ edge maximising the above expression.
                \STATE $Q \gets \texttt{ReColour}(e, Q)$
            \ENDIF
        \ENDWHILE

    \end{algorithmic}
  \end{algorithm2e}
\end{minipage}

\noindent\begin{minipage}[t]{.5 \textwidth}
\null 
 \begin{algorithm2e}[H]
    \caption{\texttt{ReColour}$(uv, Q)$}
    \label{alg:insertion}
    \begin{algorithmic}
            \IF{$c(uv) = \emptyset$}
                \STATE $c(uv) \gets \texttt{ExtractMinColour}(T^{s}_{uv})$
                \STATE Update $uv$'s influence on relevant $SP(\cdot{})$, $LL(\cdot{},\cdot{})$,  $\iota(\cdot{}, \cdot{})$, $MSD_{\cdot{}}(\cdot{})$
                \STATE $\texttt{Update-DS}(uv)$
            \ELSE
                \STATE $c(uv) \gets \texttt{ExtractMinColour}(T^{s}_{uv})$
                \STATE Update $uv$'s influence on relevant $SP(\cdot{})$, $LL(\cdot{},\cdot{})$, $\iota(\cdot{}, \cdot{})$, $MSD_{\cdot{}}(\cdot{})$
                \STATE $\texttt{Update-DS}(uv)$
            \ENDIF
            \STATE $\texttt{Round-Robin}(uv)$
            \FORALL{$\kappa \in [t]$ and $w \in \set{u,v}$}
                \IF{$\Delta \tilde{s}_{\kappa}(w) > \frac{\eta}{2}$}
                    \STATE $\texttt{Push}((w, \kappa))$ to $Q$
                \ENDIF
            \ENDFOR
            \STATE $\texttt{Return}$ $Q$
    \end{algorithmic}
  \end{algorithm2e}
\end{minipage}~%
\begin{minipage}[t]{.5\textwidth}
\null
 \begin{algorithm2e}[H]
    \caption{\texttt{Round-Robin}$(uv)$}
    \label{alg:RR}
    \begin{algorithmic}
        \FORALL{$\kappa \in [t]$ and $w \in \set{u,v}$}
            \FORALL{$j \in [\ceil{\tfrac{500 \Delta_{\max}}{\eta}}]$}
                \STATE $e \gets LL(w, \kappa)[\iota(w, \kappa)+(j-1) \mod d_{\kappa}(w)]$
                \STATE $\texttt{Update-DS}(e)$
            \ENDFOR 
            \STATE $\iota(w, \kappa) \gets \iota(w, \kappa)+\ceil{\tfrac{500 \Delta_{\max}}{\eta}} \mod d_{\kappa}(w)$
            \STATE Update pointer associated with $\iota(w, \kappa) $
        \ENDFOR
    \end{algorithmic}
  \end{algorithm2e}
\end{minipage}
\noindent\begin{minipage}[t]{.5 \textwidth}
\null 
 \begin{algorithm2e}[H]
    \caption{\texttt{Initialize-DS}$(uv)$}
    \label{alg:insertion}
    \begin{algorithmic}
            \IF{$d(w) = 1$ for $w \in \{u,v\}$}
                \STATE Initialize DS for $w$
            \ENDIF
            \STATE Initialize BBST $T^{s}_{uv}$. 
            \FORALL{$\kappa \in [t]$}
                \STATE \texttt{insert}($s_{\kappa}(u)+s_{\kappa}(v)$) into $T^{s}_{uv}$
            \ENDFOR
    \end{algorithmic}
  \end{algorithm2e}
\end{minipage}~%
\begin{minipage}[t]{.5\textwidth}
\null
 \begin{algorithm2e}[H]
    \caption{\texttt{Update-DS}$(uv)$}
    \label{alg:RR}
    \begin{algorithmic}
        \STATE Update $SP(u)$ and $SP(v)$
        \FORALL{$\kappa \in [t]$}
            \STATE \texttt{Update} ($s_{\kappa}(u)+s_{\kappa}(v)$) in  $T^{s}_{uv}$
        \ENDFOR
        \FORALL{$\kappa \in [t]$ and $w \in \set{u,v}$}
            \STATE Update $uv$'s contribution to $MSD_{\kappa}(w)$
        \ENDFOR
    \end{algorithmic}
  \end{algorithm2e}
\end{minipage}

\subsection{Analysing the Dynamic Algorithm}
In this subsection, we will analyse the dynamic algorithm above. We begin by analysing how precise the estimates $\tilde{s}_{\kappa}(u)+\tilde{s}_{\kappa}(v)$ in the relevant data structures are. 
Then we show that the algorithm actually maintains Invariant~\ref{inv:local}, thus establishing correctness via Theorem~\ref{thm:invGaran}. 
Then we analyse the recourse of the algorithm, before we pick explicit implementations of the data structures and analyse the final running time.

\subsubsection{Analysis of Round-Robin}
In this section, we will analyse the round-robin scheme applied by the algorithm. In particular, we will show that the algorithm maintains Invariant~\ref{inv:local}. We will do this by showing that even-though the algorithm sometimes works with (slightly) out-dated information, the information is still precise enough for the algorithm to detect necessary changes and perform them efficiently. 

To this end, we first show that the data structures used by the algorithm contain good estimates. Before doing this, we will introduce some notation. 
For any data structure $\mathcal{D}$, we will let $(\mathcal{D})_{\tau}$ be the state of the data structure after the $\tau^{th}$ call to \texttt{RoundRobin} concluded, and just before the $(\tau +1)^{th}$ call to \texttt{RoundRobin} commences. 
Similarly, for any value $p(x)$ we will let $(\tilde{p}(x))_{\mathcal{D},\tau}$ denote the estimate of this value stored in data structure $\mathcal{D}$ at time $\tau$, and we will let $(p(x))_{\tau}$ denote the actual value of the parameter at time $\tau$. 
We will say that something happens \emph{during} time step $\tau$ if it happens between the $(\tau-1)^{th}$ and $\tau^{th}$ call to \texttt{RoundRobin}.
If the data structure is clear from the context, we will leave out this argument to ease readability. 
\begin{lemma} \label{lma:rrbounds}
    At all times $\tau$, we have that: 
    \begin{enumerate}
        \item For any edge $e=uv$ and any colour $\kappa$, the estimates $(\tilde{S})_{\tau}:= (\tilde{s}_{\kappa}(u)+\tilde{s}_{\kappa}(v))_{\tau}$ in $T^{s}_{uv}$ satisfy that: 
        \[
        |(\tilde{S})_{\tau} - ((s_{\kappa}(u))_{\tau}+(s_{\kappa}(v))_{\tau})| \leq \tfrac{\eta}{16}
        \]
        \item For any vertex $u$, any neighbour $v$, and any colour $\kappa$, the estimates 
        \[
        (\tilde{E}(u,uv,\kappa))_{\tau} = ((\tilde{s}_{c(uv)}(u)+\tilde{s}_{c(uv)}(v)) - (\tilde{s}_{\kappa}(u)+\tilde{s}_{\kappa}(v)))_{\tau}
        \]
        stored in $MSD_{\kappa}(u)$ satisfy that: 
        \[
        |(E(u,uv,\kappa))_{\tau}-(\tilde{E}(u,uv,\kappa))_{\tau}| \leq \tfrac{\eta}{8}
        \]
        where 
        \[
        (E(u,uv,\kappa))_{\tau} = ((s_{c(uv)}(u)+s_{c(uv)}(v)) - (s_{\kappa}(u)+s_{\kappa}(v)))_{\tau}
        \]
    \end{enumerate}
\end{lemma}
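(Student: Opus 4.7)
The plan is to bound the staleness of each stored estimate using the round-robin scheme, and then to observe that each call to \texttt{ReColour} changes the true surpluses by only a small amount. With these two ingredients, both parts follow quickly.

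For part 1, fix a time $\tau$, an edge $uv$, and a colour $\kappa$, and let $\tau_0 \leq \tau$ be the most recent time at which \texttt{Update-DS}$(uv)$ was executed. At time $\tau_0$ the stored estimate coincides with the true value, so $(\tilde{S})_{\tau} = (s_\kappa(u)+s_\kappa(v))_{\tau_0}$, and it suffices to bound how much $s_\kappa(u) + s_\kappa(v)$ can change during the window $(\tau_0, \tau]$. In this window, $uv$ is not itself recoloured and, crucially, the round-robin at $u$ and at $v$ never visits $uv$ (otherwise \texttt{Update-DS}$(uv)$ would have been called again). Let $k_u$ (resp.\ $k_v$) denote the number of \texttt{ReColour} calls in the window whose recoloured edge is incident to $u$ (resp.\ $v$); these are precisely the calls in which $\iota(u, c(uv))$ (resp.\ $\iota(v, c(uv))$) is advanced by $r := \lceil 500 \Delta_{\max}/\eta \rceil$. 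Since the pointer never sweeps past the position of $uv$ in $LL(u, c(uv))$ during the window, the total advance $k_u \cdot r$ is bounded above by the list size, which is always at most $\Delta_{\max}$. Hence $k_u \leq \Delta_{\max}/r \leq \eta/500$, and likewise $k_v \leq \eta/500$, up to an additive constant absorbing the perturbation caused by insertions and deletions of edges into/out of $LL(u, c(uv))$ during the window (this is where the placement of new edges "just before" $\iota$ and the matching decrement of $\iota$ on deletions enters).

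Now each call to \texttt{ReColour} changes $s_\kappa(u)$ by at most $1$, and does so only if the recoloured edge is incident to $u$ (since $s_\kappa(u)$ is a truncation of $d_\kappa(u)$, and $d_\kappa(u)$ changes by at most $1$ per such call). Therefore $|(s_\kappa(u))_\tau - (s_\kappa(u))_{\tau_0}| \leq k_u$ and similarly for $v$, yielding
\[
|(\tilde{S})_\tau - ((s_\kappa(u))_\tau + (s_\kappa(v))_\tau)| \leq k_u + k_v \leq \tfrac{\eta}{250} + O(1).
\]
Invoking the lower bound on $\Delta_{\max}/t$ (which makes $\eta$ comfortably large, specifically $\eta \geq 78$), this is at most $\eta/16$, completing part 1. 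Part 2 is then an immediate triangle inequality: the entry in $MSD_\kappa(u)$ indexed by the edge $uv$ is rewritten by precisely the same call that last updates $T^s_{uv}$, so at time $\tau_0 \leq \tau$ it equals $E(u,uv,\kappa)_{\tau_0}$ exactly, and two applications of part 1 (once with colour $c(uv)$, once with colour $\kappa$) give
\[
|E(u,uv,\kappa)_\tau - \tilde{E}(u,uv,\kappa)_\tau| \leq |(S_{c(uv)})_\tau - (S_{c(uv)})_{\tau_0}| + |(S_\kappa)_\tau - (S_\kappa)_{\tau_0}| \leq \tfrac{\eta}{16} + \tfrac{\eta}{16} = \tfrac{\eta}{8}.
\]

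The main obstacle I expect is the bookkeeping around list dynamics in the staleness argument: insertions into and deletions from $LL(w, c(uv))$ during the window perturb the relationship between "pointer advance" and "positions visited", and one has to argue that the pseudocode's choice to place insertions immediately before $\iota$ (and to decrement $\iota$ on deletions before it) keeps this perturbation additive and small, so that the clean inequality $k_w \cdot r \leq |LL(w, c(uv))| \leq \Delta_{\max}$ survives up to a constant. Once this is in hand, the rest is straightforward arithmetic, and the constant $500$ hiding in $r$ is chosen precisely to leave plenty of slack against the target $\eta/16$.
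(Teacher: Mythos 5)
Your proof is correct and follows essentially the same route as the paper's: identify the last refresh time $\tau_0$ of $T^s_{uv}$, use the fact that every surplus-changing event at $w$ advances the round-robin pointer by $\lceil 500\Delta_{\max}/\eta\rceil$ so that at most about $\eta/500$ such events fit between consecutive refreshes, and conclude each surplus drifts by at most that much; part 2 is then the same two applications of part 1 via the triangle inequality. Your explicit handling of the list-pointer bookkeeping under insertions/deletions is if anything slightly more careful than the paper, which absorbs it silently into the slack between $\eta/500$ and $\eta/32$.
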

\begin{proof}
    We will show item $1)$ first. Let $\tau'\leq \tau$ be the last time before (or at) time $\tau$, where $\tilde{S}$ was updated. We then have that 
    \[
    (\tilde{S})_{\tau} = (\tilde{S})_{\tau'} = (\tilde{s}_{\kappa}(u)+\tilde{s}_{\kappa}(v))_{\tau'} = (s_{\kappa}(u))_{\tau'}+(s_{\kappa}(v))_{\tau'}
    \]
    and so it is sufficient to estimate differences of the form $|(s_{\kappa}(w))_{\tau}-(s_{\kappa}(w))_{\tau'}|$. 
    To this end, we note that due to the round-robin scheme being run every time an edge incident to $w$ changes its colour, we can change $s_{\kappa}(w)$ by at most
    \[
    \frac{\Delta_{\max}}{\ceil{\tfrac{500\Delta_{\max}}{\eta}}} \leq \tfrac{\eta}{500}
    \]
    before it is updated again. As a result, we obtain that $|(s_{\kappa}(w))_{\tau}-(s_{\kappa}(w))_{\tau'}| \leq \tfrac{\eta}{32}$, and so the above observations yield: 
    \begin{align*}
        |(\tilde{S})_{\tau} - ((s_{\kappa}(u))_{\tau}+(s_{\kappa}(v))_{\tau})| & \leq  |((s_{\kappa}(u))_{\tau'}+(s_{\kappa}(v))_{\tau'}) - ((s_{\kappa}(u))_{\tau}+(s_{\kappa}(v))_{\tau})| \\
            &\leq |(s_{\kappa}(u))_{\tau'}-(s_{\kappa}(u))_{\tau}| + |(s_{\kappa}(v))_{\tau'}-(s_{\kappa}(v))_{\tau}| \\
            &\leq \tfrac{\eta}{16}
    \end{align*} 
    For the second item, first observe that every time $MSD_{\kappa}(u)$ is updated, it is updated with a difference of entrances from some BBST $T^{s}_{ux}$. This means that if $c(uv) = \kappa'$, then for all $\kappa$, we actually have that: 
    \[
        (\tilde{E}(u,uv,\kappa))_{\tau} = (\tilde{s}_{\kappa'}(u)+\tilde{s}_{\kappa'}(v))_{\tau'} - (\tilde{s}_{\kappa}(u)+\tilde{s}_{\kappa}(v))_{\tau'}
    \]
    for some $\tau' \leq \tau$ denoting the last time before (or at) $\tau$ that $MSD_{\kappa}(u)$ was updated. 
    Observe that any time we update the entry $\tilde{E}(u,uv,\kappa)$ in $MSD_{\kappa}(u)$, we do so right after updating both $\tilde{s}(u)_{\kappa'}$ and $\tilde{s}(v)_{\kappa'}$. 
    Therefore, we can in fact write: 
    \[
        (\tilde{E}(u,uv,\kappa))_{\tau} = (s_{\kappa'}(u)+s_{\kappa'}(v))_{\tau'} - (s_{\kappa}(u)+s_{\kappa}(v))_{\tau'}
    \]
    By construction, we also have that:
    \[
        (E(u,uv,\kappa))_{\tau} = ((s_{\kappa'}(u))_{\tau}+(s_{\kappa'}(y))_{\tau}) - ((s_{\kappa}(u))_{\tau}+(s_{\kappa}(y))_{\tau})
    \]
    Thus similarly to in case $1.$, we conclude that: 
    \begin{align*}
        |(E(u,uv,\kappa)_{\tau}-(\tilde{E}(u,uv,\kappa))_{\tau}|&\leq |(s_{\kappa'}(u)+s_{\kappa'}(v))_{\tau'}-((s_{\kappa'}(u))_{\tau}+(s_{\kappa'}(v))_{\tau})| \\
        &+ |((s_{\kappa}(u))_{\tau}+(s_{\kappa}(v))_{\tau})-(s_{\kappa}(u)+s_{\kappa}(v))_{\tau'}| \\
        &\leq \tfrac{\eta}{16} + \tfrac{\eta}{16} = \tfrac{\eta}{8}
    \end{align*}
\end{proof}

Next we will argue that this information is accurate enough to maintain Invariant~\ref{inv:local}, provided that the algorithm terminates. Afterwards, we show that the algorithm always terminates and in fact does so with small amortised recourse.
\begin{lemma} \label{lma:invHolds}
    Let $G$ be a dynamic graph, and let $c_i$ be the colouring in $G_i$ maintained by the algorithm after $i$ updates. Assume that the algorithm terminates after each update and that $ \tfrac{\Delta_{\max}}{t} \geq \tfrac{10^4 \log^2 n}{\eps^{2}}$. Then $c_i$ satisfies Invariant~\ref{inv:local} for all $i$.
\end{lemma}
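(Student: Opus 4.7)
The plan is to prove Lemma~\ref{lma:invHolds} by induction on the update number $i$, combining a structural auxiliary invariant about the queue $Q$ with the accuracy bounds from Lemma~\ref{lma:rrbounds}. Correctness reduces to showing that when the algorithm terminates after update $i$, every stored estimate $\tilde{E}(y,yz,\kappa)$ is at most $\eta/2$; Lemma~\ref{lma:rrbounds} then yields the true value $E(y,yz,\kappa)\leq \tilde{E}(y,yz,\kappa) + \eta/8 \leq 5\eta/8 < \eta$, which rearranges to exactly Invariant~\ref{inv:local}. The base of the outer induction is trivial: before any updates the graph is empty, $Q$ is empty, and the invariant holds vacuously.

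The auxiliary invariant I would introduce is the following \emph{pushed invariant}: at the end of each iteration of the while-loop (and after the initial insert/delete step) during update $i$, for every edge $yz \in E(G)$ and every colour $\kappa \in [t]$ such that the currently stored estimate satisfies $\tilde{E}(y,yz,\kappa) > \eta/2$, at least one of the pairs $(y,\kappa)$, $(z,\kappa)$ lies in $Q$. I would prove this by a secondary induction on the sequence of events within an update (initial insert/delete, subsequent pop-without-action, subsequent pop-with-\texttt{ReColour}). The key observation is that every call to \texttt{Update-DS}$(yz)$ occurs either inside \texttt{Insert}/\texttt{Delete}/\texttt{ReColour}$(uv)$ with $uv = yz$, or inside \texttt{Round-Robin}$(uv)$, whose inner loop only scans edges in $LL(w,\kappa)$ for $w \in \{u,v\}$; in either case $yz$ shares at least one endpoint with the driving edge $uv$. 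Since the algorithm subsequently checks $\Delta\tilde{s}_\kappa(u)$ and $\Delta\tilde{s}_\kappa(v)$ for every colour and pushes the corresponding pair whenever the value exceeds $\eta/2$, and since $\tilde{E}(y,yz,\kappa)$ is symmetrically a term inside whichever of $MSD_\kappa(u)$ or $MSD_\kappa(v)$ corresponds to the endpoint of $yz$ in $\{u,v\}$, the pushed invariant is preserved across every call to \texttt{ReColour}.

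The case of popping $(w,\kappa)$ without acting is straightforward: the algorithm only skips it when $\Delta\tilde{s}_\kappa(w) \leq \eta/2$, in which case $\tilde{E}(w,wx,\kappa) \leq \eta/2$ for every neighbour $x$, so no edge incident to $w$ requires $(w,\kappa)$ to remain in $Q$ to witness the pushed invariant; for edges whose estimates were not touched during this iteration the invariant is inherited from the previous iteration. When the while-loop terminates, $Q$ is empty, so the pushed invariant forces $\tilde{E}(y,yz,\kappa) \leq \eta/2$ globally, which combined with the second bound of Lemma~\ref{lma:rrbounds} closes the argument.

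The main obstacle, and the place where one must be careful, is that \texttt{Round-Robin}$(uv)$ also refreshes $\tilde{E}$-entries stored inside $MSD_\kappa(y)$ for neighbours $y$ of $u$ or $v$, even though the algorithm only inspects $\Delta\tilde{s}_\kappa$ at $u$ and $v$ themselves. The resolution is the symmetry of the data structures: \texttt{Update-DS}$(yz)$ writes the same value $E(y,yz,\kappa)$ into both $MSD_\kappa(y)$ and $MSD_\kappa(z)$, so if a refresh at a far vertex $y$ would produce an oversized entry, the same oversized entry simultaneously appears at the near vertex $z \in \{u,v\}$, triggering the check there and pushing $(z,\kappa)$ into $Q$. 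This symmetry keeps the pushed invariant tight enough that emptiness of $Q$ at termination translates, via Lemma~\ref{lma:rrbounds}, to Invariant~\ref{inv:local} holding globally in $c_i$.
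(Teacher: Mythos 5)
Your proposal is correct and follows essentially the same route as the paper's proof: reduce to showing all stored estimates are at most $\eta/2$ at termination, invoke Lemma~\ref{lma:rrbounds} to pass from estimates to true values, use the symmetry $\tilde{E}(u,uv,\kappa)=\tilde{E}(v,uv,\kappa)$ to restrict attention to endpoints of driving edges, and argue that the queue retains (or re-acquires) every potentially violating pair until it is verified. Your ``pushed invariant'' is just a slightly more explicit formalisation of the queue-membership argument the paper makes informally.
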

\begin{proof}
    It suffices to show that after every update, we must have $\Delta s_{\kappa}(w) < \eta$ for all vertices $w$ and all colours $\kappa$. We first note that by Lemma~\ref{lma:rrbounds}, it follows that if $\Delta \tilde{s}_{\kappa}(w) \leq \tfrac{\eta}{2}$, then we have $\Delta s_{\kappa}(w) \leq \Delta \tilde{s}_{\kappa}(w) + \tfrac{\eta}{8} < \eta$. 
    
    Hence, if we can show that after the algorithm is done processing any update, we must have $\Delta \tilde{s}_{\kappa}(w) \leq \tfrac{\eta}{2}$ for all vertices $w \in V$ and all colours $\kappa \in [t]$, then we are done. 
    To see this, we apply induction on the number of updates processed. The base case is when no updates have been processed yet in which case what we claim clearly is true. Next assume that the above holds after $i$ updates have been processed, and consider the state of the data structures just after the $(i+1)^{\text{th}}$ update concluded. 

    Since the round-robin scheme, which updates the values stored in $MSD_{\kappa}(w)$, is only called on vertices $w$ either directly incident to edges that were inserted, deleted or recoloured or vertices neighbouring such vertices, the desired statement holds for any other vertex by induction. In particular, $MSD_{\kappa}(w)$ is only changed after deleting an edge, giving an edge a new colour, or during a call to $\texttt{Update-DS}$. The latter only happens during a call to $\texttt{Round-Robin}(uv)$ or imedditaly after recolouring an edge. 

    We first note that we need only worry about vertices directly incident to inserted, deleted or recoloured edges, and not neighbours of such vertices. Indeed, we always update an edge $uv$'s contribution to $MSD_{\kappa}(w)$ for $w \in \{u,v\}$ for both endpoints to the same value. This means that after a call to $\texttt{Update-DS}(uv)$ or a direct update terminates, we always have $\tilde{E}(u,uv,\kappa) = \tilde{E}(v,uv,\kappa)$ i.e.\ for all $\tau$ we have $(\tilde{E}(u,uv,\kappa))_{\tau} = (\tilde{E}(v,uv,\kappa))_{\tau}$.
    Hence, it is sufficient to verify that any endpoint $w$ directly incident to an insertion, deletion or recolouring satisfies $\Delta \tilde{s}_{\kappa}(w) \leq \tfrac{\eta}{2}$ after the update procedure ends. 
    
    After every $\texttt{Round-Robin}(uv)$ call or direct update, any vertex $w$ with $\Delta \tilde{s}_{\kappa}(w) > \tfrac{\eta}{2}$ is immediately added to a queue $Q$. The vertex $w$ can only leave $Q$ in two ways: either because it is verified at some time step $\tau$ that $(\Delta \tilde{s}_{\kappa}(w))_{\tau} \leq \tfrac{\eta}{2}$ or because some $\kappa$-coloured edge incident to $w$ is recoloured. 
    In the first case, $(w,\kappa)$ can only enter $Q$ again if some edge incident to it is recoloured at some time-step $\tau'>\tau$ in which case the exact same argument can be applied again. If no such future update occurs, we immediately have that $\Delta \tilde{s}_{\kappa}(w) \leq \tfrac{\eta}{2}$ after the update procedure ends.
    In the latter case, it is checked after the recolouring whether $\Delta \tilde{s}_{\kappa}(w) > \tfrac{\eta}{2}$ and if so, then $w$ is added back to $Q$ but at a later time-step. If  $\Delta \tilde{s}_{\kappa}(w) \leq \tfrac{\eta}{2}$, then $w$ immediately leaves $Q$ and similar arguments to the first case applies. 

    Since we have assumed that the algorithm terminates, we must eventually end up in the first case. 
\end{proof}
Next, we show that the algorithm actually terminates and analyse its recourse. 

\subsubsection{Analysing the Recourse}
In this subsection, we analyse the recourse of the algorithm and show that the it terminates. We first show that any edge re-coloured by the algorithm results in a significant lowering of the surplus of this edge. 
This allows us to show that a certain global potential of the graph drops enough to ensure that the amortised number of recolourings becomes bounded. 
\begin{lemma} \label{lma:goodflip}
    Let $e=uv$ by any edge recoloured by the algorithm during time-step $\tau$. Suppose that $(c(uv))_{\tau-1} = \kappa_1$ and that $(c(uv))_{\tau} = \kappa_2$ i.e.\ the colour of the edge is changed from $\kappa_1$ to $\kappa_2$. 
    Then at time $\tau-1$, we have that 
    \[
    \delta^{uv}_{\kappa_1 \rightarrow \kappa_2}:= ((s_{\kappa_1}(u))_{\tau-1}+(s_{\kappa_1}(v))_{\tau-1}) - ((s_{\kappa_2}(u))_{\tau-1}+(s_{\kappa_2}(v))_{\tau-1}) \geq \tfrac{\eta}{4}
    \]
\end{lemma}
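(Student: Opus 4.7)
The plan is to unwind the conditions that must have held at time $\tau-1$ for the algorithm to recolour $uv$ from $\kappa_1$ to $\kappa_2$, and then convert the resulting inequalities on the stored estimates into inequalities on the true surpluses using Lemma~\ref{lma:rrbounds}. The key observation is that the trigger threshold $\eta/2$ has been tuned precisely to absorb the two layers of approximation error guaranteed by that lemma.

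Any true recolouring (as opposed to the initial colour assignment on a newly inserted edge) happens inside the \texttt{else}-branch of \texttt{ReColour}, which is reached only after a pair $(w,\kappa')$ with $w\in\{u,v\}$ has been popped from the queue $Q$ and the check $\Delta \tilde{s}_{\kappa'}(w) > \eta/2$ has succeeded, with $uv$ chosen as the edge in $N(w)$ attaining this maximum. Unpacking the definition of $MSD_{\kappa'}(w)$ and using $c(uv)_{\tau-1}=\kappa_1$, this gives $\tilde{s}_{\kappa_1}(u)+\tilde{s}_{\kappa_1}(v) - \tilde{s}_{\kappa'}(u) - \tilde{s}_{\kappa'}(v) > \eta/2$ at time $\tau-1$. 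Lemma~\ref{lma:rrbounds}(2) bounds the error of this estimate by $\eta/8$, so passing to true surpluses loses at most $\eta/8$ and yields $s_{\kappa_1}(u)+s_{\kappa_1}(v) - s_{\kappa'}(u) - s_{\kappa'}(v) > 3\eta/8$.

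Next I would exploit how $\kappa_2$ is selected: the call \texttt{ExtractMinColour}$(T^{s}_{uv})$ returns the colour minimising $\tilde{s}_\kappa(u)+\tilde{s}_\kappa(v)$ in $T^{s}_{uv}$, so in particular $\tilde{s}_{\kappa_2}(u)+\tilde{s}_{\kappa_2}(v) \le \tilde{s}_{\kappa'}(u)+\tilde{s}_{\kappa'}(v)$. Applying Lemma~\ref{lma:rrbounds}(1) to each side, each estimate being off by at most $\eta/16$, transfers the comparison to true surpluses with an additive loss of $\eta/8$, giving $s_{\kappa_2}(u)+s_{\kappa_2}(v) \le s_{\kappa'}(u)+s_{\kappa'}(v) + \eta/8$. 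Chaining the two displayed inequalities then gives $\delta^{uv}_{\kappa_1\to\kappa_2} > 3\eta/8 - \eta/8 = \eta/4$, as required.

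There is no real obstacle here; the argument is essentially a careful accounting of the two $\eta/8$-errors permitted by Lemma~\ref{lma:rrbounds}. The only point worth stressing is that the choice of the trigger threshold $\eta/2$ was made precisely so that these two layers of error — one from reading the entry in $MSD_{\kappa'}(w)$, one from comparing entries inside $T^{s}_{uv}$ — still leave $\eta/4$ of genuine slack for the recoloured edge, which is exactly the quantity needed to drive the potential drop in the subsequent amortised analysis.
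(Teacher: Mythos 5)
Your proof is correct and follows essentially the same route as the paper: unwind the trigger condition $\Delta\tilde{s}_{\kappa'}(w)>\eta/2$ for the popped pair, use the minimality of the colour returned by $\texttt{ExtractMinColour}(T^s_{uv})$, and convert estimates to true surpluses via Lemma~\ref{lma:rrbounds}. The paper chains the estimate inequalities first and converts once via part~(1), losing only $\eta/8$ and landing at $3\eta/8$, whereas you convert each inequality separately (part~(2) then part~(1)) and accumulate $\eta/4$ of error; both land within the required bound of $\eta/4$.
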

\begin{proof}
    Any $\kappa$-coloured edge incident to $u$ or $v$ is only recoloured during time-step $\tau$ if $(\Delta \tilde{s}_{\kappa}(u))_{\tau-1} \geq \tfrac{\eta}{2}$ or $(\Delta \tilde{s}_{\kappa}(v))_{\tau-1} \geq \tfrac{\eta}{2}$. 
    Since $e$ was chosen as an edge maximising the surplus difference, it must be the case that 
    \begin{align*}
        \tilde{\delta}^{uv}_{\kappa_1 \rightarrow \kappa_2} &:=  (\tilde{s}_{\kappa_1}(u)+\tilde{s}_{\kappa_1}(v) - (\tilde{s}_{\kappa_2}(u)+\tilde{s}_{\kappa_2}(v)))_{\tau-1} \\
        &= (\tilde{s}_{\kappa_1}(u)+\tilde{s}_{\kappa_1}(v))_{\tau-1} - (\tilde{s}_{\kappa_2}(u)+\tilde{s}_{\kappa_2}(v))_{\tau-1} \\
        &\geq \tfrac{\eta}{2}
    \end{align*}
    where the second equality follows from the fact that $MSD_{\kappa}(u)$ always is updated by a difference of surpluses stored in BBSTs of of the form $T^{s}(uv)$ immediately after all of these data structures have been updated. 
    Hence, it follows that:
    \begin{align*}
        \delta^{uv}_{\kappa_1 \rightarrow \kappa_2}&+(\tilde{s}_{\kappa_1}(u)+\tilde{s}_{\kappa_1}(v))_{\tau-1} - ((s_{\kappa_1}(u))_{\tau-1}+(s_{\kappa_1}(v))_{\tau-1}) - ((\tilde{s}_{\kappa_2}(u)+\tilde{s}_{\kappa_2}(v))_{\tau-1}\\ &- ((s_{\kappa_2}(u))_{\tau-1}+(s_{\kappa_2}(v))_{\tau-1})) \geq \tilde{\delta}^{uv}_{\kappa_1 \rightarrow \kappa_2}\geq \tfrac{\eta}{2}
    \end{align*}
    Consequently, we can apply Lemma~\ref{lma:rrbounds} part 1) in order to observe that: 
    \begin{align*}
        \delta^{uv}_{\kappa_1 \rightarrow \kappa_2} &\geq \tilde{\delta}^{uv}_{\kappa_1 \rightarrow \kappa_2}-|(\tilde{s}_{\kappa_1}(u)+\tilde{s}_{\kappa_1}(v))_{\tau-1} - ((s_{\kappa_1}(u))_{\tau-1}+(s_{\kappa_1}(v))_{\tau-1})| -\\
        &|(\tilde{s}_{\kappa_2}(u)+\tilde{s}_{\kappa_2}(v))_{\tau-1}- ((s_{\kappa_2}(u))_{\tau-1}+(s_{\kappa_2}(v))_{\tau-1})| \\
        &\geq \tfrac{\eta}{2}-\tfrac{2\eta}{16} \geq \tfrac{\eta}{4}
    \end{align*}
    as claimed.
\end{proof}
The idea is now to consider the following potential: 
\[
\Phi(G,c) = \sum \limits_{i=1}^{t} \sum \limits_{v \in V(G)} \sum \limits_{j=1}^{s_{i}(v)} j
\]
As we will show below, Lemma~\ref{lma:goodflip} implies that every recolouring performed by the algorithm will lower the potential substantially. In particular, this ensures that the amortised recourse of the dynamic algorithm is low. This is shown in the following Lemma:
\begin{lemma} \label{lma:tsplitterRecourse}
    Let $\tilde{d} = \max_{c} \max_{v} \tilde s(v)$ be the maximum surplus of any colouring satisfying Invariant~\ref{inv:local} with $0<\eps < 1$. 
    Then the amortised recourse of the algorithm is upper bounded by $16\frac{\tilde{d}}{\eta} \leq \frac{2^{10}\log n}{\eps^2} = O(\tfrac{\log n}{\eps^2})$ provided that $\eta = \floor{\tfrac{\eps^2}{128\log n}}\tfrac{\Delta_{\max}}{t} \geq 16$ and $\tfrac{\Delta_{\max}}{t} \geq \tfrac{10^4 \log^2 n}{\eps^{2}}$.

    In particular, under these assumptions, the algorithm always terminates. 
\end{lemma}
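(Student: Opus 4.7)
The proof is a standard amortised analysis using the potential $\Phi(G,c)=\sum_{i=1}^{t}\sum_{v\in V(G)}\sum_{j=1}^{s_i(v)} j$, which is a non-negative integer and equals $0$ on the empty graph. I will show (a) every recolouring executed by the algorithm decreases $\Phi$ by at least $\eta/8$; (b) each adversarial update increases $\Phi$ by at most $O(\tilde d)$; (c) a telescoping argument then yields both termination of the inner while-loop and the claimed amortised bound.

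\textbf{Step (a): drop per recolouring.} Write $f(x)=\tfrac{1}{2}\max(x-\Delta_{\max}/t,0)\bigl(\max(x-\Delta_{\max}/t,0)+1\bigr)$, so that $\Phi=\sum_{i,v}f(d_i(v))$. A direct calculation gives the two identities
\[
f(x-1)-f(x) \;=\; -s(x), \qquad f(x+1)-f(x) \;\le\; s(x)+1,
\]
valid for every integer $x$ (with $s(x)=\max(x-\Delta_{\max}/t,0)$); both hold unconditionally because the $\max\{\cdot,0\}$ truncation forces both sides to vanish simultaneously when $x\le\Delta_{\max}/t$. Applying this at both endpoints of a recolouring of $e=uv$ from $\kappa_1$ to $\kappa_2$,
\[
\Delta\Phi \;\le\; s_{\kappa_2}(u)+s_{\kappa_2}(v)+2-s_{\kappa_1}(u)-s_{\kappa_1}(v) \;=\; -\delta^{uv}_{\kappa_1\to\kappa_2}+2 \;\le\; -\eta/4+2 \;\le\; -\eta/8,
\]
where all surpluses are pre-recolouring values, the equality is the definition of $\delta$, Lemma~\ref{lma:goodflip} supplies $\delta^{uv}_{\kappa_1\to\kappa_2}\ge\eta/4$, and the last inequality uses $\eta\ge 16$.

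\textbf{Steps (b)--(c): adversarial contribution and telescoping.} An insertion of $e=uv$ which the algorithm colours $\kappa'$ increases $\Phi$ by at most $s^{\mathrm{new}}_{\kappa'}(u)+s^{\mathrm{new}}_{\kappa'}(v)\le 2(\tilde d+1)$, and a deletion only decrements two $d_\kappa$-values and hence weakly decreases $\Phi$. Let $\Phi_i$ be the potential after the $i$-th adversarial operation completes and $R_i$ the number of recolourings it induces. Combining (a) and (b), $\Phi_i\le\Phi_{i-1}+2(\tilde d+1)-(\eta/8)R_i$. Telescoping over $i\in[T]$ with $\Phi_0=0$ and $\Phi_T\ge 0$ yields $\sum_i R_i = O(\tilde d T/\eta)$, i.e.\ amortised recourse $O(\tilde d/\eta)$. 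Substituting $\eta=\lfloor\eps^2/(128\log n)\rfloor\Delta_{\max}/t$ together with the bound $\tilde d\le (1+\eps)\Delta_{\max}/t$ from Theorem~\ref{thm:invGaran} yields the claimed $O(\log n/\eps^2)$. Termination of the inner while-loop within a single update is immediate from (a) and the non-negativity of $\Phi$: at most $8\Phi/\eta$ recolourings can occur before $Q$ empties.

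\textbf{Main obstacle.} The bookkeeping is mechanical once one notices that the two identities in step~(a) are in fact unconditional, so no case analysis is needed for the $s_{\kappa_1}(u)=0$ or $s_{\kappa_1}(v)=0$ edge cases. The only genuine subtlety is logical: Theorem~\ref{thm:invGaran}'s bound on $\tilde d$ uses Invariant~\ref{inv:local}, which Lemma~\ref{lma:invHolds} only guarantees once the inner loop has terminated; the reasoning is not circular because step~(a) together with $\Phi\ge 0$ proves termination on its own (without referring to the invariant), and only afterwards does one invoke Lemma~\ref{lma:invHolds} and Theorem~\ref{thm:invGaran} to translate the abstract $O(\tilde d/\eta)$ bound into the concrete $O(\log n/\eps^2)$.
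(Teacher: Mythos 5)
Your proof is correct and follows essentially the same route as the paper's: the same potential $\Phi$, the same per-recolouring drop of at least $\eta/4-2\ge\eta/8$ via Lemma~\ref{lma:goodflip}, the same $2(\tilde d+1)$ bound on the increase per insertion, and the same telescoping. Your explicit treatment of the summand $f$ and of the non-circularity between termination and Lemma~\ref{lma:invHolds}/Theorem~\ref{thm:invGaran} is slightly more careful than the paper's (which states the potential change as an exact equality where an inequality is what actually holds), but it is the same argument.
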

\begin{proof}
    We first note that after inserting and colouring any edge $e = uv$, the potential can increase by at most $2(\tilde{d}+1)$, since between insertions the algorithm guarantees that Invariant~\ref{inv:local} holds, provided that the algorithm terminates, by Lemma~\ref{lma:invHolds}. Note that a deletion will always lower the potential. 

    The second observation is that each recolouring will lower the potential by $\Omega(\eta)$. 
    Indeed, if $e$ is recoloured from having colour $\kappa_1$ to having colour $\kappa_2$ during time-step $\tau$, then it follows from Lemma~\ref{lma:goodflip} that:
    \[
    \delta^{uv}_{\kappa_1 \rightarrow \kappa_2} = ((s_{\kappa_1}(u))_{\tau-1}+(s_{\kappa_1}(v))_{\tau-1}) - ((s_{\kappa_2}(u))_{\tau-1}+(s_{\kappa_2}(v))_{\tau-1}) \geq \tfrac{\eta}{4}
    \]
    Notice that if the recolouring performed in step $\tau$ transforms $(G,c)$ to $(G,c')$, then we have that:
    \[
    \Phi(G,c) - \Phi(G,c') = \delta^{uv}_{\kappa_1 \rightarrow \kappa_2} - 2 \geq \tfrac{\eta}{4} - 2 \geq \tfrac{\eta}{8}
    \]
     Indeed, the above holds since recolouring $e$ reduces $s_{\kappa_1}(u)$ and $s_{\kappa_1}(v)$ by 1, and hence $\Phi$ by $(s_{\kappa_1}(u))_{\tau-1}+(s_{\kappa_1}(v))_{\tau-1}$. On the other hand, $s_{\kappa_2}(u)$ and $s_{\kappa_2}(v)$ are both increased by one, which increases the potential by $(s_{\kappa_2}(u)))_{\tau-1}+(s_{\kappa_2}(v)))_{\tau-1}+2$. Note also that at most one edge can be recoloured during a time-step. 

    Since $\Phi$ is always non-negative, deletions do not lower $\Phi$, and since insertions increase $\Phi$ by at most $2(\tilde{d}+1)$, we find that we can witness at most:
    \[
    \frac{r\cdot{}2(\tilde{d}+1)}{\tfrac{\eta}{8}} = 16 \frac{r\cdot{}\tilde{d}}{\eta}
    \]
    recolourings over a sequence of updates containing $r$ insertions. Consequently, the algorithm always terminates and it does so with recourse bounded by $O(\frac{\tilde{d}}{\eta})$. 
    To recover the last part of the Lemma, note that Theorem~\ref{thm:invGaran} gives that $\tilde{d} \leq (1+\eps)\tfrac{\Delta_{\max}}{t}$ when $\eta = \floor{\tfrac{\eps^2}{128\log n}}\tfrac{\Delta_{\max}}{t}$.
\end{proof}

We briefly remark that it is quite important that $\eta \geq 16$. This ensures that any recolouring actually lowers the potential, and that one does not end up in a situation where the algorithm continually recolours the same edges. 
In the next section, we will analyse the update time as a function of the number of recolourings i.e.\ the recourse performed by the algorithm. We will then use Lemma~\ref{lma:tsplitterRecourse} to upper bound the total update time.

\subsubsection{Analysing the Update Time} 

We begin by implementing the basic procedures in the following lemma. The proof is straight-forward, but we include a proof for completeness.
\begin{lemma} \label{lma:primitiveUpdateTimes}
    For vertex $v$, we can support the following operations with deterministic and worst-case guarantees as listed below:
    \begin{itemize}
        \item For all edges $uv$ incident to $v$, we can perform $\emph{\texttt{Initialize-DS}}(uv)$ in $O(t \log m)$ time. 
        \item For all edges $uv$ incident to $v$, we can perform $\emph{\texttt{Update-DS}}(uv)$ in $O(t \log m)$ time.
        \item For all edges $uv$ incident to $v$, we can perform $\emph{\texttt{Round-Robin}}(uv)$ in $O(t^3 \eps^{-2} \log^2 m)$ time.
        \item We can perform one call to $\emph{\texttt{ReColour}}(e)$ in $O(t^3 \eps^{-2} \log^2 m)$ time.
        \item We can extract an edge (and the value) maximising $\Delta \tilde{s}_{\kappa}(v)$ in $O(\log m)$ time. 
        \item We can perform $\emph{\texttt{ExtractMinColour}}(T^{s}_{uv})$ in $O(\log m)$ time.
    \end{itemize}
\end{lemma}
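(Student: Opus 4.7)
The plan is to implement each stored object with standard balanced binary search trees (BBSTs) and doubly-linked lists indexed as described in the data-structure section, and then just count BBST/linked-list operations per call. The two extraction items (finding the edge realising $\Delta \tilde{s}_\kappa(v)$ and \texttt{ExtractMinColour}$(T^s_{uv})$) are immediate: both are a single BBST query for a maximum (respectively minimum) entry, which takes $O(\log m)$ time. For these, the only thing to verify is that the relevant extremum is in fact what is stored: for $MSD_\kappa(v)$, the key is the surplus difference, so the maximum key yields both the edge and the value; for $T^s_{uv}$, the key is $\tilde s_\kappa(u)+\tilde s_\kappa(v)$ and the minimum yields the minimum-surplus colour.

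For \texttt{Initialize-DS}$(uv)$ and \texttt{Update-DS}$(uv)$, the work is simply a for-loop over the $t$ colours: for each $\kappa$, we look up the surplus values in the $O(1)$-access arrays $SP(u)$, $SP(v)$, perform one insertion/update in $T^s_{uv}$ in $O(\log m)$, and one insertion/deletion in each of $MSD_\kappa(u)$, $MSD_\kappa(v)$ in $O(\log m)$. Summed over all $t$ colours this is $O(t\log m)$, which is the claimed bound for both.

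For \texttt{Round-Robin}$(uv)$, the outer loop has $2t$ iterations (over the two endpoints and $t$ colours). Inside each iteration we advance the pointer $\iota$ through $\bigl\lceil 500\Delta_{\max}/\eta\bigr\rceil$ consecutive edges of $LL(w,\kappa)$; for each such edge we invoke \texttt{Update-DS}, which costs $O(t\log m)$ by the bullet above. The total cost is therefore $O\bigl(t \cdot (\Delta_{\max}/\eta) \cdot t\log m\bigr) = O(t^2 (\Delta_{\max}/\eta)\log m)$. The key accounting step is bounding $\Delta_{\max}/\eta$: under the hypothesis of Theorem~\ref{thm:tsplit} we have $\Delta_{\max}/t \geq 10^4\eps^{-2}\log^2 n$, so the floor in $\eta = \lfloor \eps^2/(128\log n)\rfloor \cdot \Delta_{\max}/t$ is harmless and $\Delta_{\max}/\eta = O(t\log n/\eps^2) = O(t\log m/\eps^2)$. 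Substituting gives $O(t^3\eps^{-2}\log^2 m)$, as claimed.

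Finally, the non-recursive cost of \texttt{ReColour}$(e)$ consists of one \texttt{ExtractMinColour} call ($O(\log m)$), updating $e$'s contribution in $O(t)$ linked lists and $MSD$-trees ($O(t\log m)$), a single \texttt{Update-DS} (respectively \texttt{Initialize-DS}), one \texttt{Round-Robin}$(e)$ call, and a final loop over the $2t$ pairs $(w,\kappa)$ each probing $\Delta\tilde s_\kappa(w)$ in $O(\log m)$. The dominating term is the round-robin, giving $O(t^3\eps^{-2}\log^2 m)$ overall. The only potentially subtle step is the $\Delta_{\max}/\eta$ calculation above; everything else is a direct count of standard BBST and linked-list operations.
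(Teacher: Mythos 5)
Your proposal is correct and follows essentially the same route as the paper's proof: implement everything with standard BBSTs and pointer-indexed doubly-linked lists, charge each call to its constituent $O(\log m)$ BBST operations, and observe that \texttt{Round-Robin} dominates via $2t\cdot\lceil 500\Delta_{\max}/\eta\rceil$ calls to \texttt{Update-DS} with $\Delta_{\max}/\eta = O(t\eps^{-2}\log n)$. Your explicit justification of the $\Delta_{\max}/\eta$ bound (and that the floor in $\eta$ is harmless under the theorem's hypothesis) is a detail the paper leaves implicit, but the argument is the same.
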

\begin{proof}
    We will only require our BBST's to support very basic operations, so any standard BBST such as red-black trees~\cite{cormen2022introduction} will do. 

    For the first item, we note that initializing $MSD_{\kappa}(w)$ for any $\kappa \in [t]$ and any $w \in V(G)$ takes constant time since $MSD_{\kappa}$ is initialised as an empty BBST. 
    Initialising the doubly linked-lists over neighbouring edges can also be done trivially in $O(1)$ time for each $\kappa$.
    Since, we have to do the above for $t$ choices of $\kappa$ and up to $2$ choices of $w$, this part takes $O(t)$ time. 
    We can initialize $T^{s}_{uv}$ in $O(1)$-time and insert the $t$ surplus sums into $T^{s}_{uv}$ in $O(t\log t)$ time.
    Finally, we can insert the surplus differences into the relevant $MSD_{\cdot{}}(\cdot{})$'s in $O(t\cdot{}\log m)$ time, as $MSD_{\kappa}(w)$ contains $O(m)$ elements for all $\kappa, w$. 

    For the second item, we note that each update to $T^{s}_{uv}$ takes $O(\log t)$ time, since we can compute $s_{\kappa}(u) + s_{\kappa}(v)$ in constant time. 
    Each update to $MSD_{\kappa}(v)$ can also be performed in $O(\log n)$ time. Indeed, we can compute $(s_{c(uv)}(u) + s_{c(uv)}(v))-(s_{\kappa}(u) + s_{\kappa}(v))$ in $O(1)$ time, and insert it in $O(\log m)$ time. 
    Similarly, by storing pointers to the elements in a BBST at $w$, we can also remove the old entrance in $O(\log m)$ time. 
    All of the above has to be performed $t$ times, leaving us with the stated update time.  

    For the third item, we note that we perform $2t\cdot{}\ceil{500\tfrac{\Delta}{\eta}}$ calls to $\texttt{Update-DS}(e)$, which dominates the update time. Indeed, we can index into the doubly-linked-list in $O(1)$ time, provided that we have stored a pointer to the element, we wish to index, as well as its position in the linked list. 
    Applying the second item therefore yields the running time.

    For the fourth item, we note that we can perform $\texttt{ExtractMinColour}(T^{s}_{uv})$ in $\log t$ time by simply computing the minimum in the BBST. 
    Hence, if we disregard further recursive call, the update time is dominated by the call to $\texttt{Round-Robin}(uv)$, so the statement follows from the third item.

    For the fifth item, we note that by storing a pointer from each entrance in $MSD_{\kappa}(v)$ to the edge responsible for the entrance, we can, given access to an entrance, locate the corresponding edge in $O(1)$ time. 
    As such, we only need to compute the maximum entrance in $MSD_{\kappa}(v)$ which can be done  in $O(\log m)$ time.

    For the sixth and final item, we note that extracting the maximum entrance in a BBST can be performed in $O(\log n)$ time by a standard search always going to the largest child. 
\end{proof}

Everytime we call $\texttt{ReColour}$ on some edge, we perform a recolouring. Hence, by Lemma~\ref{lma:tsplitterRecourse}, after $i$ insertions, we have performed at most $16\cdot{}i\cdot{}\frac{\tilde{d}}{\eta}$ calls to $\texttt{ReColour}$. Thus, in an amortised sense, the number of calls to $\texttt{ReColour}$ is $O(\frac{\tilde{d}}{\eta})$ per update. 
If we ignore the recursive calls, then by Lemma~\ref{lma:primitiveUpdateTimes} the time spent for each call is dominated by the call to $\texttt{Round-Robin}$, and so each call to $\texttt{Recolour}$ takes only $O(t^3\eps^{-2} \log^2 m)$ time. 
Each recolouring puts at most $2t$ pairs unto $Q$. The time required to process such pairs which do not result in a recolouring can be, in an amortised sense, ascribed to the recolouring that pushed the pair unto $Q$. 
This time is also dominated by the time is dominated by the call to $\texttt{Round-Robin}$. 

Hence, we have shown the following bound on the amortised running time of $\texttt{insert}$ and $\texttt{delete}$.
\begin{lemma} \label{lma:insdel}
    The amortised running time of $\texttt{insert}$ and $\texttt{delete}$ is upper-bounded by $O(t^3\eps^{-4} \log^3 m)$.
\end{lemma}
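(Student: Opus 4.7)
The plan is to combine the per-call cost analysis of Lemma~\ref{lma:primitiveUpdateTimes} with the amortised recourse bound of Lemma~\ref{lma:tsplitterRecourse}. The overall update consists of one invocation of $\texttt{Insert}$ or $\texttt{Delete}$, followed by a sequence of recolouring calls triggered through the queue $Q$. I would bill the work in two buckets: (i) the fixed cost of the initial insertion/deletion, and (ii) the total cost of all recolourings, and then amortise only (ii) using the potential argument from Lemma~\ref{lma:tsplitterRecourse}.

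For bucket (i), I would observe that an $\texttt{Insert}$ performs one $\texttt{Initialize-DS}$, one $\texttt{ReColour}$, and some constant-factor many dictionary operations before entering the main loop, while $\texttt{Delete}$ performs bookkeeping on $LL(\cdot,\cdot)$, $\iota(\cdot,\cdot)$, $MSD_\cdot(\cdot)$, and one $\texttt{Round-Robin}$ call followed by up to $2t$ queue checks. By Lemma~\ref{lma:primitiveUpdateTimes}, each of these primitive subroutines costs at most $O(t^3\eps^{-2}\log^2 m)$ in the worst case, so bucket (i) contributes $O(t^3\eps^{-2}\log^2 m)$ per update. Every pop from $Q$ that does not trigger a recolouring costs only $O(\log m)$ (one query to $MSD_\kappa(w)$), and the number of such pops is at most the number of pushes, which is bounded by $O(t)$ times the number of recolourings plus $O(t)$ for the triggering update; these contributions are subsumed by the recolouring analysis.

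For bucket (ii), the main observation is that each successful call to $\texttt{ReColour}$ costs $O(t^3\eps^{-2}\log^2 m)$ excluding the recursive re-entry, where the dominant term is the $\texttt{Round-Robin}$ call that refreshes $\ceil{500\Delta_{\max}/\eta}$ edges for each of $2t$ colour-endpoint pairs. Every recolouring also triggers at most $O(t)$ new pushes to $Q$, each costing $O(\log m)$, and this is swallowed by the $t^3\eps^{-2}\log^2 m$ bound. Lemma~\ref{lma:tsplitterRecourse} then guarantees that the amortised number of such recolourings per graph update is at most $O(\log n/\eps^2)$, so the amortised contribution of bucket (ii) is
\[
O\!\paren{\tfrac{\log n}{\eps^2}} \cdot O\!\paren{t^3\eps^{-2}\log^2 m} \;=\; O\!\paren{t^3\eps^{-4}\log^3 m},
\]
using $\log n = O(\log m)$ whenever the graph has at least one edge, and absorbing bucket (i).

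The only part that requires a moment of care is confirming that the queue $Q$ never contributes more work than the recolouring schedule already pays for: I would argue this by charging each pop either to the recolouring it triggers (via Lemma~\ref{lma:tsplitterRecourse}) or to the specific preceding event that pushed it, and observing that every pushing event (initial update or recolouring) produces only $O(t)$ pushes. Once that bookkeeping is set up, the desired amortised bound of $O(t^3\eps^{-4}\log^3 m)$ follows immediately.
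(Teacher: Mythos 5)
Your proposal is correct and follows essentially the same route as the paper: it multiplies the amortised recolouring count $O(\log n/\eps^2)$ from Lemma~\ref{lma:tsplitterRecourse} by the per-call cost $O(t^3\eps^{-2}\log^2 m)$ from Lemma~\ref{lma:primitiveUpdateTimes}, which is exactly the paper's argument. The additional bookkeeping you supply for the queue pops and the initial non-recursive work is a harmless refinement of what the paper leaves implicit.
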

Combining Lemma~\ref{lma:invHolds}, Lemma~\ref{lma:tsplitterRecourse}, and Lemma~\ref{lma:insdel} now gives Theorem~\ref{thm:tsplit}.

\section{Maintaining a hierarchy of $t$-splitters} \label{sct:hier}
In this section, we will show how to build a hierarchy of dynamic $t$-splitters in order to split the degree all the way down to $O(\poly(\eps^{-1}, \log n))$ while retaining an update time in $2^{\tilde{O}_{\log \eps^{-1}}(\sqrt{\log n})}$. Thus we will show Theorem~\ref{thm:splitHier}, which we restate below: 
\begin{theorem}[Identical to Theorem~\ref{thm:splitHier}]
    Let $G$ be a dynamic graph, and let $1 > \eps > 0$ and an upper bound $\Delta_{\max} \geq \tfrac{10^7 \log^5 n}{\eps^{3}}$ on the maximum degree of $G$ through the entire update sequence be given. 
    Set $t_1=\floor{2^{10\sqrt{\log n}}}$ and $t_2 = \floor{\tfrac{\log n}{\eps}}$.
    Then there exists a dynamic algorithm maintaining parameters $h,i,j$ with $h = i+j \leq 4 \sqrt{\log n}$ and a set of graphs $\mathcal{G}_{h}$ satisfying that:
    \begin{enumerate}
         \item $\{E(G_i)\}_{G_i \in \mathcal{G}_{h}}$ partitions $E(G)$
        \item $|\mathcal{G}_{i+j}| \leq t_1^{i}t_2^{j} $
        \item $\hat{\Delta}_{i+j}  \leq (1+\tfrac{\eps}{16})\tfrac{\Delta_{\max}}{t_1^{i}t_2^{j}} \leq \poly{\paren{\tfrac{\log n}{\eps}}}$
    \end{enumerate}
    The algorithm has amortised recourse and update time both in $2^{\tilde{O}_{\log \eps^{-1}}(\sqrt{\log n})}$.
\end{theorem}
In this section, we will again assume we are given an upper bound $\Delta_{\max}$ on $\Delta$. 
We first give a formal description of the hierarchy and an algorithm that maintains it.

\subsection{The hierarchy} 
We first describe the hierarchy in words.
The hierarchy will consist of two types of $t$-splitters. The first type of $t$-splitter, we will use, has $t_1 = \floor{2^{10\sqrt{\log n}}}$, and the second type has $t_2 = \floor{\tfrac{\log n}{\eps}}$. 
Given any graph $G$, any $\Delta_{\max}$, and any $0<\eps<1$, we will select some $h$. The hierarchy then consists of a sequence of sets of graphs $\{\mathcal{G}_{i}\}_{i = 0}^{h}$ and a sequence of values $\{\hat{\Delta}_{i}\}_{i = 0}^{h}$ such that for all $0 \leq i \leq h$ and all $H\in \mathcal{G}_{i}$, we have that $\max_{v \in V(H)} d(v) \leq \hat{\Delta}_{i}$ i.e.\ the sequence of $\hat{\Delta}_i$.

We note here that the algorithm from Theorem~\ref{thm:tsplit} is \emph{not} history independent so 'running' it on a static graph is ill-defined (one can think of the edges being processed in an arbitrary order). We will ignore this in the following informal description, and then provide a proper definition afterwards.
We will initialise $\mathcal{G}_0 = \{G\}$ and $\hat{\Delta}_0 = \Delta_{\max}$, and proceed as follows: given $\mathcal{G}_{i}$ and $\hat{\Delta}_i$, we let $\mathcal{G}_{i+1}$ be the set of graphs obtained in the following manner.
If $\Delta_{\max}\geq 2^{10i \sqrt{\log n}}$, we will run the algorithm from Theorem~\ref{thm:tsplit} with the parameter $\mu = \tfrac{\eps'^2}{\log n}$ and $t = t_1$ on all $H \in \mathcal{G}_{i}$. For each $H$, every colour-class $\kappa$ induces a subgraph $H_{\kappa}$ of $H$. For each choice of $H$ and $\kappa$, we include $H_{\kappa}$ in $\mathcal{G}_{i+1}$ after removing all degree $0$ vertices. Finally, by Theorem~\ref{thm:tsplit}, we can choose $\hat{\Delta}_{i+1} = (1+\mu)\tfrac{\hat{\Delta}_i}{t}$.
Otherwise we set $t = t_2$ and repeat the above construction. 
We will pick $h$ such that the construction terminates with $\hat{\Delta}_{h} = O(\poly(\eps^{-1}, \log n))$, but still large enough for any application of Theorem~\ref{thm:tsplit} to go through. 

The idea is that the edgesets of the graphs in $\mathcal{G}_{h}$ form a partition of the edges of $G$ such that each partition induces a low-degree subgraph of $G$. 
Since the $t$-splitters used are fully-dynamic, we can maintain the above hierarchy dynamically as well (see Algorithms~\ref{alg:InitHierarchy}~\ref{alg:insertionHierarchy}~\ref{alg:DeletionHierarchy}). 
To do so, we simply build the $t$-splitters as specified, and whenever an edge is inserted, deleted or recoloured, we remove it from the graphs it previously belonged to (if any) and instead insert it into the graphs specified by its new colour (if any).
The order of operations does not matter too much, as long as we guarantee that any $t$-splitter algorithm that is called is allowed to finish and that the $\hat{\Delta}_{i}$'s remain valid upper-bounds through-out the update sequence. In Algorithms~\ref{alg:InitHierarchy}~\ref{alg:insertionHierarchy}~\ref{alg:DeletionHierarchy} we will explicitly fix the order of operations.  

Next, we introduce some basic notation in order to make the above description formal. 
First of all, we define a $(G,\Delta_{\max}, \eps)$-valid-hierarchy. 
This is any hierarchy obtained by splitting via any $t$-splitter satisifying Invariant~\ref{inv:local}, and not necessarily only via the dynamic $t$-splitter from Theorem~\ref{thm:tsplit}.
This deals with the technicality that the data structure from Theorem~\ref{thm:tsplit} might not be history-independent, and so we cannot dynamically hope to maintain a very strictly defined hierarchy.
\begin{definition} \label{def:hierarchy}
    Given $G,\Delta_{\max}$, and $\eps$. A $(G,\Delta_{\max}, \eps)$-valid-hierarchy of depth $h$ is a sequence of sets of graphs $\{\mathcal{G}_{i}\}_{i = 0}^{h}$, a sequence of sets of colourings $\{\mathcal{C}_{i}\}_{i = 0}^{h}$, and two sequences of integers $\{\hat{\Delta}_{i}\}_{i = 0}^{h}$ and $\{t(i)\}_{i = 0}^{h-1}$ such that:
    \begin{itemize}
        \item $\mathcal{G}_0 = \{G\}$ and $\hat{\Delta}_0 = \Delta_{\max}$. 
        \item For all $i$: if $\Delta_{\max}\geq \floor{2^{10\sqrt{\log n}}}\floor{2^{\sqrt{\log n}}}^{i-1}$ then $t(i)=t_1$, else $t(i)=t_2$.
        \item  For each $H \in \mathcal{G}_{i}$ there is a corresponding $t(i)$-splitting colouring $c_H \in \mathcal{C}_{i}$.
        \item $\mathcal{G}_{i+1} := \{\hat{H}: \exists H \in \mathcal{G}_{i}, \kappa \in [t(i)] \text{ with } H[\kappa] = \hat{H} \text{ under } c_H\}$. 
        \item For all $i$: $\hat{\Delta}_{i+1} \leq (1+\tfrac{\eps}{128 \log n}) \tfrac{\hat{\Delta}_{i}}{t(i)}$
        \item For all $i$, all $ H \in \mathcal{G}_{i}$, and all $v \in V(H)$: $d_{H}(v) \leq \hat{\Delta}_{i}$
    \end{itemize}
\end{definition}
\begin{remark}
    We remark that the choice of the value $2^{(10+(i-1))\log n}$ is not extremely precise. One could have equally well chosen $2^{(9+(i-1))\log n}$ or $20\cdot{}2^{(10+(i-1))\log n}$. One then need only change the constant in Algorithms~\ref{alg:InitHierarchy}~\ref{alg:DeletionHierarchy}~\ref{alg:insertionHierarchy}, in Definition~\ref{def:hierarchy}, and in the subsequent proofs mutatis mutandi. 
    In particular, one can always discretize the values and slightly over- or under-estimate all parameters like $\Delta_{\max}, \eps$ or $2^{(10+(i-1))\log n}$ to ensure that one does not need access to real RAM in order to do fast comparisons.
\end{remark}
In particular, any hierarchy that uses a valid instantiation of Theorem~\ref{thm:tsplit} for every $t$-splitter yields a $(G,\Delta_{\max}, \eps)$-valid-hierarchy; regardless of the insertion order. We will also observe that for all $i$, the edge sets of the graphs in $\mathcal{G}_{i}$ form a valid partition of $E(G)$.
We will let a vector $\chi(e)$ describe the \emph{hierarchy-profile} of an edge:
\begin{definition}
    For any edge $e$ and any $(G,\Delta_{\max}, \eps)$-valid-hierarchy, we let the \emph{hierarchy-profile} of $e$ be the vector $\chi(e) \in \mathbb{Z}^{h}$ such that if $e \in E(H)$ for $H \in \mathcal{G}_{i}$, then $\chi(e)_{i} = c_{H}(e)$.
\end{definition} 
For any $i \leq h$ and any vector $v \in \mathbb{Z}^{h}$, we let the $i^{\text{th}}$ initial segment of $v$ be the vector $v^{(i)}  \in \mathbb{Z}^{i}$ whose first $i$ coordinates agree with the first $i$ coordinates of $v$. 
For any vector $v\in \mathbb{Z}^{i}$ and constant $k \in \mathbb{Z}$, we also let $[v^{(i)},k]$ be the vector in $\mathbb{Z}^{i+1}$ with $i^{th}$ initial segment equal to $v$ and $[v,k]_{i+1} = k$
We can index the graphs in the hierarchy by providing an index set $J$ containing every initial segment of the hierarchy-profile of every edge $e \in G$. 
Here, we let $G_\emptyset = G$, and we let $G_j$ for any $j \in J$ be the graph containing exactly the edges $e$ for which $j$ is an initial segment of $\chi(e)$. 
We let $|j|:=dim(j)$ denote the dimension of the vector $j$. Also $|j|$ is the \emph{level} of $G_j$. 

\subsection{An algorithm for maintaining a hierarchy}
The algorithm for dynamically maintaining the hierarchy is quite straight-forward. 
First of all, one initializes values $t(i)$ and $\hat{\Delta}_{i}$ so that they conform to the constraints of Definition~\ref{def:hierarchy} (see Algorithm~\ref{alg:InitHierarchy}). This only has to be done once. 

The hierarchy is then maintained by repeatedly applying Theorem~\ref{thm:tsplit} to further split graphs into ones with lower degree.
First, one performs the update on level $0$, which then propagates the update along with any extra recolourings to the next level of the hierarchy.
One has to be a bit careful with determining the order of deletions and insertions, in order to ensure that the $\hat{\Delta}_{i}$'s remain valid upper-bounds through-out the update sequence. 
The algorithm will delete in a bottom-up fashion, beginning at the bottom of the hierarchy, and insert in a top-down fashion.
That is every time an edge is deleted or recoloured, the algorithm first deletes every affected instance of the edge, beginning at the bottom. 
When an edge is inserted (or receives a new colour), the edge is inserted into the top levels of the structure first, in order to ensure that it does not enter a lower level structure before it is guaranteed that the maximum degree is preserved.
After processing an insertion or deletion at a level some edges at the level are (potentially) recoloured.
In order to process the recoloured edges, we ensure to first perform all deletions, and subsequently the insertions. 
Thus, we can ensure that the degree bounds hold at all times before an insertion in to a graph is processed.
This ensures that $\hat{\Delta}_{i}$'s remain valid upper-bounds after the updates. 
Finally, we note that the hierarchy consists of many graphs, which can either be constructed in a pre-processing phase or simply allocated online by only inserting vertices into a graph once the vertex actually has degree at least one. 

Next, we give pseudo-code for the above dynamic algorithm. 
If an edge $e$ has just been recoloured, we let $c(e)_{\text{old}}$ respectively $\chi(xy)_{\text{old}}$ denote the colours of $e$ just before the recolouring. The hierarchy is maintained via the following set of recursive procedures.
The first procedure shows how to initialize the parameters needed for the following procedures. We note already now that to actually perform an insertion or a deletion, one must call the respective recursive procedure with the correct parameter setting (see Remark~\ref{rmk:parset}): \\\\
\begin{algorithm2e}[H]
    \caption{\texttt{Initialize-Parameters}$(\Delta_{\max}, \eps)$}
    \label{alg:InitHierarchy}
    \begin{algorithmic}
        \STATE $i_1 = 0$
        \STATE $i_2 = 0$
        \STATE $\hat{\Delta}_{0} = \Delta_{\max}$
        \STATE $\mathcal{G}_{0} = \{G\}$
        \STATE $\mu = \floor{\tfrac{\eps}{128\log n}}$
         \WHILE{$\hat{\Delta}_{i_1+i_2} \geq \ceil{\tfrac{10^7 \log^5 n}{\eps^{3}}}$}
            \IF{$\tfrac{\Delta_{\max}}{\floor{2^{\sqrt{\log n}}}^{i_1-1}} \geq \floor{2^{10 \sqrt{\log n}}}$}
                \STATE $t(i_1+i_2) = t_1$
                \STATE $i_1 = i_1 + 1$
                \STATE $\hat{\Delta}_{i_1+i_2} = (1+\mu)^{i_1+i_2}\tfrac{\Delta_{\max}}{t_1^{i_1} t_2^{i_2}}$ 
            \ELSE
                \STATE $t(i_1+i_2) = t_2$
                \STATE $i_2 = i_2 + 1$
                \STATE $\hat{\Delta}_{i_1+i_2} = (1+\mu)^{i_1+i_2}\tfrac{\Delta_{\max}}{t_1^{i_1} t_2^{i_2}}$
            \ENDIF
        \ENDWHILE
        \STATE $h = i_1 + i_2$
    \end{algorithmic}
\end{algorithm2e} 
\begin{algorithm2e}[H]
    \caption{\texttt{Insert-Hierarchy}$(uv,j)$}
    \label{alg:insertionHierarchy}
    \begin{algorithmic}
        \STATE $\texttt{insert(uv)}$ into $G_j$ using Theorem~\ref{thm:tsplit} with $(\Delta_{\max},t,\eps') = (\hat{\Delta}_{|j|},t(|j|),\mu)$
        \FORALL{recoloured edges $xy \neq uv$ in $G_j$}
            \STATE \texttt{Delete-Hierarchy}$(xy,\chi(xy)^{(|j|)}_{\text{old}},\chi(xy)_{\text{old}})$
            \STATE \texttt{Insert-Hierarchy}$(xy,[j,c_{G_j}(xy)])$
        \ENDFOR
        \STATE \texttt{Insert-Hierarchy}$(uv,[j,c_{G_j}(uv)])$
    \end{algorithmic}
\end{algorithm2e}
\begin{algorithm2e}[H]
    \caption{\texttt{Delete-Hierarchy}$(uv,j,\zeta)$}
    \label{alg:DeletionHierarchy}
    \begin{algorithmic}
        \STATE $\texttt{delete(uv)}$ from $G_j$ using Theorem~\ref{thm:tsplit} with $(\Delta_{\max},t,\eps') = (\hat{\Delta}_{|j|},t(|j|),\mu)$
        \IF{$|\zeta| \geq |j|+1$}
            \STATE \texttt{Delete-Hierarchy}$(uv,\zeta^{(|j|+1)},\zeta)$
        \ENDIF
        \texttt{Delete-Hierarchy}$(uv,\zeta^{(|j|)},\zeta)$
        \FORALL{recoloured edges $xy \neq uv$ in $G_j$}
            \STATE \texttt{Delete-Hierarchy}$(xy,\chi(xy)^{(|j|)}_{\text{old}},\chi(xy)_{\text{old}})$
            \STATE \texttt{Insert-Hierarchy}$(xy,[j,c_{G_j}(xy)])$
        \ENDFOR
    \end{algorithmic}
\end{algorithm2e} 
\begin{remark}\label{rmk:parset}
    We remark that to actually perform an insertion, one calls $\texttt{Insert-Hierarchy}(uv,\emptyset)$, and to actually perform a deletion, one calls $\texttt{Delete-Hierarchy}(uv,\emptyset,\chi(uv))$. 
\end{remark}

\subsection{Correctness}
In this section, we show that the algorithms from above actually maintain a $(G,\Delta_{\max}, \eps)$-valid-hierarchy with parameters that give the guarantees stated in Theorem~\ref{thm:splitHier}. 
We first show that the algorithm maintains a $(G,\Delta_{\max}, \eps)$-valid-hierarchy:
\begin{lemma} \label{lma:correctnessHierarchy}
    Algorithms~\ref{alg:InitHierarchy}~\ref{alg:insertionHierarchy}~\ref{alg:DeletionHierarchy} implement a $(G,\Delta_{\max}, \eps)$-valid-hierarchy.
\end{lemma}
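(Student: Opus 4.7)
The plan is to verify each of the six bullet points of Definition~\ref{def:hierarchy} in turn, arguing by induction on the number of updates processed.

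First I would handle the static bookkeeping. The initialization procedure sets $\hat{\Delta}_0 = \Delta_{\max}$ and $\mathcal{G}_0 = \{G\}$ directly, giving bullet~1. The \texttt{while}-loop then chooses $t(i) = t_1$ precisely when $\tfrac{\Delta_{\max}}{\floor{2^{\sqrt{\log n}}}^{i_1-1}} \geq \floor{2^{10\sqrt{\log n}}}$ (i.e.\ the condition of bullet~2), and updates $\hat{\Delta}_{i_1+i_2}$ to $(1+\eta)^{i_1+i_2}\tfrac{\Delta_{\max}}{t_1^{i_1}t_2^{i_2}}$. A one-line telescoping check confirms that $\hat{\Delta}_{i+1} = (1+\eta)\tfrac{\hat{\Delta}_{i}}{t(i)}$, which is bullet~5 with $\eta = \floor{\tfrac{\eps}{128 \log n}}$. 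Bullets~3 and~4 are just definitional: the algorithm associates with each graph $G_{j}$ in the hierarchy the colouring $c_{G_j}$ maintained by the instance of Theorem~\ref{thm:tsplit} run on that graph, and the children of $G_j$ in the hierarchy are exactly the colour-class subgraphs of $G_j$ under $c_{G_j}$. By Theorem~\ref{thm:tsplit} each $c_{G_j}$ is a $t(|j|)$-splitting colouring at all times, so bullet~3 holds throughout.

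The heart of the argument is bullet~6, that every graph $H \in \mathcal{G}_i$ satisfies $d_H(v) \leq \hat{\Delta}_i$ for all $v$ at every moment in time. I would prove this by induction on the update number, with the inductive hypothesis being that \emph{both} the degree bound and Invariant~\ref{inv:local} (for each level's colouring) hold immediately before the update. The base case is trivial since the graph is empty. For the inductive step, consider an insertion (deletions are symmetric and actually easier, since they can only decrease degrees). The recursive procedure \texttt{Insert-Hierarchy}$(uv, j)$ first calls the $t$-splitter of Theorem~\ref{thm:tsplit} on $G_j$ with the parameters $(\hat{\Delta}_{|j|}, t(|j|), \eta)$; by the inductive hypothesis these are legal parameters, and Theorem~\ref{thm:tsplit} guarantees the resulting colouring is again $t(|j|)$-splitting on $G_j$ (in particular, each colour-class subgraph has maximum degree at most $(1+\eta)\tfrac{\hat{\Delta}_{|j|}}{t(|j|)} = \hat{\Delta}_{|j|+1}$).

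The subtle point, and what I expect to be the main obstacle, is the order in which the induced recolourings are propagated down the hierarchy. After the $t$-splitter recolours an edge $xy$ from colour $\kappa_{\text{old}}$ to $\kappa_{\text{new}}$, that edge needs to leave the entire sub-hierarchy rooted at $[j, \kappa_{\text{old}}]$ and enter the one rooted at $[j, \kappa_{\text{new}}]$. If the algorithm inserted $xy$ into its new branch before deleting it from the old one, the old branch would still satisfy the degree bound (since only a deletion happens there), but the new branch could momentarily exceed $\hat{\Delta}_{|j|+1}$, violating the hypothesis for the recursive call to \texttt{Insert-Hierarchy}. The algorithm avoids this by first calling \texttt{Delete-Hierarchy}$(xy, \chi(xy)^{(k)}_{\text{old}}, \chi(xy)_{\text{old}})$ for every level $k$ below $|j|$, and only afterwards invoking \texttt{Insert-Hierarchy}$(xy, [j, c_{G_j}(xy)])$; the final insertion of $uv$ itself is likewise deferred until all recoloured edges have been removed and re-inserted. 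Thus when each recursive \texttt{Insert-Hierarchy} call fires, the target graph has maximum degree strictly less than $\hat{\Delta}_{|j|+1}$ prior to insertion, so Theorem~\ref{thm:tsplit} can be applied and the inductive hypothesis is preserved.

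Finally, I would note that the recursion terminates, since every recursive call strictly increases the length of the index $j$ and $|j| \leq h$. Closing the induction gives bullet~6 at every moment, and combined with the setup above yields a $(G, \Delta_{\max}, \eps)$-valid-hierarchy as required.
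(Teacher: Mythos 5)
Your proposal is correct and follows essentially the same route as the paper's proof: a point-by-point verification of Definition~\ref{def:hierarchy}, with the crux being that the degree bounds $\hat{\Delta}_i$ remain valid throughout each update because recoloured/deleted edges are removed from all descendant graphs before any re-insertion, so every invocation of Theorem~\ref{thm:tsplit} is legal. Your explicit discussion of why the delete-before-insert ordering matters is the same observation the paper makes, just spelled out in slightly more detail.
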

\begin{proof}
    The proof is a straight-forward verification that the hierarchy returned satisfies the items in Definition~\ref{def:hierarchy}. To that end, we note that $\texttt{Initialize-Parameters}(\Delta_{\max}, \eps)$ initialises in accordance with item 1. This part is never changed, and so through-out all updates Item 1 holds.
    
    Furthermore, $\texttt{Initialize-Parameters}(\Delta_{\max}, \eps)$ ensures that the sequence $\{t(i)\}_{i = 0}^{h-1}$ is in accordance with the second item of Definition~\ref{def:hierarchy}. 
    
    The hierarchy is also in accordance with the third, fifth, and sixth items. Indeed, if at all times the conditions of Theorem~\ref{thm:tsplit} is met, then it follows by induction and the theorem itself that the computed colourings are $t(i)$-splitting for all $i$, and furthermore satisfy the fifth and sixth items. 
    
    To check that this is the case, we check that the upper bounds on the degrees $\hat{\Delta}_{i}$ remain valid at all times during an update. This holds since the algorithms immediately delete all further copies of an edge upon a deletion or recolouring. 
    Indeed, suppose the graph $G_v$ is a graph with $|v|$ minimum that invalidates the upper-bound $\hat{\Delta}_{|v|}$. 
    Then it must be because the colouring on $G_{v^{(|v|-1)}}$ is not a $t(|v|-1)$-splitter, or because $G_{v}$ at some point is not a subgraph of $G_{v^{(|v|-1)}}[v_{|v|}]$. 
    The first case cannot happen, since the $t(|v|-1)$-splitter on $G_{v^{(|v|-1)}}$ always has all requirements satisfied by assumption. 
    The second case cannot happen, since any time an edge coloured $v_{|v|}$ is deleted or recoloured in $G_{v^{(|v|-1)}}$, it is immediately removed from $G_v$. 
    Hence, any time that $G_v$ is subjected to an insertion, it is a subgraph of $G_{v^{(|v|-1)}}[v_{|v|}]$ and thus satisfies the degree-bound. 

    Since we construct $\mathcal{G}_{i+1}$ by applying Theorem~\ref{thm:tsplit} to every graph in $\mathcal{G}_{i}$, and since Theorem~\ref{thm:tsplit} returns a valid $t(i)$-splitter, we thus also recover Item $4$. 
\end{proof}
We will use the following standard estimate repeatedly:
\begin{claim} \label{lma:est}
    For $0< \eps < 1$ and $0< \mu \leq \tfrac{\eps}{128\log n}$, we have that $(1+\mu)^{4\log n} \leq (1+\tfrac{\eps}{16})$
\end{claim}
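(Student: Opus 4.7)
The plan is to chain two standard exponential estimates: $1+x \leq e^x$ and $e^x \leq 1+2x$ (the latter valid for $x$ in some bounded range, say $0 \leq x \leq 1$). The hypothesis $\eta \leq \tfrac{\eps}{128\log n}$ is precisely what is needed so that after we exponentiate by $4\log n$ and then strip off the exponential, we land inside this range.

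Concretely, I would first observe that $4\eta \log n \leq \tfrac{4 \eps}{128} = \tfrac{\eps}{32}$, which in particular is at most $1$ since $\eps < 1$. Then I would write
\[
(1+\eta)^{4\log n} \leq e^{4\eta \log n} \leq e^{\eps/32},
\]
using the elementary inequality $1+x \leq e^x$ (valid for all real $x$) applied term-wise in the exponent. Finally, since $\eps/32 \in (0,1]$, the inequality $e^x \leq 1+2x$ valid on $[0,1]$ gives
\[
e^{\eps/32} \leq 1 + \tfrac{2\eps}{32} = 1 + \tfrac{\eps}{16},
\]
which is exactly the conclusion.

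The only point that requires any justification beyond quoting standard bounds is the second inequality $e^x \leq 1+2x$ on $[0,1]$; this follows by noting that $f(x) = 1+2x - e^x$ satisfies $f(0)=0$ and $f(1) = 3-e > 0$, while $f'(x) = 2 - e^x$ vanishes only at $x = \ln 2$, so $f$ is unimodal on $[0,1]$ with positive values at the endpoints and at its maximum, hence nonnegative throughout. There is no real obstacle here; this is a bookkeeping claim used solely to propagate the error parameter $\eta$ through the hierarchy of depth $O(\log n)$ without blowing up past the target slack $\eps/16$.
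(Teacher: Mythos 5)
Your proof is correct and follows essentially the same route as the paper: both first bound $(1+\eta)^{4\log n} \leq e^{\eps/32}$ via $1+x \leq e^x$, and then bound the exponential back down to $1+\eps/16$ with an elementary inequality. The only cosmetic difference is in the final step: you use $e^x \leq 1+2x$ on $[0,1]$, while the paper uses $e^{r/(1+r)} \leq 1+r$ (equivalently $\tfrac{r}{1+r} \leq \log(1+r)$) together with the observation that $\eps/32 \leq \tfrac{\eps/16}{1+\eps/16}$ for $\eps \leq 1$; both are standard and give the same conclusion.
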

\begin{proof}
    It follows from the well-known facts that $1+r \leq e^{r}$ for all $r \in \mathbb{R}$ and that $(1+\tfrac{x}{q})^{q} \leq e^{x}$ for all reals $x, q \neq 0$. Indeed, we can write;
    \[
     (1+\mu)^{4\log n} \leq (1+\tfrac{\eps}{128\log n})^{4\log n} \leq e^{\tfrac{\eps}{32}}
    \]
    Furthermore, for $r > -1$ it is well-known that $\tfrac{r}{1+r} \leq \log(1+r)$ and so we have
    \[
    (1+\mu)^{4\log n} \leq e^{\tfrac{\eps}{16(1+\tfrac{\eps}{16})}} \leq (1+\tfrac{\eps}{16})
    \]
    as claimed. 
\end{proof}
Next, we show that:
\begin{lemma} \label{lma:hbound}
    For any graph $G$ with at least one edge and any upper bound $\Delta_{\max} \leq n$, we have that any  $(G,\Delta_{\max}, \eps)$-valid-hierarchy of depth $h$ must have $h \leq 4\sqrt{\log n}$.
\end{lemma}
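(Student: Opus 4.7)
The plan is to bound $h$ by combining the degree-contraction from Item~5 of Definition~\ref{def:hierarchy} with the fact that any edge of $G$ must survive all the way down the hierarchy. Let $a$ and $b$ denote the numbers of $t_1$-splits and $t_2$-splits, respectively, so that $h = a+b$; by Item~2 the sequence $\{t(i)\}$ uses $t_1$ for the first $a$ levels and then $t_2$ for the next $b$ levels. Iterating Item~5 gives
\[
    \hat{\Delta}_h \;\le\; \left(1+\tfrac{\eps}{128\log n}\right)^h \tfrac{\Delta_{\max}}{t_1^a\, t_2^b}.
\]
Since $G$ has at least one edge $e$, Item~4 propagates $e$ through the colouring partition at every level, so some $H_h\in\mathcal G_h$ contains $e$, and Item~6 forces $\hat{\Delta}_h \ge 1$. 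I argue by contradiction: if $h > 4\log n$, then already $h > 4\sqrt{\log n}$ for $n\ge 2$, so we may assume $h \le 4\log n$. Under this assumption Claim~\ref{lma:est} yields $(1+\eta)^h \le 1+\eps/16 \le 2$, giving the key inequality
\[
    t_1^a\, t_2^b \;\le\; 2\Delta_{\max} \;\le\; 2n.
\]

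Now I split on whether any $t_2$-split occurs. If $b = 0$, then $h=a$ and $t_1^h \le 2n$; since $\log t_1 \ge 10\sqrt{\log n}-1 \ge 9\sqrt{\log n}$ for $n\ge 2$, this directly gives $h \le (\log n+1)/(9\sqrt{\log n}) \le \sqrt{\log n}/8 \le 4\sqrt{\log n}$. If instead $b \ge 1$, then $t(a)=t_2$ is the first $t_2$-split, and Item~2 forces $\Delta_{\max} < \lfloor 2^{10\sqrt{\log n}}\rfloor\lfloor 2^{\sqrt{\log n}}\rfloor^{a-1} \le 2^{\sqrt{\log n}(a+9)}$ (with the analogous bound $\Delta_{\max}\le 2^{9\sqrt{\log n}+1}$ when $a=0$). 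Substituting into $t_1^a t_2^b \le 2\Delta_{\max}$ and using $t_1 \ge 2^{10\sqrt{\log n}-1}$ gives
\[
    t_2^b \;\le\; 2^{\,9\sqrt{\log n}(1-a) + a + O(1)}.
\]
For $a\ge 2$ the exponent is $\le -9\sqrt{\log n}+O(1)$, which is negative for $n$ large, forcing $b=0$ and contradicting $b\ge 1$; hence only $a\in\{0,1\}$ can coexist with $b\ge 1$. For $a=1$ the exponent is $O(1)$, so $b\log t_2=O(1)$, giving $h=1+b = O(1)$. For $a=0$ we get $b\log t_2 \le 9\sqrt{\log n}+O(1)$; the hypothesis $\eps<1$ together with $n$ sufficiently large forces $t_2 = \lfloor\log n/\eps\rfloor \ge 8$ and hence $\log t_2 \ge 3$, so $b \le 3\sqrt{\log n}+O(1) \le 4\sqrt{\log n}$.

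The main obstacle is that the raw inequality $t_1^a t_2^b \le 2n$ is not sharp enough when $b$ dominates: since $\log t_2 = \Theta(\log\log n)$ for constant $\eps$, this alone would only yield $b = O(\log n/\log\log n)$, which is asymptotically worse than $4\sqrt{\log n}$. The crucial idea is to exploit Item~2 of Definition~\ref{def:hierarchy}, which ties $\Delta_{\max}$ tightly to the switch level $a$ whenever any $t_2$-split is performed; replacing $\Delta_{\max}$ in the inequality by $2^{O(\sqrt{\log n}\cdot a)}$ cancels the $t_1^a$ factor and leaves only $O(\sqrt{\log n})$ of budget for the $t_2$-splits, which is precisely what closes the bound.
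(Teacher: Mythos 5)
Your overall strategy is the same as the paper's: iterate Item~5 of Definition~\ref{def:hierarchy} to get $\hat{\Delta}_h \le (1+\eta)^h \Delta_{\max}/(t_1^a t_2^b)$, use the surviving edge to force $\hat{\Delta}_h \ge 1$, and use the switching condition of Item~2 to tie $\Delta_{\max}$ to the number $a$ of $t_1$-levels; your case analysis on $a$ is in fact more explicit than the paper's. However, there is one step that is circular as written: ``if $h > 4\log n$, then already $h > 4\sqrt{\log n}$, so we may assume $h \le 4\log n$.'' You are trying to prove $h \le 4\sqrt{\log n}$, so the case $h > 4\log n$ is precisely (a sub-case of) what must be ruled out; observing that it implies the negation of the conclusion does not let you discard it. Without handling it, Claim~\ref{lma:est} cannot be invoked, since for very large $h$ the factor $(1+\eta)^h$ is not bounded by $2$ and the key inequality $t_1^a t_2^b \le 2\Delta_{\max}$ does not follow.

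The repair is immediate and is what the paper implicitly does by arguing at level $4\sqrt{\log n}$ directly: run your entire argument at level $h' := \min\set{h, \ceil{4\sqrt{\log n}}}$ rather than at level $h$. All ingredients hold there --- $\hat{\Delta}_{h'} \ge 1$ because $\set{E(H)}_{H \in \mathcal{G}_{h'}}$ still partitions $E(G)$, Item~5 iterates $h'$ times, and $h' \le 4\log n$ legitimately licenses Claim~\ref{lma:est}. Your case analysis then shows $h' \le 3\sqrt{\log n} + O(1) < \ceil{4\sqrt{\log n}}$ (for $n$ large enough that $t_2 \ge 8$, a caveat the paper's own proof also needs), which forces $h' = h$ and hence the claimed bound. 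With that adjustment the proof is correct.
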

\begin{proof}
    Observe that we can perform at most $i \leq 2\sqrt{\log n}$ splits with $t = t_1$. Indeed, after $i$ splits, we have
    $\tfrac{\Delta_{\max}}{t_1^{i}} \leq \tfrac{n}{t_1^{i}}$, so setting $i > 2\sqrt{\log n}$ forces $\tfrac{\Delta_{\max}}{t_1^{i-1}} < \floor{2^{10 \sqrt{\log n}}}$ and so $\Delta_{\max} < \floor{2^{10 \sqrt{\log n}}} \cdot{} \floor{2^{\sqrt{\log n}}}^{i-1}$. 
    This means that $t(j) \neq t_1$ for all $j \geq i$.

    A similar argument shows that we perform at most $2\sqrt{\log n}$ splits with $t = t_2$ before $\tfrac{2^{10 \sqrt{\log n}}}{t_2^{i-1}}$ goes below $1/2$. By Lemma~\ref{lma:est}, we have $(1+\mu)^{4\sqrt{\log n}} < 2$, and so $\hat{\Delta}_{4\sqrt{\log n}} < 1$, contradicting the fact that $\hat{\Delta}_{4\sqrt{\log n}}$ must upper bound the maximum degree of all graphs in $\mathcal{G}_{4\sqrt{\log n}}$, one of which must have degree at least one, since $\mathcal{G}_{2\sqrt{\log n}}$ partitions $G$. 
\end{proof}
Finally, we show that any $(G,\Delta_{\max}, \eps)$-valid-hierarchy with $\Delta_{\max} \geq \tfrac{10^7 \log^5 n}{\eps^{3}}$ satisfies the conditions of Theorem~\ref{thm:splitHier}. We note that if $\Delta_{\max} < \tfrac{10^7 \log^5 n}{\eps^{3}}$, we do not need to apply the hierarchy, as we can simply apply the techniques developed in Section~\ref{sec:lowCol} directly on the graph. 
\begin{lemma} \label{lma:guaranteesHierarchy}
    Given any $(G,\Delta_{\max}, \eps)$-valid-hierarchy with $\Delta_{\max} \geq \tfrac{10^7 \log^5 n}{\eps^{3}}$. Let $i$ be the smallest integer $i$ for which $\tfrac{\Delta_{\max}}{t_1^{i}} < \floor{2^{10 \sqrt{\log n}}}$. 
    Similarly, let $j$ be the smallest integer $j$ for which $\tfrac{\Delta_{\max}}{t_1^{i}t_2^{j}} < \tfrac{10^7 \log^5 n}{\eps^{3}}$. Then, if the hierarchy has depth $i+j$, we have:
    \begin{enumerate}
        \item $|\mathcal{G}_{i+j}| \leq t_1^{i}t_2^{j} $
        \item $\hat{\Delta}_{i+j} = (1+\mu)^{i+j}\tfrac{\Delta_{\max}}{t_1^{i}t_2^{j}} \leq (1+\tfrac{\eps}{16})\tfrac{\Delta_{\max}}{t_1^{i}t_2^{j}}$
    \end{enumerate}
\end{lemma}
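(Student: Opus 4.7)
The plan is to prove both items by a single straightforward induction on the hierarchy level, exploiting that Definition~\ref{def:hierarchy} already packages the per-level multiplicative bounds on both quantities. First I would establish that the sequence $t(0), t(1), \ldots, t(i+j-1)$ consists of exactly $i$ copies of $t_1$ followed by $j$ copies of $t_2$; this is immediate from item 2 of Definition~\ref{def:hierarchy} together with the two thresholds defining $i$ and $j$ in the statement of the lemma. Once the product decomposition $\prod_{k=0}^{i+j-1} t(k) = t_1^{i} t_2^{j}$ is in hand, both parts become routine.

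For Item 1, I would show by induction on $k$ that $|\mathcal{G}_k| \leq \prod_{\ell=0}^{k-1} t(\ell)$. The base case $|\mathcal{G}_0| = 1$ is item 1 of Definition~\ref{def:hierarchy}. For the inductive step I would invoke item 4, which describes $\mathcal{G}_{k+1}$ as the set of colour classes of graphs in $\mathcal{G}_k$ under the $t(k)$-colourings from $\mathcal{C}_k$, and hence has size at most $t(k)\cdot |\mathcal{G}_k|$. Specialising to $k = i+j$ and applying the product decomposition yields Item 1.

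For Item 2, I would analogously show by induction that $\hat{\Delta}_k \leq (1+\eta)^k \Delta_{\max}/\prod_{\ell=0}^{k-1} t(\ell)$, where $\eta = \tfrac{\eps}{128 \log n}$. The base case is $\hat{\Delta}_0 = \Delta_{\max}$ from item 1 of Definition~\ref{def:hierarchy}, and the inductive step is the per-level bound $\hat{\Delta}_{k+1} \leq (1+\eta)\hat{\Delta}_k/t(k)$ from item 5. Specialising to $k=i+j$ and using the same product decomposition gives $\hat{\Delta}_{i+j} \leq (1+\eta)^{i+j} \tfrac{\Delta_{\max}}{t_1^{i} t_2^{j}}$. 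To control the accumulated factor, I would first invoke Lemma~\ref{lma:hbound} to get $i+j \leq 4\sqrt{\log n} \leq 4\log n$, and then apply Claim~\ref{lma:est} to conclude $(1+\eta)^{i+j} \leq (1+\tfrac{\eps}{16})$, finishing the proof.

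There is no substantive obstacle here; the argument is purely bookkeeping. All of the real work is done earlier in Theorem~\ref{thm:tsplit}, which feeds the per-level multiplicative guarantees into Definition~\ref{def:hierarchy}, and in Lemma~\ref{lma:hbound}, which caps the hierarchy depth at $O(\sqrt{\log n})$ so that the compounded $(1+\eta)$-loss stays below $1+\tfrac{\eps}{16}$.
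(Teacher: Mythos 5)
Your proof is correct and follows essentially the same route as the paper: an easy induction over the levels (the paper phrases it as first inducting over the $i$ levels with $t_1$ and then the $j$ levels with $t_2$, which is the same as your single induction with the product $\prod_k t(k) = t_1^i t_2^j$), followed by Lemma~\ref{lma:hbound} and Claim~\ref{lma:est} to bound $(1+\eta)^{i+j}$. No gaps.
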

\begin{proof}
    We show the first item first. Observe that we perform $i$ splits into at most $t_1$ graphs first, meaning that $|\mathcal{G}_{i}| \leq t_1^{i}$ by an easy induction on $i$, since $|\mathcal{G}_{0}| = 1$. A similar induction on $j$ yields the result, since this time each split is into $t_2$ graphs. 
    The proof of the first part of the second item is a completely synchronous induction on first $i$ and then $j$ using item 5 of Definition~\ref{def:hierarchy}. 

    The last in-equality of item $2$ follows by observing that $i+j \leq 4\sqrt{\log n}$ by Lemma~\ref{lma:hbound} and then applying Claim~\ref{lma:est}. 
\end{proof}
Lemma~\ref{lma:correctnessHierarchy} and Lemma~\ref{lma:guaranteesHierarchy} together establish correctness of the Algorithm. In order to prove Theorem~\ref{thm:splitHier}, we need now only analyse the running time and the recourse of the hierarchy. We do so in the next sub-section. 

\subsection{Analysis}
The following Lemma analyses the recourse the algorithm for maintaining the hierarchy:
\begin{lemma} \label{lma:hierarchyRecourse}
    Starting from an empty graph, a sequence of $\sigma$ insertion and deletions to graphs at level $i$ triggers at most $\sigma \paren{2^{12} \eps^{-2} \log^{} n}^{j-i}$ insertions and deletions to graphs at level $j \geq i$. 
\end{lemma}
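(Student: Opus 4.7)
The plan is to prove the claim by induction on the level gap $\ell := j-i$. The base case $\ell=0$ is trivial. For the inductive step, assume the claim up to $\ell$ and write $\sigma_j$ for the (inductively bounded) number of operations generated at level $j := i+\ell$, so $\sigma_j \leq \sigma(2^{12}\eps^{-2}\log n)^{\ell}$. The goal is to show $\sigma_{j+1} \leq 2^{12}\eps^{-2}\log n\cdot\sigma_j$, whence multiplying through completes the induction.

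Since the hierarchy is built from an empty graph, each splitter instance on a graph $G_v$ at level $j$ sees a sequence of insertions and deletions starting from an empty state. Lemma~\ref{lma:tsplitterRecourse} then bounds the number of recolourings performed by that splitter instance by $R\cdot\sigma_v^{\mathrm{ins}}$, where $\sigma_v^{\mathrm{ins}}\leq\sigma_v$ counts its insertions and $R\leq 2^{10}\eps^{-2}\log n$. Summing over all level-$j$ graphs, the total number of recolourings at level $j$ is at most $R\sigma_j$. Inspecting \texttt{Insert-Hierarchy} and \texttt{Delete-Hierarchy}, every operation at level $j$ propagates at most one operation to level $j+1$ for the same edge (via the tail recursive call that reinserts or further deletes it one level deeper), and every edge recoloured at level $j$ triggers at most one deletion at level $j+1$ (from the old colour class) followed by one insertion at level $j+1$ (into the new colour class); any further propagations from these new level-$(j+1)$ operations will be accounted for at the next inductive step. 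Together,
\[
\sigma_{j+1} \leq \sigma_j + 2R\sigma_j \leq (1+2R)\sigma_j \leq 2^{12}\eps^{-2}\log n\cdot \sigma_j,
\]
which combined with the induction hypothesis closes the induction.

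The main thing to be careful about is that Lemma~\ref{lma:tsplitterRecourse}'s amortised recourse bound, proved via a potential argument initialised at $\Phi=0$, is legitimately applicable independently at every splitter instance in the hierarchy. This is precisely why the statement assumes that we start from an empty graph: each $G_v$ is then also initially empty, so the amortised accounting for each splitter instance is self-contained, and we may freely sum the per-instance recourse over all $G_v$ at a given level. One also has to verify that whenever the algorithm invokes Theorem~\ref{thm:tsplit}, its density hypothesis $\hat{\Delta}_{|v|}/t(|v|) \geq 10^4\log^2 n/\eta^2$ is met, which is enforced by the stopping condition in Algorithm~\ref{alg:InitHierarchy} together with the assumption $\Delta_{\max}\geq 10^7\log^5 n/\eps^3$ of Theorem~\ref{thm:splitHier}.
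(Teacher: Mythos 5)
Your proof is correct and follows essentially the same route as the paper: induction on the level gap, with the level-$(j{+}1)$ operations decomposed into direct propagations of level-$j$ operations plus the one-deletion-one-insertion pairs generated by recolourings, whose number is controlled by the amortised recourse bound of Lemma~\ref{lma:tsplitterRecourse}. The only cosmetic difference is that you use a clean one-step multiplicative recurrence $\sigma_{j+1}\le(1+2R)\sigma_j$, whereas the paper phrases the same accounting as a strong induction summing the contributions $U_1$ and $U_2$ over all lower levels; your remarks about starting from an empty graph and the applicability of Theorem~\ref{thm:tsplit} match the paper's implicit assumptions.
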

\begin{proof}
    The proof is by strong induction on $j-i$. The base case where $j-i = 0$ clearly holds. We proceed to the induction step. 
    Assume that for $j-i = 0$, $j-i = 1$, \dots, $j-i = k$ the lemma holds. Now assume $j-i = k + 1$. 
    A deletion or an insertion can happen in layer $j$ only if an insertion or deletion is performed at this or an earlier layer, or if a recolouring is performed at an earlier layer.
    
    By the induction hypothesis, the number of insertions and deletions, $U_1$, at level $j$ resulting from an insertion or an deletion at an earlier layer can be bounded as follows by applying the induction hypothesis:
    \[
    U_1 \leq \sigma \paren{2^{12} \eps^{-2} \log n}^{j-i-1} \leq \tfrac{\sigma}{2}\paren{2^{12}\eps^{-2} \log n}^{j-i}
    \]
    
    It remains to account for the number of insertions or deletions, $U_2$, caused by re-colourings at level $j-1$. 
    Note that any recolourings at level $j-2$ or higher results in deletion at level $j-1$, and are therefore accounted for in $U_1$.
    Observe that by Lemma~\ref{lma:tsplitterRecourse} it follows that over any sequence of $\sigma'$ insertions and deletions to a graph at level $j-1$, at most $\sigma' \cdot{} \tfrac{2^{10}\log n}{\eps^2}$ recolourings are performed at level $j-1$. Each recolouring corresponds to one insertion and one deletion on the next level, and so we find that at most 
    \[
    U_2 \leq 2\cdot{} \sigma \paren{2^{12}\eps^{-2} \log^{} n}^{j-1-i} \cdot{} \tfrac{2^{10}\log n}{\eps^2} \leq \tfrac{\sigma}{2}\paren{2^{12} \eps^{-2} \log^{} n}^{j-i}
    \]
    insertions or deletions are performed at level $j$ as a result of re-colourings at level $j-1$. Summing $U_1$ and $U_2$ finishes the induction. 
\end{proof}
The following lemma is then immediate:
\begin{lemma}
    The amortised recourse of an insertion or deletion to $G$ is in $2^{\tilde{O}_{\log \eps^{-1}}(\sqrt{\log n})}$.
    Furthermore, the amortised update time is also in $2^{\tilde{O}_{\log \eps^{-1}}(\sqrt{\log n})}$.
\end{lemma}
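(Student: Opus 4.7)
The plan is to combine the recourse bound from Lemma~\ref{lma:hierarchyRecourse} with the depth bound from Lemma~\ref{lma:hbound} to get the first part, then multiply by the amortised per-operation cost at each level (supplied by Theorem~\ref{thm:tsplit}) to get the second part.

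For the recourse, I would apply Lemma~\ref{lma:hierarchyRecourse} with $i=0$ to a sequence of $\sigma$ updates to $G$. Each top-level update induces at most $(2^{12}\eps^{-2}\log n)^{j}$ amortised operations at level $j$, so summing the geometric series over $j\in\{0,1,\dots,h\}$ gives a total of at most $2\cdot(2^{12}\eps^{-2}\log n)^{h}$ insertions/deletions across the hierarchy per top-level update. Substituting the bound $h\le 4\sqrt{\log n}$ from Lemma~\ref{lma:hbound}, this evaluates to $2^{4\sqrt{\log n}\,(12+\log\log n+2\log\eps^{-1})}=2^{\tilde{O}_{\log\eps^{-1}}(\sqrt{\log n})}$, which is precisely the claimed amortised recourse.

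For the update time, I would bound the cost of a single insertion or deletion at level $j$ by applying Theorem~\ref{thm:tsplit} with parameters $(\Delta_{\max},t,\eps') = (\hat{\Delta}_{j},t(j),\eta)$, where $\eta=\Theta(\eps/\log n)$ by Algorithm~\ref{alg:InitHierarchy}. Since $t(j)\le t_1=\lfloor 2^{10\sqrt{\log n}}\rfloor$, this gives an amortised per-operation cost of $O(t(j)^{3}\eta^{-4}\log^{3}m)=O\bigl(2^{30\sqrt{\log n}}\cdot(\log n/\eps)^{4}\cdot\log^{3}m\bigr)$, which is itself in $2^{\tilde{O}_{\log\eps^{-1}}(\sqrt{\log n})}$. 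Multiplying this per-operation bound by the total number of operations across the hierarchy per top-level update (derived in the previous paragraph) and noting that the additional book-keeping for propagating a recolouring up the hierarchy (the inner \texttt{for $k$} loops in Algorithms~\ref{alg:insertionHierarchy} and \ref{alg:DeletionHierarchy}) only contributes an extra factor of $h\le 4\sqrt{\log n}$, the product remains $2^{\tilde{O}_{\log\eps^{-1}}(\sqrt{\log n})}$, as claimed.

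The main subtlety—rather than a true obstacle—is ensuring that the amortised charging across levels is legitimate: Lemma~\ref{lma:hierarchyRecourse} was proved starting from an empty graph, and Theorem~\ref{thm:tsplit} guarantees amortised update time rather than worst-case, so both amortisations need to be composed cleanly. This is already handled by noting that the operations at each level $j$ form exactly a valid sequence of insertions/deletions for the independent $t(j)$-splitter maintained on the corresponding graph $G_{v}$, so Theorem~\ref{thm:tsplit}'s amortised guarantee applies locally; summing the amortised costs across levels and using Lemma~\ref{lma:hierarchyRecourse} to bound the number of operations per level then yields the overall amortised update-time bound.
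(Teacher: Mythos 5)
Your proposal is correct and follows essentially the same route as the paper: the recourse bound comes from composing Lemma~\ref{lma:hierarchyRecourse} with the depth bound $h \leq 4\sqrt{\log n}$ of Lemma~\ref{lma:hbound}, and the update time is obtained by observing that each graph $G_v$ in the hierarchy receives a valid update sequence of some length $\sigma_v$ (so Theorem~\ref{thm:tsplit}'s amortised guarantee applies level-locally) and summing $\sigma_v$ times the per-operation cost over all graphs. Your explicit remark on why the two amortisations compose is a point the paper treats implicitly, but it does not change the argument.
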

\begin{proof}
    Starting from any empty graph undergoing a sequence of $\sigma$ updates Lemma~\ref{lma:hbound} and Lemma~\ref{lma:hierarchyRecourse} together imply that at most $\sigma \cdot{} 2^{\tilde{O}_{\log \eps^{-1}}(\sqrt{\log n})}$ insertion and deletions are performed at level $\ell$ for any $\ell \leq h$, since the hierarchy maintained has depth $h \leq 4 \sqrt{\log n}$. An insertion or deletion triggers at most $O(\tfrac{\log n}{\eps^2})$ re-colourings, so at most $\sigma \cdot{} 2^{\tilde{O}_{\log \eps^{-1}}(\sqrt{\log n})}$ recourse happens at each level.
    Hence, over the sequence of $\sigma$ updates the total recourse is bounded by $\sigma \cdot{} 2^{\tilde{O}_{\log \eps^{-1}}(\sqrt{\log n})}$. 

    To recover the second part of the lemma, we know by above that, starting from an empty graph, any sequence of $\sigma$ operations trigger at most $\sigma \cdot{} 2^{\tilde{O}_{\log \eps^{-1}}(\sqrt{\log n})}$ insertion and deletions. 
    Hence, for the usual index set $I$, we know that if $G_v$ experience $\sigma_v$ insertion or deletions, then 
    \[
     \sigma \cdot{} 2^{\tilde{O}_{\log \eps^{-1}}(\sqrt{\log n})}= \sum_{v \in I} \sigma_{v} 
    \]
    By Theorem~\ref{thm:tsplit}, we can process a sequence of $\sigma_{v}$ insertions and deletions to a graph of level $|v|$ in $c\sigma_{v} \log(m) t_1^{3}$ update time for some constant $c$. 
    As a result, the total update time is 
    \[
    \sum_{v \in I} c\sigma_{v} \log(m) t_1^{3} = c \log(m) t_1^{3}  \sigma \cdot{} 2^{\tilde{O}_{\log \eps^{-1}}(\sqrt{\log n})} = \sigma \cdot{} 2^{\tilde{O}_{\log \eps^{-1}}(\sqrt{\log n})}
    \]
    and so the amortised update time is as claimed. 
\end{proof}
Finally, we note that Lemma~\ref{lma:hbound} shows that the $\texttt{while}$-loop of $\texttt{Initialize-Parameters}(\Delta_{\max}, \eps)$ has few iterations. 
Secondly, each iteration of the $\texttt{while}$-loop can be implemented efficiently provided that all values are properly discretised as noted in Remark~\ref{rmk:parset}.
The time spent on this step is therefore only $O(\sqrt{\log n})$. This step only has to be performed once upon initialization.
Combining all of the above now yields Theorem~\ref{thm:splitHier}.

\section{Deterministic Dynamic Edge-Colouring of Low-Degree Graphs} \label{sec:lowCol}

In this section, we show how to \emph{deterministically} maintain a $(\Delta_{\max} + 1)$-edge-colouring in a dynamic graph with maximum degree $\Delta \leq \Delta_{\max}$. 
In Section~\ref{sec:dynDelta}, we will show how to adjust the algorithm to the case where no upper bound on $\Delta$ is needed, at the cost of the algorithm using $(1+\eps)\Delta$ colours instead.
We will follow the approach sketched in the informal overview. We begin by showing some theoretical and structural results, which form the basis of the algorithm:

\subsection{Structural results}
In this section, we will show some theoretical results about stepping processes, which will form the basis for the algorithm used for edge-colouring graphs with small maximum degree. 
We will fix some graph $G$ and a proper (partial) colouring $c$ of the edges of $G$. 
We assume that $G$ has maximum degree at most $\Delta_{\max}$, and hence we allow the case where $\Delta(G) < \Delta_{\max}$. 
It will be assumed in all statements that this is the case, and we will not specify it further in this section.
We will also be working with parameters $\ell$, $a$, and $\gamma$. These parameters often need to satisfy different requirements, which will be specified in the required lemmas. 
Towards the end of the section, we will give explicit values for these parameters which will be sufficient for the needed lemmas.
We begin with some basic definitions which we will use to make precise statements.

\subsubsection{Representing strings and sets}
Given a stepping process $C = w_1+y_1+P_1 + \dots + w_i + y_i+P_i$, we can represent it as a string $s(C) = P_1P_2\dots P_i$ and a starting node $w(C) = w_1$. Here, we encode each $P_j$ as its own unique character in the alphabet over which the string is constructed. As such if $P$ and $P'$ differ in at least one way, the character they represent in the string are different. Even if $P$ and $P'$ represent the same maximal bichromatic path, but with different starting points, they will be considered different.
This means that the length of the string $s(C)$ is $i$ if $C$ takes $i$ steps. 
We emphasize again that the order in which $P_j$ is listed matters, so that $y_j$ coincides with the first vertex of $P_j$.
Furthermore, we apologize for the ambiguity of the notation $P_j$ -- in strings it is viewed as a single character, outside of strings as an ordered list of vertices forming a path.

Given a set $\mathcal{S}$ of strings representing stepping processes, we let 
$$\mathcal{S}(p) := \{s | s\in \mathcal{S}, \text{lcp}(p,s) = p \}$$
i.e.\ we include only the strings of $\mathcal{S}$ which have $p$ as a prefix. 
We let 
$$\mathcal{S}_{i,j}:= \{s| s = \bar{s}[i\dots j] \text{ for some } \bar{s}\in S\}$$
i.e.\ the set containing every string from $\mathcal{S}$ restricted to its substring from the $i^{\text{th}}$ to the $j^{\text{th}}$ character. 

For two strings $s$ and $\bar{s}$, we say that they agree until $i$ if the length of their longest common prefix is $i$ and that they differ at $i$ if the length of their longest common prefix is exactly $i-1$. 

We are interested in sets of strings which do not share prefixes, as they encode stepping process that are somewhat 'spread out'. 
In particular, we care about the following property:
\begin{definition} \label{def:agood}
    Given a set $\mathcal{S}$ of strings representing stepping processes all with the same starting node $w$, we say that $\mathcal{S}$ is $(a,i)$-good if the compressed trie $T$, over all strings in $\mathcal{S}$, satisfies that all nodes of depth $\leq i-1$ is either a leaf or an internal node with at least $a$ children i.e.\ degree at least $a+1$ (except for the root which could have degree $a$). 
\end{definition}

We will work with \emph{graded} sets of the form $A = \bigcup \limits_{k = 0}^{\infty} A_k$. 
For an element $\alpha \in A$, it must be the case that there exists an $i$ for which $\alpha \in A_i$, but it need not be the case for more than one such index. 
Given two graded sets $A,B$, we let $A \cup B$ be the graded set defined as $(A \cup B)_{i} = A_{i} \cup B_{i}$.

Typically, we want the sets of strings that we are working with to satisfy some additional regularity requirements. These are outlined in the definition below: 
\begin{definition} \label{def:restricted}
    Given a set of $\mathcal{S}$ of strings representing stepping processes all with the same starting node $w$, we say that $\mathcal{S}$ is $(i,\ell,A,B,P,y)$-restricted for neighbour $y$ and maximal bichromatic path $P$ beginning at $y$, if for all $s \in \mathcal{S}$, we have: \\
    If $s$ represents the stepping process $w_1+y_1+P_1 + \dots + w_j + y_j+P_j$, then: 
    \begin{enumerate}
        \item For all $k$, the path represented by $P_k$ has length at most $\ell$.
        \item $j \leq i$
        \item For all $k$, we have $\paren{\bigcup_{t=k}^{j}P_t} \cap A_{k-1} = \emptyset$
        \item For all $j$, we have $\paren{\bigcup_{t=k}^{j}\{w_t\}} \cap B_{k-1} = \emptyset$
        \item $P_1 \subset P$
        \item $y_1 = y$
    \end{enumerate}
\end{definition}
\begin{remark}
    The inclusion of $y$ has the sole purpose of avoiding that one can include stepping processes with $y_1 \neq y$. 
    In order to not clutter the notation more than absolutely necessary, we will abuse notation slightly and simply write $(i,\ell,A,B,P)$-restricted. It is then implicit that one has to choose an endpoint of $P$ as $y$. 
    In the case of semi-consistent extensions the choice of $y$ is fixed by the extension process. 
    As a result of the omission of $y$, some of the proceeding proofs will not comment on item $6)$, but it follows trivially from the context that it must hold whenever necessary. 
\end{remark}

The goal of the section is to show that no matter the choice of $w$, $y$, and maximal bichromatic path $P$, one can find a large set of augmenting and semi-consistent stepping process that are in some sense supported independently. 
We make this idea precise in the following definition:
\begin{definition} \label{def:heavy}
    We say a node $w$ with associated bichromatic path $P$ is $(a,i, \ell, A,B, P)$-heavy if there exists a set $\mathcal{S}$ of strings representing augmenting, semi-consistent stepping processes beginning at $w$ such that $\mathcal{S}$ is both $(i,\ell,A,B,P)$-restricted and $(a,i)$-good. 
    Furthermore, every string in $\mathcal{S}$ represents a stepping process $w+y_1+P_1 + \dots + w_j + y_j + P_j$ where $P_j \subset P$ and $y_1 = y$ is identical for all of the processes.
\end{definition}
Observe that being $(a,i, \ell, A, B, P)$-heavy implies that one is also $(a,i, \ell, A', B', P)$-heavy for any $A' \subset A$ and $B'\subset B$. In particular, one is also $(a,i, \ell, \emptyset, \emptyset, P)$-heavy. 

What we will ultimately end up showing is the following theorem: 
\begin{theorem} \label{thm:main}
    Let $w$ be any node, $y$ any neighbour of $w$ and $P$ any maximal bichromatic path beginning at $y$. Then $w$ is $(a,\sqrt{\log n}, \ell, \emptyset,\emptyset, P)$-heavy. 
\end{theorem}
Hence, the theorem above shows that we can find a large set of augmenting and semi-consistent stepping process which are independent in the way that they do not share prefixes. 
This can be exploited algorithmically to store smaller sets of well-chosen stepping processes that are 'spread out' in the sense that they cannot overlap the same parts of the graph too much. This in turns allows an algorithm to only have to consider a small amount of stepping process, which is required for efficiency.

Before we show the theorem, we will first make the above connection more explicit. 
The main idea of the algorithm is to maintain such "spread out" or "indpendent" sets of stepping processes, and then iteratively use them to build similar sets for longer and longer stepping process. 
At some point the stepping processes become so long that no matter where we begin, we can find augmenting stepping processes. The algorithm can then use such process to locate augmenting Vizing chains during updates. 

It is essential that the stepping processes used do not all rely too heavily on the same parts of the graph. The algorithm will build processes front-to-back, and one part of the graph might later need to be avoided, to make space for earlier steps of the process.
In such a case, the process might not be semi-consistent, in which case we have no guarantee that it accurately models a Vizing chain. 
In some sense, one can view the set of stepping processes as a \emph{sparse skeleton} that model in which directions augmenting and semi-consistent stepping processes could go.  
Before we get into the details of how such extension are done, we begin with some simple lemmas that we will use extensively later on.
We begin by showing that heaviness is, in some sense, a heriditary property: 
\begin{lemma} \label{lma:hereditary}
    Suppose $w$ is $(a,i, \ell, A, B, P)$-heavy, but that the length of $P$ is more than $\ell$. Then there exists at least $a$ $\bar{w} \in P$ that are $(a,i-1, \ell, \bar{A}, \bar{B}, \bar{P}_{\bar{w}})$-heavy for respective bichromatic paths $\bar{P}_{\bar{w}}$ chosen in a semi-consistent manner, and for graded sets $\bar{A}$ and $\bar{B}$ with $\bar{A}_0 = A_0 \cup A_1$ and $\bar{A}_{i} = A_{i+1}$ for all $i \geq 1$ and similarly $\bar{B}_0 = B_0 \cup B_1$ and $\bar{B}_{i} = B_{i+1}$ for all $i \geq 1$.
\end{lemma}
\begin{proof}
    Since $w$ is $(a,i, \ell, A, B, P)$-heavy with $|P| > \ell$, it follows that it has a witness set of augmenting semi-consistent stepping processes $\mathcal{S}$ of size at least $a$. 
    Hence in the compressed trie $T_{\mathcal{S}}$ over $\mathcal{S}$, the root has at least $a$ children and every child $\bar{w}$ is the root of a subtree $T_{\mathcal{S}}(\bar{w})$ satisfying that until depth $i-2$ every node is either a root or has internal degree at least $a+1$. 
    But for any such choice of $\bar{w}$, we note that this subtree coincides with the compressed trie of the set of strings $\mathcal{S}_{\bar{w}}$ obtained as follows: 
    Let $P' \subset P$ be the bichromatic path leading to extension point $\bar{w}$. Then set $\mathcal{S}_{\bar{w}} = (\mathcal{S}(P'))_{2\dots i}$. 
    We claim that this set certifies that $\bar{w}$ is $(a,i-1, \ell, \bar{A}, \bar{B}, \bar{P}_{\bar{w}})$-heavy for respective bichromatic paths $\bar{P}_{\bar{w}}$ chosen in a semi-consistent manner. 
    Indeed, above we showed that $\mathcal{S}_{\bar{w}}$ is $(a,i-1)$-good, so we need only argue that it is $(i-1,\ell,\bar{A}, \bar{B},\bar{P}_{\bar{w}})$-restricted. Since $S$ is $i$ restricted, and we have truncated the first step away, it is clear that we satisfy items $1)$ through $4)$ of the definition. To see $5)$, we note that any extension through $\bar{w}$ has only one unique choice for maximal bichromatic path $P'$ to extend through by Lemma~\ref{lma:fan2} and the proceeding discussion, since every stepping process in $S$ is semi-consistent by asssumption. 
\end{proof}

Next, we show that a sort of 'reverse' heriditary property also holds; the lemma below shows that if you have many heavy extension points, you must be heavy yourself:
\begin{lemma} \label{lma:reversheriditary}
    Let $w$ be a vertex, $y$ a neighbour of $w$, $P$ a maximal bichromatic path ending at $y$ with length more than $\ell$, and let the parameters $a, i, A, B$ be given such that $A_{0} \cap P[1 \dots \ell] = \emptyset$ and $w \notin B_0$. Suppose, furthermore, that $a$ points $\bar{w} \in P[1\dots \ell] - B_{1}$ are $(i-1, \ell, \bar{A}, \bar{B}, \bar{P})$-heavy for some semi-consistent extension path $\bar{P}$ (obtained by truncating a stepping process of the form $w+y+P'$ for $P' \subset P$) and graded sets given by $\bar{A}_0 = A_0 \cup A_{1} \cup N^{2}(w)$, $\bar{A}_1 = A_2 \cup N^{2}(P')$, and $\bar{A}_{k} = A_{k+1}$ otherwise, and $\bar{B}_0 = B_0 \cup B_{1} \cup N^{2}(w)$, $\bar{B}_1 = B_2 \cup N^{2}(P')$,  and $\bar{B}_k = B_{k+1}$ otherwise. Here $P'$ is the subpath of $P$ from $y$ to the next vertex after $\bar{w}$.

    Then $w$ is $(a,i, \ell, A, B, P)$-heavy. 
\end{lemma}
\begin{proof} 
    Let $\mathcal{S}_{\bar{w}}$ be a set of prefixes certifying that $\bar{w}$ is $(i-1, \ell, \bar{A}, \bar{B}, \bar{P})$-heavy with parameters as outlined in the statement of the lemma. By assumption, there are at least $a$ choices of $\bar{w}$ where such a set exists. 
    Now, for any such set $\mathcal{S}_{\bar{w}}$, we let $\tilde{\mathcal{S}}_{\bar{w}}$ be the set containing prefixes of the form $P' + s$ for $s \in \tilde{\mathcal{S}}_{\bar{w}}$, where $P' \subset P$ is the subpath of $P$ from $y$ to $\bar{w}$.

    Now we claim that for any $(i-1, \ell, \bar{A}, \bar{B}, \bar{P})$-heavy choice of $\bar{w}$, we have that $\tilde{\mathcal{S}}_{\bar{w}}$ is $(i,\ell,A,B,P)$-restricted. 
    Indeed, by choice of $\bar{w}$,  $P'$ has length at most $\ell$. Since $\bar{w}$ is $(i-1, \ell, \bar{A}, \bar{B}, \bar{P})$-heavy, it is also $(i-1, \ell, \bar{A}, \bar{B}, \bar{P})$-restricted. As such, items $1)$ and $2)$ of Definition~\ref{def:restricted} are true for $\tilde{\mathcal{S}}_{\bar{w}}$. 
    Items $3)$ and $4)$ are readily checked: For $k = 0$, by assumption $P' \cap A_0 = \emptyset$, so for $k = 0$, item $3)$ holds. For $k \geq 1$ we have $A_{k+1} \subset \bar{A}_{k}$, so, by assumption, item $3)$ also holds in these cases. 
    Item $4)$ is similar. 
    Finally, item $5)$ holds by construction. 

    Next we claim that every string $s \in \tilde{\mathcal{S}}_{\bar{w}}$ presents a stepping process that is augmenting and semi-consistent. It is sufficient to show that they are semi-consistent, since then they must be augmenting, as every string in $\mathcal{S}_{\bar{w}}$ represents an augmenting stepping process. 

    To see that they are also semi-consistent we check as follows: First we let $w+y+P'+\bar{w}+\bar{y}+\bar{P}+ w_{3}+y_3+P_3 \dots + w_{j}+y_j+P_j$ with $j \leq i-1$ be any stepping process represented by a string in $\tilde{\mathcal{S}}_{\bar{w}}$. 
    Since $\bar{P}$ is a semi-consistent extension path, it cannot be identical to $P$. As $\bar{w}+\bar{y}+\bar{P}+ \dots + w_{j}+y_j+P_j$ is semi-consistent by assumption, it follows that the first item of Definition~\ref{def:sc} is true. 
    Since $N^{2}(w) \subset \bar{A}_0 \cap \bar{B}_0 $ and $N^{2}(P') \subset \bar{A}_1 \cap \bar{B}_1$, it follows by the semi-consistency of $\bar{w}+\bar{y}+\bar{P}+ \dots + w_{j}+y_j+P_j$ that item $2)$ and $3)$ holds. 

    Now let $\mathcal{S}$ be the union of $\tilde{\mathcal{S}}_{\bar{w}}$ for all valid choices of $\bar{w}$. 
    We claim that then $\mathcal{S}$ certifies that $w$ is $(a,i, \ell, A, B, P)$-heavy. Indeed, since any set $\tilde{\mathcal{S}}_{\bar{w}}$ is $(i,\ell,A,B,P)$-restricted and only contains strings representing semi-consistent and augmenting stepping processes, it follows that $\mathcal{S}$ inherits these properties. Hence, we need only show that $\mathcal{S}$ is $(a,i)$-good. 
    To this end consider the compressed trie $T$ over $\mathcal{S}$. We node that the root node has degree at least $a$ in $T$, since by assumption we have at least $a$ valid choices of $\bar{w}$. Since all of the sets $\tilde{\mathcal{S}}_{\bar{w}}$ are $(a,i-1)$-good, we find that in fact every internal node of $T$ of depth at most $i-1$ is either a leaf or has at least $a$ children. 

    Finally, we conclude that $\mathcal{S}$ certifies that $w$ is $(a,i, \ell, A, B, P)$-heavy. 
\end{proof}

In the lemma, we wish to show why it is important to be able to store stepping processes that are somewhat spread out. To be more precise, we show that if a triplet $(w,y,P)$ is not heavy, then it can be \emph{suppressed} by a small set of vertices:
\begin{lemma} \label{lma:suppression}
    Let $w$, a vertex, and $P$, a maximal bichromatic path ending at a neighbour of $w$, be given. Furthermore, let parameters $a, i, A, B$ be given. Suppose that $w$ is \emph{not} $(a,i, \ell, A, B, P)$-heavy. Suppose $P$ has length more than $\ell$. 

    Then one can augment $B$ by a graded set $D$ with $|D_k| \leq \ell^{k-1} \cdot{} a$ for all $k \geq 1$ such that for graded set $\hat{B} = B \cup D$, no non-empty, $(i,\ell,A,\hat{B},P)$-restricted set of strings representing augmenting and semi-consistent stepping processes exists. 
    
    In such a case, we say that $D$ $(a,i, \ell, A, B, P)$-\emph{suppresses} $w$.
\end{lemma}
\begin{proof}
    The proof is by induction on $i$. For $i = 1$, the statement is clear. Indeed, no $1$-step stepping process can be both augmenting and $(i,\ell,A,\hat{B},P)$-restricted under the assumptions of the lemma. 

    For the step, we assume that the statement is true for $i-1$ and any valid choice of $w$ and $P$, and we want to show that it is true for $i$ and any valid choice of $w$ and $P$. 
    To this end, we assume we are given valid $w$ and $P$. 
    We note by Lemma~\ref{lma:reversheriditary} that there can be at most $a$ $(a,i-1, \ell, \bar{A}, \bar{B}, \bar{P}_{\bar{w}})$-heavy semi-consistent extensions along $P[1 \dots \ell]$ for some semi-consistent extension $\bar{P}_{\bar{w}}$. Here  $\bar{A}$ and $\bar{B}$ are as defined in the lemma. 
    Indeed, otherwise $w$ is $(a,i, \ell, A, B, P)$-heavy. 
    We let $D_1$ contain exactly these extension vertices. 
    Now there are at most $\ell$ semi-consistent extension along $P[1 \dots \ell]$ that are not $(a,i-1, \ell, \bar{A}, \bar{B}, \bar{P}_{\bar{w}})$-heavy for their respective semi-consistent extension path $\bar{P}_{\bar{w}}$. 

    By induction, any such extension point $\bar{w}$ can be $(a,i-1, \ell, \bar{A}, \bar{B} \cup D[\bar{w}], \bar{P}_{\bar{w}})$-suppressed by some graded set $D[\bar{w}]$ with $|D[\bar{w}]_i| \leq \ell^{i-1} \cdot{} a$. 
    For $k \geq 2$, we let $D_{k} = \bigcup \limits_{\bar{w}} D[\bar{w}]_{k-1}$.
    Observe that $|D_k| \leq \ell \cdot \ell^{k-2} \cdot{} a$ by the induction assumption. 

    Now, we claim that $D$ $(a,i, \ell, A, B, P)$-suppresses $w$. Indeed, suppose not, and let $w+y_1+P_1 + w_2 + y_2 + P_2 + \dots w_j + y_j + P_{j}$ be some augmenting and semi-consistent stepping process whose string forms a $(i,\ell,A,B \cup D,P)$-restricted set. 
    Note that then $w_2 + y_2 + P_2 + \dots w_j + y_j + P_{j}$ is also semi-consistent and augmenting. 
    Since $w_2 \notin D_1$, we note that $D[w_2]$ does not suppress $w_2 + y_2 + P_2 + \dots w_j + y_j + P_{j}$, which must mean that $w_2 + y_2 + P_2 + \dots w_j + y_j + P_{j}$ is not $(i-1, \ell, \bar{A}, \bar{B}, \bar{P}_{\bar{w}})$-restricted. 
    Indeed, if $D[w_2]$ suppressed $w_2 + y_2 + P_2 + \dots w_j + y_j + P_{j}$, then by construction $D$ would suppress $w+y_1+P_1 + w_2 + y_2 + P_2 + \dots w_j + y_j + P_{j}$. 

    However, this is a contradiction. To see why, observe that $w+y_1+P_1 + w_2 + y_2 + P_2 + \dots w_j + y_j + P_{j}$ is $(i,\ell,A,B \cup D,P)$-restricted. This implies that $w_2 + y_2 + P_2 + \dots w_j + y_j + P_{j}$ satisfies items $1), 2)$ and $5)$ in Definition~\ref{def:restricted}. 
    If item $3)$ fails, it must be because $\exists r,t$ such that $P_{r} \cap \bar{A}_{t} \neq \emptyset$ and $r \geq t + 2$. If $t\geq 2$ this implies that $P_{r} \cap A_{t+1} \neq \emptyset$, which is a contradiction. For $t = 0$, we find a contradiction, since if $P_{r}$ for $r\geq 2$ intersects $\bar{A}_{0}$, then either $P_{r} \cap (A_0 \cup A_1) \neq \emptyset$ or $P_r \cap N^{2}(w)$ is non-empty. Both are contradictions as $r \geq 2$ and the stepping process is semi-consistent.
    
    For $t = 1$, we must have $P_r \cap \bar{A}_{1} \neq \emptyset$ with $r \geq 3$, but then either $P_r \cap A_2 \neq \emptyset$ or $P_r \cap N^{2}(P')$ is non-empty. Again both are contradictions due to the stepping process being $(i,\ell,A,B \cup D,P)$-restricted and semi-consistent. 

    Similarly to above, one can show that Item $4)$ also holds. We include this for completeness below, but first we recall that this would imply that $w_2 + y_2 + P_2 + \dots w_j + y_j + P_{j}$ is in fact $(i-1, \ell, \bar{A}, \bar{B}, \bar{P}_{\bar{w}})$-restricted, and thus we ultimately reach a contradiction to the statement that $D$ does not $(a,i, \ell, A, B, P)$-suppress $w$, and the lemma follows. 

    To see that item $4)$ holds, assume that it does not. Then there must exist $r$ and $t$ with $r \geq t+2$ such that $w_{r} \cap \bar{B}_{t}$ is non-empty. 
    Like before, if $t \geq 2$ this would imply that $w_{r} \cap B_{t+1}$ is non-empty which contradicts the fact that $w+y_1+P_1 + w_2 + y_2 + P_2 + \dots w_j + y_j + P_{j}$ is $(i,\ell,A,B \cup D,P)$-restricted. 
    If $t = 0$, either $w_r \cap (B_0 \cup B_1)$ or $w_r \cap N^2(w)$ is non-empty. Both are contradictions to respectively the restrictedness of the process or the semi-consistency of the process. 
    If $t = 1$, either $w_r \cap B_2$ or $w_2 \cap N^{2}(P')$ is non-empty, which is a contradiction similarly to above. 
\end{proof}

So far, we have been working with sets of strings representing a very precise set of processes. In this next part, we wish to relax this notion a bit, and introduce less restrictive requirements that could be realised by an efficient algorithm. 
In particular, we are interested in sets of strings of the following form:
\begin{definition} \label{def:spread}
    We say a node $w$ with associated maximal bichromatic path $P$ and neighbour $y$ is $(a,i,P)$-spread if there exists a set $\mathcal{S}$ of strings representing augmenting, semi-consistent stepping processes with at most $i$ steps, all beginning at $w$, such that $\mathcal{S}$ is both $(a,1)$-good and $(\tfrac{a}{2},2)$-good and every string in $\mathcal{S}$ represents a stepping process whose first bichromatic path is a subpath of $P$ beginning at $y$. 
\end{definition}
The following lemma is an immediate consequence high-lighting the fact that the new requirement is less restrictive:
\begin{lemma}
    Suppose $w$ with neighbour $y$ is $(a,i, \ell, A, B, P)$-heavy for some choice of $A, B$. Then $w$ is $(a,i,P)$-spread. 
\end{lemma}
\begin{proof}
    If $i \geq 2$ the statement is clear, as in this case being $(a,i, \ell, A, B, P)$-heavy is strictly more restrictive than being $(a,i,P)$-spread. If $i = 1$ the same is also true, as in this case $w+y+P$ must necessarily be augmenting, if $w$ is $(a,i, \ell, A, B, P)$-heavy.
\end{proof}

Next, we wish to make precise why it is helpful that the sets contain strings which do not share prefixes. 
To make this formal, we will consider strings which contain slightly more information: by a \emph{complete string} of a stepping process $w_1 + y_1 + P_1 + \dots w_j+y_j+P_j$, we mean the string $w_1$ concatenated with the usual string representing the process, namely $P_1\dots P_j$. A \emph{complete prefix} is a prefix of a complete string representing a semi-consistent process. 

The following lemma now shows that if one only considers prefix-free strings, then the stepping process that they represent must be mostly spread out. The proof of this lemma is rather standard (see for example~\cite{Christiansen}). 
\begin{lemma} \label{lma:prefixcount}
    Let $w$ be any vertex. Let $Z_i$\footnote{Sometimes denoted by $Z_i[w]$ if the choice of $w$ is not clear from the contrast.} be the set of complete strings of semi-consistent, $i$-step stepping processes visiting $w$ for the first time during the $i^{\text{th}}$ step. Then the size of $(Z_i)_{1,i}$ is at most $(3(\Delta_{\max}+1)^3)^{i}$. 
    In other words, there exists a set of strings $X_i$ of size at most $(3(\Delta_{\max}+1)^3)^{i}$ such that the complete prefix of length $i$ of every string in $Z_i$ belongs to $X_i$. 
\end{lemma}
\begin{proof}
    The proof is by strong induction on $i$.
    For $i = 1$, we observe that at most $(\Delta_{max}+1)^2$ bichromatic paths go through $w$. Any $1$-step process using one such bichromatic path can have at most $2(\Delta_{\max}+1)$ different complete prefixes of length $1$, namely one for each choice of neighbour of the endpoints of the path. Furthermore, any neighbour of $w$ can also visit $w$ during a $1$-step process, and so we conclude that letting $X_1$ contain strings for all of these at most $2(\Delta_{\max}+1)^3 + (\Delta_{\max}+1) \leq 3(\Delta_{\max}+1)^3$ complete prefixes of length $1$ is a valid choice. 

    For the induction step, we assume the statement is true for $1, 2, \dots i-1$ and any choice of $w$. Now any semi-consistent stepping process visiting $w$ for the first time during the $i^{\text{th}}$ step, must visit one of the vertices represented by a string in $X_1$ for the first time during the $(i-1)^{\text{st}}$ step (if it does not visit the vertex for the first time, one cannot extend through it in a semi-consistent manner). 
    But for every $\bar{w}$ represented by a string in $X_1$, the induction hypothesis guarantees that at most $(3(\Delta_{\max}+1)^3)^{i-1}$ complete prefixes can be a complete prefix of length $i-1$ for a stepping process in $Z_{i-1}[\bar{w}]$. Each such semi-consistent process can only extend in one way through $\bar{w}$ and so any process represented by string $s$ in $Z_{i}$ must be represented by a string from $X_{i-1}$ concatenated with a new character $P_{i-1}$. 
    As noted before, semi-completenes implies that there is only one such choice of $P_{i-1}$ for each process represented by a string from $X_{i-1}[\bar{w}]$, namely truncating the bichromatic path at exactly $\bar{w}$. 

    Hence, $X_i[w]$ is a subset of the set containing all such complete prefixes, of which there are at most $\sum \limits_{\bar{w} \in X_{1}[w]} |X_{i-1}[\bar{w}]| \leq  (3(\Delta_{\max}+1)^3)^{i}$, by the induction hypothesis.
\end{proof}
We will sometimes use the notation introduced in the lemma before. I.e.\ we let $Z_{i}[w]$ refer to the set of complete strings of semi-consistent, $i$-step stepping processes visiting $w$ for the first time during the $i^{\text{th}}$ step, and we let $X_{i}[w] = (Z_{i}[w])_{1,i}$.

Given a set of strings $\mathcal{S}$, we let the $i^{\text{th}}$-\emph{span} of $\mathcal{S}$ denote the set of complete prefixes of length $i$ belonging exactly to strings in $\mathcal{S}$. We typically say that $\mathcal{S}$ span these prefixes. 
The next lemma shows how one can fuse sets representing spread-out processes in order to form a new set representing longer processes that are still spread-out. This type of operation is key to the algorithm that we will develop later on in this section.
\begin{lemma} \label{lma:fuse}
    Let $a \geq \avallb$, $\ell \geq \ellval$, and $i \leq \sqrt{\log n}$ be given parameters. 
    Suppose that every $(a,i-1, \ell, \emptyset, \emptyset, \bar{P})$-heavy node $\bar{w}$ with neighbour $\bar{y}$ and bichromatic path $\bar{P}$ is assigned an (arbitrarily chosen) set $\mathcal{S}_{\bar{w},\bar{y},\bar{P}}$ that certifies that $\bar{w}$ is $(\tfrac{a}{2},i-1,\ell)$-spread.
    Suppose, furthermore, that some arbitrary selection of the remaining $(\tfrac{a}{2},i-1,\ell)$-spread nodes $\bar{w}$ also are assigned such sets (the selection could be empty). 
    For all remaining choices of $\bar{w},\bar{y}$, and $\bar{P}$, set $\mathcal{S}_{\bar{w},\bar{y},\bar{P}} = \emptyset$.

    Let $w$ be a vertex, $y$ a neighbour of $w$, and $P$ a bichromatic path beginning at $y$. Suppose that the length of $P$ is greater than $\ell $. 
    Let $\bar{\mathcal{S}}$ be the union of sets $\mathcal{S}_{\bar{w},\bar{y},\bar{P}}$ for every extension point $\bar{w}$, with associated neighbour $\bar{y}$ and bichromatic extension path $\bar{P}$ picked according to Item $1)$ in Definition~\ref{def:sc}, along $P$, when beginning a stepping process from $w$. Furthermore, let $\tilde{S}$ contain strings representing processes of the form $P'+s$, with $s \in \mathcal{S}_{w}$ and $P' \subset P$.
    
   Then if $w$ is $(a,i, \ell, \emptyset, \emptyset, P)$-heavy, there exists a set $\mathcal{S}_{w}$ contained in $\tilde{\mathcal{S}}$ of size at most $\tfrac{a^2}{8}$ that certifies that $w$ is $(\tfrac{a}{2},i-1,\ell)$-spread.
\end{lemma}
\begin{proof} 
    Take any augmenting and semi-consistent stepping process represented by a string $s$ from $\bar{\mathcal{S}}$. Say it has the form $w_2+y_2+P_2 + \dots + w_j + y_j + P_j$. 
    Since $w_2$, $y_2$, and $P_2$ are picked according to Item $1)$ in Definition~\ref{def:sc}, we note that there exists a subpath $P_1 \subset P$ such that $w+y+P_1+w_2+y_2+P_2+ \dots +w_j + y_j + P_j$ is a stepping process that satisfies Item $1)$ of Definition~\ref{def:sc}. 

    We will say that any such process that is overlapping or violate Item $2)$ or $3)$ of Definition~\ref{def:sc} is \emph{blocked}.
    We will also refer to the string $P_1+s$ representing it as \emph{blocked}.
    Since these type of extensions are unique, we will sometimes simply refer to the string $s$ or the process $w_2+y_2+P_2+ \dots +w_j + y_j + P_j$ as being blocked. 

    There are a few ways in which this can happen. First of all, Item $2)$ can be violated only if there is some index $k \geq 3$ for which $w_k$ overlaps with $N^{2}(\{w_1\}) \cup N^{2}(P_1)$. 
    There are at most $|N^{2}(\{w_1\}) \cup N^{2}(P_1)| \leq (\Delta+1)^{2} \cdot{} (\ell + 1)$ such vertices. 
    Each such point can be reached by at most $(3(\Delta_{\max}+1)^3)^{i}$ complete prefixes of length $i$ by Lemma~\ref{lma:prefixcount}. 
    Hence, at most $i \cdot{} (3(\Delta_{\max}+1)^3)^{i} \cdot{} (\Delta_{\max}+1)^{2} \cdot{} (\ell + 1)$ of the processes represented in $\bar{\mathcal{S}}$ can be violating in this way. 
    Of these, at most $(3(\Delta_{\max}+1)^3) \cdot{} (\Delta_{\max}+1)^{2} \cdot{} (\ell + 1)$ are violating at $w_3$.

    We can count the number of processes blocked by Item $3)$ in a similar fashion. In fact, we will get the same result, as we simply counted the number of such processes that reached points in $N^{2}(\{w_1\}) \cup N^{2}(P_1)$ during the second step or later. 
    To this, we add the at most $3(\Delta_{\max}+1)^3 \cdot{} (\Delta_{\max} + 1)$ choices of $w_2$ which could yield a $P_2$ reaching a vertex in $N^{2}(\{w_1\})$, and the at most $2 \cdot{} (3(\Delta_{\max}+1)^3)^{2} \cdot{} (\Delta_{\max}+1)^{2} \cdot{} (\ell + 1)$ choices of $w_3$ which could yield a $P_3$ reaching a vertex in $N^{2}(\{w_1\}) \cup N^{2}(P_1))$.

    Finally, we note that the above covers all overlapping cases as well, as the case where $P_1$ and $P_2$ belong to the same maximal bichromatic path is already accounted for in the above. 
    Hence if $P_1+s$ is not blocked, then it represents an augmenting and semi-consistent stepping process. 

    We say a string is blocked at level one, if $P_2$ reaches a vertex in $N^{2}(\{w_1\})$. It is blocked at level two if $w_3$ violates Item $3)$ or if $w_3$ yields a path $P_3$ which could reach a vertex in $N^{2}(\{w_1\}) \cup N^{2}(P_1))$. 
    In all other cases, a string is blocked at level three. 

    We say an extension point $\bar{w}$ is \emph{blocked} if its associated set $\mathcal{S}_{\bar{w}, \bar{y}, \bar{P}}$ represents at least one process blocked at level one, or at least $\tfrac{a}{8}$ processes blocked at level two, or at least $\tfrac{a^2}{32}$ processes blocked at level three. 

    If an extension point $\bar{w}$ is not blocked, then the set $\mathcal{S}_{\bar{w}, \bar{y}, \bar{P}}$ contains either only one string since $\bar{w} + \bar{y} + \bar{P}$ is augmenting, or at least $\tfrac{a}{4}$ non-blocked strings representing augmenting and consistent stepping processes such that the first character of each string is different. 
    In order to see this, we assume that $|\mathcal{S}_{\bar{w}, \bar{y}, \bar{P}}| > 1$, so that we must be in the latter case. Now, we claim that if one remove all blocked strings from the set, it would still span at least $\tfrac{a}{4}$ different complete prefixes of length 2. 
    Indeed, by definition at most $\tfrac{a}{8}$ processes are blocked at level two, so at most $\tfrac{a}{8}$ choices of $w_3$ are blocked. This removes at most $\tfrac{a}{8}$ complete prefixes of length 2 from the second span of the remaining strings.
    Finally, a process in $\mathcal{S}_{\bar{w}, \bar{y}, \bar{P}}$ cannot be blocked at level 3 and only have two steps, since any violation here occurs at some $w_j$ or $P_j$ with $j>3$. 
    Hence, it takes $\tfrac{a}{4}$ blocked processes at level 3 to block all processes through any choice of $w_3$. Hence blocked processes at level 3 remove at most $\tfrac{a}{8}$ complete prefixes of length 2 from the second span of the remaining strings.
    Hence, the claim follows. 

    If at most $\tfrac{a}{2}$ choices of $w_2$, which is also assigned a non-empty set, are blocked, we arbitrarily pick $\tfrac{a}{2}$ non-blocked $w_2$. This is possible, as $w$ is assumed to be $(a,i, \ell, \emptyset, \emptyset, P)$-heavy, and so has at least $a$ extension points that are $(a,i-1, \ell, \emptyset, \emptyset, \bar{P})$-heavy for some appropriate choice of $\bar{P}$ by Lemma~\ref{lma:reversheriditary}, and hence are assigned a non-empty set by assumption.
    
    For each choice of $w_2$ we pick either one non-blocked and augmenting string of the form $P_2\in \mathcal{S}_{w_2, y_2, P_2}$ (in the case where $\mathcal{S}_{w_2, y_2, P_2} = \{P_2\}$) or $\tfrac{a}{4}$ non-blocked strings from $\mathcal{S}_{w_2, y_2, P_2}$ so that none of the picked strings in $\mathcal{S}_{w_2, y_2, P_2}$ share a complete prefix of length two. 
    By above such a choice exists, and subject to the conditions above the choice can be made arbitrarily.
    These strings can be extended by adding an appropriate sub-path $P_1 \subset P$ to obtain strings representing stepping processes beginning at $w$.

    The collection of all of these extended strings will form $\mathcal{S}_{w}$. 
    We claim that the set has the desired properties. Observe firstly that it has the claimed size and contains string of the appropriate form by construction. 
    Indeed, the strings are of the form $P_1+s$, and they represent processes that are augmenting and semi-consistent. 
    Secondly, note that by construction $\mathcal{S}_{w}$ is both $(\tfrac{a}{2},1)$-good and $(\tfrac{a}{4},2)$-good. 
    Hence, the set certifies that $w$ is $(\tfrac{a}{2},i-1,\ell)$-spread.

    The lemma therefore follows, if we can show that at most $\tfrac{a}{2}$ valid extension vertices are blocked. 
    Hereto, we note that we have already counted the number of vertices blocked at the different levels. Namely, there are at most $3(\Delta_{\max}+1)^3 \cdot{} (\Delta_{\max} + 1)$ points blocked at level one.

    There are at most $ 3 \cdot{} (3(\Delta_{\max}+1)^3)^{2} \cdot{} (\Delta_{\max}+1)^{2} \cdot{} (\ell + 1)$ processes blocked at level two. These can in turn cause at most 
    $$\frac{3 \cdot{} (3(\Delta_{\max}+1)^3)^{2} \cdot{} (\Delta_{\max}+1)^{2} \cdot{} (\ell + 1)}{\tfrac{a}{8}} \leq (\Delta_{\max}+1)^{9 \sqrt{\log n}} \leq \tfrac{a}{6}$$ 
    vertices to be blocked. 
    Finally, we have at most $\sqrt{\log n}^2 \cdot{} (3(\Delta_{\max}+1)^3)^{\sqrt{\log n}} \cdot{} (\Delta_{\max}+1)^{2} \cdot{} (\ell + 1)$ vertices blocked at level three, which can cause at most 
    $$ \frac{\log n \cdot{} (3(\Delta_{\max}+1)^3)^{\sqrt{\log n}} \cdot{} (\Delta_{\max}+1)^{2} \cdot{} (\ell + 1)}{\tfrac{a^2}{32}} \leq \tfrac{a}{6}$$ 
    vertices to be blocked.
    Thus we find that at most $\tfrac{a}{2}$ choices of $w_2$ can be blocked, and so our earlier assumption always holds true, and the lemma follows.  
\end{proof}

Observe that as long as one avoids blocked extension points and processes, the remaining choices in the proof of Lemma~\ref{lma:fuse} can be made arbitrarily. This will be heavily exploited in the final algorithm, where we will compute the above sets in an iterative fashion, using the strategy outlined in the proof above. 

We will essentially maintain sets that form a valid solution to the following iterative procedure. Suppose initially that for any choice of $w,y,$ and $P$ such that $w$ is $(a,1, \ell, \emptyset, \emptyset, P)$-heavy, we set $\mathcal{S}_{w,y,P}^1 = \{P\}$ i.e.\ the set contains only the string representing the augmenting process $w+y+P$. For all other choices of $w,y,$ and $P$, we leave the set $\mathcal{S}_{w,y,P}^1$ empty.
Suppose that for all $i \leq \sqrt{\log n}$ and for all choices of $w,y,P$ such that $w$ is $(a,i, \ell, \emptyset, \emptyset, P)$-heavy, we inductively construct $\mathcal{S}_{w,y,P}^{i}$ using the approach from above. Suppose, furthermore, that some arbitrary selection of the remaining $(\tfrac{a}{2},i,\ell)$-spread nodes $\bar{w}$ also are assigned sets $\mathcal{S}_{w,y,P}^{i}$ certifying that they are $(\tfrac{a}{2},i,\ell)$-spread. 
Suppose finally, that all remaining choices of $\bar{w},\bar{y}$, and $\bar{P}$, are assigned the empty set i.e.\ here $\mathcal{S}_{\bar{w},\bar{y},\bar{P}}^{i} = \emptyset$.

Then it follows by induction on $i$ together with Lemma~\ref{lma:fuse} that any choice of $w,y,$ and $P$ such that $w$ is $(a,\sqrt{\log n}, \ell, \emptyset, \emptyset, P)$-heavy will be assigned a non-empty set $\mathcal{S}_{\bar{w},\bar{y},\bar{P}}^{\sqrt{\log n}}$. 
It follows by Theorem~\ref{thm:main} that any such triplet indeed causes $w$ to be $(a,\sqrt{\log n}, \ell, \emptyset, \emptyset, P)$-heavy, and so, crucially, any algorithm maintaining sets in alignment with the above procedure, will produce a non-empty set $\mathcal{S}_{\bar{w},\bar{y},\bar{P}}^{\sqrt{\log n}}$ for any valid choice of $w,y,$ and $P$.
Hence, we have shown the following corollary:
\begin{corollary} \label{cor:setfuse}
    Any algorithm maintaining sets $\mathcal{S}_{w,y,P}^{i}$ as described above for all $i\leq \sqrt{\log n}$ will produce a non-empty set $\mathcal{S}_{\bar{w},\bar{y},\bar{P}}^{\sqrt{\log n}}$ for any choice of vertex $w$, neighbour $y$, and associated bichromatic path $P$.
    Furthermore, the set $\mathcal{S}_{\bar{w},\bar{y},\bar{P}}^{\sqrt{\log n}}$ certifies the fact that $w$ is $(\tfrac{a}{2},\sqrt{\log n}, P)$-spread.
\end{corollary}
We are now ready to show Theorem~\ref{thm:main}. To do so, we first need an important building block which we will develop in the next subsection.

\subsubsection{Proof of Proposition}

In this subsection, we show Proposition~\ref{prop:main} introduced below. 
Our setup and choice of constants is different compared to~\cite{Christiansen}, but we will follow a proof approach similar to the one introduced there. 

Throughout the section, we fix parameters $a$, $\ell$, $\Delta_{\max}$, and $\gamma$ and a graph $G$ endowed with a proper (partial) $(\Delta_{\max} + 1)$-colouring $c$. 

We will consider a game between two players, Alice and Bob. 
In this game, we are given as input a starting vertex $w$, a neighbour $y$ of $w$, and a maximal bichromatic path $P$ beginning at $y$. 
The goal of Alice is to find an augmenting, semi-consistent stepping process with step length at most $\ell$ such that the first step of the process takes the form $w+y+P'$ for some subpath $P' \subset P$.
The goal of Bob is to deny Alice in achieving their goal. 

The game proceeds in rounds. In the first round, we let $R_1$ denote all of the notes along $P$ that can be visited by a $1$-step process of the form Alice seeks, though not necessarily augmenting. 
Alice then gets to pick an arbitrary subset $N_1 \subset R_1$ along with one semi-consistent process that gets to extend through $N_1$. That is, for any vertex $u \in N_1$, Alice gets to pick exactly one semi-consistent $1$-step stepping process ending up at $u$, and then Alice gets to extend semi-consistently through $u$. Note that Alice is allowed to pick any valid choice of $P_2$, but they must all extend semi-consistently through $u$ using the process that Alice picked to get to $u$. In other words, all valid choices of $P_2$ belong to the same maximal bichromatic path. 
After Alice chooses $N_1$ and the valid extensions, Bob gets to ban a set of vertices $B_1$ of size at most $\gamma |N_i|$. 
Once Bob has specified their set of banned vertices, $R_2$ is determined to be be any point reachable from a $2$-step process $w+y+P'+w_2+y_2+P_2$, of the form Alice seeks (not necessarily augmenting), for which $w_2 \in N_1 \setminus B_1$ and $P_2$ is consistent with the extension Alice chose. In other words, the $2$-step process is represented by a string which forms a $(2,\ell, \emptyset, \hat{B}, P)$-restricted set with $\hat{B}_1 = B_1$ and $\hat{B}_0 = \emptyset$. 

The $i^{\text{th}}$ round proceeds completely synchronous to above. 
Alice picks set $N_i \subset R_i$ together with one extension type per point. That is for $u \in N_i$, Alice picks any semi-consistent $i$-step process of the form $w+y+P'+w_2+y_2+P_2+ \dots + w_{i}+y_{i}+P_{i}$, such that for all $j \leq i$, we have $w_j \in N_{j-1} \setminus B_{j-1}$ and $P_j$ consistent with the extension of Alice's choosing, that ends at $u$. 
We refer to the process $w+y+P'+ \dots + w_{i}+y_{i}+P_{i}$ as the \emph{trace} of $u$.

Alice can then semi-consistently extend this process through $u$ via neighbour $y_u$ as long as each extension adds length at most $\ell$ to the length of the process, and as long as all the extension paths begin at $y_u$ and are subpaths of the same maximal bichromatic path $P_u$.

Bob then bans a set of vertices $B_i$ of size at most $\gamma |N_i|$, and we let $R_{i+1}$ be the set of vertices reachable by an $i$-step process $w+y+P'+ \dots + w_{i+1}+y_{i+1}+P_{i+1}$ of the form Alice seeks (though not necessarily augmenting) such that for all $j \leq i+1$, we have $w_j \in N_{j-1} \setminus B_{j-1}$ and $P_j$ consistent with the extension of Alice's choosing. 
Observe again that the stepping processes considered are represented by strings which are $(2,\ell, \emptyset, \hat{B}, P)$-restricted with $\hat{B}_j = B_j$ and $\hat{B}_0 = \emptyset$.
We let $\hat{N}_i = N_i - B_i$

The proposition, we wish to prove, can now be stated as follows:
\begin{proposition} \label{prop:main}
    Suppose that $\ell \geq \ellval $, and that $\aval \geq a \geq \avallb$, $\gamma \leq \gammaval$. Then there exists a strategy for Alice, so that they win within $\sqrt{\log n}$ rounds. 
\end{proposition}

In order to prove Proposition~\ref{prop:main}, we first describe the strategy that Alice will employ. 
Alice will play as follows: Firstly, Alice will fix some arbitrary total ordering $\prec$ on the set of strings representing stepping process beginning at $w$.
In the $i^{\text{th}}$ round, they will ban a subset of vertices from $R_i$, which could potentially lead to an extension, which is not semi-consistent. 
To be more precise, Alice will ban any vertex $u$ from $R_i$ such that there exists a $1$-step stepping process, beginning at $u$, which visits a point in the $2$-hop neighbourhood of $\{w\} \cup \paren{\bigcup \limits_{j=1}^{i-1} R_{j}}$. Additionally, any vertex $u$ belonging to the $2$-hop neighbourhood of $\{w\} \cup \paren{\bigcup \limits_{j=1}^{i-1} R_{j}}$ is also banned. 

They will then let the remaining points form $N_i$. To get the extension path through a point $u \in N_i$, they will extend the semi-consistent stepping process, which reaches $u$ for the first time during the $i^{\text{th}}$ step, that is represented by the string that is smallest according to $\prec$. 
Naturally, if Alice finds any extension which is augmenting with an extension path of length at most $\ell$, the game is done, and Alice wins. 

At this point it holds by assumption that any point in $N_i$ has at least one process to choose as trace, however we wish to show that in fact any extension consistent with Alice' choice of extension path will be a semi-consistent stepping process. 
This will help us show that the sets $N_j$ and $R_j$ grow rapidly as $j$ increases.
This in turn allows us to prove that the the process must terminate rather quickly, which under the above circumstances, can only happen if Alice wins.
We begin by showing that any extension consistent with Alice's choice of extension path will result in a semi-consistent stepping process:
\begin{lemma} \label{lma:extension good}
    When playing the strategy described above, for any vertex $u \in N_i$ with trace $w+y+P'+ \dots + w_{i}+y_{i}+P_{i}$ and any subpath $P_{i+1} \subset P_u$ beginning at $y_u$, we have that the process $w+y+P'+ \dots + w_{i}+y_{i}+P_{i}+u+y_u+P_{i+1}$ is semi-consistent. 
\end{lemma}
\begin{proof}
    Suppose that it is not semi-consistent. By assumption, we know that the trace is semi-consistent and that the extension is performed in a semi-consistent manner, so it must be the case that either $u$ overlaps with the $2$-hop neighbourhood of $w+y+P'+ \dots + w_{i}$, that $P_{i+1}$ overlaps with the $2$-hop neighbourhood of $w+y+P'+ \dots + w_{i}$, or that $P_i$ is a subpath of $P_u$. 

    Observe, however, that for any of these cases to occur, it must be the case that either $u$ belongs to the $2$-hop neighbourhood of $\bigcup \limits_{j=1}^{i-1} R_{i-1}$ or that there exists a $1$-step process beginning at $u$ that reaches a point in the $2$-hop neighbourhood of $\bigcup \limits_{j=1}^{i-1} R_{i-1}$.

    These cases cannot occur, as any such vertices are banned by Alice, and the lemma follows.
\end{proof} 
The above lemma essentially says that any sufficiently short and valid extension through a point in $N_i$ will end at a vertex belonging to $R_{i+1}$. As we will see shortly, this ensures that $R_{i+1}$ is a lot bigger than $N_i$. 
Before showing this, we first show that Alice will not ban too many vertices by themselves:
\begin{lemma} \label{lma:Aban}
    When playing the strategy above, Alice bans at most $|\{w\} \cup \paren{\bigcup \limits_{j=1}^{i-1} R_{j}}|\cdot{} 4(\Delta_{\max}+1)^5$ vertices from $R_i$. 
\end{lemma}
\begin{proof}
    We first count the number of vertices in $G$ for which there exists a $1$-step stepping process which visits a point in the $2$-hop neighbourhood of $\bigcup \limits_{j=1}^{i-1} R_{i-1}$. Let $u$ be such a vertex visiting the set via stepping process $C$, and let $v \in \{w\} \cup \paren{\bigcup \limits_{j=1}^{i-1} R_{j}}$ be the first vertex that $C$ visits. 
    Then $X_1[v]$ will contain a complete string of length at most $1$ representing $C$, i.e.\ $u \in X_1[v]$ by Lemma~\ref{lma:prefixcount}. Observe that any $1$-step stepping process is semi-consistent, and so the lemma can be applied. 
    Hence, we can bound the number of possible choice of $u$ as $\sum \limits_{v \in F} |X_1[v]| \leq |F| \cdot{} 3(\Delta_{\max}+1)^3$, where $F$ denotes the $2$-hop neighbourhood of $\{w\} \cup \paren{\bigcup \limits_{j=1}^{i-1} R_{j}}$. 
    By additionally banning any vertices belonging to $F$ and noting that $|F| \leq (\Delta_{\max}+1)^2 \cdot{} |\{w\} \cup \paren{\bigcup \limits_{j=1}^{i-1} R_{j}}|$, the lemma follows. 
\end{proof}

The following lemma is the core of the proof. It shows that if Alice has not won by the $i^{\text{th}}$ round, then $N_i$ and $R_i$ must experience exponential growth. The approach applied in the proof is similar to an approach applied by Christiansen in~\cite{Christiansen}.
\begin{lemma} \label{lma:game}
    Let $\ell \geq \ellval$, $\aval \geq a \geq \avallb $, $\gamma \leq \gammaval$.
    Suppose Alice plays the strategy described above, but that they did not win in the first $i$ rounds. Then, regardless of Bob's strategy, the following holds \emph{after} Bob has banned at most $\gamma|N_i|$ vertices: 
    \begin{enumerate}
        \item $|\hat{N}_i| \geq \tfrac{\ell}{8(\Delta_{\max}+1)^{3}}\paren{1 + \sum \limits_{j = 1}^{i-1} |\hat{N}_{j}|} $
        \item $|\hat{N}_i| \geq \tfrac{|R_i|}{2} $
    \end{enumerate}
\end{lemma}
\begin{proof}
    The proof is by strong induction on $i$. For $i = 1$, we first observe that $|P| > \ell$, since Alice did not win in the first round. Hence, we conclude that $|R_1| = \ell $.
    Next, applying Lemma~\ref{lma:Aban} yields that after Alice have banned their choice of vertices there are at least $\ell - 4(\Delta_{\max}+1)^5$ valid choices left in $R_1$ for $N_1$. Since Alice picks all such vertices, we find that $|\hat{N}_1| \geq (1-\gamma)(\ell - 4(\Delta_{\max}+1)^5)$ after Bob has banned vertices, and so we find that the base case holds.

    Next, we move to the induction step, so assume that the statement is true for $1, 2, \dots, i-1$. 
    Observe that any coloured edge is part of at most $(\Delta_{\max}+1)$ maximal bichromatic paths, namely one for each choice of second colour. 
    By assumption, Alice does not win in round $i$, and so every extension through a point in $N_{i-1}$ uses a bichromatic path of length $> \ell$. Hence, we can lower bound the size of $R_{i}$ as follows: 
    Let $H$ be the subgraph of $G$ spanned by the first $\ell$ edges of every extension path picked by Alice through every vertex in $\hat{N}_{i-1}$. 
    By the observation above, an edge can belong to at most $2(\Delta_{\max}+1) \cdot{} (\Delta_{\max}+1)$ of these extension paths -- once for every endpoint of maximal bichromatic path going through the edge. 
    Note than then $|E(H)| \geq \tfrac{\ell |\hat{N}_{i-1}|}{2(\Delta_{\max}+1)^{2}}$ and $|V(H)| = |R_i|$. 
    By the hand-shaking lemma, we have that $\tfrac{|E(H)|}{|V(H)|} \leq \tfrac{(\Delta_{\max} + 1)}{2}$, and so we find that $|R_{i}| \geq \tfrac{\ell |\hat{N}_{i-1}|}{(\Delta_{\max}+1)^{3}}$. 

    It follows by Lemma~\ref{lma:Aban} that Alice bans at most $\{w\} \cup \paren{\bigcup \limits_{j=1}^{i-1} R_{j}}$ of these points. That leaves at least 
    $$|N_i| = |R_i| -  |\{w\} \cup \paren{\bigcup \limits_{j=1}^{i-1} |R_{j}|}|\cdot{} 4(\Delta_{\max}+1)^5 \geq \tfrac{\ell |\hat{N}_{i-1}|}{(\Delta_{\max}+1)^{3}} -|\{w\} \cup \paren{\bigcup \limits_{j=1}^{i-1} |R_{j}|}|\cdot{} 4(\Delta_{\max}+1)^5 $$ 
    valid extension points for Alice. By the induction hypothesis, we can estimate as follows: 
    \begin{align*}
        |\{w\} \cup \paren{\bigcup \limits_{j=1}^{i-1} R_{j}}|\cdot{} 4(\Delta_{\max}+1)^5 &\leq 4(1+2(1+\sum \limits_{j = 1}^{i-1} |\hat{N}_{j}|))(\Delta_{\max}+1)^5 \\
        & \leq 12(1+\sum \limits_{j = 1}^{i-1} |\hat{N}_{j}|)(\Delta_{\max}+1)^5
    \end{align*}
    and 
    \begin{align*}
        \frac{\ell |\hat{N}_{i-1}|}{(\Delta_{\max}+1)^{3}} \geq \frac{\ell}{2(\Delta_{\max}+1)^{3}} \paren{1+\sum \limits_{j = 1}^{i-1} |\hat{N}_{j}|}
    \end{align*}
    Hence, it follows that if $\ell \geq \ellval$, then we have 
    \begin{align*}
        |\{w\} \cup \paren{\bigcup \limits_{j=1}^{i-1} R_{j}}|\cdot{} 4(\Delta_{\max}+1)^5 &\leq \tfrac{|R_{i}|}{100}
    \end{align*}
    and that
    \begin{align*}
        |N_i| &\geq \paren{\tfrac{\ell}{2(\Delta_{\max}+1)^{3}} - 12(\Delta_{\max}+1)^5} \paren{1+\sum \limits_{j = 1}^{i-1} |\hat{N}_{j}|} \\
        &\geq \tfrac{\ell}{4(\Delta_{\max}+1)^{3}}\paren{1+\sum \limits_{j = 1}^{i-1} |\hat{N}_{j}|} 
    \end{align*}
    The lemma now follows by observing that even if Bob bans $\gamma|N_i|$ of these points, we still find: 
    \begin{align*}
        |\hat{N}_i| \geq \tfrac{99}{100}(1-\gamma)|R_i| \geq \tfrac{|R_{i}|}{2}
    \end{align*}
    and 
    \begin{align*}
        |\hat{N}_i| \geq \tfrac{\ell}{8(\Delta_{\max}+1)^{3}}\paren{1+\sum \limits_{j = 1}^{i-1} |\hat{N}_{j}|} 
    \end{align*}    
\end{proof}
Now we show the proposition:
\begin{proof}[Proof of Proposition~\ref{prop:main}]
    Suppose that Alice has not won after round $\sqrt{\log n}$ concludes. Then by Lemma~\ref{lma:game} and induction, it follows that $\hat{N}_{\sqrt{\log n}+1}$ has size at least $\paren{\tfrac{\ell}{8(\Delta_{\max}+1)^{3}}}^{\sqrt{\log n}+1} > n$, which is a contradiction as $\hat{N}_{\sqrt{\log n}+1} \subset V(G)$ is not a multi-set. 
\end{proof}

\subsubsection{Theorem}
In this subsection, we finally deduce Theorem~\ref{thm:main}, which we restate below for the readers convenience: 
\begin{theorem}
    Let $w$ be any node, $y$ any neighbour of $w$ and $P$ any maximal bichromatic path beginning at $y$. Then $w$ is $(a,\sqrt{\log n}, \ell, \emptyset,\emptyset, P)$-heavy. 
\end{theorem}
\begin{proof}
    Suppose for contradiction that $w$ is not $(a,\sqrt{\log n}, \ell, \emptyset,\emptyset, P)$-heavy. Then by Lemma~\ref{lma:suppression}, there exists a graded set $D$ with $D_{i} \leq \ell^{i-1} \cdot{} a$ which $(a,\sqrt{\log n}, \ell, \emptyset, \emptyset, P)$-\emph{suppresses} $w$. 

    We now claim that Bob has a winning strategy even if the conditions of Proposition~\ref{prop:main} is met. This would yield the desired contradiction. 
    Hereto, we describe the strategy of Bob. 
    Bob will simply play the strategy $B_i = D_i$. Since the graded set $D$ $(a,\sqrt{\log n}, \ell, \emptyset, \emptyset, P)$-\emph{suppresses} $w$, it follows that there exist no non-empty $(\sqrt{\log n}, \ell, \emptyset, \emptyset, P)$-restricted sets containing strings representing semi-consistent and augmenting stepping processes. 
    As such, if we can show that the strategy is valid with respect to $\gamma$, we achieve the desired contradiction. 
    To this end, observe that by Lemma~\ref{lma:game} as long as Alice has not won yet, we must have $|N_i| \geq (\tfrac{\ell}{8(\Delta_{\max}+1)^{3}})^{i} \geq \ell^{i-1} \tfrac{a}{\gamma}$ for all $i \leq \sqrt{\log n}$, and so Bob's strategy is indeed valid.
\end{proof}
The idea behind the dynamic algorithm and the associated data structures presented in the rest of the section, is to exploit the structural results derived in this section. 
In particular, we will make Corollary~\ref{cor:setfuse} algorithmic by dynamically and explicitly maintaining sets of the form outlined in the corollary. 
The idea is that we can use the approach of Lemma~\ref{lma:fuse} to splice together relevant parts of sets of lower order along the bichromatic path to form a higher order set. 
By carefully repairing and maintaining such sets, we can maintain augmenting stepping processes for all starting points. 
This allows us to efficiently perform colouring updates, as one can build a Vizing fan around an endpoint of an inserted edge, and then use the stored stepping processes to guide the search for an augmenting Vizing chain. 
Thus the key part of the algorithm becomes the data structure for maintaining these sets. To do so, we heavily rely on the fact that any colouring or un-colouring of an edge cannot affect too many data structures, as not too many points can actually reach such an edge via semi-consistent stepping processes. 
Therefore, the data structure can afford to explicitly locate and fix any data structures affected by such a change. 
We expand upon this in the following subsections, but first we begin by outlining the contents of the data structures that we will maintain.

\subsection{Contents of the data structures}
We will describe the data structures in terms of the following graph: 
\begin{definition} \label{def:skeleton}
    Let $G$ be a graph with maximum degree $\Delta \leq \Delta_{\max}$, let $c$ be a proper (partial) $(\Delta_{\max}+1)$-edge-colouring of $G$, and let $\ell$ be a parameter. The \emph{bichromatic skeleton} of $G$ is the directed graph $\mathcal{B}(G,c,\ell)$ with:
    \begin{itemize}
        \item Vertex-set $V(\mathcal{B}(G,c,\ell)) = X_1 \cup X_2$, where every maximal bichromatic path $P$ of $G$ under $c$ is represented by a node in $X_1$ and every node in $V(G)$ is represented by a node in $X_2$. Note that we allow $P$ to consist of an isolated vertex. 
        \item Edge-set $E(\mathcal{B}(G,c,\ell)) = Y_1 \cup Y_2 \cup Y_3$. Here:
        \item $(x_{2}x_{1}) \in Y_1$ with $x_1 \in X_1$ and $x_2 \in X_2$ if $x_2$ is an endpoint of the bichromatic path represented by $x_1$.
        \item $(xy) \in Y_2$ with $x,y \in X_2$ if the nodes in $G$ represented by $x,y$ in $\mathcal{B}(G,c)$ are neighbours via a coloured edge in $G$. 
        \item $(x_{1}x_{2}) \in Y_3$ with $x_1 \in X_1$ and $x_2 \in X_2$ if $x_2$ belongs to the vertex set of the bichromatic path represented by $x_1$, and, furthermore, the distance in the bichromatic path, represented by $x_1$, from $x_2$ to an endpoint of the bichromatic path is at most $\ell-1$.
    \end{itemize}
\end{definition}
If $c$ and $\ell$ are clear from the context, we will often leave the arguments out and simply write $\mathcal{B}(G)$.
Let $\mathcal{P}(G, c, \ell)$ or simply $\mathcal{P}$, if the arguments are clear from the context, be the set containing all bichromatic paths $P$ that are contained in some maximal bichromatic path $P'$ of $G$ under $c$ such that $P$ has length at most $\ell$ and such that $P \cap P'$ contains an endpoint of $P'$.
Then there is a canonical map $r: \mathcal{P} \cup V(G) \rightarrow V(\mathcal{B}(G))$ that maps a vertex in $V(G)$ to its representative in $X_2$, and similarly a bichromatic path $P$ to the representative in $X_1$ of the unique maximal bichromatic path $P'$ containing $P$.
Below we state some very basic and straight-forward properties of this type of graphs:
\begin{claim} \label{clm:standard}
    Let $G$, $\Delta \leq \Delta_{\max}$, $c$, and $\ell$ be as in Definition~\ref{def:skeleton}. Then for $\mathcal{B}(G)$ the following holds:
    \begin{itemize}
        \item $\mathcal{B}(G)$ contains no parallel edges pointing in the same direction.
        \item The maximum in-degree of $\mathcal{B}(G,c)$ is upper bounded by $2(\Delta_{\max}+1)^2$.
        \item The maximum out-degree of $v \in X_2$ is upper bounded by $2(\Delta_{\max}+1)^2$.
    \end{itemize}
\end{claim}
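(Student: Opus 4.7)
The plan is to verify each of the three items by unpacking Definition~\ref{def:skeleton} and leaning on two elementary observations. First, for every unordered colour pair $\{\kappa_1,\kappa_2\} \subset [\Delta_{\max}+1]$ and every vertex $u \in V(G)$ there is exactly one maximal $(\kappa_1,\kappa_2)$-bichromatic path containing $u$ (possibly just the isolated vertex $u$ itself); hence $u$ lies on at most $\binom{\Delta_{\max}+1}{2}$ maximal bichromatic paths in total. Second, $G$ is simple, so each pair of vertices of $G$ is joined by at most one coloured edge.

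For the first item I would argue by cases on the edge type. The three edge classes $Y_1, Y_2, Y_3$ partition $E(\mathcal{B}(G))$ by endpoint location: $Y_1$-edges are oriented from $X_2$ into $X_1$, $Y_3$-edges from $X_1$ into $X_2$, and $Y_2$-edges stay inside $X_2$. So any two parallel same-direction edges must lie in the same $Y_i$, and within each class the defining membership condition is a Boolean property of the ordered endpoint pair (the $Y_2$ case uses simplicity of $G$), yielding at most one edge per ordered pair.

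For the second item I would split on whether $v \in X_1$ or $v \in X_2$. If $v \in X_1$ represents a maximal bichromatic path $P$, its incoming edges are exactly the $Y_1$-edges corresponding to endpoints of $P$, of which there are at most two since $P$ is a (possibly trivial) simple path. If $v \in X_2$ represents $u \in V(G)$, the incoming edges come from $Y_2$ (one per coloured neighbour of $u$, hence at most $\Delta_{\max}$) and from $Y_3$ (one per maximal bichromatic path containing $u$ whose nearer endpoint is at distance at most $\ell$ from $u$), which by the colour-pair observation is bounded by $\binom{\Delta_{\max}+1}{2}$. Summing yields at most $\Delta_{\max} + \binom{\Delta_{\max}+1}{2} \leq 2(\Delta_{\max}+1)^2$.

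The third item is symmetric to the in-degree analysis of $X_2$: the $Y_1$-out-edges from $v \in X_2$ correspond to maximal bichromatic paths having $u$ as an endpoint (again at most $\binom{\Delta_{\max}+1}{2}$ by the colour-pair observation, with a harmless factor of two in the degenerate case where $u$ is both endpoints), while the $Y_2$-out-edges contribute at most $\Delta_{\max}$, and no $Y_3$-edges leave $X_2$. The combined bound is once more at most $2(\Delta_{\max}+1)^2$. There is no real obstacle; the entire claim is routine bookkeeping around the colour-pair observation, and the reason the out-degree bound is stated only for $v \in X_2$ is precisely that vertices in $X_1$ can emit up to $2\ell+1$ many $Y_3$-edges.
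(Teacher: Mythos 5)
Your proof is correct and follows essentially the same route as the paper's: a case split on the edge classes $Y_1,Y_2,Y_3$ and on whether the node lies in $X_1$ or $X_2$, with the degree bounds coming from the count of maximal bichromatic paths through a vertex (you use the slightly tighter $\binom{\Delta_{\max}+1}{2}$ where the paper uses $(\Delta_{\max}+1)^2$, but the argument is identical). No gaps.
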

\begin{proof}
    It is straight-forward to check that for any $u,v \in V(\mathcal{B}(G))$ at most one edge $(uv)$ can be inserted, as $G$ is simple. 

    The second item follows from the fact that at most $(\Delta_{\max}+1)^{2}$ bichromatic paths can go through any vertex of $G$. More specifically, if $v \in X_1$ then at most $2$ in-edges can be incident to $v$, and if $v \in X_2$, then at most $(\Delta_{\max}+1)^{2}$ in-edges of $v$ can belong to $Y_3$ and at most $(\Delta_{\max}+1)$ in-edges of $v$ can belong to $Y_2$. In this case, no in-edges belong to $Y_1$. 

    Similarly, for the third item, we note that for $v \in X_2$ at most $(\Delta_{\max}+1)$ out-edges of $v$ belongs to $Y_2$ and at most $(\Delta_{\max}+1)^{2}$ out-edges of $v$ belongs to $Y_1$, since $v$ can be the endpoint of at most $(\Delta_{\max}+1)^{2}$ maximal bichromatic paths. No out-edges of $v$ belongs to $Y_3$.
\end{proof}
Finally, we note that any consistent $i$-step process with step-length at most $\ell$ corresponds to a directed path in $\mathcal{B}(G)$ in a canonical way: simply map $w_1+y_1+P_1 + \dots + w_{i} + y_{i}+P_{i}$ to the path $(r(w_1),r(y_1), r(P_1), \dots, r(w_i), r(y_i), r(P_i))$. 
Note that we will never revisit a representative of a path, since different steps are not allowed to be part of the same maximal bichromatic path.

Again let $G$, $\Delta \leq \Delta_{\max}$, $c$, and $\ell$ be as in Definition~\ref{def:skeleton}. We will maintain data structures containing the following information: 
\begin{itemize}
    \item We explicitly maintain the graph $\mathcal{B}(G)$.
    \item For each edge $e = (x_2,x_1) \in Y_1$, we maintain for all $i \leq \sqrt{\log n}$ and for all neighbours $w$ of $y_1 \in r^{-1}(x_2)$, the following information in $\mathcal{D}_{i}(e,w)$: 
    either a non-empty set $\mathcal{S}^{i}[e,w]$ certifying that $w$ is $(\tfrac{a}{2},i,P,y_1)$-spread (with $P$ maximal bichromatic path containing all paths in $r^{-1}(x_1)$) or an empty set $\mathcal{S}^{i}[e,w]$. 
    Importantly, we require that if $w$ is $(i,a,\emptyset, \emptyset, x_1, y_1)$-heavy, then the set $\mathcal{S}^{i}[e,w]$ is non-empty.
    In particular, it follows from Theorem~\ref{thm:main} that all sets of the form $\mathcal{S}^{\sqrt{\log n}}[e,w]$ are non-empty. 
\end{itemize} 
In the next section, we will describe how to efficiently maintain the above information under colouring and un-colouring of edges. This is sufficient to construct a dynamic algorithm, since the above data structures readily contain information about which edges to recolour in order to extend the maintained colouring. 

\subsection{Maintaining the data structures} \label{sec:algos}
In this section, we will consider two main operations: $\texttt{colour}(e,\kappa)$ and $\texttt{un-colour}(e)$. Here $\texttt{colour}(e,\kappa)$ will extend the maintained colouring $c$ by giving the edge $e$ the colour $\kappa$. We require that $\kappa$ is free at both endpoints of $e$ for $\texttt{colour}(e,\kappa)$ to be called. 
Oppositely, the operation $\texttt{un-colour}(e)$ will reduce the colouring $c$ by assigning $e$ a blank colour $\neg$. 
In addition to these operations, we will also support $\texttt{add}(e)$ which inserts a new edge $e$ initially coloured $\neg$, and $\texttt{remove}(e)$ which removes a blankly coloured edge $e$ from the graph. 
These last two operations are at this point straight-forward to maintain, as all the work can be pushed to the two main operations. 
As mentioned earlier, we will again assume that at all times $\Delta(G) \leq \Delta_{\max}$. 
We first give an informal overview of the data structures, and then we provide formal description afterwards. 

\paragraph{Maintaining the first item:} In order to explicitly maintain $\mathcal{B}(G)$, we will consider the ${\Delta_{max+1} \choose 2}$ graphs induced in $G$ by any pair of distinct colours $\kappa_1, \kappa_2$. 
Since the colouring $c$ is at all times proper, these graphs consist only of isolated points, paths, and even cycles.
In order to dynamically maintain the representation, we maintain each component in a Top-tree~\cite{HolmEtAL}. 
In the case of even cycles, we cannot store the component directly in a dynamic tree data structure, but in that case we store the entire component except for one arbitrary edge in a Top-tree. 
As ancillary information, we store the number of degree $1$ vertices in the component. Note that there can be only $0$, $1$, or $2$ vertices of degree $1$. 
We use this information to explicitly store $\mathcal{B}(G)$:
whenever a maximal bichromatic path is changed, we remove the corresponding node and all its related edges. Next we insert representatives of any un-represented maximal bichromatic paths, and add the edges incident to these representatives as specified by Definition~\ref{def:skeleton}. 

\paragraph{Maintaining the second item:} We first consider the operation $\texttt{un-colour}(e)$. 
Whenever $e$ is un-coloured some of the information stored in the data structures is no longer valid. 
In particular, for every non-empty maximal bichromatic path $P$ that previously contained $e$, any $i$-step process that used $P$ might no longer be valid.  
In a similar vein, some extension close to $e$ could stop being consistent.
We can mark every data structure that might be affected by this as follows: any vertex that can reach $P$ via a $j$-step process, could contain faulty information in their data structures for $i \geq j $. 
We will later bound this number using Claim~\ref{clm:standard} as well as the fact that $e$ belongs to at most $(\Delta_{\max}+1)^2$ such $P$'s.

Next, we will cleanse the marked data structures and recompute their contents from scratch. We will re-compute the marked data structures using the invariant that all data structures of the form $\mathcal{D}_{i-1}$ should be recomputed before the first data structure of the form $\mathcal{D}_{i}$ is recomputed. 
Performing the re-computation in the case where $i = 1$ is straight-forward as there only is one choice for the $1$-step process. 
Recomputing structures of the form $\mathcal{D}_{i}$ assuming that all structures of the form $\mathcal{D}_{j}$ with $j \leq i-1$ are correct, now amounts to making Lemma~\ref{lma:fuse} algorithmical. 
In order to update the contents of $\mathcal{D}_{i}(e,w)$ with $e = (x_2x_1)$, we consider the bichromatic path $x_1$. 
If $x_1$ is short enough (i.e.\ it has length at most $\ell + 1$), we simply set $\mathcal{S}[e,w]$ to be the set containing the augmenting stepping process $w+y_1+x_1$ for $y_1 \in r^{-1}(x_2)$.
Otherwise, for the $\ell$ first points along $x_1$, beginning from $y_1 \in r^{-1}(x_2)$, we calculate the correct extension according to Definition~\ref{def:sc} item $1)$, look up the relevant set $\tilde{\mathcal{S}}$ that is either empty or certifies that the extension point is $(\tfrac{a}{2},i-1,\tilde{P})$-spread for some $\tilde{P}$.
Then we determine if the point is blocked according to the definition from Lemma~\ref{lma:fuse}.
Finally, we count the number of non-blocked extensions with non-empty sets $\tilde{\mathcal{S}}$ along the first $\ell$ points of $x_1$ beginning at $y_1$. 
If we have at least $\tfrac{a}{2}$ such points, we arbitrarily pick $\tfrac{a}{2}$ of them. For each picked point, as explained in the proof of Lemma~\ref{lma:fuse}, we pick some non-blocked augmenting and semi-consistent process of length at most $i-1$, and extend it to an augmenting and semi-consistent process of length at most $i$. 
These processes together will form the non-empty set $\mathcal{S}[e,w]$. 
In the case, where there are less than $\tfrac{a}{2}$ non-blocked points with non-empty associated sets, we simply leave $\mathcal{S}[e,w]$ empty. 
The key property of this update step is that Lemma~\ref{lma:fuse} and the subsequent discussion ensures that $\mathcal{S}[e,w]$ is non-empty if $w$ is heavy. 

Performing $\texttt{colour}(e,\kappa)$ is now completely symmetric to the above. 
Assigning $e$ a colour $\kappa$ also causes some data structures to, potentially, be out-dated. These can now be fixed similarly to before. 

Next we give a more formal description of the data structures:

\paragraph{Maintaining a bichromatic skeleton:} 
We will let $F_{\kappa_1,\kappa_2}$ denote the pseudo-forest induced in $G$ by all edges coloured $\kappa_1$ or $\kappa_2$. For some vertex $v \in G$, we let $PT_{\kappa_1,\kappa_2}(v)$ be the pseudo-tree in $F_{\kappa_1,\kappa_2}$ containing $v$. 
We store every $PT_{\kappa_1,\kappa_2}$ as a tree $T_{\kappa_1,\kappa_2}$ stored as a Top-tree~\cite{HolmEtAL} and possibly one extra edge stored in a special slot $S_{\kappa_1,\kappa_2}$.
We denote by $T_{\kappa_1,\kappa_2}(v)$ the Top-tree currently containing $v$. Note that we will maintain the invariant that $S_{\kappa_1,\kappa_2} = xy$ only if $x,y$ already belong to the same Top-tree i.e.\ if $T_{\kappa_1,\kappa_2}(v) = T_{\kappa_1,\kappa_2}(u)$.
To perform \texttt{BCS-colour}$(uv, \kappa)$, we do as follows:
\begin{itemize}
    \item For all $\kappa' \neq \kappa$, we check if $T_{\kappa,\kappa'}(v) = T_{\kappa,\kappa'}(u)$. 
    If this is not the case, then we simply call $\texttt{link}(u,v)$ on the Top-tree data structures.
    Otherwise, we store $uv$ in the special slot $S_{\kappa,\kappa'}$.
    \item For each Top-tree we store as ancillary information the number of vertices with degree $1$ or $0$ contained in each cluster, not counting the boundary vertices. It is straight-forward to maintain this information under \texttt{join} and \texttt{split}. 
    \item After each modification to the Top-tree, we perform a non-local search for each leaf and check if they are incident to the edge stored in $S_{\kappa,\kappa'}$ (if any such edge). If so, the vertices' degrees are changed accordingly, and they might no longer be classified as vertices with degree $1$ or $0$. 
    \item If the Top-tree has either $1$ or $2$ such vertices, then it is classified as a bichromatic path. 
    \item Next, we remove every vertex (and incident edges) representing a bichromatic path whose Top-tree, or related special slot, was modified in the above steps. 
    \item Finally, we add representatives (and incident edges) of all bichromatic paths whose Top-tree, or related special slot, was modified in the above steps. 
    \item To add the incident edges, we access the leaves of the Top-tree to find their neighbours in $G$ as well as the first $\ell$ vertices (from either direction) on the leaf-to-leaf path in the Top-tree. Finally, we add the edges as prescribed by Definition~\ref{def:skeleton}. 
    \item Finally, we add parallel, oppositely directed edges between $r(u)$ and $r(v)$.
\end{itemize}
To perform \texttt{BCS-uncolour}$(uv)$ we only have to modify the first and the last step. The rest of the steps are exactly as before. 
Instead of calling $\texttt{link}(u,v)$, we call $\texttt{cut}(u,v)$ (or remove $uv$ from $S_{\kappa,\kappa'}$ if it stored there). 
Finally, if $S_{\kappa,\kappa'} = xy$ is non-empty, we call $\texttt{link}(x,y)$, and set $S_{\kappa,\kappa'} = \emptyset$. 
After this, we do the same clean-up step as we did in \texttt{BCS-colour}, except that instead of adding parallel, oppositely directed edges between $r(u)$ and $r(v)$, we remove any such edges. \\

\paragraph{Maintaining the $\mathcal{D}_{i}$'s:}
To perform \texttt{un-colour}$(uv)$ we do as follows: 
\begin{itemize}
    \item First we un-colour $uv$ in $G$ and update $\mathcal{B}(G)$ as described earlier. 
    \item Next, we mark some of the edges of $\mathcal{B}(G)$ with the following scheme: an edge $e$ is marked $j$ if the data structures $\mathcal{D}_{i}(e,w)$ for $j \leq i \leq \sqrt{\log n}$ needs to be updated. An edge might be marked multiple times, but we need only keep the lowest mark, it receives. 
    \item All new edges in $\mathcal{B}(G)$ from step $1$ are marked $1$. 
    \item We let $R$ be the set containing representatives of the $2$-hop neighbourhood of $u$ and $v$ in $G$, all representatives of endpoints of new bichromatic paths of $\mathcal{B}(G)$, and all representatives of former endpoints of no longer-existing bichromatic paths.
    \item All edges in $Y_1$ incident to $R$ are marked $1$. 
    \item We set $M_1 = R$. We will mark all in-edges of vertices belonging to the set $N_{Y_3, \text{in}}(N_{Y_2, \text{in}}(M_1))$ also belonging to $Y_1$ i.e.\ the set $E_{Y_1, \text{in}}(N_{Y_3, \text{in}}(N_{Y_2, \text{in}}(M_1)))$ with $2$. Finally, we set $M_2 = N_{Y_1, \text{in}}(N_{Y_3, \text{in}}(N_{Y_2, \text{in}}(M_1)))$. 
    \item For all $1<i\leq \sqrt{\log n}$, we mark all in-edges of vertices belonging to the set $N_{Y_3, \text{in}}(N_{Y_2, \text{in}}(M_i))$ belonging to $Y_1$ with $i+1$. Finally, we set $M_{i+1} = N_{Y_1, \text{in}}(N_{Y_3, \text{in}}(N_{Y_2, \text{in}}(M_i)))$.
    \item Having updated all data structures at level $<i$, we update $\mathcal{D}_{i}(e,w)$ as follows: 
    To update $\mathcal{D}_{i}(x_1x_2,w)$ for $i = 1$, check if the $1$-step process starting at $w_1 = r^{-1}(w)$ with $y_1 = r^{-1}(x_2)$ and $P_1 \in r^{-1}(x_1)$ is augmenting. If so, store the set $\mathcal{S}[e,w,i]$ containing this augmenting process.
    Note that it is crucial that this information can be efficiently computed without access to any other similar data structures. 

    For $i > 1$, we ensure that all of the data structures $\mathcal{D}_{j}(e,w)$ have been (re)computed for all $j < i$. 
    To update $\mathcal{D}_{i}(x_2x_1,w)$ we let $w_1 = r^{-1}(w)$ and $y_1 = r^{-1}(x_2)$. 
    We proceed as described earlier. If $w_1 + y_1 + x_1$ has length at most $\ell$, we conclude that it is both short and augmenting, and we simply store this process in $\mathcal{S}[e,w,i]$.
    
    Otherwise, we consider all valid choices of $P_1 \in r^{-1}(x_1)$ of length at most $\ell$.
    For each valid choice of $P_1$, there is a unique extension point $x = x(P_1)$ corresponding to the last vertex of $P_1$ visited, if one traverses $P_1$ starting at $y_1$. Note that $x$ is among the first $\ell$ vertices of the maximal bichromatic path containing $P_1$. For each choice of $P_1$ (or equivalently $x$), we do as follows:
    \begin{enumerate}
        \item Construct a fan consistent with Lemma~\ref{lma:fan2}. We let $P_2(x)$ resp.\ $y_2(x)$ be the be bichromatic path resp.\ the endpoint of $P$ consistent with Lemma~\ref{lma:fan2}. 
        \item Next we check if this extension is blocked in the sense of Lemma~\ref{lma:fuse}. 
        This is done by simply following every $(i-1)$-step process in $\tilde{\mathcal{S}}$ stored in $D_{i-1}((r(y_2(x)r(P_2(x))),x)$ checking the conditions.
        \item We repeat the above step for each choice of $x$, noting which extensions are blocked. 
        \item We pick $\tfrac{a}{2}$ arbitrary unblocked choices of $x$ with non empty associated sets $\tilde{\mathcal{S}}$. For each choice, we pick one non-blocked stepping process (represented by a string in the set stored for $x$), say $x+y_2+P_2+ \dots + w_j + y_j + P_j$, increase it to $w+y+P_1 + x+y_2+P_2+ \dots + w_j + y_j + P_j$, and represent it as a string. 
        We collect all such strings together to form $\mathcal{S}[i,e,w]$. 
        \item Finally, we un-mark every marked edges.
    \end{enumerate}
\end{itemize}
In order to perform \texttt{colour}$(uv, \kappa)$, we do almost the same as above. 
The only change is that instead of un-colouring $uv$ in $G$, we colour it in $G$ with the colour $\kappa$. Then we update $\mathcal{B}(G)$ accordingly, and then proceed exactly as before. 

\subsection{Analysing the data structures}
We start out by analysing the data structure for maintaining a bichromatic skeleton:
\begin{lemma} \label{lma:skeletonCorrect}
    Let $G$ be a dynamic graph with maximum degree $\Delta \leq \Delta_{\max}$ at all times, and let $\ell$ be a parameter. Let $c$ be a $(\Delta_{\max}+1)$-edge-colouring of $G$. Then we can maintain the bichromatic skeleton $\mathcal{B}(G,c,\ell)$ under the operations \emph{\texttt{add}}$(e)$, \emph{\texttt{remove}}$(e)$, \emph{\texttt{un-colour}}$(e)$, and \emph{\texttt{colour}}$(e,\kappa)$ in $\tilde{O}(\Delta_{\max}^2+\Delta_{\max}\cdot{} \ell \cdot{} \log(n))$ worst-case update time per operation. 
\end{lemma}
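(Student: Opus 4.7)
The proof splits naturally into a correctness argument and a worst-case time analysis, both following the algorithmic description in the paragraph preceding the lemma.

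\textbf{Correctness.} The plan is to maintain the invariant that for every unordered pair of distinct colours $(\kappa_1, \kappa_2)$ the pseudo-forest $F_{\kappa_1, \kappa_2}$ is encoded by a collection of Top-trees together with the slot $S_{\kappa_1, \kappa_2}$, where the slot stores an edge $xy$ only when $x$ and $y$ already lie in the same Top-tree. Since the colouring is proper, $F_{\kappa_1, \kappa_2}$ has maximum degree $\leq 2$, so every component is either a path or an even cycle and this encoding is exact. The cluster ancillary data counts interior degree-$1$ and degree-$0$ vertices; after each $\texttt{link}$ or $\texttt{cut}$ one checks via non-local search whether the at most two remaining ``leaves'' are actually incident to the slot edge, thereby recognising exactly the bichromatic-path components (those with $1$ or $2$ true leaves). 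All modifications to $\mathcal{B}(G,c,\ell)$ then follow mechanically from Definition~\ref{def:skeleton}: purge the $X_1$-vertices whose paths changed, add fresh ones for the replacements, read their endpoints together with the first $\ell$ vertices on either side of each leaf-to-leaf path using Top-tree path operations to install the new $Y_1$ and $Y_3$ edges, and finally insert or delete the two oppositely-directed $Y_2$ edges between $r(u)$ and $r(v)$ depending on whether we are colouring or uncolouring.

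\textbf{Running time.} A single call to $\texttt{BCS-colour}(uv,\kappa)$ or $\texttt{BCS-uncolour}(uv)$ touches at most $\Delta_{\max}+1$ colour-pairs, namely one per other colour $\kappa'$. For each such pair the work decomposes as follows: (i) one $\texttt{link}$ or $\texttt{cut}$ together with the slot promotion/demotion, in $O(\log n)$ time; (ii) a constant number of non-local searches to relocate the leaves and update degree counters, in $O(\log n)$ each; (iii) a path walk of at most $\ell$ vertices from each of the $O(1)$ affected leaves via standard Top-tree path traversals, in $O(\ell \log n)$; (iv) deleting the $O(\ell)$ edges of $Y_1 \cup Y_3$ incident to the $O(1)$ representatives leaving $X_1$ and inserting the at most $O(\ell)$ new ones, in $O(\ell)$ with standard adjacency lists. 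Summing over the $O(\Delta_{\max})$ pairs yields $O(\Delta_{\max}\cdot\ell\cdot\log n)$. The final $Y_2$-update costs $O(1)$, while global bookkeeping of the in/out-degrees at $r(u), r(v)$ in $\mathcal{B}(G)$ contributes at most the $O(\Delta_{\max}^2)$ term from Claim~\ref{clm:standard}. Adding these together matches the statement.

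\textbf{Main obstacle.} The only delicate point is the special-slot mechanism, which emulates even cycles inside a purely tree-based dynamic-connectivity structure. One must preserve the invariant that $S_{\kappa_1,\kappa_2}$ only ever holds an edge whose endpoints are already connected by the Top-tree, and correspondingly, on a $\texttt{cut}$ that disconnects an endpoint of the slot edge, promote the slot edge back into the tree via an additional $\texttt{link}$. The degree-$1$/degree-$0$ counters stored as ancillary data need a post-hoc correction to account for the slot edge, which is precisely why a non-local search is issued after every modification so as to re-classify the at most two potential leaves. Once the slot invariant is shown to be preserved by both operations, the rest reduces to the per-pair counting above and the lemma follows.
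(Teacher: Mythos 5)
Your proposal is correct and follows essentially the same route as the paper: per-colour-pair Top-trees with a special slot for the cycle-closing edge, ancillary degree-$0$/degree-$1$ counters plus non-local searches to recognise path components, and a rebuild of the affected $X_1$-representatives and their incident $Y_1,Y_3$ edges, with the same per-pair time decomposition. The only cosmetic difference is where the slack $O(\Delta_{\max}^2)$ term is charged (you attribute it to degree bookkeeping at $r(u),r(v)$, the paper to the $Y_1$-edge updates over the $O(\Delta_{\max})$ affected paths), but both are valid upper bounds within the stated budget.
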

\begin{proof}
    We have already described the algorithm in Section~\ref{sec:algos}, so what remains is to analyse it and prove that it is correct. To show correctness, we have to prove two things. First of all that the maintained bichromatic skeleton has the required vertex set, and second of all that it has the required edge-set. 
    Since $X_2$ cannot be updated, we need only check that $X_1$ is correctly maintained. 
    Since $\texttt{add}(e)$ and $\texttt{remove}(e)$ does not alter $X_1$, we need only check that $\texttt{un-colour}(e)$ and $\texttt{colour}(e, \kappa)$ correctly re-establishes $X_1$. 
    We will argue that $\texttt{colour}(e, \kappa)$ does so, and note that the case for $\texttt{un-colour}(e)$ is completely symmetrical. 

    After colouring $e$, the only new vertices of $X_1$ that need to be added are representatives of bichromatic paths containing $e$. 
    Any bichromatic component $C$ containing $e$ can only contain edges with colours $\kappa$ and $\kappa'$ for some $\kappa' \neq \kappa$.
    Since the colouring maintained is forced to be proper, it follows that $C$ must be either an even cycle or a path (possibly of length 0). 
    In particular, $C$ can always be represented by a pseudo-tree or a tree with one additional special edge. Therefore, the algorithm always correctly identifies the number of of vertices in $C$ of degree $1$ or $0$. 
    Hence, the algorithm will always identify if $C$ is a (possibly empty) bichromatic path, and correctly insert a representative of $C$ into $\mathcal{B}(G)$. So $X_1$ contains a representative of all current bichromatic paths. 
    The algorithm also deletes any bichromatic path whose Top-tree representation was modified during the process above. Since $e$ is inserted into the representation of $G[\kappa, \kappa']$ for all $\kappa'$, either the Top-tree or the related special slot representing the component of $e$ in $G[\kappa, \kappa']$ will be modified. 
    In particular, either the Top-tree or the related special slot representing bichromatic paths no longer present will be modified and hence deleted. Therefore, $X_1$ is maintained correctly. 

    To see that the edge set is also correctly maintained, we note that the algorithm inserts the new edges in $Y_2$ as well as any relevant edges incident to new bichromatic paths. These are the only new edges that need to be added. It also removes any edges going to representatives no longer present. 

    To recover the update times, we note that any operation to the Top-trees can be performed in $O(\log n)$ time (see~\cite{HolmEtAL}), and that any non-local search for edges or degree $1$ or $0$ vertices can be done in $O(\log n)$ time~\cite{HolmEtAL} per edge. 
    It is also standard that the number of such vertices can be maintained with only $O(1)$ overhead. 
    Since we update or search only $O(\Delta_{\max})$ Top-trees, the total time spend searching in the Top-trees is in $O(\Delta_{\max}\cdot{}\ell\cdot{}\log n)$. 
    Furthermore, for each new (and old) bichromatic path, of which there are at most $O(\Delta_{\max})$, we have to add (and delete) $2(\Delta_{\max}+1)$ edges to (from) $Y_1$ and $2\ell$ edges to (from) $Y_3$. Each edge is easy to insert (or delete) in constant time after performing the searches in the Top-trees already accounted for. Hence this takes additional time at most $O(\Delta_{\max}^2+\Delta_{\max}\cdot{} \ell \cdot{} \log n)$ and the lemma follows. 
\end{proof}
\begin{remark}
    We remark that it is straight-forward to allow insertion and deletion of isolated vertices into $G$ in $O(1)$ update time per insertion/deletion. 
\end{remark}
Next we analyse the data structure for maintaining the $\mathcal{D}_{i}$'s. We will show the following Theorem: 
\begin{theorem} \label{thm:colourStructures}
    Let $G$ be a dynamic graph, and let $\Delta_{\max}$ be a known upper-bound on the maximum degree throughout the entire update sequence. 
    Then, we can maintain the data structures $\mathcal{D}_{i}(\cdot{}, \cdot{})$ in $(\Delta_{\max}+1)^{\tilde{O}(\sqrt{\log n})}$ worst-case update time per operation. 
\end{theorem}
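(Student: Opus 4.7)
The plan is to prove the theorem in three conceptual parts: (a) correctness of the invalidation/marking scheme, (b) a bound on the number of invalidated data structures per operation, and (c) a bound on the cost of recomputing each invalidated structure. The bichromatic skeleton $\mathcal{B}(G,c,\ell)$ itself is handled by Lemma~\ref{lma:skeletonCorrect}, so the bulk of the work is in analysing the $\mathcal{D}_i(\cdot{},\cdot{})$'s. Throughout I would fix $\ell = 80(\Delta_{\max}+1)^7 \cdot{} 2^{\sqrt{\log n}}$ as in Corollary~\ref{cor:ExistscVc}.

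For correctness, I would argue by induction on $i$ that after the update procedure terminates, every $\mathcal{D}_i(x_{2}x_{1},w)$ correctly stores (a prefix, sorted by length, of) the semi-consistent augmenting $i$-step processes whose first segment satisfies $w_1 = r^{-1}(w)$, $y_1 \in r^{-1}(x_2)$, $P_1 \in r^{-1}(x_1)$. The key observation is that an $i$-step process can only become stale if one of the edges of $\mathcal{B}(G)$ it uses was created or removed, or if its first extension stops being consistent with Lemma~\ref{lma:fan2} (which depends only on colours in the $2$-hop neighbourhood of $w_1$). Any such invalidation propagates backwards along the process via edges of $Y_1\cup Y_2\cup Y_3$, which is precisely what the marking rule $M_{i+1} = N_{Y_1,\text{in}}(N_{Y_3,\text{in}}(N_{Y_2,\text{in}}(M_i)))$ captures. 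Hence the marked set is a superset of the stale set, and recomputing structures in non-decreasing order of $i$ (using already-corrected lower-level structures) restores all invariants.

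For the number of marked data structures, note that the initial seed $M_1$ has size $O((\Delta_{\max}+1)^2)$ by Claim~\ref{clm:standard}, since at most $(\Delta_{\max}+1)^{2}$ maximal bichromatic paths pass through $e$ and we additionally mark the $1$-neighbourhoods of $u,v$. Each step $M_i \to M_{i+1}$ multiplies the size by at most $O((\Delta_{\max}+1)^2 \cdot \ell)$ using the in-degree bounds of Claim~\ref{clm:standard} and the fact that $|Y_3|$-in-edges at a vertex are bounded by $(\Delta_{\max}+1)^2$ while $|Y_1|$-in-edges per $X_1$-node are $\le 2$. This matches exactly the counting of Corollary~\ref{cor:processReach} via Lemma~\ref{lma:mapping}. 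With $i \le \sqrt{\log n}$, the total number of marked structures is at most $(\Delta_{\max}+1)^{\tilde{O}(\sqrt{\log n})}$.

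For the per-structure recomputation cost, recomputing $\mathcal{D}_i(x_{2}x_{1},w)$ requires iterating over $O(\ell)$ choices of the extension point $x$ along $P_1$; for each $x$ we apply Lemma~\ref{lma:fan2} in $\tilde{O}(\Delta_{\max}^3)$ time, then pull candidate $(i-1)$-step processes from $\mathcal{D}_{i-1}(r(y_2(x))r(P_2(x)),x)$. Each such candidate must be verified semi-consistent with the current prefix along its entire length $\le \ell \sqrt{\log n}$, costing $O(\ell \sqrt{\log n})$ per candidate. The number of stored candidates per structure is $N = \sqrt{\log n}(\sqrt{\log n}-i)\cdot 3^{\sqrt{\log n}}(\Delta_{\max}+1)^{3\sqrt{\log n}+2}\ell$, which by Corollary~\ref{cor:processReach} dominates the worst-case number of rejections from overlaps with $P_1$ or with the $2$-hop neighbourhood of the already-committed prefix; this guarantees we can always fill $\mathcal{D}_i$ whenever enough augmenting chains exist. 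Hence each recomputation costs $(\Delta_{\max}+1)^{\tilde{O}(\sqrt{\log n})}$, and multiplying by the number of invalidated structures yields the claimed worst-case update time.

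The main obstacle I anticipate is establishing part (c) rigorously: one must verify that the size $N$ chosen for the stored candidate list is large enough that, even after discarding every candidate that overlaps with the current prefix, sufficiently many valid candidates remain to populate $\mathcal{D}_i(x_{2}x_{1},w)$ up to its quota. This reduces to the claim that the number of semi-consistent $(i-1)$-step processes overlapping any fixed $j$-step semi-consistent prefix is at most $3^{i-1}(\Delta_{\max}+1)^{3(i-1)+2}\cdot j \cdot \ell$, which is exactly the content of the second part of Corollary~\ref{cor:processReach}. Together with Lemma~\ref{lma:skeletonCorrect} for the underlying skeleton, this completes the proof.
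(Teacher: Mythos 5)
Your proposal follows essentially the same route as the paper: the paper also decomposes the argument into (i) correctness of the backwards marking scheme along $Y_1\cup Y_2\cup Y_3$ (its Lemma~\ref{lma:markCorrectly}), (ii) a count of marked structures via the in-degree bounds of Claim~\ref{clm:standard} (Lemmas~\ref{lma:markedCount} and~\ref{lma:RCount}), and (iii) a per-structure recomputation cost in which the quota stored in each $\mathcal{D}_{i-1}$ is shown, via the second part of Corollary~\ref{cor:processReach}, to exceed the number of candidates that can be rejected for overlapping the committed prefix (Lemmas~\ref{lma:overlappp} and~\ref{lma:finalUpdTime}). You correctly single out step (iii) as the crux.

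One quantitative point needs fixing in part (b). You claim each step $M_i\to M_{i+1}$ multiplies the count by $O((\Delta_{\max}+1)^2\cdot\ell)$. The factor $\ell$ should not be there: the three backwards expansions use only in-degrees in $\mathcal{B}(G)$, which Claim~\ref{clm:standard} bounds by $2(\Delta_{\max}+1)^2$ each (the $Y_3$-in-degree of an $X_2$-node is the number of bichromatic paths through it, at most $(\Delta_{\max}+1)^2$, independent of $\ell$; the parameter $\ell$ only enters the \emph{out}-degree of $X_1$-nodes). The paper's per-step factor is $2^3(\Delta_{\max}+1)^6$. This matters because compounding your stated factor over $\sqrt{\log n}$ levels contributes $\ell^{\sqrt{\log n}}\geq 2^{\log n}=n$, which is not absorbed by $(\Delta_{\max}+1)^{\tilde{O}(\sqrt{\log n})}$ when $\Delta_{\max}$ is small — precisely the regime in which the theorem is applied. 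The fix is immediate from the facts you already cite, so this is a slip rather than a missing idea, but as written the step does not yield the claimed bound.
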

We have already described how to update the data structure in Section~\ref{sec:algos}. In the following, we show correctness and analyse the update times: 

\paragraph{Correctness: }
The algorithm needs two parameters $\ell$ and $a$, which we will specify later. The only important thing is that they satisfy the requirements of earlier presented lemmas. We will assume this is the case for now, and then fix the parameters later. 

First we show that any data structure whose contents are no longer valid, will indeed be properly marked:
\begin{lemma} \label{lma:markCorrectly}
    Assuming that all data structures $\mathcal{D}_{i}$ are initially valid, then after the marking process of a call to $\emph{\texttt{colour}}(e,\kappa)$ or $\emph{\texttt{uncolour}}(e)$ concludes, we have that: if $f$ is any edge and $w$ any vertex with data structure $\mathcal{D}_{i}(f,w)$ no longer valid, then $f$ is marked with $j\leq i$.
\end{lemma}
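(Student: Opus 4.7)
Plan: The proof proceeds by strong induction on $i$. The guiding intuition is that the marking's recursive definition $M_{k+1} = N_{Y_1, \text{in}}(N_{Y_3, \text{in}}(N_{Y_2, \text{in}}(M_k)))$ exactly encodes one backward process-step through the bichromatic skeleton $\mathcal{B}(G)$: from a pivot $y_{k+1}$ back across a $Y_2$-edge to $w_{k+1}$, then back across a $Y_3$-edge to the path $P_k$, then back across a $Y_1$-edge to $y_k$. A change localised at the $j$-th step of some process will therefore correspond to a mark at level $j$ on that process's first-step edge.

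Base case ($i = 1$): $\mathcal{D}_1(f,w)$ with $f = (x_2, x_1) \in Y_1$ can only change if the semi-consistent augmenting 1-step process $(w, r^{-1}(x_2), P_1)$ changes. Since \texttt{colour} and \texttt{un-colour} preserve $V(G)$ and $E(G)$, such a change requires either (i) $f$ being a new or removed $\mathcal{B}(G)$-edge, which the "new edges marked $1$" rule handles directly (removed $Y_1$-edges have their $X_2$-endpoint a former path endpoint, hence in $R$); (ii) $y_1 \in N^1(\{u,v\})$, giving $x_2 \in R$ so $f$ is marked in the "incident to $R$" step; or (iii) $x_1$ being a new or former path representative, which reduces to (i) since adding/removing the $X_1$-vertex $x_1$ also adds/removes all incident $Y_1$-edges. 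In every case $f$ is marked at level $1 \leq i$.

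Inductive step ($i \geq 2$): Pick a semi-consistent augmenting $i$-step process $C = w_1 + y_1 + P_1 + \ldots + w_i + y_i + P_i$ with $w_1 = w$ and first-step edge $f$ witnessing the incorrectness of $\mathcal{D}_i(f,w)$, and let $j$ be the least index where step $j$ of $C$ differs between the two colourings. If $j = 1$, the base-case analysis concludes. Otherwise $j \geq 2$ and the tail $C' = w_j + y_j + P_j + \ldots + w_i + y_i + P_i$ is an $(i-j+1)$-step process whose data structure $\mathcal{D}_{i-j+1}(f_j, w_j)$ at $f_j = (r(y_j), r(P_j))$ is affected. The inductive hypothesis gives $f_j$ marked at some level $j' \leq i-j+1$. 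One checks that $f_j$ marked at level $j'$ always implies $r(y_j) \in M_{j'}$: for $j' \geq 2$ directly from the marking rule, and for $j' = 1$ because any new $Y_1$-edge's $X_2$-endpoint lies in $R$ (endpoints of new bichromatic paths were added to $R$ by construction). Iterating the marking rule along the unchanged prefix steps $k = j, j-1, \ldots, 2$, using the $\mathcal{B}(G)$-edges $(r(w_k), r(y_k)) \in Y_2$ and $(r(P_{k-1}), r(w_k)) \in Y_3$ supplied by $C$'s unchanged prefix, yields $r(y_{k-1}) \in M_{j' + (j-k) + 1}$; at $k = 2$ this gives $r(y_1) \in M_{j'+j-1}$, so $f = (r(y_1), r(P_1))$ is marked at level $j' + j - 1 \leq i$.

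The main obstacle is the backward propagation step itself: verifying that the required $Y_2$ and $Y_3$ edges actually exist in $\mathcal{B}(G)$ along the unchanged prefix of $C$ under both colourings. This requires two observations: first, each $w_k y_k$ is a coloured edge in $G$ (hence yielding a $Y_2$-edge between their representatives), because via Lemma~\ref{lma:mapping} the process $C$ corresponds to a consistent Vizing chain in which $y_k$ is the final vertex of the $k$-th fan $F_k$ and $w_k y_k$ is a coloured fan edge; second, each $w_k$ lies within graph-distance $\ell$ of an endpoint of the maximal bichromatic path containing $P_{k-1}$, which follows from the step-length bound $|P_{k-1}| \leq \ell - 1$ together with the fact that $y_{k-1}$ is always an endpoint of that maximal path (one of the two colours along $P_{k-1}$ is available at $y_{k-1}$ by the consistency of the extension, Lemma~\ref{lma:fan2}).
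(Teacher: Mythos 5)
Your overall mechanism is the same as the paper's: locate the first step $j$ at which the witnessing process changes, and propagate backwards through the sets $M_k$ using the $Y_2$, $Y_3$, $Y_1$ in-neighbourhood structure of $\mathcal{B}(G)$ supplied by the unchanged prefix (your iteration $r(y_{k-1}) \in \psi(\{r(y_k)\})$ is exactly the paper's $y_1 \in \psi^{j-1}(\cdot)$ argument, and your two "obstacle" observations about the existence of the needed $Y_2$ and $Y_3$ edges are correct and match the paper's reasoning). However, there is a genuine gap in your inductive step: you assert that the tail $C' = w_j + y_j + P_j + \dots + P_i$ has its data structure $\mathcal{D}_{i-j+1}(f_j, w_j)$ "affected" and then invoke the induction hypothesis to conclude $f_j$ is marked. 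The induction hypothesis only applies if $\mathcal{D}_{i-j+1}(f_j,w_j)$ is actually \emph{incorrect}, and a change to one particular $(i-j+1)$-step process starting with $w_j + y_j + P_j$ does not imply this: that data structure stores only the (bounded number of) \emph{shortest} such processes, and $C'$ need not be among them, nor need the stored ones be touched by the update. If the tail's data structure remains correct, your chain of reasoning yields no mark at $f_j$ and the backward propagation has nowhere to start.

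The repair is to drop the detour through the induction hypothesis and argue directly — as the paper does — that a change first manifesting at step $j$ forces the starting point of the propagation into $R = M_1$: if the change is that the maximal bichromatic path containing $P_j$ changed, then $y_j$ is an endpoint of a new or of a no-longer-existing bichromatic path, so $r(y_j) \in R$ (your own base case and your "$j'=1$" sub-case essentially contain this observation); if the change is that the invocation of Lemma~\ref{lma:fan2} at $w_j$ is altered, then some vertex in the coloured inclusive neighbourhood of $w_j$ is incident to $e$, which places the relevant representative in $R$ via the "$1$-neighbourhood of $u$ and $v$" part of $R$'s definition. With $j'=1$ fixed in this way, your backward propagation along the unchanged prefix gives $r(y_1) \in M_j$ and hence $f$ marked with $j \leq i$, and the strong induction on $i$ becomes unnecessary.
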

\begin{proof}
    Let $f = x_2x_1$, $w$ and $i$ be resp.\ an edge, a vertex and index such that $\mathcal{D}_{i}(f,w)$ is no longer correct. We will show that then $f$ is marked with $j \leq i$. 

    There are two reasons why $\mathcal{D}_{i}(f,w)$ might no longer be correct. 
    The first reason $A)$ is that it contained an $i$-step process with step-length $\leq \ell$ which used a changed bichromatic path and so is no longer valid. 
    The second reason $B)$ is that changes to the colouring $c$ causes some invocations of Lemma~\ref{lma:fan2} to be altered and so some of the stored processes are no longer semi-consistent. Of course both reasons could be present at the same time. Observe that it is sufficient to show that if any $i$-step process with step-length $\leq \ell$ beginning with $w+x_2+x_1$ can reach either 1) a changed bichromatic path in $(i-1)$-steps or less or $2)$ any vertex in the one-hop neighbourhood of $e$ in $(i-1)$-steps or less, then $f$ is marked with $j \leq i$. 
    Indeed, if $1)$ does not hold, then the reason $A)$ can never occur. Similarly, if $2)$ does not hold, then reason $B)$ cannot occur. 

    Thus it is sufficient to show that if $1)$ or $2)$ holds then $f \in E_{Y_1, \text{in}}(N_{Y_3, \text{in}}(N_{Y_2, \text{in}}(M_{j-1})))$ for some $2 \leq j \leq i$ or $f$ is incident to a vertex in $R$. 
    Indeed, then $f$ is marked with $j \leq i$.

    We begin by assuming that $1)$ holds.
    This means that after the changes to $c$, there is an $i$-step process $w_1+y_1+P_1+\dots + w_i + y_i + P_i$ with $w_1 = w$, $y_1 = r^{-1}(x_2)$, and $P_1 \in r^{-1}(x_1)$ using a changed bichromatic path $P$ during the $j^{\text{th}}$ step i.e.\ $P_j$ is a sub-path of $P$. 
    Let $j \leq i$ be the smallest number for which the above is true.

    This means that before the change to $c$, $w_1+y_1+P_1+\dots + w_j + y_j$ was also an initial segment of some $j$-step process.
    Hence, this initial segment corresponds to a directed path in $\mathcal{B}(G)$ of the form $r(w_1),r(y_1),r(P_1), \dots, r(w_j),r(y_j)$ present both before and after the update to $c$. This follows from the minimality of $j$ and the discussion directly after Claim~\ref{clm:standard}. 
    Observe that then $y_j \in R$, $(r(w_j)r(y_j)) \in Y_2$, $(r(P_{j-1})r(w_j)) \in Y_3$. 
    Hence, if $j = 1$, then $f = (x_2x_1) = (r(y_1)r(P_1))$ is incident to $R$, and so we would be done. 
    More generally note that for all $k \leq j-1$, we have that $(r(y_k)r(P_k)) \in Y_1$, $(r(w_k)r(y_k)) \in Y_2$, and if $k>1$ that $(r(P_{k-1})r(w_k)) \in Y_3$. 
    Thus if $\psi: \mathcal{P}(V(\mathcal{B}(G))) \rightarrow \mathcal{P}(V(\mathcal{B}(G)))$ maps a set $A \in \mathcal{P}(V(\mathcal{B}(G)))$ to the set $N_{Y_1, \text{in}}(N_{Y_3, \text{in}}(N_{Y_2, \text{in}}(A)))$, then $r(y_{k-1}) \in \psi(\{r(y_{k})\})$ for $1 < k \leq j$. 
    In particular, it follows by induction that $y_{1} \in \psi^{j-1}(\{y_j\})$. 
    So if $y_j \in R$, then $y_{1} \in M_{j}$ and therefore $f = (r(y_{1})r(P_{1})) \in E_{Y_1, \text{in}}(N_{Y_3, \text{in}}(N_{Y_2, \text{in}}(M_{j-1})))$. 

    On the other hand, assume that $2)$ holds. Then there is an $i$-step process $w_1+y_1+P_1+\dots + w_i + y_i + P_i$ with $w_1 = w$, $y_1 = r^{-1}(x_2)$, and $P_1 \in r^{-1}(x_1)$ such that for some $j$ we have $w_j = z$ for some $z \in N^{1}(e)$ in the coloured neighbourhood of $e$. I.e.\ we only consider neighbours of $z$ along edges that are coloured. 
    Indeed, otherwise the application of Lemma~\ref{lma:fan2} cannot be changed at $z$, as it depends only on the colours of edges incident to vertices in the $1$-hop neighbourhood of $z$. 
    WLOG we can assume that $j$ is the smallest such index among all choices.
    
    If $j = 1$, then we are now done as $y_1 \in N^{1}(z) \subset N^2(e)$, and therefore $f$ is incident to a vertex in $R$.
    For $j > 1$ we can conclude, similarly to case $1)$, that $r(y_{1}) \in \psi^{j-1}(R)$ and therfore that $f \in E_{Y_1, \text{in}}(N_{Y_3, \text{in}}(N_{Y_2, \text{in}}(M_{j-1})))$ and thusly marked with $j$. This concludes the proof of the lemma.
\end{proof} 
The algorithm maintain sets $S^{i}_{w,y,P}$ for all $i \leq \sqrt{\log n}$ in accordance with the description before Corollary~\ref{cor:setfuse}, so since the lemma above together with induction shows that these sets are properly updated, we can apply Corollary~\ref{cor:setfuse}. 
Hence, we conclude that for any valid choice of $w,y$, and $P$, we find that the set $S^{\sqrt{\log n}}_{w,y,P}$ will be non-empty. Finally, Lemma~\ref{lma:fuse} and the subsequent discussion proves that if $w$ is $(a,i, \emptyset, \emptyset, P)$-heavy, then the set $S^{i}_{w,y,P}$ is non-empty. This concludes correctness of the algorithm. 

In the following paragraph, we provide implementations of the data structures used by the algorithm and analyse the update time. 

\paragraph{Implementation and analysis:}
We first briefly comment on implementations certain data structures. For each BBST, we can use Red-blacks trees~\cite{cormen2022introduction} to support basic operations in $O(\log n)$ worst-case time. For Top-trees we use the version from~\cite{HolmEtAL}.
As for the analysis, we first show that not too many edges are marked: 
\begin{lemma} \label{lma:markedCount}
    In each invocation of the marking scheme, at most $2^{3i}(\Delta_{\max}+1)^{6i}|R|$ edges are marked with $i$. Hence $|M_{i}| \leq 2^{3i}(\Delta_{\max}+1)^{6i}|R|$.
\end{lemma}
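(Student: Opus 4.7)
The plan is to prove both bounds (on $|M_i|$ and on the number of edges marked with $i$) by a straightforward induction using only the bounded in-degree of $\mathcal{B}(G)$ established in Claim~\ref{clm:standard}.

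First I will recall that Claim~\ref{clm:standard} gives $d_{\text{in}}(v) \leq 2(\Delta_{\max}+1)^2$ for every $v \in V(\mathcal{B}(G))$. Since the in-edges belonging to any fixed type $Y_k$ (for $k \in \{1,2,3\}$) are a subset of all in-edges, this immediately implies that for any subset $A \subseteq V(\mathcal{B}(G))$ we have $|E_{Y_k,\text{in}}(A)| \leq 2(\Delta_{\max}+1)^2 \, |A|$ and therefore $|N_{Y_k,\text{in}}(A)| \leq 2(\Delta_{\max}+1)^2 \, |A|$. This is the only structural ingredient needed.

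Next I will use the recursive definition $M_{i+1} = N_{Y_1,\text{in}}(N_{Y_3,\text{in}}(N_{Y_2,\text{in}}(M_i)))$ from the marking scheme. Applying the previous bound three times in succession yields $|M_{i+1}| \leq (2(\Delta_{\max}+1)^2)^3 |M_i| = 8(\Delta_{\max}+1)^6 |M_i|$. Combined with the base case $|M_1| = |R|$, a straightforward induction on $i$ gives
\[
|M_i| \leq \bigl(8(\Delta_{\max}+1)^6\bigr)^{i-1} |R| \leq 2^{3i}(\Delta_{\max}+1)^{6i}|R|,
\]
which is the second assertion of the lemma.

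For the bound on the number of edges marked with $i$, note that for $i \geq 2$ these edges form the set $E_{Y_1,\text{in}}(N_{Y_3,\text{in}}(N_{Y_2,\text{in}}(M_{i-1})))$. Applying the in-degree bound three times (once for $E_{Y_1,\text{in}}$, twice for the neighbourhood expansions) gives at most $8(\Delta_{\max}+1)^6|M_{i-1}| \leq (8(\Delta_{\max}+1)^6)^{i-1}|R| \leq 2^{3i}(\Delta_{\max}+1)^{6i}|R|$ such edges. The only minor subtlety, and what I expect to be the main obstacle, is the base case $i=1$: here one has to account both for edges in $Y_1$ incident to $R$ (trivially at most $2(\Delta_{\max}+1)^2|R|$ by the in-degree bound) and for the ``new edges in $\mathcal{B}(G)$'' mentioned in the marking scheme. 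For the latter I will observe that the relevant marked edges are those in $Y_1$ (since the data structures $\mathcal{D}_i(\cdot,\cdot)$ are indexed by $Y_1$-edges by definition), and that every newly created $Y_1$-edge is incident to a new bichromatic-path representative, whose $X_2$-endpoint lies in $R$ by construction of $R$. Thus the new $Y_1$-edges are already counted among the edges in $Y_1$ incident to $R$, and the total is at most $2(\Delta_{\max}+1)^2|R| \leq 2^{3}(\Delta_{\max}+1)^{6}|R|$, completing the $i=1$ case.
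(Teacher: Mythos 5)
Your proof is correct and follows essentially the same route as the paper: induction on $i$, applying the degree bounds of Claim~\ref{clm:standard} three times per level to get the factor $2^3(\Delta_{\max}+1)^6$, with the base case handled via $|M_1|=|R|$ and the edges incident to $R$. One cosmetic point: the $Y_1$-edges incident to $R\subseteq X_2$ are \emph{out}-edges of those vertices, so the relevant bound is the out-degree item of Claim~\ref{clm:standard} rather than the in-degree item, but the numerical bound $2(\Delta_{\max}+1)^2$ is the same, so nothing breaks.
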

\begin{proof}
    The proof is by induction on $i$. 
    For $i = 1$, we note that by item two and item three of Claim~\ref{clm:standard} at most $4(\Delta_{\max}+1)^2\cdot{} |R|$ edges are initially marked $1$. Also $|M_1| = |R|$.
    For the induction step, we assume the statement for all $j \leq i-1$.  We note that applying item two of Claim~\ref{clm:standard} thrice shows that $M_{i} \leq 2^3(\Delta_{\max}+1)^6|M_{i-1}|$ and that the number of edges marked with $i$ is also at most $2^3(\Delta_{\max}+1)^6|M_{i-1}|$. Applying the induction hypothesis now yields the result. 
\end{proof}
Next we show that $R$ is not too big:
\begin{lemma}\label{lma:RCount}
    In each invocation of the marking scheme, we have that $|R| \leq 2+2(\Delta_{\max}+1)^2+6(\Delta_{\max}^2+1)^2 \leq 10(\Delta_{\max}^2+1)^2$.
\end{lemma}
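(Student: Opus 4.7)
My plan is to decompose $R$ into three disjoint subsets along the lines of the definition of $R$: let $R_1$ contain the representatives of vertices in the inclusive $1$-hop neighborhood of $\{u,v\}$ in $G$; let $R_2$ contain the representatives of endpoints of newly-created maximal bichromatic paths; and let $R_3$ contain the representatives of endpoints of destroyed maximal bichromatic paths. Then $|R| \leq |R_1|+|R_2|+|R_3|$, and I will bound each of the three terms separately.

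For $|R_1|$, I just observe that the inclusive $1$-hop neighborhood of $\{u,v\}$ contains $u,v$ together with at most $\Delta_{\max}$ neighbors of $u$ and $\Delta_{\max}$ neighbors of $v$, so $|R_1| \leq 2+2(\Delta_{\max}+1)$, matching the first two summands in the claim.

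For $|R_2|+|R_3|$, the key observation is that a call to $\texttt{colour}(e,\kappa)$ or $\texttt{un-colour}(e)$ only modifies the pseudoforests $F_{\kappa,\kappa'}$ for $\kappa' \neq \kappa$ (where $\kappa$ is the color being assigned or removed), since bichromatic paths in $F_{\kappa_1,\kappa_2}$ with $\kappa \notin \{\kappa_1,\kappa_2\}$ are untouched. That is at most $\Delta_{\max}$ affected pseudoforests. Within each $F_{\kappa,\kappa'}$ only the connected components previously containing $u$ and $v$ can change; since $u$ and $v$ have degree at most $1$ in $F_{\kappa,\kappa'}$ just before the edge modification (they sit on at most one $\kappa'$-edge, and have no $\kappa$-edge either right before a $\texttt{colour}$ or right after an $\texttt{un-colour}$), each of their components is either an isolated vertex or a path ending at them. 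A short case analysis on ``merge two components'', ``close a cycle'', or ``split a path'' shows that at most $O(1)$ maximal bichromatic paths appear and at most $O(1)$ disappear per affected pseudoforest, each with at most two endpoints.

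Summing over the at most $\Delta_{\max}$ affected pseudoforests yields $|R_2|+|R_3| = O(\Delta_{\max})$, which comfortably fits inside the generous allowance $6(\Delta_{\max}^2+1)^2$ stated in the lemma; combining with the bound on $|R_1|$ gives the claim, and the loose final inequality $\leq 10(\Delta_{\max}^2+1)^2$ is then immediate. I expect no real obstacle here: the entire argument is careful combinatorial bookkeeping on connected components of $\Delta_{\max}$ graphs of maximum degree at most $2$, each perturbed by the insertion or removal of a single edge.
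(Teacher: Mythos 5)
Your proof is correct and follows essentially the same decomposition as the paper's: the contribution of the inclusive one-hop neighbourhood of $u,v$, plus the endpoints of the maximal bichromatic paths that appear or disappear. The one substantive difference is that you observe only the $\Delta_{\max}$ pseudoforests $F_{\kappa,\kappa'}$ with $\kappa' \neq \kappa$ can change, giving an $O(\Delta_{\max})$ bound on the path-endpoint contribution, whereas the paper simply counts $3$ altered paths for each of the $(\Delta_{\max}+1)^2$ colour pairs and gets $O(\Delta_{\max}^2)$; your sharper count still sits comfortably inside the stated bound, so both arguments establish the lemma.
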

\begin{proof}
    We note that at most $3(\Delta_{\max}^2+1)^2$ bichromatic paths are altered after a removal or insertion of a coloured edge: namely 3 for each pair of colours; the bichromatic path that used to contain the edge and the two new bichromatic paths that this path was split into (for insertion it is of course opposite; one new bichromatic path that is the fusion of two old). 
    There are at most $(\Delta_{\max}+1)^2$ choices for such colour pairs. Note that each path has at most $2$ endpoints. 
    Finally, there are at most $2(\Delta_{\max}+1)^2$ vertices in the $2$-hop neighbourhood of $u$ and $v$.
    Adding these quantities gives the bound.
\end{proof}
These two lemmas allow us to analyse the time it takes to mark all edges:
\begin{lemma}
    It takes $\tilde{O}(2^{3(\sqrt{\log n}+4)}(\Delta_{\max}+1)^{6(\sqrt{\log n}+2)})$ time to mark all edges. 
\end{lemma}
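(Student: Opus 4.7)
The plan is to show that the marking procedure decomposes into a bounded number of levels, and at each level the work is proportional to $|M_i|$ times a factor polynomial in $\Delta_{\max}$, then sum the resulting geometric series.

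First I would recall the two key size bounds. By Lemma~\ref{lma:RCount}, $|R| \leq 10(\Delta_{\max}+1)^2$, and by Lemma~\ref{lma:markedCount}, $|M_i| \leq 2^{3i}(\Delta_{\max}+1)^{6i}|R|$, so
\[
|M_i| \;\leq\; 10 \cdot 2^{3i}(\Delta_{\max}+1)^{6i+2}.
\]
Combining this with the fact that $i \leq \sqrt{\log n}$ throughout, the largest set encountered has size at most $\tilde{O}(2^{3\sqrt{\log n}}(\Delta_{\max}+1)^{6\sqrt{\log n}+2})$.

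Next I would analyse the time spent at one level. The transition from $M_i$ to $M_{i+1}$ requires three successive in-neighbourhood expansions: first $N_{Y_2, \text{in}}$, then $N_{Y_3, \text{in}}$, then $N_{Y_1, \text{in}}$. By the second item of Claim~\ref{clm:standard}, every in-degree in $\mathcal{B}(G)$ is at most $2(\Delta_{\max}+1)^2$, so each expansion multiplies the size by at most this factor and can be carried out in time proportional to the size of the resulting set (using pointers from the explicitly stored $\mathcal{B}(G)$, plus $O(\log n)$ per edge for bookkeeping in the BBST's storing the marks). Consequently, the work at level $i$ is at most
\[
\tilde{O}\bigl(|M_i| \cdot (2(\Delta_{\max}+1)^2)^3\bigr) \;=\; \tilde{O}\bigl(|M_i| \cdot 2^3(\Delta_{\max}+1)^6\bigr),
\]
and producing the set of edges marked with $i+1$ costs the same asymptotically, since these edges are precisely $E_{Y_1,\text{in}}(N_{Y_3, \text{in}}(N_{Y_2, \text{in}}(M_i)))$ and each edge is touched once.

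Finally I would sum over $i = 1, \dots, \sqrt{\log n}$. Since the per-level cost grows geometrically in $i$, the sum is dominated by $i = \sqrt{\log n}$, yielding
\[
\tilde{O}\!\left( 2^{3\sqrt{\log n}} (\Delta_{\max}+1)^{6\sqrt{\log n}+2} \cdot 2^3 (\Delta_{\max}+1)^6 \right) \;\subseteq\; \tilde{O}\!\left(2^{3(\sqrt{\log n}+4)}(\Delta_{\max}+1)^{6(\sqrt{\log n}+2)}\right),
\]
where the additional slack absorbs the initial marking of the $1$-marked edges (those incident to $R$, of which there are at most $O((\Delta_{\max}+1)|R|)$), the $O(\sqrt{\log n})$-factor from summing over levels, and the $O(\log n)$ factors from BBST operations. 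The argument is essentially mechanical; the only mild subtlety is confirming that the marks themselves are cheap to maintain (each edge is marked at most once with its smallest index, so over-writing by a larger index can be avoided in constant time per attempted mark), but this is immediate from the order in which the levels are processed.
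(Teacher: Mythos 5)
Your proposal is correct and follows essentially the same route as the paper: bound $|R|$ via Lemma~\ref{lma:RCount} and $|M_i|$ via Lemma~\ref{lma:markedCount}, charge (near-)constant work per marked edge for the backwards in-neighbourhood search, and sum the geometric series over $i \leq \sqrt{\log n}$. You are merely a bit more explicit than the paper about using the in-degree bound from Claim~\ref{clm:standard} to justify the per-level expansion cost, which is a fine addition but not a different argument.
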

\begin{proof}
    We can mark each edge with constant overhead by performing a backwards BFS search along the relevant edges (assuming we store different types of edges in different BBST's). 
    Hence, the Lemma follows by combining Lemma~\ref{lma:markedCount} with Lemma~\ref{lma:RCount}, and summing over all choices of $1 \leq i \leq \sqrt{\log n}$:
    \begin{align*}
        10(\Delta_{\max}+1)^{2}\sum \limits_{i = 1}^{\sqrt{\log n}} 2^{3i}(\Delta_{\max}+1)^{6i} &\leq 10(\Delta_{\max}+1)^{2}\cdot{}2^{3(\sqrt{\log n} + 1)}(\Delta_{\max}+1)^{6(\sqrt{\log n} + 1)} \\
        &\leq 2^{3(\sqrt{\log n} + 4)}(\Delta_{\max}+1)^{6(\sqrt{\log n }+ 2)}
    \end{align*}
\end{proof}
Next, we bound the time it takes to check whether an extension is overlapping:
\begin{lemma} \label{lma:checkExtension}
    Given a $1$-step process $w_1+y_1+P_1$, with $w_2$ the last vertex of $P_1$,
    and a semi-consistent and augmenting $(i-1)$-step process $w_2+y_2+P_2+\dots + w_{i}+y_{i}+P_{i}$, we can check if $C = w_1+y_1+P_1+\dots + w_{i}+y_{i}+P_{i}$ is a semi-consistent and augmenting $i$-step process in $\tilde{O}((i+1)\cdot{}\ell\cdot{}(\Delta_{\max}+1)^3)$ time. 
\end{lemma}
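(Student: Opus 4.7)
The plan is to exploit that $C$ has total length $O((i+1)\ell)$ in terms of edges and vertices, so we can afford a constant number of passes through $C$ with $O((\Delta_{\max}+1)^{2})$ work per visited vertex (for a two-hop neighbourhood probe). All of the conditions of Definition~\ref{def:sc}, together with the non-overlapping conditions, are already known to hold for the suffix $w_2+y_2+P_2+\dots+w_i+y_i+P_i$ by hypothesis; so we only have to verify the new interactions introduced by prepending $w_1+y_1+P_1$, plus the single semi-consistency check at the seam between $P_1$ and $w_2+y_2+P_2$.

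First I would build, in $O((i+1)\ell)$ time, two hash tables: one that maps every vertex of $C$ to the list of pairs $(k,\text{role})$ recording its positions among the $w_k$, $y_k$, and $V(P_k)$, and one that maps every edge of some $P_k$ to the index $k$. Next I would verify the semi-consistency condition at the seam: compute the colouring $c'$ obtained by shifting along $P_1$ (which only requires walking $P_1$), invoke Lemma~\ref{lma:fan2} at the end of $P_1$ with the two colours forced by $P_1$, and check that the unique fan and bichromatic extension it produces of length $|P_2|+1$ equals $w_2+y_2+P_2$. By the running time remarks following Lemma~\ref{lma:fan2}, this step costs $\tilde{O}(\Delta_{\max}^{3}+\Delta_{\max}\cdot\ell)$.

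Then I would verify the remaining non-overlapping and disjointness conditions by walking $P_1$ and $w_1$ exactly once. For every vertex $v\in V(P_1)$ I would enumerate its $\leq (\Delta_{\max}+1)^{2}$ two-hop neighbours and look each one up in the vertex hash table; any hit of the form $(k,w_k)$ with $k\geq 2$ reports that $w_k$ lies in $N^{2}(P_1)$ and hence violates the first bullet of Definition~\ref{def:sc} for the prepended step. Similarly, a direct vertex-hash lookup of $v\in V(P_1)$ itself detects any $(k,V(P_k))$ hit with $k\geq 3$, which would violate the non-overlapping clause $V(P_1)\cap V(P_j)=\emptyset$ for $j\geq 3$. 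For each edge of $P_1$ I would consult the edge hash table to detect repeated-edge violations with any $P_k$. Symmetrically, I would enumerate $N^{2}(w_1)$ once and look up each of its vertices to detect violations of the third bullet of Definition~\ref{def:sc} (i.e., some $P_j$ meeting $N^{2}(w_1)$ for $j\geq 2$). Finally, using the representatives $r(P_k)$ already available from $\mathcal{B}(G)$, I would check in $O(i)$ time that $r(P_1)\neq r(P_k)$ for all $k\geq 2$, which disposes of the maximal-bichromatic-path condition of non-overlapping processes. Augmentation of $C$ is immediate from the fact that the input $(i-1)$-step process is augmenting, since $P_i$ is unchanged.

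Summing the work: the two hash tables cost $O((i+1)\ell)$, the Lemma~\ref{lma:fan2} check costs $\tilde{O}(\Delta_{\max}^{3}+\Delta_{\max}\cdot\ell)$, the pass over $V(P_1)$ and $N^{2}(w_1)$ costs $O(\ell\cdot(\Delta_{\max}+1)^{2})$, and the representative check is $O(i)$. All of these fit inside $O((i+1)\cdot\ell\cdot(\Delta_{\max}+1)^{3})$. The main subtlety is not running-time arithmetic but making sure that, because the $(i-1)$-step suffix is assumed semi-consistent in isolation, the only semi-consistency condition that could fail after prepending $w_1+y_1+P_1$ is the consistency of the extension at the seam between $P_1$ and $w_2+y_2+P_2$; once this is argued (by inspecting Definition~\ref{def:sc} condition by condition, as above), a single application of Lemma~\ref{lma:fan2} suffices and the rest of the verification is purely combinatorial look-ups.
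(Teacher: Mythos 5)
Your proof is essentially the paper's argument with the scan direction reversed: the paper temporarily marks the $2$-hop neighbourhood of $w_1+y_1+P_1$ together with the edges of $P_1$ and then walks the suffix process testing each vertex and edge against the marks, whereas you index the suffix in a dictionary and scan the prefix's $2$-hop neighbourhood; both cost $O(\ell(\Delta_{\max}+1)^2+(i+1)\ell)$ plus one $\tilde O(\Delta_{\max}^3)$ invocation of Lemma~\ref{lma:fan2} at the seam. One caveat worth fixing: your rule that \emph{any} hit $(k,w_k)$ with $k\ge 2$ in $N^2(P_1)$ is a violation would reject every candidate, since $r^{-1}(w_2)$ is by hypothesis the second-to-last vertex of $P_1$ and thus always lies in $N^2(P_1)$; the relevant clause of Definition~\ref{def:sc} at $j=2$ only forbids $w_2\in N^2(w_1)$, so the $N^2(P_1)$ test should be restricted to $k\ge 3$ and $w_2$ handled in the separate $N^2(w_1)$ scan (this also sits in the second bullet of Definition~\ref{def:sc}, not the first). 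Finally, since the overall algorithm of Theorem~\ref{thm:colAlg} is deterministic, the hash tables should be balanced search trees as in the paper, which costs only the logarithmic factors the paper absorbs into its $\tilde O$ notation.
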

\begin{proof}
    We check whether $w_2, y_2$ and $P_2$ is consistent with Lemma~\ref{lma:fan2} in $\tilde{O}(\Delta_{\max}^{3})$ time. 
    Next we can can trace and temporarily mark the two-hop neighbourhood of every vertex in $w_1+y_1+P_1$, as well as the edges in $P_1$. 
    Finally, we can follow along the $(i-1)$-step process and check the conditions of Definition~\ref{def:sc} by using the temporary marks. 
    It takes $\tilde{O}((i+1) \ell)$ time to trace the process and $\tilde{O}(\ell (\Delta_{\max}+1)^2)$ to temporarily mark all of the edges and vertices. 
    We note here that any update time required to correctly locate edges or vertices in memory is captured by the $\tilde{O}$-notation which accounts for look-ups in the BBST. 
    We can use the Top-trees to check whether a subpath from a previously used maximal bichromatic path is present. 
\end{proof}
In particular, the lemma above essentially says that we can check if a process is blocked in $\tilde{O}((i+1)\cdot{}\ell\cdot{}(\Delta_{\max}+1)^3)$ time. 
Indeed, checks not accounted for above can be done in a similar fashion. Hence, we have the following lemma:
\begin{lemma} \label{lma:checkblock}
    The algorithm can check if a particular extension point $x$ is blocked in $\tilde{O}(\sqrt{\log n}\cdot{}\ell\cdot{}a^2\cdot{}(\Delta_{\max}+1)^3)$ time. In addition, it can mark any non-blocked strings with $O(1)$ overhead. 
\end{lemma}
\begin{proof}
    For any choice of $x$, we need only check if it is blocked at 3 different levels. In total at most $O(a^{2})$ process represented by a string from the relevant set needs to be checked. 
    Each process has at most $\sqrt{\log n}$ steps, and the lemma follows by applying Lemma~\ref{lma:checkExtension}.
\end{proof}
Finally, we analyse the time it takes to update the data structures:
\begin{lemma} \label{lma:finalUpdTime}
    Let $f$ be marked with $j \leq i$. Then for any $w$, we can update $\mathcal{D}_{i}(f,w)$ in $\tilde{O}( 3^{\sqrt{\log n}}(\Delta_{\max}+1)^{3\sqrt{\log n}+4}\cdot{}\ell \cdot{} a^2 \cdot{} \log^2{n})$ time. 
\end{lemma}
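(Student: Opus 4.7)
The plan is to account separately for three sources of cost in one update of $\mathcal{D}_{i}(f,w)$: (i) constructing, for each extension point, the fan and bichromatic path prescribed by Lemma~\ref{lma:fan2}; (ii) checking consistency of the candidate $(i-1)$-step processes pulled from the neighbouring data structures; and (iii) BBST bookkeeping for the values $L_x$ and for $\mathcal{D}_{i}(f,w)$ itself. Writing $f = x_2x_1$, every valid $P_1 \in r^{-1}(x_1)$ corresponds to one of at most $\ell$ extension points $x$, since $x$ lies among the first $\ell$ vertices of the maximal bichromatic path represented by $x_1$; for each such $x$, Lemma~\ref{lma:fan2} produces $y_2(x),P_2(x)$ in $\tilde{O}(\Delta_{\max}^{3})$ time, contributing $\tilde{O}(\ell\Delta_{\max}^{3})$ overall, which is absorbed into the final bound.

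The dominant cost comes from the consistency checks. For each $x$, the algorithm walks the list of candidate $(i-1)$-step extensions retrieved from $\mathcal{D}_{i-1}(r(y_2(x))r(P_2(x)),x)$ in increasing order of length, and each examined candidate is either accepted and inserted into $\mathcal{D}_{i}(f,w)$ or rejected by the overlap test of Lemma~\ref{lma:checkExtension}. Applying Lemma~\ref{lma:overlappp} to the fixed $1$-step process $w + r^{-1}(x_2) + P_1(x)$, the number of rejections per $x$ is at most $i\cdot 3^{i}(\Delta_{\max}+1)^{3i+2}\cdot \ell$, so summed over the at most $\ell$ extension points the total number of rejections is $\tilde{O}\!\paren{\ell^{2}\cdot 3^{\sqrt{\log n}}(\Delta_{\max}+1)^{3\sqrt{\log n}+2}}$. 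The total number of acceptances across all $x$ is bounded by the capacity of $\mathcal{D}_{i}(f,w)$, namely $\sqrt{\log n}(\sqrt{\log n}-i)\cdot 3^{\sqrt{\log n}}(\Delta_{\max}+1)^{3\sqrt{\log n}+2}\cdot \ell$. Multiplying by the per-check cost $\tilde{O}((i+1)\cdot \ell \cdot (\Delta_{\max}+1)^{3})$ from Lemma~\ref{lma:checkExtension} and using $i \leq \sqrt{\log n}$, the total time spent on checks is
\[
\tilde{O}\!\paren{\ell^{3}\cdot \log n \cdot 3^{\sqrt{\log n}}(\Delta_{\max}+1)^{3\sqrt{\log n}+5}},
\]
from which the claimed bound follows after absorbing a factor of $(\Delta_{\max}+1)$ and a factor of $\log n$ into the $\tilde{O}$.

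The BBST bookkeeping is straightforward: extract-min and replace on the BBST of $L_x$ values cost $O(\log \ell)$ per operation, each query to $\mathcal{D}_{i-1}(\cdot,\cdot)$ for the next-shortest entry is $O(\log n)$, and each insertion into $\mathcal{D}_{i}(f,w)$ is $O(\log n)$, all absorbed into $\tilde{O}$. The main obstacle I anticipate is justifying the ``acceptances $+$ rejections'' split rigorously: one needs Claim~\ref{clm:LB} together with the invariant that the per-$x$ candidates are examined in strictly increasing order of length to argue that every examined candidate that fails to be inserted into $\mathcal{D}_{i}(f,w)$ is genuinely a Lemma~\ref{lma:overlappp}-style overlap rejection, and in particular is counted by the bound above. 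Given the structure already established earlier in the section, this should go through without additional machinery.
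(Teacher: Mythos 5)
Your proposal is correct and follows essentially the same route as the paper's proof: bound the number of candidate extensions that must be examined by the capacity of $\mathcal{D}_{i}(f,w)$ plus the Lemma~\ref{lma:overlappp} bound on overlapping rejections, multiply by the per-candidate cost of Lemma~\ref{lma:checkExtension}, and absorb the BBST overheads and the Lemma~\ref{lma:fan2} constructions. Your accounting of the sum over the $\ell$ extension points is in fact slightly more explicit than the paper's, and the residual factor-of-$(\Delta_{\max}+1)$ mismatch you flag at the end is an imprecision already present in the paper's own stated exponent, so it is not a gap in your argument.
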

\begin{proof}
    Note firstly that the $\tilde{O}$-notation takes care of any $\log n$ factors arising from inserting, searching or deleting things from BBST's. 

    We need to check at most $\ell$ extensions $x$. Each can be checked in $\tilde{O}(\sqrt{\log n}\cdot{}\ell\cdot{}a^2\cdot{}(\Delta_{\max}+1)^3)$ time by Lemma~\ref{lma:checkblock}. 
    After checking these extensions, we can create the set of strings to store in $\mathcal{D}_{i}(f,w)$ in $O(\tfrac{a}{2})$ time by storing at least one non-blocked string for each non-blocked extension. 
\end{proof}
Picking $\ell = \ellval$ (any slightly larger value also works) and $\avallb \leq a \leq \aval$, and combining all of the above finalizes the proof of Theorem~\ref{thm:colourStructures}. 

\subsection{The proof of Theorem~\ref{thm:colAlg}}
In this section, we present a proof of Theorem~\ref{thm:colAlg} (restated below for convenience). 
\begin{theorem}[Identical to Theorem~\ref{thm:colAlg}]
    Let $G$ be a dynamic graph, and let $\Delta_{\max}$ be a known upper-bound on the maximum degree throughout the entire update sequence. 
    Then, we can maintain a proper $(\Delta_{\max}+1)$-edge-colouring of $G$ in $(\Delta_{\max}+1)^{\tilde{O}(\sqrt{\log n})}$ worst-case update time per operation.
\end{theorem}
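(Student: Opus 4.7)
The plan is to reduce each update to a bounded number of calls to the four primitive operations (\texttt{add}, \texttt{remove}, \texttt{colour}, \texttt{un-colour}) already supported by Theorem~\ref{thm:colourStructures}, where the total number of primitive calls per update is bounded by the length of a short augmenting Vizing chain. Edge deletions are easy: given a deleted edge $e$, we first call $\texttt{un-colour}(e)$ and then $\texttt{remove}(e)$, each costing $(\Delta_{\max}+1)^{\tilde O(\sqrt{\log n})}$ by Theorem~\ref{thm:colourStructures}. For insertions, we $\texttt{add}(e)$ the new edge $e=uv$ (it is initially left blank) and then use the maintained structures to find an augmenting consistent non-overlapping multi-step Vizing chain on $e$; shifting this chain then completes the colouring.

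To locate the augmenting chain we first invoke Lemma~\ref{lma:fan1} on $e$ to produce the unique consistent maximal fan $F$ centered at $u$, and extend it to a consistent $1$-step Vizing chain $(F'+P)_\ell$ with $\ell = 80(\Delta_{\max}+1)^7\cdot 2^{\sqrt{\log n}}$; if this chain is already augmenting we simply shift it. Otherwise, let $y_1$ be the endpoint of $F'$ reached along the bichromatic path and let $P_1$ be the sub-path of $P$ starting at $y_1$. By Lemma~\ref{lma:processToChain}, any augmenting semi-consistent $i$-step process starting with $u+y_1+P_1$ lifts to an augmenting consistent $i$-step Vizing chain on $e$ of the same length. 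Therefore we inspect $\mathcal{D}_i(r(y_1)r(P_1),u)$ for $i=1,2,\dots,\sqrt{\log n}$, and among the stored augmenting processes (which by construction are the shortest such processes) we select the one of minimum total length. The existence of a suitable process in one of these data structures is guaranteed by Corollary~\ref{cor:ExistscVc} combined with Lemma~\ref{lma:mapping}: the augmenting consistent Vizing chain they produce has at most $\sqrt{\log n}$ steps and total length at most $80(\Delta_{\max}+1)^7\cdot 2^{\sqrt{\log n}}\sqrt{\log n}$, and maps under $\gamma$ to an augmenting semi-consistent process whose first step is precisely $u+y_1+P_1$.

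Once the chain $VC = F_1+Q_1+\dots+F_i+Q_i$ is reconstructed from the process stored in $\mathcal{D}_i$ (using Lemma~\ref{lma:processToChain} to consistently extend each fan of length $t_j+2$), we shift $VC$ by performing the sequence of single-edge recolourings dictated by $\operatorname{Shift}(c,VC)$, implementing each elementary shift $\operatorname{Shift}(c,e_1,e_2)$ as one $\texttt{un-colour}(e_2)$ followed by one $\texttt{colour}(e_1,c(e_2))$; after the whole shift we finally call $\texttt{colour}$ on the now-available edge $e$. The chain has length $(\Delta_{\max}+1)^{\tilde O(\sqrt{\log n})}$, so only that many primitive operations are invoked, each costing $(\Delta_{\max}+1)^{\tilde O(\sqrt{\log n})}$ by Theorem~\ref{thm:colourStructures}; the time to find the chain is dominated by scanning the $\sqrt{\log n}$ candidate data structures at $u$ (their sizes are already absorbed by the same bound), so the overall worst-case update time is $(\Delta_{\max}+1)^{\tilde O(\sqrt{\log n})}$.

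The main obstacle is ensuring that the shortest augmenting chain guaranteed by Corollary~\ref{cor:ExistscVc} is actually discovered by this lookup. Two points must be checked. First, the augmenting Vizing chain postulated by the corollary begins with the unique consistent fan from Lemma~\ref{lma:fan1}, which forces the first hop $u+y_1+P_1$ of its image process to match the hop we extracted from $(F'+P)_\ell$, so we are indeed querying the correct index $(r(y_1)r(P_1),u)$. Second, the data structure stores \emph{the shortest} augmenting semi-consistent $i$-step processes (up to the prescribed cap), so as long as some augmenting process of length $\le \ell\cdot\sqrt{\log n}$ exists at step count $\le \sqrt{\log n}$, it is either stored or dominated by an even shorter stored one; in either case the lookup returns an augmenting process that lifts, via Lemma~\ref{lma:processToChain}, to an augmenting consistent Vizing chain of length within our bound. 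All other components—the fan computations, the shift procedure, and the amortisation-free cost accounting—follow directly from Lemmas~\ref{lma:fan1},~\ref{lma:fan2},~\ref{lma:processToChain} and Theorem~\ref{thm:colourStructures}.
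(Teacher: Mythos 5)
Your proposal is correct and follows essentially the same route as the paper: deletions are handled by \texttt{un-colour} followed by \texttt{remove}; insertions by \texttt{add}, building the first consistent fan via Lemma~\ref{lma:fan1}, looking up the data structures $\mathcal{D}_i(r(y_1)r(P),u)$ whose contents are guaranteed non-empty by Corollary~\ref{cor:ExistscVc} together with Lemma~\ref{lma:mapping}, lifting the stored process to a consistent augmenting Vizing chain via Lemma~\ref{lma:processToChain}, and shifting it with the primitives of Theorem~\ref{thm:colourStructures}. The cost accounting (chain length times primitive cost) matches the paper's analysis, so nothing further is needed.
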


Firstly, fix parameters $\ell = \ellval$ and any value of $a$ satisfying that\\ $\avallb \leq a \leq \aval$ so that we may apply the lemmas from the earlier parts of the section. 

\paragraph{Description of the algorithm: }
Maintaining the data structures from Section~\ref{sec:algos} represents most of the work required to show Theorem~\ref{thm:colAlg}. 
The idea is the following. We will indeed maintain these data structures and ensure that after every update terminates, the colouring $c$ maintained by the algorithm of Theorem~\ref{thm:colourStructures} is a proper $(\Delta_{\max}+1)$-edge-colouring.
Given a call to $\texttt{delete}(e)$, we first call $\texttt{un-colour}(e)$ and then subsequently $\texttt{remove}(e)$ in the previous data structures and in our representation of $G$. 
Given a call to $\texttt{insert}(e)$, we do the following steps: 
\begin{itemize}
    \item Call $\texttt{add}(e)$. 
    \item Then we pick an arbitrary vertex $u \in e$, and construct a consistent fan according to Lemma~\ref{lma:fan1}. The fan ends at a vertex $y_1$ and outputs a bichromatic path $P$. 
    \item If $|P| \leq \ell+1$, we have discovered an augmenting chain, and we can immediately jump to step $5)$. 
    Otherwise, we look in $\mathcal{D}_{\sqrt{\log n}}(r(y_1)r(P),u)$ to find an augmenting and consistent stepping process. By Theorem~\ref{thm:colourStructures} this set is non-empty. 
    \item Then we convert this string into the stepping process it presents and then into a Vizing chain in its pre-image. By Lemma~\ref{lma:processToChain} this Vizing chain is augmenting. 
    \item We shift the colouring $c$ according to this chain using the operations $\texttt{un-colour}(\cdot{})$ and $\texttt{colour}(\cdot{},\cdot{})$. 
    Since the colouring is shifted along a chain, we ensure that $c$ is at all times proper, and so all of the required assumptions hold. 
    \item Finally, since the chain is augmenting, after shifting it, the remaining un-coloured edge $f$ has a free colour $\kappa$. We colour it using $\texttt{colour}(f,\kappa)$.
\end{itemize}

\paragraph{Correctness:} Theorem~\ref{thm:main} guarantees the existence of a consistent and augmenting Vizing chain of the form, we seek. 
Corollary~\ref{cor:setfuse} guarantees that the data structure $\mathcal{D}_{\sqrt{\log n}}(r(y_1)r(P),u)$ contains at least one string of the claimed form. 
Next, applying Lemma~\ref{lma:processToChain} ensures that this process guarantees a consistent and augmenting Vizing chain.
Finally, correctness follows by Theorem~\ref{thm:colourStructures}, as the Theorem guarantees the correctness of $\texttt{un-colour}(\cdot{})$ and $\texttt{colour}(\cdot{},\cdot{})$. 

\paragraph{Analysis:} The running time of $\texttt{delete}(e)$ follows immediately from Theorem~\ref{thm:colourStructures}. 
As for $\texttt{insert}(e)$, we note that since the shiftable-chain has length $\tilde{O}(\sqrt{\log n}\ell)$, we need only call $\texttt{un-colour}(\cdot{})$ and $\texttt{colour}(\cdot{},\cdot{})$ $\tilde{O}(\sqrt{\log n}(\ell + (\Delta_{\max}))$ times. 
This part of the process dominates the update time, as we can trace the process in $\tilde{O}(\sqrt{\log n} \cdot{}\ell)$ time, and compute the at most $\sqrt{\log n}$ fans in $\tilde{O}(\Delta^{3}_{\max})$ time each.
This concludes the proof.

\section{Putting it all together} \label{sct:overall}
In this Section, we combine the results of the previous section with a degree-scheduler from~\cite{duan} to show Theorem~\ref{thm:FullColour} 
\begin{theorem}[Identical to Theorem~\ref{thm:FullColour}]
    Let $G$ be a dynamic graph, and $0<\eps<1$ a given parameter.
    We can maintain a proper $(1+\eps)\Delta$-edge-colouring of $G$ in $2^{\tilde{O}_{\log \eps^{-1}}(\sqrt{\log n})}$ amortised update time per operation.
\end{theorem}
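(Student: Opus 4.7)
The plan is to combine the hierarchy of $t$-splitters from Theorem~\ref{thm:splitHier} with the low-degree colouring algorithm of Theorem~\ref{thm:colAlg} using disjoint palettes, and then to lift the assumption that $\Delta_{\max}$ is known via a standard degree-scheduler. First I would assume $\Delta_{\max}$ is given and satisfies the hypothesis $\Delta_{\max} \geq \tfrac{10^7 \log^5 n}{\eps^{3}}$ of Theorem~\ref{thm:splitHier}; the case $\Delta_{\max} < \tfrac{10^7 \log^5 n}{\eps^{3}}$ can be handled by applying Theorem~\ref{thm:colAlg} directly to $G$, since then $\Delta_{\max}^{\tilde{O}(\sqrt{\log n})} = 2^{\tilde{O}_{\log \eps^{-1}}(\sqrt{\log n})}$ and $(1+\eps)\Delta \geq \Delta+1$.

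Assuming $\Delta_{\max}$ is large, I would instantiate Theorem~\ref{thm:splitHier} with accuracy parameter $\eps$ to maintain the set $\mathcal{G}_{h}$ with $|\mathcal{G}_{h}| \leq t_1^{i} t_2^{j}$ subgraphs, each of maximum degree $\hat{\Delta}_{h} \leq (1+\tfrac{\eps}{16}) \tfrac{\Delta_{\max}}{t_1^{i} t_2^{j}}$. For each subgraph $H \in \mathcal{G}_h$ I would run a fresh copy of the algorithm of Theorem~\ref{thm:colAlg} with upper bound $\hat{\Delta}_{h}$, allocated a palette of $\hat{\Delta}_h + 1$ colours disjoint from every other subgraph's palette. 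An update to $G$ then propagates as $2^{\tilde{O}_{\log \eps^{-1}}(\sqrt{\log n})}$ insertions/deletions across the subgraphs (by the recourse bound of Theorem~\ref{thm:splitHier}), and each such sub-update is processed in $(\hat{\Delta}_h + 1)^{\tilde{O}(\sqrt{\log n})} = 2^{\tilde{O}_{\log \eps^{-1}}(\sqrt{\log n})}$ time, since $\hat{\Delta}_h \leq \poly(\log n, \eps^{-1})$ by the guarantees of Theorem~\ref{thm:splitHier}. Composing these two bounds yields the desired amortised update time.

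To see that the total number of colours used is at most $(1+\eps)\Delta_{\max}$, I would bound
\[
|\mathcal{G}_h| \cdot (\hat{\Delta}_h + 1) \leq \paren{1+\tfrac{\eps}{16}} \Delta_{\max} + t_1^{i}t_2^{j},
\]
and since $t_1^i t_2^j \leq \Delta_{\max}/\poly(\log n, \eps^{-1}) \leq \tfrac{\eps}{2}\Delta_{\max}$ by the termination condition of the hierarchy and the assumption $\Delta_{\max} \geq \tfrac{10^7 \log^5 n}{\eps^{3}}$, the total is at most $(1+\eps)\Delta_{\max}$. Correctness of the global colouring follows from the fact that palettes are disjoint across subgraphs and the edge sets of $\mathcal{G}_h$ partition $E(G)$, so any properness conflict must already occur inside some $H \in \mathcal{G}_h$, which is ruled out by Theorem~\ref{thm:colAlg}.

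Finally, to remove the assumption that $\Delta_{\max}$ is known, I would employ the degree-scheduler of Duan, He, and Zhang~\cite{duan}, which maintains a dynamic estimate $\Delta_{\max}$ with $\Delta \leq \Delta_{\max} \leq (1+\tfrac{\eps}{16})\Delta$, rebuilding the data structure from scratch whenever the true maximum degree drifts too far from the estimate. Since rebuilds occur only after $\Omega(\eps\Delta_{\max})$ updates affecting the maximum degree, the rebuild cost can be charged to those updates with only an $O(\eps^{-1})$ multiplicative overhead, which is absorbed into the $2^{\tilde{O}_{\log \eps^{-1}}(\sqrt{\log n})}$ bound. To keep the colour count at $(1+\eps)\Delta$ (as opposed to $(1+\eps)\Delta_{\max}$), I would run the above construction with accuracy parameter $\eps/4$ and exploit the $(1+\eps/16)$-slackness between $\Delta$ and $\Delta_{\max}$. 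The main subtlety I expect is carefully interleaving the rebuild triggered by the scheduler with the in-flight recolourings generated by the hierarchy and the low-degree subroutine, to ensure that the amortisation arguments compose cleanly across all three layers.
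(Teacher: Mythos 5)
Your first three steps --- the case split on $\Delta_{\max}$, running Theorem~\ref{thm:splitHier} and then a separate instance of Theorem~\ref{thm:colAlg} with a disjoint palette on each $H \in \mathcal{G}_h$, composing the recourse of the hierarchy with the per-update cost of the low-degree algorithm, and the colour count $|\mathcal{G}_h|(\hat{\Delta}_h+1) \leq (1+\tfrac{\eps}{16})\Delta_{\max} + t_1^i t_2^j \leq (1+\eps)\Delta_{\max}$ --- match the paper's Section~\ref{sct:combined} essentially verbatim and are correct.

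The gap is in your last step. You describe the degree-scheduler as maintaining a single estimate of $\Delta_{\max}$ and \emph{rebuilding the data structure from scratch} when the true maximum degree drifts, charging the rebuild to the $\Omega(\eps\Delta_{\max})$ intervening updates. This does not work: a rebuild recolours all of $G$ and hence costs $\Omega(m)\cdot 2^{\tilde{O}(\sqrt{\log n})}$ time, while only $\Theta(\eps\Delta)$ updates (e.g.\ insertions concentrated at a single vertex) are needed to push $\Delta$ across a threshold. The resulting amortised cost is $\Omega(m/(\eps\Delta))$ per update, which is $\Omega(n/\eps)$ when $m = \Theta(n\Delta)$ --- far above the claimed bound. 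The scheduler of Duan, He, and Zhang, which the paper uses, avoids this by never rebuilding: it runs $O(\tfrac{\log n}{\eps})$ instances in parallel, one per degree scale $(1+\mu)^i$, where the instance $G_i$ contains an edge $e$ while $\delta(e) \leq (1+\mu)^{i+1}$ and readmits it only once $\delta(e) \leq (1+\mu)^{i}$; the output colouring is read off the instance whose scale currently contains $\Delta(G)$. The amortisation is then \emph{local}: the $\deg(w)$ migrations of edges at a vertex $w$ into or out of $G_k$ are charged to the $\Omega(\mu(1+\mu)^k)$ degree changes at $w$ that must separate consecutive threshold crossings, giving $O(\tfrac{\log n}{\mu})$ amortised recourse overall. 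With that replacement (and $\mu,\eps' = \Theta(\eps)$ as you already anticipate for the colour count), your argument goes through.
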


\subsection{The final algorithm} \label{sct:combined}
For now, we will still assume that we have access to some upper bound $\Delta_{\max}$ on the maximum degree of $G$ throughout the entire update sequence. We will then get rid of this assumption in the next subsection by using a degree-scheduler due to Duan, He, and Zhang~\cite{duan}. 

The algorithm is now the following: 
If $\Delta_{\max} \leq \tfrac{10^7 \log^5 n}{\eps^{3}}$: we simply apply Theorem~\ref{thm:colAlg} directly. Since $\Delta_{\max}+1 \leq \ceil{(1+\eps)\Delta_{\max}}$ for any $\eps > 0$, we are immediately done by Theorem~\ref{thm:colAlg}. 

Otherwise, if $\Delta_{\max} > \tfrac{10^7 \log^5 n}{\eps^{3}}$, we first apply Theorem~\ref{thm:splitHier} to get a degree-splitting hierarchy of height $0<h \leq 4\sqrt{\log n}$. Then we apply Theorem~\ref{thm:colAlg} with the upper bound on the maximum degree equal to $\Delta'_{\max}$, where $\Delta'_{\max} = \hat{\Delta}_{h}$, to each graph of $\mathcal{G}_{h}$, using a distinct palette to colour each graph. 
Correctness of the algorithm then follows immediately from Theorem~\ref{thm:splitHier} and Theorem~\ref{thm:colAlg}. 

As for the update time of the algorithm, we note that by Theorem~\ref{thm:splitHier} each sequence of $\sigma$ updates to $G$ results in at most $\sigma\cdot{}2^{\tilde{O}_{\log \eps^{-1}}(\sqrt{\log n})}$ updates to graphs in $\mathcal{G}_{h}$ and at most $2^{\tilde{O}_{\log \eps^{-1}}(\sqrt{\log n})}$ total update time. 
We can handle each such update to a graph in $\mathcal{G}_{h}$ in $2^{\tilde{O}_{\log \eps^{-1}}(\sqrt{\log n})}$ time by Theorem~\ref{thm:colAlg}, so in total, we process the $\sigma$ updates in $\sigma\cdot{}2^{\tilde{O}_{\log \eps^{-1}}(\sqrt{\log n})}$ time, thus resulting in an amortised update time of $2^{\tilde{O}_{\log \eps^{-1}}(\sqrt{\log n})}$ time, as claimed. 

Finally, we analyse the number of colours used. We note that by Theorem~\ref{thm:splitHier}, there exists non-negative integers $i,j$ with $i+j = h$ such that:
\[
|\mathcal{G}_{h}| \leq t_{1}^{i}t_{2}^{j}
\]
and 
\[
\Delta'_{\max} = \hat{\Delta}_{h} \leq (1+\tfrac{\eps}{16})\frac{\Delta_{\max}}{t_{1}^{i}t_{2}^{j}}
\]
In particular, each graph in $\mathcal{G}_{h}$ is coloured using at most $(1+\tfrac{\eps}{16})\frac{\Delta_{\max}}{t_{1}^{i}t_{2}^{j}}+1$ colours. 
Therefore, the total number of colours used is upper bounded by:
\[
|\mathcal{G}_{h}| \cdot{}((1+\tfrac{\eps}{16})\frac{\Delta_{\max}}{t_{1}^{i}t_{2}^{j}}+1) \leq (1+\tfrac{\eps}{2})\Delta_{\max} + t_{1}^{i}t_{2}^{j} \leq (1+\eps)\Delta_{\max}
\]
Here we used the fact that Theorem~\ref{thm:splitHier} terminates with a hierarchy for which 
$\Delta_{\max} \geq \tfrac{2\cdot{}t_{1}^{i}t_{2}^{j}}{\eps}$.

\subsection{Dynamic $\Delta$} \label{sec:dynDelta}
In this subsection, we finish the proof of Theorem~\ref{thm:FullColour}. This last part contains no new ideas, as we can use the degree-scheduler of Duan, He and Chang~\cite{duan} essentially ad verbatim. For the sake of completeness, we briefly sketch their approach below.
 
Let $\mu$ be some constant not specified until later (we will ultimately set $\mu = \Theta(\eps)$). 
Consider the exponentially increasing intervals $I_{i}:= [(1+\mu)^{i-1},(1+\mu)^{i}]$ for $i\in \{ 1, \dots, s\}$ with $s$ minimum subject to $(1+\mu)^{s} \geq n$. For each $i$, we will maintain a subgraph of $G$, say $G_{i}$ with the guarantees that if $\Delta(G) \in I_{i}$, then $G_{i} = G$, and that for all $i$, we have $\Delta(G_i) \leq (1+\mu)^{i+1}$ always. 
Furthermore, we will at all times maintain some index $j$ for which $\Delta(G) \in I_{j}$. 

Given the above, one can run any $(1+\eps')\Delta_{\max}$-edge-colouring algorithm on all of the $G_{i}$'s, and then let the colouring of $G_{j}$ be the currently valid edge-colouring. If one sets $\Delta_{\max,i} = (1+\mu)^{i+1}$ in the instance of the edge-colouring algorithm on $G_i$, one has the guarantee that at all times the algorithm maintains a $(1+\mu)^{2}(1+\eps')\Delta$-edge-colouring. 
Indeed, at all times in the copy currently indexed by $j$ will have $\Delta\cdot{}(1+\mu)^{2} \geq \Delta_{\max,j}$. 
Setting $\eps', \mu \leq \tfrac{\eps}{7}$ is then sufficient to achieve the correct number of colours.

In order to maintain this information, Duan, He and Chang do as follows: let $\delta(e)$ be the maximum degree of an endpoint of $e$. 
If $\delta(e) \leq (1+\mu)^{i}$, we add $e$ to $G_{i}$. Now $e$ stays in $G_{i}$ until $\delta(e) > (1+\mu)^{i+1}$, at which point in time it is removed, and not added back until $\delta(e) \leq (1+\mu)^{i}$ again. 
It is now easy to check that that every time some vertex $w$ crosses the threshold $(1+\mu)^{k}$ for some $k$, it performs at most $deg(w)$ insertions into $G_{k}$ or at most $deg(w)$ deletions from $G_{k-1}$. Between any set of insertions into some $G_k$, $deg(w)$ must have increased beyond $(1+\mu)^{k+1}$ and back down to $(1+\mu)^{k}$. All of the deletions incident to $w$ that decreased the degree from $(1+\mu)^{k+1}$ to $(1+\mu)^{k}$ can be charged the cost of performing the $deg(w)$ insertions into $G_k$. 
Deletions can be analysed in a symmetric way.
In total, the amortised recourse of performing these changes is upper bounded by $O(\mu^{-1})$ for each update. 
Since there are at most $O(\tfrac{\log n}{\mu})$ choices of $i$, the total amortised recourse and update time becomes $O(\tfrac{\log n}{\mu})$. 
One can easily keep track of $j$ or directly calculate $j$ using knowledge of the current maximum degree $\Delta$. Therefore picking $\mu, \eps' = \theta(\eps)$ with $\mu, \eps' \leq \tfrac{\eps}{7}$ yields the corollary. 


\section{Acknowledgements}

The author would like to thank Eva Rotenberg for the support, help, and encouragements.

\newpage

\bibliographystyle{alpha}
\bibliography{main}

\end{document}